\title{On the algebraic structure of Weihrauch degrees}
\author{Vasco Brattka}
\address{Faculty of Computer Science, Universit\"at der Bundeswehr M\"unchen, Germany and
Department of Mathematics and Applied Mathematics, University of Cape Town, South Africa}
\email{Vasco.Brattka@cca-net.de}
\author{Arno Pauly}
\address{Department of Computer Science, Swansea University, United Kingdom}
\email{a.m.pauly@swansea.ac.uk}
\thanks{The authors have received support from the Marie Curie International Research Staff Exchange Scheme \emph{Computable Analysis}, PIRSES-GA-2011- 294962. The first author has also been supported by the National Research Foundation of South Africa.
Most of the research for this article was completed while the second author was at the Computer Laboratory, University of Cambridge, United Kingdom.}
\theoremstyle{definition}
\newtheorem{theorem}[thm]{Theorem}
\newtheorem{definition}[thm]{Definition}
\newtheorem{corollary}[thm]{Corollary}
\newtheorem{proposition}[thm]{Proposition}
\newtheorem{lemma}[thm]{Lemma}
\newtheorem{observation}[thm]{Observation}
\newtheorem{example}[thm]{Example}
\newtheorem{convention}[thm]{Convention}
\newcommand{\dom}{\operatorname{dom}}
\newcommand{\range}{\operatorname{range}}
\newcommand{\id}{\textnormal{id}}
\newcommand{\Cantor}{{\{0, 1\}^\mathbb{N}}}
\newcommand{\Baire}{{\mathbb{N}^\mathbb{N}}}
\newcommand{\C}{\mathsf{C}}
\newcommand{\hide}[1]{}
\newcommand{\mto}{\rightrightarrows}
\newcommand{\lpo}{\mathsf{LPO}}
\newcommand{\uint}{{[0, 1]}}
\newcommand{\name}[1]{\textsc{#1}}
\newcommand{\me}{\name{P.}}
\newcommand{\leqW}{\leq_{\textrm{W}}}
\newcommand{\equivW}{\equiv_{\textrm{W}}}
\newcommand{\geqW}{\geq_{\textrm{W}}}
\newcommand{\pipeW}{|_{\textrm{W}}}
\newcommand{\leqT}{\leq_{\textrm{T}}}
\newcommand{\lT}{<_{\textrm{T}}}
\newcommand{\lW}{<_{\textrm{W}}}
\newcommand{\gW}{>_{\textrm{W}}}
\newcommand{\leqSW}{\leq_{\textrm{sW}}}
\newcommand{\equivSW}{\equiv_{\textrm{sW}}}
\newcommand{\leqM}{\leq_{\textrm{M}}}
\newcommand{\equivM}{\equiv_{\textrm{M}}}
\newcommand{\Winf}{\sqcap}
\newcommand{\Wsup}{\sqcup}
\newcommand{\Wtop}{\infty}
\newcommand{\Wei}{\mathcal{W}}
\newcommand{\In}{\subseteq}
\def\LPO{\mathsf{LPO}}
\def\PC{\mathsf{PC}}
\def\MLR{\mathsf{MLR}}
\def\COH{\mathsf{COH}}
\def\WWKL{\mathsf{WWKL}}
\def\IPP{\mathsf{IPP}}
\def\BWT{\mathsf{BWT}}
\def\AEC{\mathsf{Almost\mbox{-}EC}}
\def\EC{\mathsf{EC}}
\def\PA{\mathsf{PA}}
\def\J{\mathsf{J}}
\def\IN{{\mathbb{N}}}
\def\IR{{\mathbb{R}}}
\def\bigtimes{\mathop{\mathsf{X}}}
\def\t{\mathrm{t}}
\def\compl{\mathrm{c}}
\keywords{Computable analysis, Weihrauch lattice, substructural logic.}
\subjclass{[{\bf Theory of computation}]:  Logic; [{\bf Mathematics of computing}]: Continuous mathematics.}
\begin{document}

\begin{abstract}
We introduce two new operations (compositional products and implication) on Weihrauch degrees, and investigate the overall algebraic structure. The validity of the various distributivity laws is studied and forms the basis for a comparison with similar structures such as residuated lattices and concurrent Kleene algebras. Introducing the notion of an ideal with respect to the compositional product, 
we can consider suitable quotients of the Weihrauch degrees. We also prove some specific characterizations using the implication.
In order to introduce and study compositional products and implications, we introduce and study a function space of multi-valued continuous functions. This space turns out to be particularly well-behaved for effectively traceable spaces
that are closely related to admissibly represented spaces. 
\end{abstract}

\maketitle


\section{Introduction}

The Weihrauch degrees form the framework for the research programme to classify the computational content of mathematical theorems formulated by \name{B.} and \name{Gherardi} \cite{BG11a} (also \name{Gherardi} \& \name{Marcone} \cite{GM09}, \me\  \cite{Pau10}). The core idea is that $S$ is Weihrauch reducible to $T$ if $S$ can be solved using a single invocation of $T$ and otherwise computable means.

Numerous theorems have been classified in this way. Some examples are the separable Hahn-Banach theorem (\name{Gherardi} \& \name{Marcone} \cite{GM09}), the Intermediate Value Theorem (\name{B.} \& \name{Gherardi} \cite{BG11a}), Nash's theorem for bimatrix games (\me\  \cite{Pau10}), Brouwer's Fixed Point theorem (\name{B.}, \name{Le Roux}, \name{Miller} \& \me\  \cite{BLRMP16,BLRMP16a}), the Bolzano-Weierstrass theorem (\name{B.}, \name{Gherardi} \& \name{Marcone} \cite{BGM12}), the Radon-Nikodym derivative (\name{Hoyrup}, \name{Rojas} \& \name{Weihrauch} \cite{HRW12}), Ramsey's theorem (\name{Dorais}, \name{Dzhafarov}, \name{Hirst}, \name{Mileti} \& \name{Shafer} \cite{DDH+16}) and the Lebesgue Density Lemma (\name{B.}, \name{Gherardi} \& \name{H\"olzl} \cite{BGH15a}).

Besides providing the arena for such concrete classifications, the Weihrauch degrees also carry an interesting structure (which is the focus of the present paper). A number of algebraic operations were introduced and studied in \cite{paulymaster}, \cite{Pau10a}, \cite{BG11}, \cite{HP13}; sometimes in a bid to understand the general structure, sometimes to obtain specific classifications. It was noted that the algebraic operations have a decidedly \emph{logical feel} about them, and \name{B.} \& \name{Gherardi} asked whether Weihrauch degrees (or a related structure) may form a Brouwer algebra, i.e., a model for intuitionistic logic (cf., e.g., \cite{chagrov}). This was answered in the negative by \name{Higuchi} \& \me\  in \cite{HP13}. Instead, connections to intuitionistic linear logic \cite{Gir87} or, more generally, substructural logics \cite{galatos} seem more likely (see also \cite{Kuy17}). A recent survey on Weihrauch degrees can be found in \cite{BGP17}.

After providing some background on Weihrauch degrees, and in particular recalling the definitions of the operations $0, 1, \Wtop, \Winf, \Wsup, \times, ^*, \widehat{\phantom{f}}$ from the literature in Section \ref{sec:background}, we will introduce two new operations $\star$ and $\rightarrow$ in Section~\ref{sec:stararrow}. While $\star$ has been investigated before (albeit without a proof of totality of the operation) \cite{BGM12}, $\rightarrow$ is newly introduced as the implication related to the ``\emph{multiplicative and}'' $\star$. In Section~\ref{general} the algebraic rules holding for these operations are discussed in some detail. Section \ref{sec:starideals} briefly touches on embeddings of the Medvedev degrees into the Weihrauch degrees, and then introduces a suitable notion of a quotient structure derived from the Weihrauch degrees. A number of concrete classifications using the implication $\rightarrow$ are provided in Section~\ref{sec:applimp}. 
In Section~\ref{sec:effectively-traceable} (that we consider as an appendix) some side results on so-called effectively traceable spaces are discussed. The spaces of multi-valued functions discussed in Section~\ref{sec:stararrow} are particularly natural
if the underlying spaces are effectively traceable. 
Finally, in Section~\ref{sec:specification} we discuss connections between the Weihrauch degrees and algebraic models of intuitionistic linear logic, as well as concurrent Kleene algebra.

\section{Background on Weihrauch degrees}
\label{sec:background}

Weihrauch reducibility compares multi-valued functions on represented
spaces. A {\em represented space} is a pair $\mathbf{X} = (X, \delta_X)$ where $\delta_X : \subseteq \Baire \to X$ is a partial surjection, called {\em representation}.
In general we use the symbol ``$\subseteq$'' in order to indicate that a function is potentially partial. Represented spaces form the foundation of computable analysis \cite{Wei00}, and canonically capture many settings of mathematics. See \cite{Pau16} for an introduction to their theory. Two ubiquitous constructions on represented spaces will be coproducts  
and products. We have $(X, \delta_X) \Wsup (Y, \delta_Y) = ((\{0\} \times X) \cup (\{1\} \times Y), \delta_{X\Wsup Y})$ with $\delta_{X\Wsup Y}(0p) = (0, \delta_X(p))$ and $\delta_{X\Wsup Y}(1p) = (1, \delta_Y(p))$. For the product, it is $(X, \delta_X) \times (Y, \delta_Y) = (X \times Y, \delta_{X\times Y})$ with $\delta_{X\times Y}(\langle p, q\rangle) = (\delta_X(p), \delta_q(q))$, where $\langle,\rangle$ denotes some standard pairing on Baire space (to be used again quite often).

Multi-valued functions (symbol: $f : \subseteq \mathbf{X} \mto \mathbf{Y}$) may be formalized as relations via their graph. However, the composition of two (multi-valued) functions (denoted by by $f\circ g$ or by $fg$) is not the composition of relations. Instead, $z \in (f\circ g)(x)$ if and only if $(\forall y \in g(x))\;\ y \in \dom(f)$ and $(\exists y \in g(x))\; \ z\in f(y) $ (rather than just the latter condition). As a consequence, the category of multi-valued functions behaves very different from the category of relations, which has been explored in \cite{Pau17}.

Using represented spaces we can define the concept of a realizer:
\begin{definition}[Realizer]
\label{def:realizer}
Let $f : \subseteq (X, \delta_X) \mto (Y, \delta_Y)$ be a multi-valued function on represented spaces.
A function $F:\subseteq\Baire\to\Baire$ is called a {\em realizer} of $f$, in symbols $F\vdash f$, if
$\delta_YF(p)\in f\delta_X(p)$ for all $p\in\dom(f\delta_X)$.
\end{definition}

Realizers allow us to transfer the notions of computability
and continuity and other notions available for Baire space to any represented space;
a function between represented spaces will be called {\em computable}, if it has a computable realizer, etc.
Now we can define Weihrauch reducibility. By $\id:\Baire\to\Baire$ we denote the identity.

The idea of Weihrauch reducibility is to capture the reduction of one multi-valued function $f$ to another such function $g$
in the sense that one application of $g$ can be used in a computable way in order  to compute $f$.
We distinguish between ordinary and strong Weihrauch reducibility: in the first case the reduction can fully
exploit the given input, in the latter case the input to $f$ can only be used to determine an instance of $g$, but not afterwards.

\begin{definition}[Weihrauch reducibility]
\label{def:weihrauch}
Let $f,g$ be multi-valued functions on represented spaces.
Then $f$ is said to be {\em Weihrauch reducible} to $g$, in symbols $f\leqW g$, if there are computable
functions $K,H:\subseteq\Baire\to\Baire$ such that $K\langle \id, GH \rangle \vdash f$ for all $G \vdash g$.
\end{definition}

Analogously, we say that $f$ is {\em strongly Weihrauch reducible} to $g$, in symbols $f\leqSW g$, if there are computable
$K,H:\subseteq\Baire\to\Baire$ such that $KGH\vdash f$ for all $G\vdash g$.
The relation $\leqW$ is reflexive and transitive. We use $\equivW$ to denote equivalence regarding $\leqW$,
and by $\lW$ we denote strict reducibility. By $\Wei$ we refer to the partially ordered set of equivalence classes.\footnote{Strictly speaking, the class of represented spaces is not a set. However, the Weihrauch degree of the multi-valued function $f : \subseteq (X, \delta_X) \mto (Y, \delta_Y)$ has $\delta_Y^{-1}\circ f\circ\delta_X:\subseteq\IN^\IN\mto\IN^\IN$ as a representative. Hence, $\Wei$ can be identified with the set of Weihrauch degrees of multi-valued functions on Baire space.}

There are a few important operations defined on multi-valued functions that are compatible with Weihrauch reducibility, and hence induce corresponding operations on $\Wei$.

\begin{definition}
\label{def:operations}
Given $f : \subseteq\mathbf{X} \mto \mathbf{Y}$ and $g :\subseteq \mathbf{U} \mto \mathbf{V}$, define $f \Wsup  g : \subseteq\mathbf{X} \Wsup  \mathbf{U} \mto \mathbf{Y} \Wsup  \mathbf{V}$ via $(f\Wsup g)(0, x) = \{0\}\times f(x)$ and $(f\Wsup g)(1, u) = \{1\}\times g(u)$. Define $f \times g :\subseteq \mathbf{X} \times \mathbf{U} \mto \mathbf{Y} \times \mathbf{V}$ via $(f \times g)(x, u) = f(x)\times g(u)$. Finally, define $f \Winf g :\subseteq \mathbf{X} \times \mathbf{U} \mto \mathbf{Y} \Wsup  \mathbf{V}$ via $(f \Winf g)(x, u):=\{(0,y):y\in f(x)\}\cup\{(1,v):v\in g(u)\}$.
\end{definition}

Besides the three binary operations, there are also two unary ones we are interested in. For these, we first need corresponding constructions on represented spaces. Let $1$ be the represented space containing a single point with a total representation $\delta_1 : \Baire \to \{1\}$. Now inductively define $\mathbf{X}^n$ by $\mathbf{X}^0 = 1$ and $\mathbf{X}^{n+1} = \mathbf{X}^n \times \mathbf{X}$. Next, note that both coproduct 
and product can be extended to the countable case in a straight-forward manner (given the existence of a countable pairing function $\langle \ , \ \rangle : \Baire \times \Baire \times \ldots \to \Baire$) and let $\mathbf{X}^* = \bigsqcup_{n \in \mathbb{N}} \mathbf{X}^n$ and $\widehat{\mathbf{X}} = \mathbf{X} \times \mathbf{X} \times \ldots$.

\begin{definition}
Given $f :\subseteq \mathbf{X} \mto \mathbf{Y}$, define $f^* : \subseteq\mathbf{X}^\star \mto \mathbf{Y}^*$ via $f^*(0, 1) = (0, 1)$ and $f^*(n+1, (x_0, \ldots, x_n)) = \{n+1\}\times f(x_0)\times \ldots\times f(x_n)$. Furthermore, define $\widehat{f} :\subseteq \widehat{\mathbf{X}} \mto \widehat{\mathbf{Y}}$ via $\widehat{f}(x_0, x_1, \ldots) = \bigtimes_{i\in\IN}f(x_i)$.
\end{definition}

The operations $\times, \Winf, \widehat{\phantom{a}}$ were introduced in \cite{BG11}, while $\times, \Wsup , ^*$ are from \cite{Pau10a} ($\times$ was introduced independently in both). These references also contain proofs that the operations do lift to $\Wei$. As usual, we will not distinguish between the operations on multi-valued functions and the induced operations on Weihrauch degrees in general.

Sometimes, it is convenient to use the following characterization of Weihrauch reducibility.
For $f, g : \subseteq \mathbf{X} \mto \mathbf{Y}$, we say that $f$ {\em tightens} $g$ ($g$ {\em weakens} $f$), in symbols $f\sqsubseteq g$, if $\dom(g) \subseteq \dom(f)$ and $(\forall x \in \dom(g))\; f(x) \subseteq g(x)$ (see also \cite{BGM12}).
By $\Delta_{\mathbf{X}}:\mathbf{X}\to\mathbf{X}\times \mathbf{X},x\mapsto(x,x)$ we denote the canonical {\em diagonal map} of $\mathbf{X}$ and for $f:\subseteq\mathbf{X}\mto\mathbf{Y}$ and $g:\subseteq\mathbf{X}\mto\mathbf{Z}$
we denote by $(f,g):=(f\times g)\circ\Delta_{\mathbf{X}}$ the {\em juxtaposition} of $f$ and $g$.

\begin{lemma}
\label{lem:characterization}
Let $f:\subseteq \mathbf{X} \mto \mathbf{Y}$ and $g:\subseteq \mathbf{U} \mto \mathbf{V}$ be multi-valued functions on represented spaces. 
Then $f\leqW g$, if and only if there is a represented space $\mathbf{W}$ and computable $k:\subseteq\mathbf{W}\times\mathbf{V}\mto\mathbf{Y}$ and $h:\subseteq\mathbf{X}\mto\mathbf{W}\times\mathbf{U}$ 
with $k\circ (\id_{\mathbf{W}}\times g)\circ h\sqsubseteq f$.
Without loss of generality, we can choose $\mathbf{W}=\IN^\IN$ in this statement.
\end{lemma}
\begin{proof}
Let $f\leqW g$. Then there are computable functions $H,K:\In\IN^\IN\to\IN^\IN$
such that $K\langle \id,GH\rangle$ is a realizer of $f$ whenever $G$ is a realizer of $g$.
Now
$h:\In \mathbf{X}\mto\IN^\IN\times \mathbf{U}$ with $h:=(\id,\delta_UH)\circ\delta_X^{-1}$ 
and $k:\In \IN^\IN\times \mathbf{V}\mto \mathbf{Y}$ with $k:=\delta_Y\circ K\circ\langle\id\times\delta_V^{-1}\rangle$
are computable and we obtain
\[
k\circ(\id\times g)\circ h 
= \delta_Y\circ K\langle\id,\delta_V^{-1}g\delta_UH\rangle\circ\delta_X^{-1}\sqsubseteq f,
\]
where the last mentioned relation holds since for every $p\in\dom(\delta_V^{-1}g\delta_U)$ and
$q\in\delta_V^{-1}g\delta_U(p)$ there is a realizer $G\vdash g$ with $G(p)=q$, and for this realizer we have $K\langle\id,GH\rangle\vdash f$.

Let now $k:\subseteq\mathbf{W}\times\mathbf{V}\mto\mathbf{Y}$ and $h:\subseteq\mathbf{X}\mto\mathbf{W}\times\mathbf{U}$ be computable with $k\circ (\id_\mathbf{W}\times g)\circ h\sqsubseteq f$
and let $H,K:\In\IN^\IN\to\IN^\IN$ be computable realizers of $h$ and $k$, respectively. 
Let $H_1,H_2:\In\IN^\IN\to\IN^\IN$ be such that $H(p)=\langle H_1(p),H_2(p)\rangle$
and let $\pi:\IN^\IN\times\IN^\IN\to\IN^\IN,(p,q)\mapsto\langle p,q\rangle$.
Let $G\vdash g$. We note that this implies $\langle\id\times G\rangle\circ\pi^{-1}\vdash \id_\mathbf{W}\times g$
and hence $K\circ\langle \id\times G\rangle\circ\pi^{-1}\circ H\vdash k\circ(\id_\mathbf{W}\times g)\circ h\sqsubseteq f$.
We obtain
\[
K\circ\langle H_1\times\id\rangle\circ\pi^{-1}\circ\langle\id,G\circ H_2\rangle =K\circ\langle H_1,G\circ H_2\rangle=K\circ\langle \id\times G\rangle\circ\pi^{-1}\circ H\vdash f.
\]
Hence, $K_0:=K\circ\langle H_1\times\id\rangle\circ\pi^{-1}$ and $H_2$ are computable functions
that satisfy $K_0\langle\id,G H_2\rangle\vdash f$. The choice of $K_0$ and $H_2$ is independent of $G$
and hence they witness $f\leqW g$.
\end{proof}

\subsection{An alternative definition and special degrees}
\label{subsec:specialdegrees}

There are three special degrees relevant to the structural properties of $\Wei$. The first example is the degree $1$ with representatives $\id_1$ and $\id_\Baire$ (and any computable multi-valued function with a computable point in its domain). The second example is the degree $0$ of the nowhere defined multi-valued function. By Definition \ref{def:realizer} any partial function on Baire space is a realizer of the nowhere defined multi-valued function, hence we find $0$ to be the bottom element of $\Wei$. The third degree (to be the top degree) is more complicated to introduce, and will require a detour through an alternative definition of Weihrauch reducibility.

Weihrauch reducibility was originally defined on sets of partial functions on Baire space (i.e., sets of potential realizers), cf.~\cite{Wei92a, Wei92c}. Definition \ref{def:weihrauch} then is the special case of the following where the sets involved are indeed sets of realizers of some multi-valued function.
\begin{definition}
Given sets $F$, $G$ of partial functions on Baire space, say $F \leqW G$ if and only if there are computable
functions $K,H:\subseteq\Baire\to\Baire$ such that $K\langle \id, gH \rangle \in F$ for all $g \in G$. We write $\mathcal{P}\Wei$ for the resulting degree structure.
\end{definition}

There is an operator $\mathcal{R}$ taking any set $F$ of potential realizers to the set $\mathcal{R}(F)$ of realizers of a multi-valued function that has all the functions in $F$ as realizers and as little more as possible. 
Given some set $F$ of partial functions $g:\subseteq\Baire\to\Baire$, let 
\[\mathcal{R}(F) := \{ f : \subseteq \Baire \to \Baire : (\forall p \in \Baire)( \exists g \in F)\; f(p) = g(p)\}.\]
As usual, equality extends to the case of undefined values in this definition. 
As explored in further detail in \cite{Pau10a} this operator can be seen as an {\em interior operator} on $(\mathcal{P}\Wei,\leqW)$, which means that $\mathcal{R}(F)\leqW F$, $\mathcal{R}(F)\leqW\mathcal{R}(G)$
whenever $F\leqW G$, and $\mathcal{R}(F)\leqW\mathcal{R}(\mathcal{R}(F))$.  In fact, we even have $\mathcal{R}(\mathcal{R}(F))=\mathcal{R}(F)$.
All sets of realizers of multi-valued functions are fixed points of $\mathcal{R}$~\cite{Pau10a}. 
Since all the (non-empty) elements in the range of $\mathcal{R}$ are sets of realizers of multi-valued functions,
we can also see $\mathcal{R}$ as an operator $\mathfrak{R}$ from $\mathcal{P}\Wei$ to $\Wei$. 
In this case, 
\[\mathfrak{R}(F)(p) :=\{ g(p): g \in F\},\]
where $\dom(\mathfrak{R}(F)):=\{p\in\Baire: (\forall g\in F)\;p\in\dom(g)\}$.
Then $\mathcal{R}(F)$ is the set of realizers of $\mathfrak{R}(F)$ for all non-empty $F$.
Clearly, we also have $\mathcal{R}(\emptyset) = \emptyset$, but $\emptyset$ is the set of realizers of some multi-valued
function if and only if the Axiom of Choice does not hold for Baire space.
Hence, the Weihrauch lattice $\Wei$ has a natural top element if an only if the Axiom of Choice does not hold for Baire space.
A multi-valued function with realizers can never be the top element of $\Wei$.
But it is natural to extend $\Wei$ by attaching an additional top element $\Wtop$ to it and then we will arrive at:

\begin{convention}
We will assume that $\mathfrak{R} : \mathcal{P}\Wei \to \Wei$ satisfies $\mathfrak{R}(\emptyset)=\Wtop$.
\end{convention}

The operations on $\Wei$ can (almost) be obtained from corresponding operations on $\mathcal{P}\Wei$ via the  operator $\mathfrak{R}$. For $\odot$ among the operations $\times$, $\Wsup$ that map single-valued functions to single-valued functions, we can define them pointwise via $F \odot G := \{f \odot g : f \in F \wedge g \in G\}$
in order to obtain $\mathfrak{R}(F\odot G)=\mathfrak{R}(F)\odot\mathfrak{R}(G)$.
The closure operators $^*$ and $\widehat{\phantom{f}}$ can be handled similarly. 
In case of $\Winf$, a suitable definition for sets of functions can be given by 
\[F \uplus G := \{(0,f \circ \pi_1): f \in F\} \cup \{(1,g \circ \pi_2) :g \in G\}.\]
Here $\pi_i$ denotes the projection on the $i$--th component.
In this case we obtain $\mathfrak{R}(F\uplus G)=\mathfrak{R}(F)\sqcap\mathfrak{R}(G)$.

The one exception to the equivalence of the operations as defined in this subsection and in Definition \ref{def:operations} is that under the latter and the failure of the Axiom of Choice for Baire space, we would expect $0 \times \Wtop = 0$. The operation inherited from $\mathcal{P}\Wei$ however satisfies $0 \times \Wtop = \Wtop$. The latter yields the nicer algebraic structure. Thus, we adopt the following:

\begin{convention}
$0 \times \Wtop = \Wtop \times 0 = \Wtop$ and more generally, $\mathbf{a}\times\Wtop=\Wtop\times\mathbf{a}=\Wtop$ for every Weihrauch degree $\mathbf{a}$.
\end{convention}

We emphasize that we usually adopt the Axiom of Choice and hence no ambiguities are to be expected.

\subsection{Examples of Weihrauch degrees}
\label{subsec:examples}
We will refer to various Weihrauch degrees studied in the literature in detail when constructing counterexamples for algebraic rules later. Here, these shall be briefly introduced.

\begin{defiC}[\cite{Wei92c}]
Let $\lpo : \Cantor \to \{0, 1\}$ be defined via $\lpo(0^\mathbb{N}) = 1$ and $\lpo(p) = 0$ for $p \neq 0^\mathbb{N}$.
\end{defiC}

\begin{definition}
Let $\lim : \subseteq \Baire \to \Baire$ be defined via $\lim(p)(n) = \lim_{i \to \infty} p(\langle n, i\rangle)$.
\end{definition}

We have $1 \lW \lpo \lW \lpo \times \lpo \lW \lim \equivW \lim \times \lim$, the (simple) proofs can be found in \cite{Wei92c,stein,paulymaster}.

The next collection of examples are the so-called closed choice principles. These have been found to play a crucial role in the classification of mathematical theorems in \cite{GM09,BG11a,BBP12,BP10,BGM12,BLRMP16,BLRMP16a,LRP15a}. For this, note that any represented space $\mathbf{X}$ is endowed with a natural topology, namely the final topology of the representation
(which is the largest topology on $\mathbf{X}$ that makes the representation continuous). Hence for any $\mathbf{X}$
there is a represented space $\mathcal{A}(\mathbf{X})$ containing the closed subsets of $\mathbf{X}$ with respect to negative information, compare \cite{Pau16}.

\begin{definition}[Closed Choice]
Let $\mathbf{X}$ be a represented space. Then the {\em closed choice} operation
is defined by
$\C_{\mathbf{X}}:\subseteq \mathcal{A}(\mathbf{X})\mto \mathbf{X},A\mapsto A$
with $\dom(\C_\mathbf{X}):=\{A\in \mathcal{A}(\mathbf{X}):A\not=\emptyset\}$.
\end{definition}

Intuitively, $\C_\mathbf{X}$ takes as input a non-empty closed set in negative representation (i.e., given by the capability to recognize the complement)
and it produces an arbitrary point of this set as output.
Hence, $A\mapsto A$ means that the multi-valued map $\C_\mathbf{X}$ maps
the input set $A\in \mathcal{A}(\mathbf{X})$ to the points in $A\subseteq \mathbf{X}$ as possible outputs.

The Weihrauch degree of a closed choice operator is primarily determined by a few properties of the underlying space. The cases we are interested in here are $\{0, 1\}$, $\mathbb{N}$, $\Cantor$, $\uint$, $\mathbb{N} \times \Cantor$, $\mathbb{R}$ and $\Baire$. As shown in \cite{BG11a,BBP12} the corresponding degrees satisfy:
\begin{enumerate}
\item $\C_{\{0, 1\}} \lW \C_\mathbb{N} \lW \C_\mathbb{N} \times \C_{\Cantor} \lW \C_\mathbb{R}$, 
\item $\C_{\{0, 1\}} \lW \C_\Cantor \lW \C_\mathbb{N} \times \C_{\Cantor}$, 
\item $\C_\Cantor \equivW \C_\uint$, 
\item $\C_{\mathbb{N} \times \Cantor} \equivW \C_\mathbb{R} \equivW \C_\Cantor \times \C_\mathbb{N}$, 
\item $\C_\mathbb{N}\;\pipeW\; \C_\Cantor$ and 
\item $\C_\mathbb{N} \Wsup  \C_\Cantor \lW \C_\Cantor \times \C_\mathbb{N}$. 
\end{enumerate}
We consider two embeddings of the Turing degrees into the Weihrauch degrees (the first one of which was introduced in \cite{BG11}), which will be expanded upon in Section \ref{sec:starideals}. For $p \in \Cantor$, let $c_p : 1 \to \{p\}$ and $d_p : \{p\} \to 1$. Note that $c_p \leqW c_q$ if and only if $d_q \leqW d_p$ if and only if $p \leq_\mathrm{T} q$.

\section{Compositional products and implications}
\label{sec:stararrow}

In order to introduce the two new operations $\star$ and $\rightarrow$, we would like to use reduction witnesses as inputs and outputs of multi-valued functions. The problem we face is that the usual exponential (i.e., function space construction) in the category of represented spaces is with respect to continuous functions (cf.~\cite{Pau16}), whereas the reduction witnesses are partial multi-valued functions.\footnote{And even if we would only compare total single-valued functions with Weihrauch reducibility, we would still need to make use of partial multi-valued functions here!} As a substitute we introduce the space of \emph{strongly continuous} multi-valued functions, which contains sufficiently many elements to witness all reductions, and behaves sufficiently like an exponential to make the following constructions work.

\subsection{The space of strongly continuous multi-valued functions}

Fix a universal Turing machine UTM, and then let for any $p \in \Baire$ the partial function $\Phi_p : \subseteq \Baire \to \Baire$ be defined via $\Phi_p(q) = r$ if and only if the UTM with input $\langle p, q\rangle$ writes on the output tape infinitely often and thus produces $r$; $q \notin \dom(\Phi_p)$ if and only if the machine writes only finitely many times on the output tape. Partial functions of the form $\Phi_p$ for computable $p \in \Baire$ are called \emph{strongly computable} (e.g., in \cite{Wei00}), in analogy, call all partial functions of the form $\Phi_p$ \emph{strongly continuous}.

The notions of strong computability and strong continuity can be lifted to multi-valued functions between represented spaces. 
For every representation $\delta_X$ of a set $X$ we define its {\em cylindrification} $\delta_X^{\mathrm{cyl}}$ by 
$\delta_X^{\mathrm{cyl}}\langle p,q\rangle:=\delta_X(p)$ (see \cite{Wei87,Bra03,Bra99b}). It is obvious that $\delta_X^{\mathrm{cyl}}$ is always {\em computably equivalent} to $\delta_X$, which means
that $\id:(X,\delta_X^{\mathrm{cyl}})\to(X,\delta_X)$ and its inverse are computable. 
Given two represented spaces $\mathbf{X} = (X,\delta_X)$ and $\mathbf{Y} = (Y, \delta_Y)$, let $\Phi_p^{\mathbf{X}, \mathbf{Y}} : \subseteq \mathbf{X} \mto \mathbf{Y}$ be defined by 
\[\Phi_p^{\mathbf{X}, \mathbf{Y}}:=\delta_Y\circ\Phi_p\circ(\delta_X^{\mathrm{cyl}})^{-1}.\]
That is,
$x \in \dom(\Phi_p^{\mathbf{X}, \mathbf{Y}})$ if and only if $\langle \delta_X^{-1}(\{x\})\times \Baire\rangle \subseteq \dom(\Phi_p)$ and $\Phi_p(q)\in\dom(\delta_Y)$ for all $q\in\langle \delta_X^{-1}(\{x\})\times \Baire\rangle$.
If $x \in \dom(\Phi_p^{\mathbf{X}, \mathbf{Y}})$, then we obtain $y \in \Phi_p^{\mathbf{X}, \mathbf{Y}}(x)$ if and only if $(\exists q \in \langle \delta_X^{-1}(\{x\})\times \Baire\rangle)\;\delta_Y(\Phi_p(q)) = y$.
A multi-valued function $f:\subseteq X\mto Y$ is called {\em strongly continuous} if $f=\Phi_p^{\mathbf{X},\mathbf{Y}}$ for some $p\in\Baire$.
Likewise $f$ is called {\em strongly computable} if there is a computable such $p$.
With this definition, we obtain the represented space $\mathcal{M}(\mathbf{X}, \mathbf{Y})$ of the strongly continuous multi-valued functions from $\mathbf{X}$ to $\mathbf{Y}$. 

We recall that $\mathcal{C}(\mathbf{X},\mathbf{Y})$ denotes the represented space of continuous total functions $f:X\to Y$, where $p$ is a name for $f$ if
$\Phi_p$ is a realizer of $f$, i.e., if $\delta_Y\circ\Phi_p\circ\delta_X^{-1}(x)=\{f(x)\}$ for all $x\in\dom(f)$. In this sense $\mathcal{M}(\mathbf{X},\mathbf{Y})$
can be seen as a generalization of $\mathcal{C}(\mathbf{X},\mathbf{Y})$.
The reason that we have used the representation $\delta_X^{\mathrm{cyl}}$ in the definition of $\mathcal{M}(\mathbf{X},\mathbf{Y})$ 
instead of $\delta_X$ is that $\delta_X^{\mathrm{cyl}}$ offers a set of names of continuum cardinality
for every point and hence it potentially allows to represent multi-valued functions with larger images.

However, at this point we need to point out that the space $\mathcal{M}(\mathbf{X},\mathbf{Y})$ sensitively depends on the representation of the space $\mathbf{X}$.

\begin{lemma}
\label{lem:source-homeomorphism}
There are represented spaces $\mathbf{X}$ and $\mathbf{X'}$ with identical underlying set $X$ and computably equivalent representations such that the induced
representations of $\mathcal{M}(\mathbf{X},\mathbf{Y})$ and $\mathcal{M}(\mathbf{X}',\mathbf{Y})$ are not computably equivalent.
\end{lemma}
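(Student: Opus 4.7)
The plan is to prove the lemma by exhibiting concrete represented spaces $\mathbf{X}$ and $\mathbf{X}'$ on a common carrier, a target space $\mathbf{Y}$, and a multi-valued function $f$ that lies in $\mathcal{M}(\mathbf{X},\mathbf{Y})$ but not in $\mathcal{M}(\mathbf{X}',\mathbf{Y})$. Since the underlying sets of the two function spaces then fail to coincide, no computable equivalence between their induced representations is possible: an index of $f$ under one representation could never be converted to an index of the same $f$ under the other.

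For the construction I would take $X := \{p \in \Baire : p(0) = 0\}$ and place two computably equivalent representations on it: the total $\delta_X : \Baire \to X$ defined by $\delta_X(r) := r$ when $r(0) = 0$ and $\delta_X(r) := 0^\IN$ otherwise, and the partial $\delta_{X'} : \subseteq \Baire \to X$ defined by $\delta_{X'}(r) := r$ precisely when $r(0) = 0$. Equivalence is immediate: the identity sends $\delta_{X'}$-names to $\delta_X$-names, while reading $r(0)$ and outputting $0^\IN$ whenever $r(0) \neq 0$ sends $\delta_X$-names to $\delta_{X'}$-names. Let $\mathbf{Y} := (\{0,1\}, \delta_Y)$ with the standard partial representation $\delta_Y(0p) := 0, \delta_Y(1p) := 1$, and define $f : X \mto \{0,1\}$ by $f(0^\IN) := \{0, 1\}$ and $f(x) := \{0\}$ for $x \in X \setminus \{0^\IN\}$. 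The Turing functional $\Phi(\langle r, q\rangle) := (\min(r(0), 1)) \cdot 0^\IN$ then witnesses $f \in \mathcal{M}(\mathbf{X},\mathbf{Y})$: every cylindrified $\delta_X$-name of an $x \neq 0^\IN$ has $r(0) = 0$ and yields a $\delta_Y$-name of $0$, while the cylindrified $\delta_X$-names of $0^\IN$ additionally include inputs with $r(0) \neq 0$ (because $\delta_X$ collapses every such $r$ to $0^\IN$), producing $\delta_Y$-names of $1$ as well.

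The crucial step is to show $f \notin \mathcal{M}(\mathbf{X}',\mathbf{Y})$, which I plan to do via a Type-$2$ continuity argument. Assuming for contradiction that some $\Phi_{p'}$ realizes $f$ under $\mathbf{X}'$, the fact that $1 \in f(0^\IN)$ supplies a $q_1 \in \Baire$ with $\Phi_{p'}(\langle 0^\IN, q_1\rangle)$ beginning with $1$, and by the standard continuity of Turing functionals the first output symbol is already committed after reading a finite prefix of the input, of some length $m$. Using the interleaving pairing $\langle r, q\rangle(2n) = r(n)$ and $\langle r, q\rangle(2n+1) = q(n)$, choosing any $k$ with $2k \geq m$ and setting $x' := 0^k 1 0^\IN \neq 0^\IN$ (still with $x' \in X$), the input $\langle x', q_1\rangle$ agrees with $\langle 0^\IN, q_1\rangle$ on positions $0, \ldots, m-1$, so $\Phi_{p'}(\langle x', q_1\rangle)$ must also begin with $1$. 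But $\langle x', q_1\rangle$ is a cylindrified $\delta_{X'}$-name of $x'$, for which $f(x') = \{0\}$ forces the output to begin with $0$, a contradiction. The main obstacle is exactly this continuity step: one has to arrange that the finite prefix forcing the commitment to the first output bit cannot distinguish $0^\IN$ from its approximations $0^k 1 0^\IN$, which is precisely what denies $\mathcal{M}(\mathbf{X}',\mathbf{Y})$ access to the multi-valuedness that $\mathcal{M}(\mathbf{X},\mathbf{Y})$ obtains essentially for free from the extra $\delta_X$-names of $0^\IN$.
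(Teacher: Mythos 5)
Your proof is correct, and it takes a genuinely different route from the paper's. The paper works with a one-point space, represented once by a two-element name set $\{p,q\}$ and once by the three-element set $\{p,q,\langle p,q\rangle\}$ for Turing-incomparable $p,q$; the map $0\mapsto\{p,q,\langle p,q\rangle\}$ into $\Baire$ is strongly computable with respect to the larger name set (project to the first component) but, by an argument resting on the Turing incomparability of $p$ and $q$, not strongly computable with respect to the smaller one. Your example instead uses an infinite carrier and a purely topological obstruction: the collapsing representation $\delta_X$ gives the point $0^\IN$ an extra, clopen supply of names that a continuous functional can detect from the first input symbol, whereas under the injective $\delta_{X'}$ every name of $0^\IN$ is a limit of names of the points $0^k10^\IN$, so no continuous functional can be genuinely multi-valued at $0^\IN$ while single-valued at its neighbours. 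This actually buys slightly more than the paper's proof records: your $f$ fails to be strongly \emph{continuous} with respect to $\mathbf{X}'$, so the two function spaces already differ as sets and no recourse to Turing degrees is needed; the paper's example is in turn more economical (one-point spaces, so the phenomenon appears with trivial underlying topology) and locates the discrepancy entirely among the computable points. Your continuity step --- the first output symbol is committed after reading a finite prefix --- is the standard use of the continuity of $\Phi_{p'}$ and is sound; note only that the argument is independent of the choice of interleaving pairing, since any standard pairing on $\Baire$ is continuous in both arguments, so agreement of $x'$ with $0^\IN$ on a sufficiently long prefix suffices in any case.
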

\begin{proof}
Let $\mathbf{X}$ and $\mathbf{X}'$ be the one point spaces $\{0\}$
with representations $\delta_X:\{p,q\}\to\{0\}$ and $\delta_{X'}:\{p,q,\langle p,q\rangle\}\to \{0\}$, respectively, where $p,q\in\Baire$ are Turing incomparable.
Then $\delta_X$ and $\delta_{X'}$ are computably equivalent, i.e., the identity $\id:\mathbf{X}\to\mathbf{X}'$ and its inverse are computable. We choose $\mathbf{Y}=\Baire$.
Now there is a computable $r\in\IN^\IN$ with $\Phi_r\langle s,t\rangle=s$. This $r$ witnesses that
the multi-valued $f:\mathbf{X}'\mto\mathbf{Y},0\mapsto\{p,q,\langle p,q\rangle\}$ is strongly computable,
i.e., $\Phi_r^{\mathbf{X}', \mathbf{Y}}=f$. 
But the same $f$ considered as map of type $f:\mathbf{X}\mto\mathbf{Y}$ is not strongly computable. 
\end{proof}

In many cases, equality is too strong a requirement for multi-valued functions. Instead we work with the notion of tightening.
For our endeavor, a crucial property of the notion of strong continuity is captured in the following lemma, whose
proof follows immediately from the definitions (and the fact that $\delta_X^{\mathrm{cyl}}$ is computably equivalent
to $\delta_X$).

\begin{lemma}
\label{lem:tightening}
Every computable (continuous) $f: \subseteq \mathbf{X} \mto \mathbf{Y}$ has a strongly computable (continuous) tightening
$g : \subseteq \mathbf{X} \mto \mathbf{Y}$.
\end{lemma}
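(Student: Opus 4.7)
The plan is to construct $g$ directly from a realizer of $f$, exploiting the fact that the cylindrification $\delta_X^{\mathrm{cyl}}$ merely attaches a harmless second coordinate to a $\delta_X$-name. Concretely, I would pick a computable (resp.\ continuous) realizer $F:\subseteq\Baire\to\Baire$ of $f$, and then choose a computable (resp.\ arbitrary) parameter $p\in\Baire$ such that
\[\Phi_p\langle q,r\rangle = F(q)\]
for all $q,r\in\Baire$ (informally, $\Phi_p$ is $F$ precomposed with the projection onto the first pairing coordinate). By the very definition of $\mathcal{M}(\mathbf{X},\mathbf{Y})$, the multi-valued function $g:=\Phi_p^{\mathbf{X},\mathbf{Y}}$ is then strongly computable (resp.\ strongly continuous), so it remains only to check the tightening conditions.

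For the domain inclusion $\dom(f)\subseteq\dom(g)$: if $x\in\dom(f)$, then every $q\in\delta_X^{-1}(\{x\})$ lies in $\dom(f\delta_X)$, so $F(q)$ is defined and $\delta_Y F(q)\in f(x)$ by the realizer property; hence $\Phi_p\langle q,r\rangle = F(q)$ is defined and in $\dom(\delta_Y)$ for every $r$, which is exactly the condition for $x\in\dom(g)$. For the image inclusion $g(x)\subseteq f(x)$ at $x\in\dom(f)$: every $y\in g(x)$ arises as $\delta_Y \Phi_p\langle q,r\rangle = \delta_Y F(q)$ for some $q\in\delta_X^{-1}(\{x\})$, and we just observed that this lies in $f(x)$. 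In particular $g(x)\neq\emptyset$, so $g$ is a genuine multi-valued function.

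I do not foresee any real obstacle: the only idea beyond unpacking definitions is to use the spare coordinate of the cylindrification to absorb the discrepancy between representing partial multi-valued functions via realizers (which read only a $\delta_X$-name) and via strongly continuous functions (which read a $\delta_X^{\mathrm{cyl}}$-name). This is precisely the point flagged in the statement by the reference to the computable equivalence of $\delta_X$ and $\delta_X^{\mathrm{cyl}}$, and the continuous version of the lemma is obtained by the identical argument, dropping the computability requirement on $p$.
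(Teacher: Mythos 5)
Your proof is correct and is exactly the argument the paper has in mind (the paper omits it as "immediate from the definitions"): precompose a realizer $F$ of $f$ with the projection onto the first pairing coordinate, which is precisely where the computable equivalence of $\delta_X^{\mathrm{cyl}}$ with $\delta_X$ enters. One small imprecision: for an arbitrary computable (continuous) $F$ one can only guarantee a $p$ with $\Phi_p\langle q,r\rangle = F(q)$ for $q\in\dom(F)$, with $\Phi_p$ possibly defined on further inputs; this is harmless, since a larger domain is allowed for a tightening and your verification of $g(x)\subseteq f(x)$ only uses names $\langle q,r\rangle$ with $\delta_X(q)=x\in\dom(f)$, where $q\in\dom(F)$ and the two agree.
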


A definition of strong computability similar to the present one was investigated in \cite[Definition 7.1]{Bra03} and \cite{Bra99b}, with the additional requirement that for each fixed $q \in \delta_\mathbf{X}^{-1}\{x\}$ we have that \[\{\Phi_p\langle q,r\rangle : r \in \Baire\} = \delta_\mathbf{Y}^{-1}(\Phi^{\mathbf{X},\mathbf{Y}}_{p}(x)).\] 
While this extra requirement makes the notion better behaved in some respects such as invariance under homeomorphisms and closure under composition (cf.~Definition \ref{def:efftraceable}), there is a strong price to pay: There are computable multi-valued functions not tightened by any multi-valued function of the stronger notion.\footnote{Let $p, q \in \Cantor$ be Turing incomparable. Let $\{0\}$ be the one-element space whose representation is $\delta : \{p,q\} \to \{0\}$. Then $f :\{0\}\mto \Baire,0\mapsto\{p,q\}$ is computable, but has no tightening that satisfies \cite[Definition 7.1]{Bra03}.} Thus, we do not adopt the additional requirement here.

For a set $M$ of (continuous) multi-valued functions $f:\In\mathbf{Y}\mto\mathbf{Z}$, we introduce the notation $\uparrow\!M:=\{f:\subseteq\mathbf{Y}\mto\mathbf{Z}:f$ tightens some $g\in M\}$ for the set of tightenings.  
Whenever we have some (multi-valued) operation $\Gamma : \subseteq \mathbf{X} \mto \mathcal{M}(\mathbf{Y}, \mathbf{Z})$, we shall use the notation
$\uparrow\!\Gamma : \subseteq \mathbf{X} \mto \mathcal{M}(\mathbf{Y}, \mathbf{Z}),x\mapsto\;\uparrow\!\Gamma(x)$.
With this framework, we can formulate some closure properties of $\mathcal{M}(-,-)$. Some of these properties are related to currying and uncurrying, which we define first for multi-valued functions. The following operation is called {\em uncurrying}
\[\operatorname{UnCurry}:\mathcal{M}(\mathbf{X},\mathcal{M}(\mathbf{Y},\mathbf{Z})) \to \mathcal{M}(\mathbf{X} \times \mathbf{Y}, \mathbf{Z}),\operatorname{UnCurry}(f)(x,y):=\bigcup_{g\in f(x)}g(y),\]
where $\dom(\operatorname{UnCurry}(f)):=\{(x,y)\in X\times Y:x\in\dom(f)$ and $(\forall g\in f(x))\;y\in\dom(g)\}$.
We will call the multi-valued inverse $\operatorname{WeakCurry}:\mathcal{M}(\mathbf{X} \times \mathbf{Y}, \mathbf{Z}) \mto \mathcal{M}(\mathbf{X},\mathcal{M}(\mathbf{Y},\mathbf{Z}))$ of $\operatorname{UnCurry}$ {\em weak currying}.
We note that $\operatorname{UnCurry}(f)=\operatorname{UnCurry}(\uparrow\!\!f)$.

\begin{proposition}[Closure properties]
\label{prop:mathcalm}
The following operations are computable for any represented spaces $\mathbf{X}$, $\mathbf{Y}$, $\mathbf{Z}$, $\mathbf{U}$:
\begin{enumerate}
\item $\operatorname{in}: \mathcal{C}(\mathbf{X}, \mathbf{Y}) \to \mathcal{M}(\mathbf{X}, \mathbf{Y}),f\mapsto f$
\item $\operatorname{ev}: \subseteq \mathcal{M}(\mathbf{X},\mathbf{Y})\times \mathbf{X}  \mto \mathbf{Y},(f,x) \mapsto f(x)$
\item $\uparrow\!\circ :  \mathcal{M}(\mathbf{Y},\mathbf{Z})\times \mathcal{M}(\mathbf{X},\mathbf{Y}) \mto \mathcal{M}(\mathbf{X},\mathbf{Z}),(f,g)\mapsto\;\uparrow\!\{f\circ g\}$
\item $\circ : \mathcal{C}(\mathbf{Y},\mathbf{Z}) \times  \mathcal{M}(\mathbf{X},\mathbf{Y})\to \mathcal{M}(\mathbf{X},\mathbf{Z}),(f,g)\mapsto f\circ g$
\item $\operatorname{UnCurry} : \mathcal{M}(\mathbf{X},\mathcal{M}(\mathbf{Y},\mathbf{Z})) \to \mathcal{M}(\mathbf{X} \times \mathbf{Y},\mathbf{Z})$
\item $\operatorname{WeakCurry}: \mathcal{M}(\mathbf{X} \times \mathbf{Y}, \mathbf{Z}) \mto \mathcal{M}(\mathbf{X},\mathcal{M}(\mathbf{Y},\mathbf{Z}))$ 
\item $\times : \mathcal{M}(\mathbf{X},\mathbf{Y}) \times \mathcal{M}(\mathbf{U},\mathbf{Z}) \to \mathcal{M}(\mathbf{X} \times \mathbf{U},\mathbf{Y} \times \mathbf{Z}),(f,g)\mapsto f\times g$
\item $\Wsup  : \mathcal{M}(\mathbf{X},\mathbf{Y}) \times \mathcal{M}(\mathbf{U},\mathbf{Z}) \to \mathcal{M}(\mathbf{X} \Wsup  \mathbf{U},\mathbf{Y} \Wsup  \mathbf{Z}),(f,g)\mapsto f\Wsup g$
\end{enumerate}
\begin{proof}\ 
\begin{enumerate}
\item There is a computable $S:\IN^\IN\to\IN^\IN$ such that $\Phi_{S(p)}\langle q,r\rangle=\Phi_p(q)$. This $S$ is a realizer of the injection $\operatorname{in}: \mathcal{C}(\mathbf{X}, \mathbf{Y}) \to \mathcal{M}(\mathbf{X}, \mathbf{Y})$.
\item This is a consequence of the fact that $\Phi$ satisfies a utm-Theorem in the sense that there is a computable $u\in\IN^\IN$ such that
        $\Phi_u\langle p,q\rangle=\Phi_p\langle q,q\rangle$ for all $p,q\in\IN^\IN$.
\item There is a computable $c : \Baire\to \Baire$ with $\Phi_{c\langle p,q\rangle}\langle x, \langle r_1,r_2\rangle\rangle = \Phi_p\langle \Phi_q\langle x,r_1 \rangle, r_2\rangle$. This $c$ is a realizer of $\uparrow \circ :\mathcal{M}(\mathbf{Y},\mathbf{Z})\times \mathcal{M}(\mathbf{X},\mathbf{Y}) \mto \mathcal{M}(\mathbf{X},\mathbf{Z})$.
\item There is a computable $c : \Baire\to \Baire$ with $\Phi_{c\langle p,q\rangle}= \Phi_p\circ\Phi_q$. This $c$ is a realizer of the composition
$\circ : \mathcal{C}(\mathbf{Y},\mathbf{Z})\times\mathcal{M}(\mathbf{X},\mathbf{Y}) \to \mathcal{M}(\mathbf{X},\mathbf{Z})$. 
\item Let computable $R : \Baire \to \Baire$ be such that $\Phi_{R(p)}\langle \langle x, y\rangle, \langle r_1, r_2\rangle \rangle = \Phi_{\Phi_p\langle x, r_1\rangle}\langle y,r_2\rangle$. Then $R$ realizes $\operatorname{UnCurry}$.
\item There is a computable $T:\IN^\IN\to\IN^\IN$ such that $\Phi_{\Phi_{T(p)}\langle x,r_1\rangle}\langle y,r_2\rangle=\Phi_p\langle\langle x,y\rangle,\langle r_1,r_2\rangle\rangle$.
        This $T$ realizes $\operatorname{WeakCurry}$.
\item There is a computable function $m : \Baire \to \Baire$ that satisfies $\Phi_{m\langle p,q\rangle}\langle \langle x, y\rangle, \langle r_1, r_2\rangle \rangle = \langle \Phi_p\langle x, r_1\rangle, \Phi_q\langle y, r_2\rangle\rangle$. This $m$ realizes $\times$.
\item There is a computable $a : \Baire \to \Baire$ such that $\Phi_{a\langle p,q\rangle}\langle 0x, r\rangle = \Phi_p\langle x, r\rangle$ and $\Phi_{a\langle p,q\rangle}\langle 1x, r\rangle = \Phi_q\langle x, r\rangle$. Hence $a$ realizes $\Wsup$.
  \qedhere
\end{enumerate}
\end{proof}
\end{proposition}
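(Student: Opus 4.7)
The overall strategy is that each of the eight items unpacks to the construction of a single computable function $\Baire\to\Baire$ that realizes the stated operation, and each such function is built by routine applications of the \textsc{utm}- and \textsc{smn}-theorems for the universal Turing machine underlying $\Phi$. Since $\delta_X^{\mathrm{cyl}}$ carries each point an entire Baire-space worth of auxiliary input, the realizers have room to freely permute, duplicate, and suppress the cylindrification component $r$ at each stage; this flexibility is what makes all the compositional manipulations below go through.

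Concretely, I would proceed item by item. For (1) (inclusion $\mathcal{C}\hookrightarrow\mathcal{M}$), given a $\Phi_p$ realizing a continuous $f$, I build $S$ with $\Phi_{S(p)}\langle q,r\rangle=\Phi_p(q)$, simply discarding $r$. For (2) (evaluation) the \textsc{utm}-theorem itself gives a $u$ with $\Phi_u\langle p,q\rangle=\Phi_p\langle q,q\rangle$, which exactly models reading a code together with an input and passing that input as both the $\delta_X$-name and cylindrification part. For (3) ($\uparrow\!\circ$) I combine two codes $p,q$ into one that, given $\langle x,\langle r_1,r_2\rangle\rangle$, first runs $\Phi_q\langle x,r_1\rangle$ to obtain an intermediate name and then feeds that together with $r_2$ into $\Phi_p$; the fact that only a tightening is produced is exactly what lets the two halves of the cylindrification be used independently. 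Item (4) is the special case of (3) where $f$ is total single-valued and the tightening is automatic.

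The currying pair (5)--(6) is the substantive part. For $\operatorname{UnCurry}$ I want a computable $R$ with
\[\Phi_{R(p)}\bigl\langle \langle x,y\rangle,\langle r_1,r_2\rangle\bigr\rangle=\Phi_{\Phi_p\langle x,r_1\rangle}\langle y,r_2\rangle,\]
obtained by first running $\Phi_p\langle x,r_1\rangle$ to get a name of an element of $\mathcal{M}(\mathbf{Y},\mathbf{Z})$ and then invoking it on $\langle y,r_2\rangle$ via the \textsc{utm}-property. For $\operatorname{WeakCurry}$ I use \textsc{smn}: given $p$ encoding $h:\mathbf{X}\times\mathbf{Y}\mto\mathbf{Z}$, there is computable $T$ with $\Phi_{T(p)}\langle x,r_1\rangle$ being (by \textsc{smn}) a code for the function $\langle y,r_2\rangle\mapsto \Phi_p\langle\langle x,y\rangle,\langle r_1,r_2\rangle\rangle$. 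The main conceptual obstacle — and where the setup really matters — is verifying that this $T$ is actually a realizer of $\operatorname{WeakCurry}$ in the relevant spaces: one must check that $\operatorname{UnCurry}(T(p))=\Phi_p^{\mathbf{X}\times\mathbf{Y},\mathbf{Z}}$ up to the tightening/weakening that $\operatorname{WeakCurry}$ is allowed, using that $\operatorname{UnCurry}(f)=\operatorname{UnCurry}(\uparrow\!f)$ noted before the proposition and the fact that pairs from $\delta_X^{\mathrm{cyl}}\times\delta_Y^{\mathrm{cyl}}$ produce names in $\delta_{X\times Y}^{\mathrm{cyl}}$.

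Finally, (7) and (8) are purely syntactic: for the product I exhibit $m$ with $\Phi_{m\langle p,q\rangle}\langle\langle x,y\rangle,\langle r_1,r_2\rangle\rangle=\langle\Phi_p\langle x,r_1\rangle,\Phi_q\langle y,r_2\rangle\rangle$, parallelizing the two machines on disjoint halves of the cylindrification; for the coproduct I exhibit $a$ with $\Phi_{a\langle p,q\rangle}\langle ix,r\rangle=\Phi_{p_i}\langle x,r\rangle$ (with $p_0=p,p_1=q$), dispatching on the leading tag of the coproduct representation. Both are direct \textsc{smn}/composition constructions. I anticipate that the only place warranting care is the currying pair, where one must be explicit that the extra room provided by the cylindrification is precisely what permits the two ``layers'' of auxiliary data $r_1,r_2$ to be kept apart, so that the composite really corresponds to evaluating a multi-valued function of a multi-valued function on the full product representation.
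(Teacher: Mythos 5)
Your proposal is correct and follows essentially the same route as the paper: each item is established by exhibiting an explicit computable transformation of codes via the utm- and smn-theorems, with the same splitting of the cylindrification component into $r_1,r_2$ for items (3) and (5)--(7). The only cosmetic difference is that the paper handles (4) directly via $\Phi_{c\langle p,q\rangle}=\Phi_p\circ\Phi_q$ (exploiting that a $\mathcal{C}$-name is a realizer acting on arbitrary $\delta_{\mathbf{Y}}$-names, which is what makes the result exactly $f\circ g$ rather than a proper tightening), whereas you derive it as a special case of (3); that shortcut works but quietly relies on the same coverage argument.
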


As an immediate corollary we can conclude that the function space construction $\mathcal{M}(-,-)$ preserves computable homeomorphims with respect to the target space. 
We call two represented spaces $\mathbf{X}$ and $\mathbf{X'}$ {\em computably homeomorphic}, if there is
a homeomorphism $f:X\to X'$ such that $f$ and $f^{-1}$ are computable. In this case we write $\mathbf{X}\cong\mathbf{X'}$.

\begin{corollary}
\label{cor:target-homeomorphism}
For represented spaces $\mathbf{X}$ and $\mathbf{Y}\cong\mathbf{Y}'$ we obtain $\mathcal{M}(\mathbf{X},\mathbf{Y}) \cong \mathcal{M}(\mathbf{X},\mathbf{Y}')$.
\begin{proof}
By Proposition \ref{prop:mathcalm} (4).
\end{proof}
\end{corollary}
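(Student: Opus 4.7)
The plan is to build the computable homeomorphism $\mathcal{M}(\mathbf{X},\mathbf{Y}) \to \mathcal{M}(\mathbf{X},\mathbf{Y}')$ by postcomposition with the given homeomorphism $h : \mathbf{Y} \to \mathbf{Y}'$, using part~(4) of Proposition~\ref{prop:mathcalm} as the workhorse.

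First, fix a computable homeomorphism $h : \mathbf{Y} \to \mathbf{Y}'$ with computable inverse $h^{-1} : \mathbf{Y}' \to \mathbf{Y}$; these are computable points of $\mathcal{C}(\mathbf{Y},\mathbf{Y}')$ and $\mathcal{C}(\mathbf{Y}',\mathbf{Y})$ respectively. Define
\[ \Phi : \mathcal{M}(\mathbf{X},\mathbf{Y}) \to \mathcal{M}(\mathbf{X},\mathbf{Y}'), \quad f \mapsto h \circ f, \]
\[ \Psi : \mathcal{M}(\mathbf{X},\mathbf{Y}') \to \mathcal{M}(\mathbf{X},\mathbf{Y}), \quad g \mapsto h^{-1} \circ g. \]
Since $h$ and $h^{-1}$ are single-valued and total, the ordinary composition coincides with the multi-valued composition on the domains of $f$ and $g$, so $\Phi$ and $\Psi$ are well-defined set-theoretic maps.

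Next, I invoke Proposition~\ref{prop:mathcalm}(4): the map $\circ : \mathcal{C}(\mathbf{Y},\mathbf{Y}') \times \mathcal{M}(\mathbf{X},\mathbf{Y}) \to \mathcal{M}(\mathbf{X},\mathbf{Y}')$ is computable. Plugging in the computable point $h$ in the first coordinate, the resulting partial application $f \mapsto h \circ f$ is computable, so $\Phi$ is computable. The same argument with $h^{-1}$ gives computability of $\Psi$.

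Finally, I check that $\Phi$ and $\Psi$ are mutually inverse. For $f \in \mathcal{M}(\mathbf{X},\mathbf{Y})$ and $x \in \dom(f)$,
\[ \Psi(\Phi(f))(x) = h^{-1}(h(f(x))) = f(x), \]
and symmetrically $\Phi(\Psi(g)) = g$, since $h \circ h^{-1} = \id_{\mathbf{Y}'}$ and $h^{-1} \circ h = \id_{\mathbf{Y}}$ as single-valued total maps. Hence $\Phi$ is a computable bijection with computable inverse $\Psi$, i.e., a computable homeomorphism. I do not expect any real obstacle here; the only thing to be careful about is that composition with single-valued total functions behaves identically whether viewed in the multi-valued or functional sense, so the equalities $\Psi \circ \Phi = \id$ hold on the nose and not merely up to tightening.
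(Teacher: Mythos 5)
Your proof is correct and follows exactly the route the paper intends: the paper's entire proof is the one-line citation of Proposition~\ref{prop:mathcalm}~(4), and your argument simply spells out the details that citation leaves implicit, namely that post-composition with the computable points $h$ and $h^{-1}$ gives mutually inverse computable maps between $\mathcal{M}(\mathbf{X},\mathbf{Y})$ and $\mathcal{M}(\mathbf{X},\mathbf{Y}')$. Your remark that composition with a total single-valued map agrees on the nose with multi-valued composition (so no tightening issues arise) is exactly the right point to check.
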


By the preceding corollary the space $\mathcal{M}(\mathbf{X},\mathbf{Y})$ does not depend on the specific choice of representation for $\mathbf{Y}$, however, it may depend on the specific choice of representation for $\mathbf{X}$ by Lemma~\ref{lem:source-homeomorphism}. For a large class of spaces we can obtain a canonic choice, though. We will discuss these {\em effectively traceable spaces} in an appendix in section~\ref{sec:effectively-traceable}.

\subsection{Composition}

Given a multi-valued function $g : \subseteq \mathbf{X} \mto \mathbf{Y}$, and a third represented space $\mathbf{U}$, define its {\em transposition} $g_\mathbf{U}^\t : \subseteq \mathcal{M}(\mathbf{Y}, \mathbf{U})\times  \mathbf{X} \mto \mathbf{U}$ via $g_\mathbf{U}^\t (h,x) = h\circ g(x)$. We find $g_\mathbf{U}^\t \leqW g$. Whenever $h : \subseteq \mathbf{Y} \mto \mathbf{U}$ is a continuous (computable) multi-valued function, it is a weakening of some (computable) $h' \in \mathcal{M}(\mathbf{Y}, \mathbf{U})$, and we find that $h \circ g$ is a weakening of $g^\t_\mathbf{U}(h',\cdot)$.
With the help of the transposition we can formulate the following characterization\footnote{We would like to thank Peter Hertling for pointing out a mistake in an earlier version of this result that also led to a new definition of $f\star g$ below.}
(where the set is formed over all composable $f',g'$ with types that fit together).
We recall that $f$ is called a {\em cylinder} if $f\equivSW \id\times f$ \cite{BG11}. If $f$ is a cylinder, then $g\leqW f$ is equivalent
to $g\leqSW f$ for all $g$.

\begin{proposition}
\label{prop:starcharac}
$f\circ g^\t_{\mathbf{U}} \equivW \max_{\leqW} \{f' \circ g' : f' \leqW f \wedge g' \leqW g\}$ for a
cylinder $f : \subseteq \mathbf{U} \mto \mathbf{V}$ and arbitrary $g : \subseteq \mathbf{X} \mto \mathbf{Y}$.
\end{proposition}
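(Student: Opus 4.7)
My plan is to verify that $f \circ g^\t_\mathbf{U}$ is the maximum of the given set under $\leqW$: that it belongs to the set, and that it is an upper bound. Membership is immediate from the preceding remark that $g^\t_\mathbf{U} \leqW g$ combined with the trivial $f \leqW f$; take $f' := f$ and $g' := g^\t_\mathbf{U}$.

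For the upper bound, fix $g' :\subseteq \mathbf{A} \mto \mathbf{B}$ with $g' \leqW g$ and $f' :\subseteq \mathbf{B} \mto \mathbf{C}$ with $f' \leqW f$, and extract reduction witnesses: computable $K_1, H_1$ with $K_1 \langle \id, G H_1\rangle \vdash g'$ for every $G \vdash g$, and, using that $f$ is a cylinder to upgrade $f' \leqW f$ to $f' \leqSW f$, computable $K_2, H_2$ with $K_2 F H_2 \vdash f'$ for every $F \vdash f$. The cylinder hypothesis enters here and only here; it is what lets $K_2$ operate on the $f$-output alone, which matters because the intermediate $\mathbf{B}$-name is ``hidden'' inside the application of $g^\t_\mathbf{U}$.

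Now construct the reduction $f' \circ g' \leqW f \circ g^\t_\mathbf{U}$. Given a name $p$ of some $a \in \dom(f' \circ g')$, the utm-theorem yields a computable $S$ with $\Phi_{S(p)}\langle y, r\rangle = H_2(K_1\langle p, y\rangle)$; then $S(p)$ is a name in $\mathcal{M}(\mathbf{Y}, \mathbf{U})$ of a multi-valued function $h_p$ that bundles the post-processing step $H_2 \circ K_1\langle p, \cdot\rangle$ into a single strongly continuous function. Set $H(p) := \langle S(p), H_1(p)\rangle$, so $H(p)$ names the pair $(h_p, x)$ for $x$ named by $H_1(p)$; pipe this through any realizer $\Gamma$ of $f \circ g^\t_\mathbf{U}$; and let $K\langle p, q\rangle := K_2(q)$.

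Verification proceeds by name-tracking. Any output $v := \Gamma(H(p))$ is a name of some $c \in f(h_p \circ g(x))$, so $c \in f(u)$ for some $u$ named by $H_2(K_1\langle p, y\rangle)$ for some name $y$ of some $y_\mathbf{Y} \in g(x)$. Writing $n := K_1\langle p, y\rangle$, the reduction for $g'$ shows that $n$ names some $b \in g'(a)$, and then the strong reduction for $f'$ yields $K_2(v)$ as a name of an element of $f'(b) \subseteq f'(g'(a))$, as required. The main bookkeeping obstacle is confirming that the domain conditions $(h_p, x) \in \dom(g^\t_\mathbf{U})$ and $g^\t_\mathbf{U}(h_p, x) \subseteq \dom(f)$ hold, and that $h_p$ is genuinely an element of $\mathcal{M}(\mathbf{Y}, \mathbf{U})$; both follow by chasing names of elements of $g(x)$ through $K_1$ and $H_2$ using the hypothesis $a \in \dom(f' \circ g')$.
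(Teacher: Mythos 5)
Your proof is correct and follows essentially the same route as the paper's: membership via $g^\t_{\mathbf{U}} \leqW g$, and for the upper bound the cylinder hypothesis upgrades $f' \leqW f$ to $f' \leqSW f$ precisely so that the combined middle witness $H_2 \circ K_1\langle p, \cdot\rangle$ can be packaged, uniformly in the input, as an element of $\mathcal{M}(\mathbf{Y},\mathbf{U})$ fed into $g^\t_{\mathbf{U}}$. The only difference is presentational: the paper performs this packaging at the level of multi-valued functions by tightening $H'\circ K$ (Lemma \ref{lem:tightening}) and applying $\operatorname{WeakCurry}$ (Proposition \ref{prop:mathcalm}), whereas you inline that machinery via an explicit construction of the name $S(p)$ — which, as a minor point, is an application of the smn-theorem rather than the utm-theorem.
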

\begin{proof}
As $g_{\mathbf{U}}^\t \leqW g$, we find $f \circ g_{\mathbf{U}}^\t\in \{f' \circ g' : f' \leqW f \wedge g' \leqW g\}$.
For the other direction, consider some $f' \leqW f$, $g' \leqW g$ with $f' : \subseteq \mathbf{B} \mto \mathbf{C}$ and $g' : \subseteq \mathbf{A} \mto \mathbf{B}$ (so $f' \circ g'$ is defined). 
Since $f$ is a cylinder, we even obtain $f'\leqSW f$.
By Lemma~\ref{lem:characterization} there are computable multi-valued functions $H : \subseteq \mathbf{A} \mto\IN^\IN\times \mathbf{X}$, $H' : \subseteq \mathbf{B} \mto \mathbf{U}$, $K : \subseteq \IN^\IN \times \mathbf{Y} \mto \mathbf{B}$, $K' : \subseteq \mathbf{V} \mto \mathbf{C}$ such that $f' \circ g'$ is a weakening of $K' \circ f \circ H' \circ K \circ (\id \times g) \circ H$. Thus, we find: 
\[f' \circ g'  \leqW K' \circ f \circ H' \circ K \circ (\id\times g) \circ H\leqW f \circ H' \circ K \circ (\id \times g).\]
We can tighten the computable $H' \circ K:\subseteq\IN^\IN \times \mathbf{Y} \mto \mathbf{U}$ 
to obtain a strongly computable $T:\subseteq \IN^\IN \times \mathbf{Y} \mto \mathbf{U}$  and then we can apply $\operatorname{WeakCurry}(T)$ to obtain a computable multi-valued function
$F : \subseteq \IN^\IN\mto \mathcal{M}(\mathbf{Y}, \mathbf{U})$, such that 
\[
g_{\mathbf{U}}^\t \circ(F\times \id_{\mathbf{X}})(p,x)
= \bigcup_{\varphi\in F(p)}\varphi\circ g(x)
=T\circ(\id\times g)(p,x)\]
for all $(p,x)$ in the domain of the left-hand side. We can continue our estimate as:
\[f' \circ g'\leqW f\circ g_{\mathbf{U}}^\t \circ (F\times \id_{\mathbf{X}} ) \leqW f\circ g_{\mathbf{U}}^\t,\]
which concludes the proof.
\end{proof}

For arbitrary $f$ we obtain the {\em cylindrification} $\id\times f\equivW f$, which is always a cylinder.
Hence, we can define
the {\em compositional product} of Weihrauch degrees: For $f : \subseteq \mathbf{U} \mto \mathbf{V}$ and $g : \subseteq \mathbf{X} \mto \mathbf{Y}$, define $f \star g : \subseteq  \mathcal{M}(\mathbf{Y}, \mathbf{\IN^\IN} \times \mathbf{U}) \times \mathbf{X}   \mto  \mathbf{\IN^\IN} \times \mathbf{V}$ via $f \star g := (\id \times f) \circ g_{\mathbf{\IN^\IN} \times \mathbf{U}}^\t$.

\begin{lemma}
\label{lem:product-cylinder}
$f\star g$ is a cylinder.
\end{lemma}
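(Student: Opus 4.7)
My plan is to verify the cylinder condition $f\star g \equivSW \id \times (f\star g)$. The direction $f\star g \leqSW \id \times (f\star g)$ is trivial (project away the extra $\mathbb{N}^\mathbb{N}$-component of the output and pair the input with any fixed computable $r$), so the content lies in the reduction $\id \times (f\star g) \leqSW f\star g$. The key conceptual point is that the codomain of $f\star g$ already contains a distinguished $\mathbb{N}^\mathbb{N}$-factor, which we can exploit to smuggle an arbitrary extra parameter through the oracle call.

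Concretely, given an input $(r,(h,x)) \in \mathbb{N}^\mathbb{N} \times \mathcal{M}(\mathbf{Y}, \mathbb{N}^\mathbb{N} \times \mathbf{U}) \times \mathbf{X}$ to $\id \times (f\star g)$, I would form a modified strongly continuous multi-valued function
\[ h_r \in \mathcal{M}(\mathbf{Y}, \mathbb{N}^\mathbb{N} \times \mathbf{U}), \qquad h_r(y) := \{(\langle r, \sigma\rangle, u) : (\sigma, u) \in h(y)\}, \]
and invoke the oracle on $(h_r, x)$. Unfolding the definition gives
\[ (f\star g)(h_r, x) = (\id \times f)(h_r(g(x))) = \{(\langle r, \sigma\rangle, v) : (\sigma, u) \in h(g(x)),\ v \in f(u)\}, \]
and from any output $(\langle r, \sigma\rangle, v)$ of the oracle the computable map $K : (\langle r, \sigma\rangle, v) \mapsto (r, (\sigma, v))$ returns a valid element of $\{r\} \times (f\star g)(h, x) = (\id \times (f\star g))(r,(h,x))$.

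What remains is to check that the preprocessing $H : (r,(h,x)) \mapsto (h_r, x)$ is computable. This reduces to producing, uniformly in $r$ and a name of $h$, a name of $h_r$; since $h_r$ is obtained from $h$ by post-composition with the continuous map $(\sigma, u) \mapsto (\langle r, \sigma\rangle, u)$, which is itself continuously parameterized by $r$, the closure properties of $\mathcal{M}(-,-)$ established in Proposition~\ref{prop:mathcalm} (in particular item~(4) together with the utm-theorem used in its proof) supply such an $H$. This step is the only place requiring care, but it is a routine application of the toolkit already built in that proposition; I do not anticipate any genuine obstacle, since the cylinder property is essentially engineered into the definition of $\star$ through the presence of the $\mathbb{N}^\mathbb{N}$-factor in its codomain.
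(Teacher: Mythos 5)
Your proof is correct and follows essentially the same route as the paper: the paper's argument also encodes the extra parameter $p$ into the $\Baire$-factor of the codomain by replacing $h$ with an $h_p$ satisfying $(\langle q,r\rangle,u)\in h_p(y)\iff q=p$ and $(r,u)\in h(y)$, which is exactly your $h_r$. Your additional remark justifying computability of $h\mapsto h_r$ via Proposition~\ref{prop:mathcalm}~(4) is a correct elaboration of a step the paper leaves implicit.
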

\begin{proof}
Given an input $(p,h)\in\IN^\IN\times\mathcal{M}(\mathbf{Y}, \mathbf{\IN^\IN} \times \mathbf{U})$ to $\id\times (f\star g)$
we can compute a $h_p\in\mathcal{M}(\mathbf{Y}, \mathbf{\IN^\IN} \times \mathbf{U})$ such that
$(\langle q,r\rangle,u)\in h_p(y)\iff q=p\mbox{ and }(r,u)\in h(y)$.
This implies $\id\times (f\star g)\leqSW f\star g$ and hence the claim follows.
\end{proof}

The following corollary characterizes the compositional product as a maximum.

\begin{corollary}
\label{cor:starcharac}
$f \star g \equivW \max_{\leqW} \{f' \circ g' : f' \leqW f \wedge g' \leqW g\}$.
\end{corollary}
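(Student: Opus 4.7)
The plan is to deduce the corollary directly from Proposition~\ref{prop:starcharac} by applying it to the cylindrification $\id\times f$ in place of $f$, and to $\IN^\IN\times\mathbf{U}$ in place of $\mathbf{U}$. Since by definition $f\star g = (\id\times f)\circ g^\t_{\IN^\IN\times\mathbf{U}}$, the left-hand side of the proposition instantiated this way is literally $f\star g$. So the corollary will follow from two elementary verifications: (a) the hypothesis of Proposition~\ref{prop:starcharac} is met, i.e., $\id\times f$ is a cylinder; and (b) the set on the right-hand side of the proposition, when stated with $\id\times f$ in place of $f$, describes the same collection of Weihrauch degrees as the set in the corollary.

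For (a), I will verify the standard fact that every cylindrification is itself a cylinder, i.e., $\id\times h \equivSW \id\times(\id\times h)$ for any multi-valued $h$. The two strong reductions are witnessed by obvious pair/unpair operations on the extra $\Baire$-components and use no property of $h$ at all; this is essentially the same argument as in Lemma~\ref{lem:product-cylinder}. For (b), the key point is that $\id\times f \equivW f$, hence $f''\leqW \id\times f$ if and only if $f''\leqW f$. Thus the Weihrauch degrees of the form $f''\circ g'$ with $f''\leqW \id\times f$ and $g'\leqW g$ are exactly those of the form $f'\circ g'$ with $f'\leqW f$ and $g'\leqW g$, so the two suprema (and, once existence is known, the two maxima) coincide.

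Putting these together, Proposition~\ref{prop:starcharac} gives
\[
f\star g \;=\; (\id\times f)\circ g^\t_{\IN^\IN\times\mathbf{U}}
\;\equivW\; \max_{\leqW}\{f''\circ g' : f''\leqW \id\times f,\; g'\leqW g\}
\;=\; \max_{\leqW}\{f'\circ g' : f'\leqW f,\; g'\leqW g\},
\]
which is the stated equivalence; existence of the maximum is inherited from the proposition, where it is attained by the left-hand side itself.

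There is no real obstacle: all the technical content is already in Proposition~\ref{prop:starcharac}. The only points worth spelling out are the cylinder check and the observation that replacing $f$ by an equivalent degree does not alter the set appearing in the maximum.
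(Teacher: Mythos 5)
Your proposal is correct and is exactly the argument the paper intends: the corollary is stated as an immediate consequence of Proposition~\ref{prop:starcharac} applied to the cylindrification $\id\times f$ (which the paper notes is always a cylinder and satisfies $\id\times f\equivW f$), so that the left-hand side is literally $f\star g$ and the set over which the maximum is taken is unchanged. You have merely spelled out the two routine verifications that the paper leaves implicit.
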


Corollary~\ref{cor:starcharac} guarantees that our definition of $f \star g$  extends to Weihrauch degrees.  

\begin{corollary}
If $f \equivW f'$ and $g \equivW g'$ then $f \star g \equivW f' \star g'$.
\end{corollary}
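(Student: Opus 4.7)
The plan is to derive this as an immediate consequence of Corollary~\ref{cor:starcharac}, which already expresses $f \star g$ as a maximum over all compositions of Weihrauch-lower-bounds. Once the compositional product has been characterized in this reduction-invariant way, monotonicity (and hence congruence with respect to $\equivW$) falls out purely formally.

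More explicitly, first I would fix $f \equivW f'$ and $g \equivW g'$ and consider the two sets
\[S := \{h \circ k : h \leqW f \wedge k \leqW g\}, \qquad S' := \{h \circ k : h \leqW f' \wedge k \leqW g'\}.\]
By transitivity of $\leqW$, the equivalence $f \equivW f'$ gives $h \leqW f \iff h \leqW f'$ for every multi-valued function $h$, and likewise for $g$ and $g'$. Hence $S = S'$ (as collections of multi-valued functions), so in particular their $\leqW$-maxima coincide up to $\equivW$. Corollary~\ref{cor:starcharac} applied on both sides then yields
\[f \star g \;\equivW\; \max_{\leqW} S \;=\; \max_{\leqW} S' \;\equivW\; f' \star g',\]
which is the desired conclusion.

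If one prefers a more direct reading, the same argument can be packaged as two inequalities: from $f \leqW f'$ and $g \leqW g'$, every element of $S$ is also an element of $S'$, and since $f' \star g'$ is (by Corollary~\ref{cor:starcharac}) a $\leqW$-upper bound for $S'$, it is also an upper bound for $S$, so $f \star g \leqW f' \star g'$; swapping the roles of primed and unprimed versions gives the reverse reduction.

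I do not anticipate any real obstacle here, since all the substantive work has been absorbed into Corollary~\ref{cor:starcharac} and the earlier Proposition~\ref{prop:starcharac}. The only point requiring a bit of care is notational: the letters $f'$ and $g'$ play a double role in the excerpt (as the generic lower bounds appearing inside the maximum in Corollary~\ref{cor:starcharac}, and as the specific equivalents $f \equivW f'$ and $g \equivW g'$ in the statement), so when writing the argument I would introduce fresh names such as $h$ and $k$ for the members of $S$ and $S'$ to keep the two uses clearly separated.
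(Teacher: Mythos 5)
Your argument is correct and is exactly the route the paper takes: the corollary is presented there as an immediate consequence of Corollary~\ref{cor:starcharac}, since the set $\{h\circ k: h\leqW f \wedge k\leqW g\}$ depends only on the Weihrauch degrees of $f$ and $g$. Your explicit verification that $S=S'$ (and the remark about renaming the bound variables to avoid clashing with $f'$, $g'$) just spells out what the paper leaves implicit.
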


Once we consider the interaction of $\star$ and $\Wtop$, we arrive at exactly the same situation as discussed for $\times$ and $\Wtop$ in Subsection \ref{subsec:specialdegrees}: $0 \star \Wtop = \Wtop \star 0 = \Wtop$ is the desired outcome, and
in general we will adopt $\mathbf{a}\star\Wtop=\Wtop\star\mathbf{a}=\Wtop$ as true.

Since every Weihrauch degree has representatives of type $f,g:\In\Baire\mto\Baire$, we can assume that $\mathbf{Y}=\mathbf{U}=\Baire$,
which yields $g_{\mathbf{\IN^\IN}\times \mathbf{U}}^\t\equivW g$ and hence we obtain the following corollary of Corollary~\ref{cor:starcharac}.

\begin{corollary}
\label{cor:decomposition}
$f \star g \equivW \max_{\leqW} \{f' \circ g' : f' \equivW f \wedge g' \equivW g\}$
\end{corollary}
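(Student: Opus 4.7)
The plan is to deduce this from Corollary~\ref{cor:starcharac} by exhibiting $f\star g$ itself as an element of the (smaller) index set on the right-hand side. Since $\{f'\circ g':f'\equivW f,\ g'\equivW g\}\subseteq\{f'\circ g':f'\leqW f,\ g'\leqW g\}$, Corollary~\ref{cor:starcharac} already gives $f'\circ g'\leqW f\star g$ for every pair $(f',g')$ appearing in the smaller set. It therefore suffices to produce one witness $(f',g')$ with $f'\equivW f$, $g'\equivW g$, and $f\star g\equivW f'\circ g'$; this simultaneously secures the reverse inequality and confirms that the maximum of the smaller set is attained.

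Following the hint in the paragraph immediately before the statement, I would pass to Baire-space representatives $f,g:\subseteq\Baire\mto\Baire$, so that $\mathbf{U}=\mathbf{Y}=\Baire$. By the very definition of the compositional product, the equation $f\star g=(\id\times f)\circ g^\t_{\IN^\IN\times\Baire}$ then holds literally. I would take $f':=\id\times f$, which is Weihrauch equivalent to $f$ via cylindrification, and $g':=g^\t_{\IN^\IN\times\Baire}$; the reduction $g^\t_{\IN^\IN\times\Baire}\leqW g$ is recorded in the paragraph introducing transposition. So the only remaining step is to verify $g\leqW g^\t_{\IN^\IN\times\Baire}$.

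For this I would exhibit a computable ``padding'' map $h_0:\Baire\to\IN^\IN\times\Baire$, $y\mapsto(0^\IN,y)$, which exists as a computable element of $\mathcal{M}(\Baire,\IN^\IN\times\Baire)$ by Proposition~\ref{prop:mathcalm}(1) applied to this continuous total function. Given any input $x$ to $g$, the reduction computes a name of $h_0$, invokes $g^\t_{\IN^\IN\times\Baire}$ on $(h_0,x)$ to obtain some $(0^\IN,y)$ with $y\in g(x)$, and outputs $y$. With $f'\equivW f$ and $g'\equivW g$ both verified and $f\star g=f'\circ g'$ by definition, the claim follows.

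The only mild obstacle is precisely why the reduction to Baire-space representatives in the preamble is indispensable: for a more restrictive target space $\mathbf{Y}$ there may be no strongly continuous ``lossless'' embedding $\mathbf{Y}\to\IN^\IN\times\mathbf{U}$ to serve as $h_0$, and the reduction $g\leqW g^\t_{\IN^\IN\times\mathbf{U}}$ could then fail. Since every Weihrauch degree does admit a representative of the required type, this causes no actual loss of generality, and the rest of the argument is essentially bookkeeping built on Proposition~\ref{prop:mathcalm} and Corollary~\ref{cor:starcharac}.
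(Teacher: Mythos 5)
Your proof is correct and follows the same route as the paper: pass to Baire-space representatives so that $f\star g=(\id\times f)\circ g^\t_{\Baire\times\Baire}$ is literally a composition of a cylindrification of $f$ with the transposition of $g$, and check $g^\t_{\Baire\times\Baire}\equivW g$. The paper merely asserts this last equivalence, whereas you supply the (correct) padding argument via $y\mapsto(0^\IN,y)$; otherwise the two arguments coincide.
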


For some applications of the compositional product it is useful to have the following characterization. 

\begin{lemma}[Cylindrical decomposition]
\label{lem:cylindrical-decomposition}
For all $f,g$ and all cylinders $F,G$ with $F\equivW f$ and $G\equivW g$ there exists
a computable $K$ such that $f\star g\equivSW F\circ K\circ G$.
In particular, $F\circ K\circ G$ is a cylinder too.
\end{lemma}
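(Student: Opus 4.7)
The plan is to start from the explicit decomposition $f\star g=(\id\times f)\circ g^\t_{\Baire\times\Baire}$ (the definition of $\star$) and to transport it, via the cylinder hypotheses on $F$ and $G$, to a composition of the form $F\circ K\circ G$, with $K$ absorbing all the computable plumbing between the two cylinders.

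First, since $\id\times f\equivW f\equivW F$ and $F$ is a cylinder, the property recalled just before Proposition~\ref{prop:starcharac} (that $\leqW$ and $\leqSW$ coincide when the target is a cylinder) gives $\id\times f\leqSW F$. I would fix witnesses, i.e., computable $H_1,K_1$ such that $K_1\circ F^*\circ H_1\vdash\id\times f$ for every realizer $F^*\vdash F$. The inequality $g^\t_{\Baire\times\Baire}\leqW g\equivW G$ noted before Proposition~\ref{prop:starcharac}, combined with $G$ being a cylinder, analogously yields computable $H_2,K_2$ with $K_2\circ G^*\circ H_2\vdash g^\t_{\Baire\times\Baire}$ for every $G^*\vdash G$.

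Now set $K:=H_1\circ K_2$, which is computable. Given any realizers $F^*\vdash F$ and $G^*\vdash G$, the function $F^*\circ K\circ G^*$ realizes $F\circ K\circ G$, and the identity
\[K_1\circ(F^*\circ K\circ G^*)\circ H_2=(K_1\circ F^*\circ H_1)\circ(K_2\circ G^*\circ H_2)\]
shows the right-hand side is a realizer of $(\id\times f)\circ g^\t_{\Baire\times\Baire}=f\star g$. This yields $f\star g\leqSW F\circ K\circ G$ with preprocessing $H_2$ and postprocessing $K_1$. For the reverse inequality, since $K$ is computable we have $F\circ K\leqW F\equivW f$, so Corollary~\ref{cor:starcharac} gives $F\circ K\circ G\leqW f\star g$; and because Lemma~\ref{lem:product-cylinder} says $f\star g$ is a cylinder, the same cylinder property upgrades this to $F\circ K\circ G\leqSW f\star g$. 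Combining the two directions yields $f\star g\equivSW F\circ K\circ G$. The ``in particular'' clause is immediate from the fact that cylinderhood is preserved under $\equivSW$: if $A\equivSW B$ and $B\equivSW\id\times B$, then also $A\equivSW\id\times B\equivSW\id\times A$, using that $\id\times(-)$ respects strong Weihrauch equivalence.

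The only real hurdle is the bookkeeping of realizer composition: one has to set things up so that the middle piece from the natural realizer of $(\id\times f)\circ g^\t_{\Baire\times\Baire}$ factors as $H_1\circ K_2$, exposing $K$ explicitly between $F^*$ and $G^*$. Once the two witnesses are chosen in the correct orientation, the rest is routine cylinder manipulation and appeal to the previously established characterizations of $\star$.
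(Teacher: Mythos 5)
Your proof is correct and follows essentially the same route as the paper's: both obtain strong reductions $\id\times f\leqSW F$ and $g^\t\leqSW G$ from the cylinder hypotheses, define $K$ as the composition of the inner pre-/post-processors so that the two witnessing reductions concatenate into one for $f\star g=(\id\times f)\circ g^\t$, and derive the converse from Corollary~\ref{cor:starcharac} together with Lemma~\ref{lem:product-cylinder}. The only differences are notational (realizer-level identities versus the tightening relation $\sqsubseteq$, and a swapped indexing of the witnesses).
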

\begin{proof}
Consider $f:\In \mathbf{U}\mto\mathbf{V}$ and $g:\In \mathbf{X}\mto \mathbf{Y}$ and let $F,G$ be cylinders as specified above. 
Then $g_{\Baire\times \mathbf{U}}^\t\leqSW G$ and $\id\times f\leqSW F$ and there are computable $H_1,K_1,H_2,K_2$ such that
$K_1GH_1\sqsubseteq g_{\Baire\times \mathbf{U}}^\t$ and $K_2FH_2\sqsubseteq \id\times f$. Hence $K:=H_2K_1$ is computable and
$K_2FKGH_1\sqsubseteq (\id\times f)\circ g_{\Baire\times \mathbf{U}}^\t=f\star g$ follows, which implies $f\star g\leqSW F\circ K\circ G$.
The reduction $F\circ K\circ G\leqSW f\star g$ follows from Corollary~\ref{cor:starcharac}
and Lemma~\ref{lem:product-cylinder}.
\end{proof}

The operation $\star$ has been studied in the literature before (\cite{BGM12, BLRMP16a, LRP15a}), using Corollary~\ref{cor:starcharac} as a partial definition (as the existence of the maximum was not known to be guaranteed).\footnote{We note that the definition suggested in \cite[Section~5.2]{DDH+16} is only equivalent to ours  for cylinders $g$.}
In particular, we can rephrase known results to say something more about the examples discussed in Subsection \ref{subsec:examples}.

\begin{propC}[\cite{stein,mylatz,Myl06,paulymaster}]
$\lpo \times \lpo \lW \lpo \star \lpo \lW \lpo \times \lpo \times \lpo \lW \lim \equivW \lim \star\, \lpo \lW \lpo \star \lim \lW \lim \star \lim$
\end{propC}

\begin{thmC}[\cite{BBP12}]
\label{thm:BBP12}
For $\mathbf{X},\mathbf{Y} \subseteq \Baire$, $\C_{\mathbf{X}} \star \C_\mathbf{Y} \leqW \C_{\mathbf{X} \times \mathbf{Y}}$, in particular $\C_\mathbf{X} \star \C_\mathbf{X} \equivW \C_\mathbf{X}$ for $\mathbf{X} \times \mathbf{X} \cong \mathbf{X} \subseteq \Baire$.
\phantom{\rule{1mm}{4mm}}
\end{thmC}

\subsection{Implication}

The composition $\star$ admits a residual definable as follows: 
Given multi-valued $f : \subseteq \mathbf{U} \mto \mathbf{V}$ and $g : \subseteq \mathbf{X} \mto \mathbf{Y}$ with $\dom(g)\not=\emptyset$ or $\dom(f)=\emptyset$, 
define multi-valued $(g \rightarrow f) : \subseteq \mathbf{U} \mto    \mathcal{M}(\mathbf{Y}, \mathbf{V})\times\mathbf{X} $ via $(H,x) \in (g \rightarrow f)(u)$ if and only if $H\circ g(x) \subseteq f(u)$ and $\dom(g\to f)=\dom(f)$.
In case $\dom(g)=\emptyset$ and $\dom(f)\not=\emptyset$, we define $(g\to f):=\Wtop$.
In case $\dom(g)\not=\emptyset$ or $\dom(f)=\emptyset$ we obtain that $g_{\mathbf{V}}^\t \circ (g \rightarrow f)$ will be a tightening of $f$,
otherwise $f\leqW\Wtop\equivW 0\star\Wtop \equivW g\star(g\rightarrow f)$.
In any case, $f \leqW g \star (g \rightarrow f)$. Even more (note that the given set is supposed to contain $\Wtop$ or we define $\min_{\leqW}\emptyset=\infty$):

 \begin{theorem}
 \label{theo:impcharac}
 $(g \rightarrow f) \equivW \min_{\leqW} \{h : f \leqW g \star h\}$.
 \begin{proof}
 It remains to be shown that $f \leqW g \star h$ implies $(g \rightarrow f) \leqW h$ for every $h$.
 This is clear for $h\equivW\Wtop$. If $\dom(g)=\emptyset$ and $\dom(f)\not=\emptyset$, then this is the only possible $h$.
Otherwise, if $\dom(g)\not=\emptyset$ or $\dom(f)=\emptyset$, then we consider $h :\subseteq \mathbf{A} \mto \mathbf{B}$. Then by definition $g \star h = (\id \times g) \circ h_{\Baire \times \mathbf{X}}^\t : \subseteq   \mathcal{M}(\mathbf{B}, \Baire \times \mathbf{X})\times \mathbf{A}  \mto \Baire \times \mathbf{Y} $. Now consider the computable reduction witnesses $H :\subseteq \mathbf{U} \mto  \IN^\IN\times\mathcal{M}(\mathbf{B} , \Baire\times \mathbf{X})\times \mathbf{A} $ and $K : \subseteq \Baire \times \Baire \times \mathbf{Y} \mto \mathbf{V}$ for $f \leqW g \star h$
according to Lemma~\ref{lem:characterization}.
Without loss of generality we can assume that $K$ is strongly computable, and then we obtain a strongly computable multi-valued function $K' :\subseteq \Baire \times \Baire \mto \mathcal{M}(\mathbf{Y}, \mathbf{V})$ by weakly currying $K$ following Proposition \ref{prop:mathcalm} (5). 
We obtain for $(p,q,x)$ in the domain of the left-hand side
\[K\circ(\id\times(\id\times g))(p,q,x)=\bigcup_{\varphi\in K'(p,q)}\varphi\circ g(x)=g^\t_{\mathbf{V}}\circ(K'\times\id_{\mathbf{X}})(p,q,x)\]
and hence
\begin{eqnarray*}
&&g^\t_{\mathbf{V}}\circ(K'\times\id_{\mathbf{X}})\circ(\id\times h^\t_{\Baire\times\mathbf{X}})\circ H(u)\\
&=& K\circ(\id\times(\id\times g))\circ(\id\times h^\t_{\Baire\times\mathbf{X}})\circ H(u) \\
&=& K\circ(\id\times(g\star h))\circ H(u) \\
&\subseteq& f(u),
\end{eqnarray*}
which implies
\[(K'\times\id_{\mathbf{X}})\circ(\id\times h^\t_{\Baire\times\mathbf{X}})\circ H(u)\subseteq(g\to f)(u).\]
This means that $(K' \times \id_\mathbf{X})$ and $H$ witness the first reduction in $(g \rightarrow f) \leqW h^\t_{\Baire\times\mathbf{X}}\leqW h$ according to Lemma~\ref{lem:characterization}.
\end{proof}
\end{theorem}

As an immediate corollary we obtain the following equivalence.

\begin{corollary}
\label{cor:impcharac}
$f\leqW g\star h\iff(g\to f)\leqW h$.
\end{corollary}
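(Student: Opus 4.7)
The plan is to derive the equivalence as an immediate consequence of Theorem~\ref{theo:impcharac} together with the monotonicity of $\star$, using the two halves of the biconditional separately.

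For the forward direction $f\leqW g\star h\Rightarrow (g\to f)\leqW h$, I would simply observe that $h$ is then a member of the set $\{h':f\leqW g\star h'\}$ whose $\leqW$-minimum, by Theorem~\ref{theo:impcharac}, is (a representative of) $g\to f$. Thus $g\to f\leqW h$ follows at once. No case distinction on whether $\dom(g)$ or $\dom(f)$ is empty is needed, since the minimum statement already accommodates the convention that $\min_{\leqW}\emptyset=\Wtop$ and the corresponding $\Wtop$-valued definition of $g\to f$.

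For the backward direction $(g\to f)\leqW h\Rightarrow f\leqW g\star h$, I would first recall that the paragraph preceding Theorem~\ref{theo:impcharac} established the fundamental inequality $f\leqW g\star(g\to f)$ in all cases (directly when $\dom(g)\ne\emptyset$ or $\dom(f)=\emptyset$, and via $0\star\Wtop\equivW\Wtop$ otherwise). It then suffices to prove that $\star$ is monotone in its right argument: from Corollary~\ref{cor:starcharac}, $g\star h'\equivW\max_{\leqW}\{f'\circ g': f'\leqW g\wedge g'\leqW h'\}$, and enlarging $h'$ can only enlarge the set over which the maximum is taken. Applied to $h'=g\to f\leqW h$, this yields $g\star(g\to f)\leqW g\star h$, and chaining with the fundamental inequality and transitivity of $\leqW$ gives $f\leqW g\star h$.

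There is no real obstacle here; the only point deserving care is making sure the edge cases from the definition of $g\to f$ (the clauses $\dom(g)=\emptyset$ versus $\dom(f)=\emptyset$) are consistent with the $\min$-characterisation in Theorem~\ref{theo:impcharac}, but the preceding theorem was already stated so as to absorb them, so the corollary follows formally with no extra work.
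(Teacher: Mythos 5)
Your proposal is correct and is essentially the argument the paper intends when it calls this an ``immediate corollary'' of Theorem~\ref{theo:impcharac}: the forward direction is exactly membership of $h$ in the set whose minimum is $(g\to f)$, and the backward direction is the already-established inequality $f\leqW g\star(g\to f)$ combined with monotonicity of $\star$ in the second argument (which, as you note, follows from Corollary~\ref{cor:starcharac} and is recorded in Proposition~\ref{prop:timesstar}). Your handling of the empty-domain/$\Wtop$ edge cases matches the conventions the paper sets up before the theorem, so no gap remains.
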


Theorem~\ref{theo:impcharac} also shows that the implication operation extends to Weihrauch degrees.

\begin{corollary}
\label{cor:composition-implication}
If $f \equivW f'$ and $g \equivW g'$ then $(f \rightarrow g) \equivW (f' \rightarrow g')$.
\end{corollary}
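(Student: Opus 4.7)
The plan is to obtain this corollary as a direct consequence of the adjunction given by Corollary~\ref{cor:impcharac}, namely $f \leqW g \star h \iff (g \to f) \leqW h$, combined with the already-established fact (the corollary immediately following Corollary~\ref{cor:starcharac}) that $\star$ respects Weihrauch equivalence in both arguments.

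The main step is to prove $(f \to g) \leqW (f' \to g')$; the reverse direction will then follow by symmetry of the hypotheses. By Corollary~\ref{cor:impcharac}, the inequality $(f \to g) \leqW (f' \to g')$ is equivalent to $g \leqW f \star (f' \to g')$. To establish the latter, I would start from the trivial reduction $(f' \to g') \leqW (f' \to g')$ and apply Corollary~\ref{cor:impcharac} in the primed variables to conclude $g' \leqW f' \star (f' \to g')$. Since $\star$ preserves $\equivW$, we have $f \star (f' \to g') \equivW f' \star (f' \to g')$, and combining this with $g \equivW g'$ yields $g \leqW f \star (f' \to g')$, as required. Feeding this back through Corollary~\ref{cor:impcharac} gives $(f \to g) \leqW (f' \to g')$.

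The only potential subtlety concerns the edge case in the definition of $\to$ where the antecedent has empty domain while the consequent does not, in which case the implication is defined as $\Wtop$. However, having empty domain is invariant under $\equivW$ (since the nowhere-defined functions all lie in the bottom degree $0$, and a reduction to $0$ forces the reducing function to have empty domain via a nowhere-defined realizer for $0$). Hence the case distinctions in the definition of $\to$ are triggered consistently for $(f, g)$ and $(f', g')$: either both implications equal $\Wtop$ and the conclusion is immediate, or neither does and the adjunction argument above applies verbatim. I expect no genuine obstacle beyond this bookkeeping, since all the real work has been carried out in Theorem~\ref{theo:impcharac} and the equivalence-preservation of $\star$.
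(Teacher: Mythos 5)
Your proof is correct and follows essentially the paper's route: the paper derives this corollary directly from Theorem~\ref{theo:impcharac}, since the characterization $(f\to g)\equivW\min_{\leqW}\{h: g\leqW f\star h\}$ together with the degree-invariance of $\star$ makes the right-hand side depend only on the degrees of $f$ and $g$, which is exactly the content of your adjunction argument. Your handling of the empty-domain edge case (using that $\dom(f)=\emptyset$ iff $f\equivW 0$) is sound, though the paper's minimum characterization already absorbs that case.
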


We extend the definition of $(f\to g)$ also to the top element $\Wtop$ so that Theorem~\ref{theo:impcharac} and Corollary~\ref{cor:impcharac}
remain true. Our agreement $f\star\Wtop=\Wtop$ implies $(\Wtop\to f):=0$ and $(g\to\Wtop):=\Wtop$ for $g\not\equivW\Wtop$.

Finally, we mention that in Theorem~\ref{theo:impcharac} we cannot replace the rightmost $\leqW$ by $\equivW$.
This follows for instance from Proposition~\ref{prop:MLR}.

\begin{proposition}
There are $f,g$ with $f\lW g\star(g\to f)$. 
\end{proposition}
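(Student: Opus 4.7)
The plan is to exhibit a specific counterexample using the degrees from Subsection~\ref{subsec:examples}, taking $f := \lpo$ and $g := \lim$, and showing $\lpo \lW \lim \star (\lim \to \lpo)$. The core computation is to establish $\lim \to \lpo \equivW 1$.

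For the upper bound $\lim \to \lpo \leqW 1$, I would construct a computable realizer directly, exploiting that $\lim$ is strong enough to evaluate $\lpo$ on its own. Given $u \in \Cantor = \dom(\lpo)$, define $x \in \Baire$ by setting $x(\langle 0, i\rangle) = 1$ if $u$ begins with $0^{i+1}$ and $0$ otherwise, and $x(\langle n, i\rangle) = 0$ for $n \geq 1$; then $\lim(x)(0) = \lpo(u)$. Pair this $x$ with the strongly computable $H \in \mathcal{M}(\Baire, \{0,1\})$ given by $H(q) = q(0)$. Then $H \circ \lim(x) = \{\lpo(u)\} \subseteq \lpo(u)$, so $(H, x) \in (\lim \to \lpo)(u)$, and the whole construction is computable in $u$. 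Conversely, $1 \leqW \lim \to \lpo$ holds because $\dom(\lim \to \lpo) = \dom(\lpo) = \Cantor$ contains the computable point $0^{\mathbb{N}}$, on which any realizer of $\lim \to \lpo$ can be invoked and its output discarded in favor of the trivial output demanded by $1$.

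Next I would observe that $\lim \star 1 \equivW \lim$ by appealing to Corollary~\ref{cor:starcharac}: the inequality $\lim \leqW \lim \star 1$ is witnessed by the pair $f' = \lim$, $g' = \id$, while the reverse direction follows because every $g' \leqW 1$ has a computable realizer, so $f' \circ g' \leqW f' \leqW \lim$ for every admissible $f'$ in the maximum.

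Combining these facts, $g \star (g \to f) = \lim \star (\lim \to \lpo) \equivW \lim \star 1 \equivW \lim$, whereas $\lpo \lW \lim$ is classical (Subsection~\ref{subsec:examples}), yielding $f \lW g \star (g \to f)$ as required. I do not expect any real technical obstacle; the only conceptual point worth flagging is that the minimum $h$ with $f \leqW g \star h$ can be strictly weaker than any $h$ that makes $g \star h$ close to $f$. Here $\lim$ already solves $\lpo$ unaided, so $\lim \to \lpo$ collapses to $1$; but this $1$ carries no corrective power, and $\lim \star 1$ retains the full strength of $\lim$, strictly above $\lpo$.
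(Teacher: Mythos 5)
Your proof is correct, but it takes a different and more elementary route than the paper. The paper obtains the proposition from Proposition~\ref{prop:MLR}: with $g=\C_\IN$ and $f=\PC_\Cantor$ one has $(g\to f)\equivW\MLR$, and $\C_\IN\star\MLR$ strictly exceeds $\PC_\Cantor$ because it computes $\C_\IN$, which is not below $\C_\Cantor\geqW\PC_\Cantor$. You instead take $f=\lpo$, $g=\lim$ and show that the residual collapses: $(\lim\to\lpo)\equivW 1$ (your explicit realizer is fine, though one can get the upper bound even faster from $\lpo\leqW\lim$, monotonicity of $\to$ in the second argument, and $(\lim\to\lim)\leqW 1$ from Observation~\ref{prop:constants}(9); the lower bound is pointedness, as you say), whence $g\star(g\to f)\equivW\lim\star 1\equivW\lim\gW\lpo$. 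Your example is self-contained and isolates the mechanism cleanly: whenever $f\lW g$ and $\dom(f)$ has a computable point, $(g\to f)\equivW 1$ carries no corrective information and $g\star 1$ retains all of $g$'s excess strength. The paper's example buys something complementary: there the residual $\MLR$ is itself nontrivial, so the gap between $f$ and $g\star(g\to f)$ is not merely an artifact of the implication degenerating to $1$. Both are valid witnesses for the stated proposition.
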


\subsection{Interaction with fractals}
\label{subsec:fractals}

An important property of Weihrauch degrees lacking a known expression in terms of the algebraic operations is \emph{fractality}. This notion was introduced in \cite{BBP12} to prove join--irreducibility of certain Weihrauch degrees. It turned out to be relevant in other contexts, too (e.g., \cite{BGM12,BLRMP16a,BGH15a}), in particular due to the fractal absorption theorems proved in \cite{LRP15a}. We shall briefly explore how fractality interacts with $\star$ and $\rightarrow$.

\begin{definition}
\label{def:fractal}
We call $f : \subseteq\mathbf{X} \mto \mathbf{Y}$ a \emph{fractal}, if and only if there is some $g :\subseteq \Baire \mto \mathbf{Z}$ with $f \equivW g$ and such that for any clopen $A \subseteq \Baire$, we have $g|_A \equivW f$ or\footnote{Note that $g|_A \equivW 0$ happens if and only if $A \cap \dom(g) = \emptyset$.} $g|_A \equivW 0$. If we can choose $g$ to be total, we call $f$ a \emph{total fractal}.
\end{definition}

We first prove that compositional products preserve fractals. 

\begin{proposition}
$f\star g$ is a fractal whenever $f$ and $g$ are fractals.
\end{proposition}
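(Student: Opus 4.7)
My approach is to construct a concrete representative of $f\star g$ on Baire space that visibly inherits fractality from chosen fractal witnesses of $f$ and $g$. Fix fractal representatives $\tilde f,\tilde g:\subseteq\Baire\mto\Baire$ of $f$ and $g$ (since every Weihrauch degree has Baire-to-Baire representatives, we can freely restrict attention to those). Define $h:\subseteq\Baire\mto\Baire$ by
\[h(\langle p,q\rangle):=\{\langle b,s\rangle : (b,u)\in\Phi_p(r),\ r\in\tilde g(q),\ s\in\tilde f(u)\},\]
so that $p$ plays the role of the intermediate continuous multi-valued function in the definition $f\star g=(\id\times f)\circ g^{\t}$, while $q$ is an input to $\tilde g$.

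The first step is to verify $h\equivW f\star g$. For $h\leqW f\star g$: each slice of $h$ at a fixed $p$ equals $(\id\times\tilde f)\circ\Phi_p\circ\tilde g$, which is a composition $f'\circ g'$ with $f'\equivW f$ and $g'\leqW g$, and is therefore bounded above by $f\star g$ via Corollary~\ref{cor:starcharac}. Conversely, $f\star g\leqW h$ by universality of $\Phi$: given input $(h^*,x)$ to $f\star g$, take $p$ to be an index of $h^*$ in $\mathcal{M}(\mathbf{Y},\Baire\times\mathbf{U})$ and $q$ a name of $x$ routed through $\tilde g\equivW g$, and read the answer off $h(\langle p,q\rangle)$.

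For fractality of $h$, consider any clopen $A\subseteq\Baire$ with $A\cap\dom(h)\neq\emptyset$. By picking a basic clopen $A'\subseteq A$ that meets $\dom(h)$ and observing that $h|_{A'}\equivW h$ forces $h|_A\equivW h$, we may assume $A=[\sigma]$ is basic; under pairing this corresponds to a product $[\sigma_p]\times[\sigma_q]$. The reduction $h|_A\leqW h$ is immediate. For $h\leqW h|_A$: since $A\cap\dom(h)\neq\emptyset$ forces $[\sigma_q]\cap\dom(\tilde g)\neq\emptyset$, fractality of $\tilde g$ supplies computable witnesses $H_{\tilde g},K_{\tilde g}$ for $\tilde g\leqW\tilde g|_{[\sigma_q]}$. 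On input $\langle p,q\rangle$ put $q':=H_{\tilde g}(q)\in[\sigma_q]$, and by a padding argument compute some $p'\in[\sigma_p]$ indexing the continuous functional $r'\mapsto\Phi_p(K_{\tilde g}(q,r'))$. Then
\[\Phi_{p'}(\tilde g(q'))=\{\Phi_p(K_{\tilde g}(q,r')) : r'\in\tilde g(q')\}\subseteq\Phi_p(\tilde g(q)),\]
so $h|_A(\langle p',q'\rangle)\subseteq h(\langle p,q\rangle)$, and we may pass any output through unchanged.

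The main obstacle is step one, verifying $h\equivW f\star g$ rigorously; it demands careful bookkeeping about how the Baire-to-Baire representatives $\tilde f,\tilde g$ relate to the original $f,g$ and, in particular, that the cylindrification coordinate of $\id\times f$ is correctly played by the first coordinate $b$ of $\Phi_p$'s output. Once this is in place, step two is a clean consequence of the fractality of $\tilde g$ together with density of indices in Baire space. Notably, fractality of $\tilde f$ enters only through the choice of a Baire-to-Baire fractal representative to plug into the construction, not through an explicit restriction argument — an asymmetry natural to $f\star g$, whose input is routed through $g$ first while $f$ is applied only at the outer end.
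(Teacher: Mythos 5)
There is a genuine gap, and it sits exactly where you wave at ``a padding argument'': the claim that for an arbitrary finite prefix $\sigma_p$ (merely known to meet the domain) you can compute some $p'\in[\sigma_p]$ indexing a \emph{prescribed} functional $r'\mapsto\Phi_p(K_{\tilde g}(q,r'))$ is false for a general universal machine. A finite prefix $\sigma_p$ of an index, together with finitely many input symbols, may already have committed the machine to finitely many output symbols, and no extension $p'\in[\sigma_p]$ can retract them; so $\{\Phi_{p'}:p'\in[\sigma_p]\}$ need not contain (a tightening of) the functional you want. This is precisely the obstruction the paper's proof is organized around: it builds a special machine $\Psi$ with a reserved signal symbol so that after the signal the computed functional is fully prescribed \emph{except} for a forced prefix $u$ (of length $|w|$) on its output, which cannot be avoided. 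That forced prefix pushes the input of $f$ into an a priori uncontrolled clopen set $u\Cantor$, and this is exactly where fractality of $f$ enters: witnesses $H_u,K_u$ for $f\leqW f|_{u\Cantor}$ must be prepared for every possible $u$ simultaneously, since $u$ is not under the reduction's control.

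A symptom that something must be wrong is that your restriction argument never actually invokes fractality of $\tilde f$ --- as you note yourself, it ``enters only through the choice of a representative'', which is no hypothesis at all, since every degree has a Baire-to-Baire representative. If your argument were valid it would therefore show that $f\star g$ is a fractal whenever $g$ is, for \emph{arbitrary} $f$. Taking $g\equivW 1$ (a fractal) this would make $f\star 1=f$ a fractal for every $f$, which is false: e.g.\ $f=\C_\IN\Wsup\C_\Cantor$ is a nontrivial join of incomparable degrees and hence not a fractal, fractality implying join-irreducibility. The remainder of your outline --- the concrete representative $h$, the reduction to basic clopen sets, and the use of fractality of $\tilde g$ to absorb the restriction on the $q$-coordinate --- matches claim (1) of the paper's proof and is essentially fine; the missing content is the entire treatment of the $p$-coordinate.
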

\begin{proof}
The computable injection $\iota:\Baire\to\Cantor,p\mapsto 01^{p(0)+1}01^{p(1)+1}...$ has a a partial computable inverse $\iota^{-1}$
and hence $\iota^{-1}(A)$ is a clopen set of for every clopen $A\In\Cantor$.
Thus, we can assume that $f,g$ are of type $f,g:\In\Cantor\mto\Cantor$, that they are cylinders and that they satisfy $f\equivSW f|_A$ and $g\equivSW g|_A$ for every clopen $A\In\Cantor$ 
such that $A\cap\dom(f)\not=\emptyset$ and $A\cap\dom(g)\not=\emptyset$, respectively. 
Under these assumptions we obtain $f\star g\equivW f\circ g^\t_{\Cantor}=:f * g$.
We write for short $\mathcal{M}:=\mathcal{M}(\Cantor,\Cantor)$. We claim that 
\begin{enumerate}
\item $f* g\leqW (f* g)|_{\mathcal{M}\times w\Cantor}$ for every $w\in\{0,1\}^*$ such that $(f* g)|_{\mathcal{M}\times w\Cantor}\not\equivW0$.
\end{enumerate}
For such a $w\in\Cantor$ we have $w\Cantor\cap\dom(g)\not=\emptyset$ and hence
there are computable functions $K,H:\In\Cantor\to\Cantor$ such that $\emptyset\not=Kg(wH(p))\In g(p)$ for every $p\in\dom(g)$. We can assume that $K\in\mathcal{M}$.
By Proposition~\ref{prop:mathcalm}(3) there exists a computable multi-valued $H':\mathcal{M}\mto\mathcal{M},h\mapsto\;\uparrow\!\{h\circ K\}$.
We obtain for all $(h,p)\in\dom(f*g)$
\[\emptyset\not=(f* g)(H'(h),wH(p))
=f\circ H'(h)\circ g(wH(p))
\subseteq fhKg(wH(p))
\subseteq fhg(p)
= (f* g)(h,p).\]
This implies $f* g\leqW (f* g)|_{\mathcal{M}\times w\Cantor}$ and hence the claim (1).

We now use a special universal Turing machine that operates in a particular way when it reads some suitable signal on the oracle tape.
We denote by $\Psi_q:\In\Cantor\to\Cantor$ the function computed by this machine for oracle $q\in\{0,1,2\}^\IN$.
The digit $2$ will be used as the special signal. More precisely, $\Psi$ is supposed to satisfy the following condition.
For all $k\in\IN$, $q_0,...,q_{2^k-1},r\in\Cantor$, $w\in\{0,1,2\}^k$ there exists $u\in\{0,1\}^k$ such that:
\begin{eqnarray}
\label{eqn:fractal}
\Psi_{w2\langle q_0,...,q_{2^k-1}\rangle}(r)=u\Phi_{q_u}(r).
\end{eqnarray}
For $q_u$ we identify $u\in\{0,1\}^k$ with the corresponding number in $\{0,1,...,2^{k}-1\}$ that has binary notation $u$.  
It is easy to see that such a $\Psi$ exists. Given $s\in\{0,1,2\}^\IN$ with at least one digit $2$ and $r\in\{0,1\}^\IN$ we search 
for some $w\in\{0,1\}^k$ such that $s=w2\langle q_0,...,q_{2^k-1}\rangle$ with suitable $q_0,...,q_{2^k-1}\in\{0,1,2\}^\IN$ and we
start evaluating $\Phi_{q_0}(r)$. Simultaneously, we read $s$ and as long as we do not find any further digits $2$ in $s$, we produce $\Phi_{q_0}(r)$
as output. In the moment where we find another digit $2$ in $s$, say $w'\in\{0,1\}^{k'}$ such that $s=w2w'2\langle q'_0,...,q'_{2^{k+k'+1}-1}\rangle$, then
we ensure that only a prefix $u$ of $\Phi_{q_0}(r)$ of length $k+k'+1$ is produced and the output is continued with $\Phi_{q_u}(r)$. 
If we repeat this process inductively, we get a function $\Psi$ with the desired properties.

We let $P:\In\{0,1,2\}^\IN\times\Cantor\to\mathcal{M}\times\Cantor,(q,p)\mapsto(\Psi_{q},p)$ and we claim that 
\begin{enumerate}
\item[(2)] $f*g\leqW (f*g)\circ P|_{w\{0,1,2\}^\IN\times\Cantor}$ for every $w\in\{0,1,2\}^*$ such that the right-hand side is somewhere defined, i.e., such that $\dom((f*g)\circ P)\cap(w\{0,1,2\}^\IN\times\{0,1\}^\IN)\not=\emptyset$.
\end{enumerate}
Let us fix such a $w\in\{0,1,2\}^k$ with $k\in\IN$.
Then there are computable $H_u, K_u$ for every $u\in\{0,1\}^k$ such that $\emptyset\not=K_uf(uH_u(p))\subseteq f(p)$, provided that $u\Cantor\cap\dom(f)\not=\emptyset$.  
For $u$ such that $u\Cantor\cap\dom(f)=\emptyset$, we let $H_u$ and $K_u$ be the identities. 
Since $\{0,1\}^k$ is finite, there is a computable function $H:\In\mathcal{M}\to\{0,1,2\}^\IN,h\mapsto w2q$ where $q=\langle q_0,q_1,...,q_{2^k-1}\rangle$ has the property 
$\Phi_{q_u}(r)\in H_u\circ h(r)$
for all $u\in\{0,1\}^k$ and $r\in\dom(H_u\circ h)$. 
Let now $u\in\{0,1\}^k$ be chosen according to equation (\ref{eqn:fractal}).
Then we obtain for all $(h,p)\in\dom(f*g)$
\begin{eqnarray*}
\emptyset\not=K_u\circ (f*g)\circ P\circ(H\times\id_\Cantor)(h,p)
&=& K_u\circ f\circ \Psi_{w2\langle q_0,...,q_{2^k-1}\rangle}\circ g(p)\\
&\In& K_u\circ f(u(H_u\circ h\circ g(p)))\\
&\In& f\circ h\circ g(p) = (f*g)(h,p).
\end{eqnarray*}
This implies $f*g\leqW (f*g)\circ P|_{w\{0,1,2\}^\IN\times\Cantor}$ and hence the claim (2). Actually, the proof shows more than just the claim,
the right-hand side reduction $H\times\id_\Cantor$ leaves the second argument $p$ unaffected and hence
this reduction can be combined with claim (1) in order to obtain
\[f*g\leqW (f*g)\circ P|_{u\{0,1,2\}^\IN\times v\Cantor}\] 
for all $u\in\{0,1,2\}^*,v\in\{0,1\}^*$ such that the right-hand side is somewhere defined. In particular, the right-hand side is defined for $u=v=\varepsilon$.
On the other hand, we clearly have 
$(f*g)\circ P\leqW f*g$ since $P$ is computable. Hence $(f*g)\circ P$ witnesses the fact that $f*g$ is a fractal.
\end{proof}

In case of implication, we can directly derive a corresponding result from Corollary~\ref{cor:composition-implication}.

\begin{proposition}
$(g\to f)$ is a (total) fractal, if $f$ is a (total) fractal and $\dom(g)\not=\emptyset$ or $\dom(f)=\emptyset$.
\begin{proof}
Without loss of generality, we can assume that $f$ is of type $f:\In\Baire\mto\mathbf{Z}$ and $f\equivW f|_A$ for every clopen $A\In\Baire$ such that $\dom(f)\cap A\not=\emptyset$.
Let $A$ be such a set.
Since $\dom(g)\not=\emptyset$ or $\dom(f)=\emptyset$, we obtain that $(g\to f|_A)=(g\to f)|_A$ and hence we can conclude with Corollary~\ref{cor:impcharac}
\[f\leqW f|_A\leqW g\star(g\to f|_A)=g\star(g\to f)|_A.\]
Again by Corollary~\ref{cor:impcharac} we can conclude that $(g\to f)\leqW(g\to f)|_A$, which was to be proved. 
We note that $(g\to f)$ is total if $f$ is total. 
\end{proof}
\end{proposition}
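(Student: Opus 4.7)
The plan is to exhibit a simple explicit witness rather than invoke the forward-referenced Proposition~\ref{prop:MLR}. I would take $f = 1$ and $g = \lpo$ and verify the strict inequality directly from the characterizations of $\star$ and $\rightarrow$ already established.

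First, I would identify $(g \rightarrow 1)$ using Theorem~\ref{theo:impcharac}, which characterizes it as $\min_{\leqW}\{h : 1 \leqW g \star h\}$. Since $1$ has a computable point in its domain, $1 \leqW h$ holds for every Weihrauch degree $h$ with non-empty domain (the reduction uses the trivial $K\langle p, q\rangle = p$). In particular $1 \leqW g \star h$ as soon as $g \star h$ has non-empty domain. To rule out the degenerate case $h \equivW 0$, observe that Corollary~\ref{cor:starcharac} gives $g \star 0 \equivW 0 \not\geqW 1$, since every $g' \leqW 0$ is nowhere defined, hence so is every $f' \circ g'$. The minimum in question is therefore $1$ itself, so $(g \rightarrow 1) \equivW 1$.

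Second, I would compute $g \star 1$. By Corollary~\ref{cor:starcharac}, $g \star 1 \equivW \max_{\leqW}\{f' \circ g' : f' \leqW g \wedge g' \leqW 1\}$. Every $g' \leqW 1$ is computable, so each member of the set reduces to $f' \leqW g$; conversely, $g \circ \id = g$ itself appears in the set (taking $f' = g$ and $g' = \id$). Hence $g \star 1 \equivW g$. Combining with the previous paragraph, $g \star (g \rightarrow f) \equivW g = \lpo$, whereas $f = 1 \lW \lpo$ as recalled in Subsection~\ref{subsec:examples}, giving the desired strict inequality.

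The argument is essentially a direct computation using the characterizations already established; I do not anticipate a real obstacle. The only subtlety worth flagging is that one must check that $(g \rightarrow 1)$ falls under the main case of the definition of $\rightarrow$, which requires $\dom(g) \neq \emptyset$ or $\dom(f) = \emptyset$; for $g = \lpo$ both domains are non-empty, so this is trivially verified and the characterization of Theorem~\ref{theo:impcharac} applies as used.
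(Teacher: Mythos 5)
Your proposal does not address the statement at hand. The proposition to be proved is that $(g\to f)$ is a (total) fractal whenever $f$ is a (total) fractal (under the stated domain hypothesis); your argument instead establishes the existence of $f,g$ with $f\lW g\star(g\to f)$, which is the immediately preceding proposition in the paper. Nothing in your write-up engages with Definition~\ref{def:fractal}: you never produce a representative $h$ of $(g\to f)$ with domain in $\Baire$ such that $h|_A\equivW h$ or $h|_A\equivW 0$ for every clopen $A$, which is what fractality requires.

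For the record, the computation you do carry out ($(\lpo\to 1)\equivW 1$ via Theorem~\ref{theo:impcharac}, $\lpo\star 1\equivW\lpo$ via Corollary~\ref{cor:starcharac}, hence $1\lW\lpo\star(\lpo\to 1)$) is a correct and self-contained proof of that other proposition, and it is arguably more elementary than the paper's appeal to Proposition~\ref{prop:MLR}. But it cannot be spliced in here. The intended argument for the fractal statement is short: assume without loss of generality that $f:\In\Baire\mto\mathbf{Z}$ witnesses its own fractality, note that the domain hypothesis gives $(g\to f|_A)=(g\to f)|_A$ for clopen $A$ meeting $\dom(f)$, and then use the adjunction $f\leqW f|_A\leqW g\star(g\to f)|_A$ together with Corollary~\ref{cor:impcharac} to conclude $(g\to f)\leqW(g\to f)|_A$. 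You would need to supply an argument of this kind (or an equivalent one) to close the gap.
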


\section{The algebraic rules}
\label{general}

\subsection{Basic algebraic rules}

Now that we have the full signature in place we shall use it to understand the algebraic structure of the Weihrauch degrees: A set $\Wei$ partially ordered by $\leqW$, with unary operations $^*$ and $\widehat{\phantom{f}}$, binary operations $\Wsup , \Winf, \times, \star, \rightarrow$ and constants $0, 1, \Wtop$. We will proceed to provide algebraic rules holding for the Weihrauch degrees, as well as to state counterexamples for some natural candidates. However, we do not know whether these rules provide a complete characterization of the Weihrauch degrees, i.e., whether there is a counterexample in $\Wei$ for any rule not derivable from the stated ones. Firstly, we state some facts that are known or easy to derive.

\begin{theorem}[\me\  \cite{Pau10a}, \name{B.} \& \name{Gherardi} \cite{BG11}]
\label{theo:lattice}
$(\Wei, \leqW, \Wsup , \Winf)$ is a distributive lattice with supremum $\Wsup$ and infimum $\Winf$.
\end{theorem}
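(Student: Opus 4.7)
The plan is to verify the three claims in turn: that $\Wsup$ realizes the join, that $\Winf$ realizes the meet, and that the two operations distribute over each other. All verifications proceed by exhibiting explicit computable reduction witnesses, and since nothing in the definitions of $\Wsup$ and $\Winf$ uses more than pairing, projection, and reading a single tag bit, the required witnesses are essentially trivial after the datatypes are lined up.

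For the join, $f \leqW f \Wsup g$ is witnessed by the computable map $x \mapsto (0,x)$ together with the post-processing that strips the output tag, and symmetrically for $g$. Conversely, given $f \leqW h$ via $(K_f, H_f)$ and $g \leqW h$ via $(K_g, H_g)$, a reduction of $f \Wsup g$ to $h$ reads the input tag $i \in \{0,1\}$, dispatches to the appropriate pair, and re-tags the output with $i$ to land in $\mathbf{Y} \Wsup \mathbf{V}$. For the meet, $f \Winf g \leqW f$ is witnessed by projecting the input $(x,u)$ to $x$, applying a realizer of $f$, and tagging the result with $0$; symmetrically for $g$. The nontrivial universal property, that $h \leqW f$ and $h \leqW g$ together imply $h \leqW f \Winf g$, is obtained by feeding $(H_f(x), H_g(x))$ into a realizer of $f \Winf g$ on input $x$; the tag of the returned pair $(i, w)$ then tells us whether to post-process with $K_f$ or with $K_g$, and in either case a valid output for $h$ is produced.

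For distributivity, one direction comes for free from the universal properties already established: since $a \Winf b$ and $a \Winf c$ both reduce to $a$ and both reduce to $b \Wsup c$ (the latter via $b \leqW b \Wsup c$ and $c \leqW b \Wsup c$), their join satisfies
\[(a \Winf b) \Wsup (a \Winf c) \leqW a \Winf (b \Wsup c).\]
The converse requires an explicit witness: on input $(x,(i,z))$ to $a \Winf (b \Wsup c)$, forward the tag $i$ into the outer sup and pass $(x,z)$ into the chosen component of $(a \Winf b) \Wsup (a \Winf c)$; the returned value has the shape $(i,(k,v))$, which we recode as $(0,v)$ when $k=0$ (so $v \in a(x)$) and as $(1,(i,v))$ when $k=1$ (so $v$ lies in $b(z)$ or $c(z)$ according to $i$). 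There is no genuine obstacle here: the entire argument is bookkeeping about how input and output tags are permuted, and the main care required is only to keep the $\{0,1\}$-labels from $\Wsup$ and $\Winf$ straight.
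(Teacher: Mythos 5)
Your proof is correct and uses the same direct-witness verification found in the sources the paper cites for this theorem (the paper itself states it without proof): explicit tag-manipulating reductions establish the join and meet properties, the easy distributivity inequality follows formally from those universal properties, and the nontrivial inequality $\mathbf{a} \Winf (\mathbf{b} \Wsup \mathbf{c}) \leqW (\mathbf{a} \Winf \mathbf{b}) \Wsup (\mathbf{a} \Winf \mathbf{c})$ is witnessed exactly as you describe, with the dual law then following automatically in any lattice. The only point worth making explicit is that your reduction $f \Winf g \leqW f$ tacitly uses the convention $\dom(f \Winf g) = \dom(f) \times \dom(g)$ (rather than the set of pairs where the defining union is non-empty), which is the intended reading of Definition~\ref{def:operations}.
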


\begin{proposition}[Associativity, commutativity and monotonicity]
\label{prop:timesstar}
We obtain:
\begin{enumerate}
\item $\times$ and $\star$ are associative, $\to$ is not.
\item $\times$ and $\star$ are monotone in both components.
\item $\to$ is monotone in the second component and anti-monotone in the first component.
\item $\times$ is commutative, $\star$ and $\to$ are not.
\end{enumerate}
\begin{proof}
For $\times$, the results (1,2,4) have already been observed by \name{B.} and \name{Gherardi} in \cite{BG11}. Property 1 for $\star$ follows from a double application of Lemma~\ref{lem:cylindrical-decomposition} and Corollary~\ref{cor:starcharac} since composition for multi-valued functions is associative. Property 2 for $\star$ follows from Corollary~\ref{cor:starcharac}. A counterexample proving $\star$ to be non-commutative is $\lim \star \lpo \equivW \lim \lW \lpo \star \lim$.
Monotonicity and anti-monotonicity of $\to$ follow from the corresponding properties of $\star$ via Corollary \ref{cor:impcharac}.
Counterexamples that show that $\to$ is neither commutative nor associative are $(\lim\to1)\equivW1\lW\lim\equivW(1\to\lim)$ and $(\lim\to(\lim\to\lim))\equivW 1\lW\lim\equivW((\lim\to\lim)\to\lim)$.
There is also a counterexample in the other direction: $((d_p\to1)\to1)\equivW(c_p\to 1)\equivW1\lW c_p\equivW(d_p\to (1\to 1))$.
\end{proof}
\end{proposition}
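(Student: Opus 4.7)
The plan is to dispatch the $\times$-parts by citing \cite{BG11}, derive the positive $\star$-claims from Corollary~\ref{cor:starcharac} together with Lemma~\ref{lem:cylindrical-decomposition}, derive the $\to$-claims from the adjunction in Corollary~\ref{cor:impcharac}, and handle the negative claims via small explicit counterexamples drawn from Subsection~\ref{subsec:examples}.

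For monotonicity of $\star$, if $f\leqW f'$ and $g\leqW g'$, then the set $\{f''\circ g'':f''\leqW f\wedge g''\leqW g\}$ is contained in $\{f''\circ g'':f''\leqW f'\wedge g''\leqW g'\}$, so the two $\leqW$-maxima (which exist and equal $f\star g$ and $f'\star g'$ by Corollary~\ref{cor:starcharac}) satisfy $f\star g\leqW f'\star g'$. For associativity of $\star$, iterating Lemma~\ref{lem:cylindrical-decomposition} produces cylinders $F\equivW f$, $G\equivW g$, $H\equivW h$ together with computable functions $K_1,K_2$ such that $(f\star g)\star h\equivSW F\circ K_1\circ G\circ K_2\circ H$, and symmetrically $f\star(g\star h)$ strongly Weihrauch reduces to the same composite; ordinary associativity of multi-valued function composition closes the loop.

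For the $\to$-monotonicities I would invoke the Galois-style correspondence of Corollary~\ref{cor:impcharac}. If $f_1\leqW f_2$, then from $f_2\leqW g\star(g\to f_2)$ we obtain $f_1\leqW g\star(g\to f_2)$, whence $(g\to f_1)\leqW(g\to f_2)$. If $g_1\leqW g_2$, then monotonicity of $\star$ gives $f\leqW g_1\star(g_1\to f)\leqW g_2\star(g_1\to f)$, whence $(g_2\to f)\leqW(g_1\to f)$.

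Finally, for the negative claims I would record concrete counterexamples drawn from Subsection~\ref{subsec:examples}. Non-commutativity of $\star$ is witnessed by $\lim\star\lpo\equivW\lim\lW\lpo\star\lim$. Non-commutativity of $\to$ is witnessed by $(\lim\to 1)\equivW 1\lW\lim\equivW(1\to\lim)$: the former holds because $h=1$ already realises $1\leqW\lim\star 1$, and the latter because $1\star h\equivW h$ forces the minimum to be $\lim$. For non-associativity of $\to$ I would give counterexamples going in both directions: on one side $(\lim\to(\lim\to\lim))\equivW(\lim\to 1)\equivW 1\lW\lim\equivW((\lim\to\lim)\to\lim)$, and on the other, fixing a noncomputable $p\in\Cantor$ with $c_p,d_p$ as in Subsection~\ref{subsec:examples}, the chain $((d_p\to 1)\to 1)\equivW(c_p\to 1)\equivW 1\lW c_p\equivW(d_p\to(1\to 1))$. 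The main technical point to verify carefully is the identity $(d_p\to 1)\equivW c_p$: since $\dom(d_p)=\{p\}$, any $h$ with $1\leqW d_p\star h$ must enable computing $p$ on its output side, so the $\leqW$-minimum such $h$ is exactly $c_p$; all other equivalences in the two chains reduce either to $(g\to 1)\equivW 1$ whenever $\dom(g)\neq\emptyset$, or to the trivial $(1\to 1)\equivW 1$.
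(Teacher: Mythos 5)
Your proof follows the paper's argument essentially verbatim: \cite{BG11} for $\times$, Corollary~\ref{cor:starcharac} for monotonicity of $\star$, Lemma~\ref{lem:cylindrical-decomposition} plus associativity of multi-valued composition for associativity of $\star$, the adjunction of Corollary~\ref{cor:impcharac} for the $\to$-monotonicities, and the identical counterexamples ($\lim\star\lpo$ versus $\lpo\star\lim$, the two $\lim$-chains, and the $d_p/c_p$ chain). One small caveat: your auxiliary principle ``$(g\to 1)\equivW 1$ whenever $\dom(g)\neq\emptyset$'' is false as stated, since $(g\to 1)$ is the problem $0\mapsto\dom(g)$ and hence is $\equivW 1$ only when $\dom(g)$ contains a computable point (otherwise your own identity $(d_p\to 1)\equivW c_p\gW 1$ would be contradicted); however, the only instance you actually invoke is $g=c_p$, whose domain is the computable one-point space, so the chain still goes through.
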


Here and in the following we denote Weihrauch degrees by bold face letters $\mathbf{a},\mathbf{b}$, etc.
We extend the reducibility $\leqW$ in the straightforward way to Weihrauch degrees using representatives and instead of equivalence we can write equality.
We first consider the constants of the Weihrauch lattice.
Some of the following rules for constants depend on our conventions regarding $\Wtop$.

\begin{observation}[The constants]
\label{prop:constants}
We obtain
\begin{enumerate}
\item $0 \leqW \mathbf{a} \leqW \Wtop$
\item $0 \Wsup  \mathbf{a} = \mathbf{a}\Winf\Wtop=\mathbf{a}$
\item $1 \times \mathbf{a} = 1 \star \mathbf{a} = \mathbf{a} \star 1 =(1 \rightarrow \mathbf{a}) =\mathbf{a}$
\item $\mathbf{a} \Wsup  \Wtop = \mathbf{a} \times  \Wtop = \mathbf{a} \star  \Wtop =  \Wtop \star \mathbf{a} =  \Wtop$
\item $0 \times \mathbf{a} = 0 \star \mathbf{a} = \mathbf{a} \star 0 = 0$ for $\mathbf{a} \neq \Wtop$ 
\item $(\mathbf{a} \rightarrow 0) =  (\Wtop\rightarrow \mathbf{a}) = 0$
\item $(0 \rightarrow \mathbf{a}) = \Wtop$ for $\mathbf{a}\not=0$
\item $(\mathbf{a} \rightarrow  \Wtop) =  \Wtop$ for $\mathbf{a} \neq  \Wtop$ 
\item $(\mathbf{a}\to\mathbf{a})\leqW 1$
\end{enumerate}
\end{observation}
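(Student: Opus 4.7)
The plan is to dispatch each of the nine items by invoking, in turn, the lattice structure (Theorem~\ref{theo:lattice}), the maximum characterization of $\star$ (Corollary~\ref{cor:starcharac} and its companion Corollary~\ref{cor:decomposition}), the minimum characterization of $\to$ (Theorem~\ref{theo:impcharac}), and the conventions adopted earlier regarding $\Wtop$. Items (1) and (2) are purely order-theoretic: $0$ is realized by every partial function on Baire space, so it sits at the bottom of $\Wei$; $\Wtop$ sits at the top by definition; and then (2) follows because $\Wsup$ and $\Winf$ are the lattice supremum and infimum, respectively, by Theorem~\ref{theo:lattice}.

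For (3) I would observe that $1$ has representatives $\id_1$ and $\id_{\Baire}$, so every representative $g'$ of a fixed degree $\mathbf{a}$ yields $\id \circ g' \equivW g' \circ \id \equivW g'$; combined with Corollary~\ref{cor:decomposition} this forces $1 \star \mathbf{a} = \mathbf{a} \star 1 = 1 \times \mathbf{a} = \mathbf{a}$, while $(1\to\mathbf{a})$ is $\min_{\leqW}\{h:\mathbf{a}\leqW 1\star h\}=\min_{\leqW}\{h:\mathbf{a}\leqW h\}=\mathbf{a}$ by Theorem~\ref{theo:impcharac}. For (4), absorption by $\Wtop$ is by convention for $\times$ and $\star$, and follows from $\mathbf{a}\leqW\Wtop$ combined with Theorem~\ref{theo:lattice} for $\Wsup$. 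For (5), pick representatives $f,g$ of $0,\mathbf{a}$ respectively with $\mathbf{a}\neq\Wtop$; then $\dom(f\circ g)=\dom(g\circ f)=\emptyset$, so Corollary~\ref{cor:decomposition} identifies the compositional product with the degree of a nowhere defined function, namely $0$.

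For (6), $(\mathbf{a}\to 0)$ is $\min_{\leqW}\{h:0\leqW\mathbf{a}\star h\}$; since $0\leqW h'$ for every $h'$ by (1), this minimum is $0$. The equality $(\Wtop\to\mathbf{a})=0$ is immediate from the convention $(\Wtop\to\mathbf{a}):=0$ introduced just after Corollary~\ref{cor:impcharac}. Item (7) is immediate from the clause ``if $\dom(g)=\emptyset$ and $\dom(f)\neq\emptyset$ we define $(g\to f):=\Wtop$'' in the definition of implication, applied to a representative of $0$. Item (8) is likewise immediate from the convention $(g\to\Wtop):=\Wtop$ for $g\not\equivW\Wtop$. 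Finally, for (9) observe that by (3) we have $\mathbf{a}=\mathbf{a}\star 1$, hence $\mathbf{a}\leqW\mathbf{a}\star 1$; Theorem~\ref{theo:impcharac} then gives $(\mathbf{a}\to\mathbf{a})\leqW 1$.

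The only delicate points concern the handling of $\Wtop$: for (5) one must exclude $\mathbf{a}=\Wtop$ to avoid clashing with the convention $0\star\Wtop=\Wtop$, and for (7)--(8) one must route the argument through the explicit definitional extensions rather than the minimum characterization of Theorem~\ref{theo:impcharac}, since the set over which the minimum is taken may be empty and one needs to invoke the agreement $\min_{\leqW}\emptyset=\Wtop$. No actual mathematical obstacle arises; the task is entirely bookkeeping around these conventions.
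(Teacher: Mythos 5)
Your proposal is correct. The paper states this result as an \emph{Observation} and gives no proof at all (only the remark that several of the rules depend on the conventions regarding $\Wtop$), so there is nothing to compare against; your verification via the lattice structure of Theorem~\ref{theo:lattice}, the maximum/minimum characterizations of $\star$ and $\to$ (Corollary~\ref{cor:starcharac}, Theorem~\ref{theo:impcharac}), and the explicit definitional conventions for $\Wtop$ is exactly the intended routine argument, and your care in separating the cases handled by convention (items (4), (6)--(8), and the exclusion $\mathbf{a}\neq\Wtop$ in (5)) from those handled by the min/max characterizations is appropriate.
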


We now discuss the order between the algebraic operations.
By $\pipeW$ we denote incomparability with respect to the Weihrauch lattice.

\begin{proposition}[Order of operations]
\label{prop:order}
$\mathbf{a} \Winf \mathbf{b} \leqW \mathbf{a} \times \mathbf{b} \leqW \mathbf{a} \star \mathbf{b}$ and $\mathbf{a} \Winf \mathbf{b} \leqW \mathbf{a} \Wsup  \mathbf{b}$ hold in general and any relation $\equivW$,  $\lW$ or $\pipeW$ compatible with this is possible between any two among the problems $\mathbf{a}\Winf\mathbf{b}$, $\mathbf{a}\Wsup\mathbf{b}$, $\mathbf{a}\times\mathbf{b}$, $\mathbf{a}\star\mathbf{b}$, $\mathbf{b}\star\mathbf{a}$,
$(\mathbf{a}\to\mathbf{b})$ and $(\mathbf{b}\to\mathbf{a})$.
\begin{proof}
Consider representatives $f : \In\mathbf{X} \mto \mathbf{Y}$ of $\mathbf{a}$ and $g :\In \mathbf{U} \mto \mathbf{V}$ of $\mathbf{b}$ 
and the coproduct injection $\iota_\mathbf{Y} : \mathbf{Y} \to \mathbf{Y} \Wsup  \mathbf{V}$. Now $\id_{\mathbf{X} \times \mathbf{U}}$ and $\iota_\mathbf{Y} \circ \pi_2$ witness $f \Winf g \leqW f \times g$. 
We obtain $f\times g=(f\times\id_\mathbf{V})\circ(\id_\mathbf{X}\times g)\leq f\star g$ by Corollary~\ref{cor:starcharac}.
We note that $f\Winf g\leqW f\Wsup g$ follows from Theorem~\ref{theo:lattice}.
It remains to list examples showing that all other relationships between the operations do occur. 
We use $e_p^q : \{p\} \to \{q\}$ for $p,q\in\Cantor$ and we note that $(d_p\to d_q)\equivW e_q^p$.
\begin{enumerate}
\item $1 \Winf 1 = 1 \Wsup  1 = 1 \times 1 = 1 \star 1 = (1 \rightarrow 1)$.
\item $1 = (\lim \rightarrow \lim) \lW \lim \Winf \lim \equivW \lim \Wsup  \lim \equivW \lim \times \lim \lW \lim \star \lim$.
\item $d_p \Winf d_q \equivW d_p \times d_q \equivW d_p \star d_q\equivW d_q\star d_p \lW d_p \Wsup  d_q$, $d_p \times d_q \lW (d_p \rightarrow d_q)$, $(d_p \rightarrow d_q) \pipeW d_p \Wsup  d_q$, $(d_p \rightarrow d_q) \pipeW (d_q \rightarrow d_p)$, provided that $p, q \in \Cantor$ are Turing incomparable.
\item $(\lpo \rightarrow \lpo) \lW \lpo \lW \lpo \times \lpo \lW \lpo \star \lpo$.
\item $\C_\uint \Winf \C_\mathbb{N} \lW \C_\uint \Wsup  \C_\mathbb{N} \lW \C_\uint \times \C_\mathbb{N} \equivW \C_\uint \star \C_\mathbb{N}$. 
\item $d_p \Wsup  1 \equivW 1 \lW c_p \equivW (d_p \rightarrow 1)$, provided that $p \in \Cantor$ is non-computable.
\item $e_p^q \Wsup  e_{p'}^{q'} \pipeW e_p^q \times e_{p'}^{q'}$ and $e_p^q \Wsup  e_{p'}^{q'} \pipeW e_p^q \star e_{p'}^{q'}$, where $p, q, p', q' \in \Cantor$ are such that none is Turing computable from the supremum of the other three (which is possible, see \cite[Exercise~2.2  in Chapter VII]{Soa87}). 
\item $e_q^p\star e_p^q\equivW e_p^q\pipeW e_q^p\equivW e_p^q\star e_q^p$ where $p,q\in\Cantor$ are Turing incomparable.
\item $(e_p^q\to 1)\equivW c_p\pipeW e_p^q\equivW e_p^q\star 1\equivW 1\star e_p^q\equivW e_p^q\times 1$ for $p,q\in\Cantor$ that are Turing incomparable.
\item $(\C_\IN\to\PC_{[0,1]})\equivW\MLR\pipeW(\C_\IN\Winf\PC_{[0,1]})$ (see Proposition~\ref{prop:MLR}).
\end{enumerate}
The remaining examples follow from the proofs of the facts that $\star$ and $\to$ are not commutative, see Propositions~\ref{prop:timesstar}.
\end{proof}
\end{proposition}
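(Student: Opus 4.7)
The plan is to exhibit concrete $f,g$ for which strictness holds in the reduction $f\leqW g\star(g\to f)$, which Corollary~\ref{cor:impcharac} supplies for free; so the entire task is to rule out the reverse reduction.

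My first candidate is $f=1$ and $g=\lim$. The main step is to evaluate $(\lim\to 1)$. By Theorem~\ref{theo:impcharac} this is the $\leqW$-minimum $h$ satisfying $1\leqW\lim\star h$. Since $1\not\leqW 0$ while Observation~\ref{prop:constants}(5) gives $\lim\star 0=0$, the degree $0$ cannot serve. Any other degree $h$ satisfies $1\leqW h$, hence $\lim\star 1\leqW\lim\star h$ by monotonicity (Proposition~\ref{prop:timesstar}(2)), so $1\leqW\lim\leqW\lim\star h$; thus $1$ is itself in the set and is the minimum. Consequently $(\lim\to 1)\equivW 1$, and Observation~\ref{prop:constants}(3) yields $\lim\star(\lim\to 1)\equivW\lim\star 1\equivW\lim$. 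Combined with the standard $1\lW\lim$ (cf.\ Subsection~\ref{subsec:examples}) this gives $f=1\lW\lim\equivW g\star(g\to f)$.

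An alternative, in the spirit of the paper's hint, is to take $f=\PC_{[0,1]}$ and $g=\C_\IN$. Proposition~\ref{prop:MLR} furnishes $(\C_\IN\to\PC_{[0,1]})\equivW\MLR$. Since $\MLR$ has computable inputs, one obtains $\C_\IN\leqW\C_\IN\star\MLR$, so the non-reduction $g\star(g\to f)\not\leqW f$ will follow from $\C_\IN\not\leqW\PC_{[0,1]}$. This last incomparability, which is the principal obstacle along this route, is standard: $\PC_{[0,1]}\equivW\WWKL$ lives strictly below $\WKL$, whereas $\C_\IN\not\leqW\WKL$.

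In either approach the decisive move is the explicit identification of $(g\to f)$; once that is available, strictness is immediate from the arithmetic of $\star$ in Observation~\ref{prop:constants} together with the known order of the chosen degrees.
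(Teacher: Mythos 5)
Your proposal does not prove the statement at hand. What you establish is the existence of $f,g$ with $f\lW g\star(g\to f)$, which is the separate, unlabelled proposition stated at the end of the subsection on implication; Proposition~\ref{prop:order} asks for something entirely different. It requires, first, the general reductions $\mathbf{a}\Winf\mathbf{b}\leqW\mathbf{a}\times\mathbf{b}\leqW\mathbf{a}\star\mathbf{b}$ and $\mathbf{a}\Winf\mathbf{b}\leqW\mathbf{a}\Wsup\mathbf{b}$ (the paper witnesses the first by $\id_{\mathbf{X}\times\mathbf{U}}$ and $\iota_{\mathbf{Y}}\circ\pi_2$, the second by the factorization $f\times g=(f\times\id_{\mathbf{V}})\circ(\id_{\mathbf{X}}\times g)$ together with Corollary~\ref{cor:starcharac}, and the third from Theorem~\ref{theo:lattice}); and, second, a whole battery of examples showing that between any two of the seven expressions $\mathbf{a}\Winf\mathbf{b}$, $\mathbf{a}\Wsup\mathbf{b}$, $\mathbf{a}\times\mathbf{b}$, $\mathbf{a}\star\mathbf{b}$, $\mathbf{b}\star\mathbf{a}$, $(\mathbf{a}\to\mathbf{b})$, $(\mathbf{b}\to\mathbf{a})$ every relation among $\equivW$, $\lW$, $\gW$, $\pipeW$ compatible with those general reductions actually occurs (the paper uses $\lim$, $\lpo$, the degrees $c_p$, $d_p$, $e_p^q$ for Turing-incomparable reals, the choice principles, and $\MLR$). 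None of this appears in your text; the only incidental contact is that the paper's list also invokes Proposition~\ref{prop:MLR} to obtain $(\C_\IN\to\PC_{\uint})\equivW\MLR\pipeW(\C_\IN\Winf\PC_{\uint})$, whereas you use that result for a different purpose.

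Even judged as a proof of the proposition you did target, your first route contains a flaw: the assertion that every degree $h\neq 0$ satisfies $1\leqW h$ is false --- for non-computable $p$ the degree $d_p$ satisfies $0\lW d_p\lW 1$ --- so your argument does not establish that $1$ is the $\leqW$-minimum of $\{h: 1\leqW \lim\star h\}$ (indeed, as written it would only show that every nonzero degree lies in that set, which has no least element). The conclusion $(\lim\to 1)\equivW 1$ is nonetheless correct, but the right justification is that $1\leqW\lim\star h$ forces $\dom(h)$ to contain a computable point, i.e.\ forces $1\leqW h$, so the set in question is exactly the pointed degrees together with $\Wtop$, whose minimum is $1$. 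Your second route (via $\PC_{\uint}$, $\C_\IN$ and Proposition~\ref{prop:MLR}) is sound for that other proposition, but again it contributes nothing toward Proposition~\ref{prop:order}.
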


We now mention some facts regarding the unary operations $^*$ and $\widehat{\phantom{a}}$.

\begin{proposition}[The unary operators, \me\  \cite{Pau10a}, \name{B.} \& \name{Gherardi} \cite{BG11}]
We obtain
\begin{enumerate}
\item $^*$ and $\widehat{\phantom{f}}$ are closure operators on $(\Wei, \leqW)$.
\item $\widehat{(\mathbf{a}^*)} = (\widehat{\mathbf{a}})^* = \widehat{\mathbf{a} \Wsup  1}$.
\item $0^* = 1^* = 1$, $\widehat{0} = 0$, $\widehat{1} = 1$.
\end{enumerate}
\end{proposition}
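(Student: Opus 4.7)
I will verify the three parts in turn; since this material recapitulates results of \cite{Pau10a, BG11}, I will be brief.

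For (3), I unwind the definitions. Since $\dom(0) = \emptyset$, the only accepted input of $0^*$ is the code $(0, 1)$ with the computable output $(0, 1)$, giving $0^* \equivW 1$. The map $1^*$ is a total computable function between disjoint unions of powers of the one-point space, so $1^* \equivW 1$. Similarly $\dom(\widehat{0}) = \dom(0)^{\mathbb{N}} = \emptyset$ forces $\widehat{0} = 0$, while $\widehat{1}$ is computable on a computable domain, so $\widehat{1} = 1$.

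For (1), extensivity comes from obvious inclusions: $\mathbf{a} \leqW \mathbf{a}^*$ via $H : x \mapsto (1, x)$ and $K : (1, y) \mapsto y$, and $\mathbf{a} \leqW \widehat{\mathbf{a}}$ via the diagonal $H : x \mapsto (x, x, \ldots)$ followed by projection on the first coordinate. For monotonicity, given witnesses $(H, K)$ of $f \leqW g$, I apply them coordinatewise to each of the $n$ (respectively, countably many) coordinates to obtain witnesses for $f^* \leqW g^*$ and $\widehat{f} \leqW \widehat{g}$. Idempotency follows from the standard computable bijections $\bigsqcup_n (\bigsqcup_m X^m)^n \cong \bigsqcup_k X^k$ and $X^{\mathbb{N} \times \mathbb{N}} \cong X^{\mathbb{N}}$, which flatten nested parallel applications into a single one.

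For (2), my plan is to show that both $\widehat{\mathbf{a}^*}$ and $(\widehat{\mathbf{a}})^*$ are equivalent to $\widehat{\mathbf{a} \Wsup 1}$. The reduction $\widehat{\mathbf{a} \Wsup 1} \leqW \widehat{\mathbf{a}^*}$ is immediate, since each $a \Wsup 1$-slot is already a length-$0$ or length-$1$ $a^*$-instance. Conversely, $\widehat{\mathbf{a}^*} \leqW \widehat{\mathbf{a} \Wsup 1}$ unpacks each $a^*$-instance $(n_j, x_0^{(j)}, \ldots, x_{n_j - 1}^{(j)})$ into $n_j$ many $a$-choices together with infinitely many $1$-choices, interleaved via a computable bijection $\mathbb{N} \times \mathbb{N} \to \mathbb{N}$; the $1$-choices pad the coordinates not used by that particular $a^*$-instance. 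For the equivalence with $(\widehat{\mathbf{a}})^*$: the reduction $(\widehat{\mathbf{a}})^* \leqW \widehat{\mathbf{a} \Wsup 1}$ reads the arity $n$ directly from the $(\widehat{a})^*$-input encoding and branches---if $n = 0$ output the trivial $(0, 1)$, and if $n \geq 1$ interleave the $n$ infinite $\widehat{a}$-coordinates via $n \times \mathbb{N} \to \mathbb{N}$ into a stream of $a$-choices for $\widehat{a \Wsup 1}$. The converse $\widehat{\mathbf{a} \Wsup 1} \leqW (\widehat{\mathbf{a}})^*$ uses arity $1$ with a dovetailed encoding: $H$ scans the input stream and emits as the $\widehat{a}$-coordinate the sequence of $a$-slot arguments (padding, once some $a$-slot has been seen, by repeating the most recent such argument), while $K$ at output position $i$ reads $s_i$ from the input and outputs $(1, *)$ if $s_i$ is a $1$-slot or reads the oracle output at the index of the $k$-th $a$-slot if $s_i$ is the $k$-th $a$-slot.

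The main subtlety I anticipate is the uniform handling of $\widehat{\mathbf{a} \Wsup 1}$-inputs that contain no $a$-slot at all: then $H$ has no $\dom(a)$-argument available and its output may be partial, so the oracle $G$ is unconstrained and potentially divergent. This is absorbed by observing that $K$ only ever queries the oracle at $a$-slot positions in the input, so on all-$1$-slot inputs $K$ produces the valid all-$(1, *)$ output without reading $GH$ at all, hence does not block on its undefinedness. All other steps are routine bookkeeping of the various encodings.
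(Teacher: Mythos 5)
Parts (1) and (3) are correct, and so is most of part (2): your arguments for $\widehat{\mathbf{a}\Wsup 1}\equivW\widehat{(\mathbf{a}^*)}$ and for $(\widehat{\mathbf{a}})^*\leqW\widehat{\mathbf{a}\Wsup 1}$ go through, because in each case the relevant index (the slot type of an $(\mathbf{a}\Wsup 1)$-coordinate, the arity of an $\mathbf{a}^*$- or $(\widehat{\mathbf{a}})^*$-instance) is available after reading a finite prefix of the name. The gap is exactly at the subtlety you flag and then dismiss, namely the direction $\widehat{\mathbf{a}\Wsup 1}\leqW(\widehat{\mathbf{a}})^*$. Under Definition~\ref{def:weihrauch}, $K\langle\id,GH\rangle$ is a composition of partial functions on Baire space. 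On an input $p$ naming a point with no $\mathbf{a}$-slot, your $H(p)$ is not a (total) name of any point of $\dom\bigl((\widehat{\mathbf{a}})^*\bigr)$; hence there is a realizer $G$ of the oracle with $H(p)\notin\dom(G)$, so $GH(p)$ and therefore $K\langle p,GH(p)\rangle$ are undefined, and the reduction fails at $p$. The observation that ``$K$ never queries the oracle there'' has no formal counterpart: $K$ receives $\langle\id,GH\rangle(p)$ as a single argument, which is undefined whenever $GH(p)$ is.

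This is not a repairable bookkeeping issue: the reduction $\widehat{\mathbf{a}\Wsup 1}\leqW(\widehat{\mathbf{a}})^*$ genuinely fails for non-pointed $\mathbf{a}$ with non-trivial output. Take $\mathbf{a}\equivW e_p^q$ for Turing-incomparable $p,q$. Then $\dom\bigl((\widehat{e_p^q})^*\bigr)$ consists of the computable arity-$0$ point together with points every name of which computes $p$. Any inner witness $H$ must send the computable name of the all-$1$-slot input to a computable name, hence to the arity-$0$ point, and by continuity this choice of arity is determined by a finite prefix $w$ of the input name. Extending $w$ to a ($p$-computable) name of an input having an $e_p^q$-slot at a position not constrained by $w$, the oracle answer is computable, the available information computes only $p$, yet the required output computes $q$ --- a contradiction. (This is the same finite-commitment phenomenon as in the paper's counterexample $\widehat{d_p\Wsup 1}\not\leqW\widehat{d_p}\times\widehat{1}$ in Proposition~\ref{prop:unarybinary}.) Since $(\widehat{\mathbf{a}})^*\equivW\widehat{\mathbf{a}}\Wsup 1$, the identity $(\widehat{\mathbf{a}})^*=\widehat{\mathbf{a}\Wsup 1}$ really asserts $\widehat{\mathbf{a}}\Wsup 1=\widehat{\mathbf{a}\Wsup 1}$ and requires $\mathbf{a}$ to be pointed (or some similar proviso); with a computable $x^c\in\dom(\mathbf{a})$ your padding can be done coordinatewise and totally (pad every $1$-slot by $x^c$), and the argument then works. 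As written, your proof of this direction does not.
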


The natural convention for the top element is $\widehat{\Wtop}=\Wtop^*=\Wtop$.
The following statements are from \cite[Proposition 3.15, Proposition 3.16 and Example 3.19]{HP13}.

\begin{proposition}[Completeness, \name{Higuchi} \& \me \cite{HP13}]
\label{prop:completeness}
No non-trivial $\omega$-suprema exist in $(\Wei, \leqW)$. Some non-trivial $\omega$-infima exist, others do not.
\end{proposition}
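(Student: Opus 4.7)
The plan is to prove the three parts separately, using throughout the embedding $p \mapsto c_p$ of the Turing degrees into $\Wei$ recalled in Subsection~\ref{subsec:examples}. Since $c_p \leqW c_q$ iff $p \leq_\mathrm{T} q$, this embedding is faithful, and I would leverage order-theoretic pathologies of the Turing degrees to obtain corresponding pathologies in $\Wei$.

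For the absence of non-trivial $\omega$-suprema, I would argue by contradiction: given any strictly ascending chain $(\mathbf{a}_n)$ in $\Wei$ and a candidate supremum $\mathbf{s}$, the goal is to produce a further Weihrauch upper bound $\mathbf{s}'$ of the chain to which $\mathbf{s}$ fails to reduce, contradicting $\mathbf{s}$ being the least upper bound. Taking the embedded chain $c_{p_0} \lW c_{p_1} \lW \ldots$ with $p_n = \emptyset^{(n)}$ as the concrete witness, a Kleene-Post construction above $\emptyset^{(\omega)}$ produces Turing-incomparable $q, q'$ each Turing-bounding all $p_n$. The Weihrauch degrees $c_q$ and $c_{q'}$ are then $\leqW$-incomparable upper bounds of the embedded chain, and iterating across a cofinal family of such Kleene-Post pairs yields a diverse collection of Weihrauch upper bounds that cannot all simultaneously lie above any single fixed $\mathbf{s}$.

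For the $\omega$-infima claim I would exhibit both behaviours. For existence, I would present a decreasing chain of the form $\mathbf{a}_n = \mathbf{b} \Wsup \mathbf{c}_n$ where $\mathbf{c}_n$ descends to a degree computable in $\mathbf{b}$, and verify via the distributive lattice laws of Theorem~\ref{theo:lattice} that $\mathbf{b}$ is the infimum. For non-existence, dually to the supremum argument, I would pick a strictly $\leq_\mathrm{T}$-descending Turing chain $(p_n)$ without a Turing infimum (a standard Spector-style construction) and embed it; the absence of a Turing infimum among the common Turing lower bounds of $\{p_n\}$ then transfers to the absence of a Weihrauch infimum for $(c_{p_n})$.

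The main technical obstacle, both in the supremum argument and in the infimum non-existence argument, is the transfer step: verifying that a Weihrauch upper (lower) bound of the embedded chain is effectively controlled by the embedded Turing structure, since $\Wei$ contains many degrees not of the form $c_q$, and one must rule out exotic multi-valued constructions that would evade the Turing-side obstructions. Handling this rigorously will require unfolding the definition of Weihrauch reducibility against the particular shape of the embedded degrees $c_p$, showing that the freedom afforded by multi-valuedness does not conjure new least upper bounds or greatest lower bounds beyond those already ruled out on the Turing side.
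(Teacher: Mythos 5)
The paper does not reprove this proposition; it simply cites \cite{HP13} (Propositions 3.15, 3.16 and Example 3.19 there). Measured against what those results assert, your proposal has a genuine gap in its first and largest part. The claim ``no non-trivial $\omega$-suprema exist'' is \emph{universal}: every countable family whose finite joins are not eventually maximal fails to have a least upper bound. After correctly stating this goal for an arbitrary chain $(\mathbf{a}_n)$, your argument immediately specialises to the single chain $c_{\emptyset^{(n)}}$ ``as the concrete witness''. A witness can only establish the existential negation --- that \emph{some} $\omega$-supremum fails to exist --- and nothing in the proposal addresses chains not of the form $c_{p_n}$, on which the Turing embedding has no purchase. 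The proof in \cite{HP13} is instead a diagonalisation against the countably many pairs of computable reduction witnesses, carried out for an arbitrary countable family.

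Even for your concrete chain, the ``transfer step'' you flag as the main technical obstacle is not a technicality but the crux, and the Kleene--Post approach does not survive it. The countable coproduct $\bigsqcup_n c_{p_n}$ (input $n$, output $p_n$) is an upper bound of the chain, and $c_q \leqW \bigsqcup_n c_{p_n}$ forces $q \leqT p_n$ for some $n$; hence no $c_q$ with $q$ Turing above all $p_n$ reduces to it. So exhibiting Turing-incomparable upper bounds $c_q, c_{q'}$ --- or even an exact pair --- rules out nothing: a putative least upper bound could sit at or below $\bigsqcup_n c_{p_n}$, in a region the embedded Turing structure cannot see. Excluding that possibility requires exploiting the non-uniformity of the reductions $c_{p_n} \leqW h$ (for instance by reindexing the coproduct along a sufficiently generic permutation), i.e., a genuinely Weihrauch-theoretic argument rather than a transferred Turing-theoretic one. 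The infima half is likewise under-justified: binary distributivity (Theorem~\ref{theo:lattice}) says nothing about infinite infima, so the claim that $\mathbf{b}$ is the infimum of the $\mathbf{b} \Wsup \mathbf{c}_n$ needs a separate verification. The clean route (Example 3.19 of \cite{HP13}) uses the embedding $d_{(\cdot)}$ of $\mathfrak{M}^{\mathrm{op}}$ onto the \emph{downward closed} cone $\{\mathbf{a} : \mathbf{a} \leqW 1\}$: there every lower bound of a family $\{d_{A_n}\}$ automatically stays inside the cone, so countable Medvedev suprema become genuine countable Weihrauch infima --- precisely the closure property your $c_{(\cdot)}$-based arguments lack.
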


We recall that a lattice $(L,\leq,\cdot)$ with a monoid operation $\cdot:L\times L\to L$ is called {\em right residuated} if for every $x,z\in L$
there exists a greatest $y\in L$ such that $x\cdot y\leq z$ and {\em left residuated} if for every $y,z\in L$ there
is a greatest $x\in L$ such that $x\cdot y\leq z$. If the operation $\cdot$ is commutative, then these two notions
coincide and we just briefly call the property {\em residuated} \cite{galatos}.
We emphasize that we have to consider $\leqW$ in the appropriate orientation for each statement in the following result.

\begin{proposition}[Residuation]
$(\Wei, \geqW, \Wsup )$, $(\Wei, \leqW, \Winf)$ and $(\Wei, \geqW, \times)$ are not residuated. $(\Wei, \geqW, \star)$ is right residuated, but not left residuated.
\begin{proof}
That $(\Wei, \geqW, \Wsup )$ is not residuated follows from \cite[Theorem 4.1]{HP13}, that $(\Wei, \leqW, \Winf)$ is not residuated was proven as \cite[Theorem 4.9]{HP13}. For the remaining two negative claims, consider that $\C_\Cantor \times \C_\mathbb{N} \leqW \C_\Cantor \times (\C_\Cantor \Wsup  \C_\mathbb{N})$ and $\C_\Cantor \times \C_\mathbb{N} \leqW \C_\mathbb{N} \times (\C_\Cantor \Wsup  \C_\mathbb{N})$, but not even $\C_\Cantor \times \C_\mathbb{N} \leqW (\C_\Cantor \Winf \C_\mathbb{N}) \star (\C_\Cantor \Wsup  \C_\mathbb{N})$, see Example~\ref{ex:distributivity}. The final positive statement is substantiated by Theorem \ref{theo:impcharac}.
\end{proof}
\end{proposition}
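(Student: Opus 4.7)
The statement combines two kinds of claims: negative (failure of residuation) for $\Wsup$, $\Winf$, $\times$, and left-$\star$, plus a positive one (right residuation for $\star$). My plan is to dispatch the first two negative claims by citation, establish the positive claim directly from the implication's characterization, and build both remaining negative claims on a single concrete counterexample coming from closed choice principles.

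For $(\Wei, \geqW, \Wsup)$ and $(\Wei, \leqW, \Winf)$ I would simply invoke \cite[Theorem~4.1]{HP13} and \cite[Theorem~4.9]{HP13}, which already establish the required failures in exactly the orientations appearing in the statement.

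For $(\Wei, \geqW, \times)$, commutativity of $\times$ (Proposition~\ref{prop:timesstar}) collapses left and right residuation, so I only need to rule out one of them. Set $y := \C_\Cantor \Wsup \C_\mathbb{N}$ and $z := \C_\Cantor \times \C_\mathbb{N}$. Since $\C_\Cantor, \C_\mathbb{N} \leqW y$, monotonicity of $\times$ gives both $z \leqW \C_\Cantor \times y$ and $z \leqW \C_\mathbb{N} \times y$. A greatest $x$ in the $\geqW$-order (equivalently, smallest in $\leqW$) with $z \leqW x \times y$ would, by the universal property of $\Winf$, satisfy $x \leqW \C_\Cantor \Winf \C_\mathbb{N}$; monotonicity then yields $z \leqW (\C_\Cantor \Winf \C_\mathbb{N}) \times y$, and Proposition~\ref{prop:order} lifts this to $z \leqW (\C_\Cantor \Winf \C_\mathbb{N}) \star y$, contradicting Example~\ref{ex:distributivity}.

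The right residuation of $\star$ is immediate from Theorem~\ref{theo:impcharac}: for any $x, z$ the implication $(x \to z) \equivW \min_{\leqW}\{h : z \leqW x \star h\}$ is precisely the $\geqW$-greatest element of that set. For the failure of left residuation of $\star$, the very same counterexample carries over: both $x = \C_\Cantor$ and $x = \C_\mathbb{N}$ satisfy $z \leqW x \times y \leqW x \star y$ by Proposition~\ref{prop:order}, and a hypothetical $\geqW$-greatest such $x$ would again lie below $\C_\Cantor \Winf \C_\mathbb{N}$, yielding the forbidden reduction $z \leqW (\C_\Cantor \Winf \C_\mathbb{N}) \star y$. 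The principal obstacle throughout is the non-reduction $\C_\Cantor \times \C_\mathbb{N} \not\leqW (\C_\Cantor \Winf \C_\mathbb{N}) \star (\C_\Cantor \Wsup \C_\mathbb{N})$ supplied by Example~\ref{ex:distributivity}; once that fact is available the residuation arguments are purely formal, and the only care required is tracking which argument of $\times$ or $\star$ is being fixed and the orientation of the underlying order.
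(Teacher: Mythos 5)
Your proposal is correct and follows essentially the same route as the paper: the same citations to \cite{HP13} for $\Wsup$ and $\Winf$, Theorem~\ref{theo:impcharac} for right residuation of $\star$, and the same counterexample $z = \C_\Cantor \times \C_\mathbb{N}$, $y = \C_\Cantor \Wsup \C_\mathbb{N}$ with the non-reduction from Example~\ref{ex:distributivity} for the remaining two negative claims. You merely spell out the steps the paper leaves implicit (commutativity collapsing the two residuations for $\times$, and the infimum/monotonicity argument forcing the hypothetical residual below $\C_\Cantor \Winf \C_\mathbb{N}$), all of which are correct.
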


\subsection{More algebraic rules}

In the following, various candidates for simple algebraic rules are investigated for the Weihrauch degrees.
Either a proof or a counterexample is given. The latter in particular demonstrate that the Weihrauch degrees fail to be models of various algebraic systems studied in the literature.
We start with some special rules that involve implication or compositional products. 

\begin{proposition}[Implication, compositional products]
\label{prop:implication}
We obtain in general
\begin{enumerate}
\item $\mathbf{a} \rightarrow (\mathbf{b} \rightarrow \mathbf{c}) = (\mathbf{b} \star \mathbf{a}) \rightarrow \mathbf{c}$
\item $\mathbf{a}\to(\mathbf{b}\star\mathbf{c})\leqW(\mathbf{a}\to\mathbf{b})\star\mathbf{c}$
\item $(\mathbf{a} \Wsup  \mathbf{b}) \rightarrow \mathbf{c} \leqW (\mathbf{a} \rightarrow \mathbf{c}) \Winf  (\mathbf{b} \rightarrow \mathbf{c})$
\item $(\mathbf{a} \rightarrow \mathbf{c}) \Wsup (\mathbf{b} \rightarrow \mathbf{c}) \leqW (\mathbf{a} \Winf \mathbf{b}) \rightarrow \mathbf{c}$
\item $\mathbf{a} \times (\mathbf{b} \star \mathbf{c}) \leqW \left (\mathbf{b} \star (\mathbf{a} \times \mathbf{c}) \right ) \Winf \left ( (\mathbf{a} \times \mathbf{b}) \star \mathbf{c} \right)$ 
\item $(\mathbf{a} \star \mathbf{c}) \times (\mathbf{b} \star \mathbf{d}) \leqW (\mathbf{a} \times \mathbf{b}) \star (\mathbf{c} \times \mathbf{d})$ 
\item $(\mathbf{a} \star \mathbf{c}) \Winf (\mathbf{b} \star \mathbf{d}) \leqW (\mathbf{a} \Winf \mathbf{b}) \star (\mathbf{c} \times \mathbf{d})$ 
\item $(\mathbf{a} \rightarrow 1) \star \mathbf{b} = (\mathbf{a} \rightarrow 1) \times \mathbf{b}$
\item $(\mathbf{a} \Winf 1) \rightarrow 1 = \mathbf{a} \rightarrow 1$ for $\mathbf{a}\not=\infty$
\end{enumerate}
In general, none of the reductions ``$\leqW$'' can be replaced by ``$=$''.
\end{proposition}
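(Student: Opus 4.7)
The plan is to treat the nine items in three clusters before addressing the final strictness statement. For (1) and (2) the engine is the Galois-style equivalence of Corollary~\ref{cor:impcharac}. Unfolding $\mathbf{a}\to(\mathbf{b}\to\mathbf{c})\leqW\mathbf{d}$ twice with this equivalence and applying associativity of $\star$ (Proposition~\ref{prop:timesstar}(1)) yields $\mathbf{c}\leqW(\mathbf{b}\star\mathbf{a})\star\mathbf{d}$, which is precisely the unfolding of $(\mathbf{b}\star\mathbf{a})\to\mathbf{c}\leqW\mathbf{d}$; since the two degrees have identical upper sets, (1) follows by antisymmetry. For (2), the same rewriting reduces the claim to $\mathbf{b}\star\mathbf{c}\leqW(\mathbf{a}\star(\mathbf{a}\to\mathbf{b}))\star\mathbf{c}$, which is immediate from $\mathbf{b}\leqW\mathbf{a}\star(\mathbf{a}\to\mathbf{b})$ (the defining property of $\to$) together with monotonicity of $\star$.

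Items (3) and (4) flow directly from the anti-monotonicity of $\to$ in its first argument (Proposition~\ref{prop:timesstar}(3)) combined with the universal properties of $\Winf$ and $\Wsup$: from $\mathbf{a}\leqW\mathbf{a}\Wsup\mathbf{b}$ we get $(\mathbf{a}\Wsup\mathbf{b})\to\mathbf{c}\leqW\mathbf{a}\to\mathbf{c}$, symmetrically for $\mathbf{b}$, and the meet property of $\Winf$ closes (3); item (4) is dual, using $\mathbf{a}\Winf\mathbf{b}\leqW\mathbf{a},\mathbf{b}$ and the join property of $\Wsup$.

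For (5)--(7) I would invoke Corollary~\ref{cor:decomposition} to write each compositional product as a genuine composition of representatives, and then exploit the functoriality $(f\times g)\circ(h\times k)=(f\circ h)\times(g\circ k)$. In (5), $a'\times(b'\circ c')$ factors both as $(\id\times b')\circ(a'\times c')$ (witnessing $\leqW\mathbf{b}\star(\mathbf{a}\times\mathbf{c})$) and as $(a'\times b')\circ(\id\times c')$ (witnessing $\leqW(\mathbf{a}\times\mathbf{b})\star\mathbf{c}$); the meet property closes. For (6), the single identity $(a'\circ c')\times(b'\circ d')=(a'\times b')\circ(c'\times d')$ suffices. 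For (7), I would exhibit the reduction directly: on input $((h_1,x_1),(h_2,x_2))$ to the left-hand side, feed $(x_1,x_2)$ to $\mathbf{c}\times\mathbf{d}$, apply $h_1$ and $h_2$ componentwise to the output pair, and invoke $\mathbf{a}\Winf\mathbf{b}$ on the result, transporting the tagged output back to the matching summand.

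For (8), the decisive structural point is that $(\mathbf{a}\to 1)$ has singleton input domain, so its invocation consumes no data from earlier calls; this collapses the sequential composition $(\mathbf{a}\to 1)\star\mathbf{b}$ to the parallel $(\mathbf{a}\to 1)\times\mathbf{b}$ (given an input $(h,x)$ on the left, one queries $(\mathbf{a}\to 1)\times\mathbf{b}$ on $x$, recovers the advice string computably from $h$ and the returned $\mathbf{b}(x)$, and splices in the $(\mathbf{a}\to 1)$-output), and the converse inequality is Proposition~\ref{prop:order}. For (9), one direction is anti-monotonicity applied to $\mathbf{a}\Winf 1\leqW\mathbf{a}$; the other, via Corollary~\ref{cor:impcharac}, reduces to $1\leqW(\mathbf{a}\Winf 1)\star(\mathbf{a}\to 1)$, which holds because the output of $\mathbf{a}\Winf 1$ always admits the trivial $1$-tagged branch that is computable regardless of the $\mathbf{a}$-input provided. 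Finally, for the strictness claim, counterexamples can be harvested from Subsection~\ref{subsec:examples}: for instance $\mathbf{a}=\mathbf{c}=\lim$, $\mathbf{b}=1$ separates (2) via $\lim\to\lim\leqW 1\lW\lim\equivW(\lim\to 1)\star\lim$ (using Observation~\ref{prop:constants}(9)), while the strict hierarchy $\lpo\lW\lpo\times\lpo\lW\lpo\star\lpo\lW\lim$, the Turing-incomparable pairs $d_p,d_q$, and the $\C_\Cantor,\C_\IN$ incomparabilities already exploited in the residuation proposition should furnish the remaining separations. The main obstacle is precisely the strictness part: ruling out equality in (5)--(7) requires showing that certain re-sequencings of parallel problems cannot in general be merged into one compositional product, which demands a careful comparison of information flow on the two sides of each putative equality.
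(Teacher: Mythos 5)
Your positive directions are all correct and follow essentially the same route as the paper: (1) and (2) via the Galois equivalence of Corollary~\ref{cor:impcharac} plus associativity and $\mathbf{b}\leqW\mathbf{a}\star(\mathbf{a}\to\mathbf{b})$; (3) and (4) from anti-monotonicity of $\to$ and the lattice structure; (5)--(7) from Corollary~\ref{cor:decomposition} and the interchange identity $(f\times g)\circ(h\times k)=(f\circ h)\times(g\circ k)$; (8) from the singleton domain of $(\mathbf{a}\to 1)$; (9) from anti-monotonicity in one direction and pointedness of the composition in the other. Your counterexample for the strictness of (2), namely $\mathbf{a}=\mathbf{c}=\lim$, $\mathbf{b}=1$, giving $(\lim\to\lim)\equivW 1\lW\lim\equivW(\lim\to 1)\star\lim$, is valid and in fact simpler than the paper's, which goes through $\COH\equivW(\lim\to(\C_\Cantor\star\lim))\lW\lim$.

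The genuine gap is the strictness claim for items (3)--(7), which you only gesture at and partly concede. For (5)--(7) the $\lpo$/$\lim$ hierarchy does suffice, but you must actually make the assignments: (6) with $\mathbf{a}=\mathbf{d}=1$, $\mathbf{b}=\mathbf{c}=\lpo$ comes down to $\lpo\times\lpo\lW\lpo\star\lpo$; (7) with $\mathbf{a}=\mathbf{b}=1$, $\mathbf{c}=\mathbf{d}=\lpo$ to $\lpo\lW\lpo\times\lpo$; (5) with $\mathbf{a}=\lim\star\lim$, $\mathbf{b}=\mathbf{c}=\lim$ to $\lim^{\star 2}\lW\lim^{\star 3}$. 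For (3) and (4), however, ``incomparability'' alone does not close the argument and a real non-reduction must be proved. The paper takes $\mathbf{a}=\C_\IN$, $\mathbf{b}=\C_\Cantor$, $\mathbf{c}=\C_\IN\Wsup\C_\Cantor$ for (3), which requires showing that $(\C_\IN\to\C_\IN\Wsup\C_\Cantor)\Winf(\C_\Cantor\to\C_\IN\Wsup\C_\Cantor)$ is not computable even though $(\C_\IN\Wsup\C_\Cantor)\to(\C_\IN\Wsup\C_\Cantor)\equivW 1$. For (4) it takes $\mathbf{a}=c_p$, $\mathbf{b}=c_q$, $\mathbf{c}=c_p\times c_q$ with $p,q$ Turing-incomparable (note: the constant-output problems $c_p$, not your $d_p$), and the key step is that equality would force, via Theorem~\ref{theo:impcharac}, the reduction $c_p\times c_q\leqW(c_p\Winf c_q)\star(c_p\Wsup c_q)\leqW c_p\Wsup c_q$, contradicting incomparability. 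Without these (or equivalent) arguments the final sentence of the proposition remains unproved.
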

\begin{proof}
\ 
\begin{enumerate}[leftmargin=0.7cm]
\item From the associativity of $\star$ by Proposition \ref{prop:timesstar} (2) via Corollary~\ref{cor:impcharac}.
\item The reduction holds by Corollary~\ref{cor:impcharac}. A counterexample for the inverse reduction is 
         $(\lim\to(\C_\Cantor\star\lim))\equivW\COH\lW\lim\equivW(\lim\to\C_\Cantor)\star\lim$, see \cite{BHK17a} and Theorem~\ref{thm:COH}.
\item The reduction follows from anti-monotonicity of $\rightarrow$ in the first component  by Proposition \ref{prop:timesstar} and from $\Wsup$ and $\Winf$ being the supremum and infimum, respectively by Theorem \ref{theo:lattice}. 
         A counterexample for the other direction is found in $\mathbf{a} \equivW \C_\mathbb{N}$, $\mathbf{b} \equivW \C_\Cantor$ and $\mathbf{c} \equivW \C_\mathbb{N} \Wsup  \C_\Cantor$.
\item The reduction follows from anti-monotonicity of $\rightarrow$ in the first component  by Proposition \ref{prop:timesstar} and from $\Wsup$ and $\Winf$ being the supremum and infimum, respectively by Theorem \ref{theo:lattice}.
        To disprove reducibility in the other direction, let $p, q \in \Cantor$ be Turing-incomparable. Set $\mathbf{a} \equivW c_p$, $\mathbf{b} \equivW c_q$ and $\mathbf{c} \equivW c_p \times c_q$. We find $c_q \rightarrow (c_p \times c_q) \leqW c_p$, likewise $c_p \rightarrow (c_p \times c_q) \leqW c_q$. Hence the left hand side is reducible to $c_p \Wsup c_q$. If the reduction would hold, then Theorem \ref{theo:impcharac} would imply $c_p \times c_q \leqW (c_p \Winf c_q) \star (c_p \Wsup c_q) \leqW c_p \Wsup c_q$, a contradiction to the assumption $p$ and $q$ were Turing-incomparable.
\item Let $f:\In\mathbf{X}\mto\mathbf{Y}$, $g:\In\mathbf{U}\mto\mathbf{V}$ and $h:\In\mathbf{W}\mto\mathbf{Z}$ be representatives of $\mathbf{a},\mathbf{b}$ and $\mathbf{c}$,
respectively. By Corollary~\ref{cor:decomposition} we can assume without loss of generality that $\mathbf{Z}=\mathbf{U}$ and $g\star h\equivW g\circ h$. 
We obtain with Corollary~\ref{cor:starcharac}
$f\times(g\star h)\equivW f\times (g\circ h)=(f\times g)\circ(\id_\mathbf{X}\times h)\leqW(f\times g)\star h$
and analogously
$f\times(g\star h)\equivW f\times (g\circ h)=(\id_\mathbf{V}\times g)\circ(f\times h)\leqW g\star(f\times h)$.
Hence the claim follows since $\Winf$ is the infimum.
A counterexample is $\mathbf{a} \equivW \lim \star \lim$, $\mathbf{b} = \mathbf{c} \equivW \lim$.
\item By Corollary~\ref{cor:decomposition} we can assume that we have representatives $f:\In\mathbf{Y}\mto\mathbf{Z}$ of $\mathbf{a}$, $G:\In\mathbf{V}\mto\mathbf{W}$ of $\mathbf{b}$,
$H:\In\mathbf{X}\mto\mathbf{Y}$ of $\mathbf{c}$ and $K:\In\mathbf{U}\mto\mathbf{V}$ of $\mathbf{d}$ such that $F\star H\equivW F\circ H$ and $G\star K\equivW G\circ K$.
We obtain with Corollary~\ref{cor:starcharac} $(F\star H)\times(G\star K)\equivW(F\circ H)\times(G\circ K)=(F\times G)\circ(H\times K)\leqW(F\times G)\star(H\times K)$.
A counterexample is found in $\mathbf{a} = \mathbf{d} = 1$, $\mathbf{c}=\mathbf{b} \equivW \lpo$.\footnote{We note that it also follows from the Eckmann-Hilton argument that the equivalence cannot hold, since otherwise $\times$ and $\star$ would be the same operation.}
\item With the same representatives as in (5) we obtain $(F\star H)\Winf(G\star K)\equivW(F\circ H)\Winf(G\circ K)=(F\Winf G)\circ (H\times K)\leqW(F\Winf G)\star(H\times K)$.
A counterexample for the other direction is found in $\mathbf{a} = \mathbf{b} = 1$, $\mathbf{c}=\mathbf{d} \equivW \lpo$, as the statement would then evaluate to $\lpo \times \lpo \leqW \lpo$, which is known to be wrong.
\item Since $\times$ and $\star$ behave exactly in the same way for constants $0$ and $\Wtop$, we can assume that both factors are different from these values.
By Proposition~\ref{prop:order} we only need to prove $(\mathbf{a} \rightarrow 1) \star \mathbf{b} \leqW (\mathbf{a} \rightarrow 1) \times \mathbf{b}$. Take as a representative for $1$ the function $\id_1 : \{0\} \to \{0\}$, and any representatives $f$ of $\mathbf{a}$ and $g$ of $\mathbf{b}$. Then by definition, $\dom(f \rightarrow \id_1) = \{0\}$, hence the call to $f \rightarrow \id_1$ cannot depend on the output of $g$ anyway.
\item If $\mathbf{a}=0$, then both sides of the equation are $\Wtop$. Hence we can assume that $\mathbf{a}$ is different from $0$ and $\Wtop$.
To see $\mathbf{a} \rightarrow 1 \leqW (\mathbf{a} \Winf 1) \rightarrow 1$, we show $1 \leqW \mathbf{a} \star ((\mathbf{a} \Winf 1) \rightarrow 1)$ and invoke Corollary~\ref{cor:impcharac}. To see the latter, we just need to verify that the right hand side contains a computable point in its domain. By definition, $((\mathbf{a} \Winf 1) \rightarrow 1)$ has a computable point in its domain, and will produce some point in the domain of $\mathbf{a} \Winf 1$ as part of its output. But this in turn allows us to obtain a point in the domain of $\mathbf{a}$, which can be use for the subsequent call in a computable way, concluding the first direction.
For the other direction, we similarly prove $\mathbf{a} \Winf 1 \leqW \mathbf{a} \star (\mathbf{a} \rightarrow 1)$. But this follows from $\Winf$ being the infimum, and $1 \leqW \mathbf{a} \star (\mathbf{a} \rightarrow 1)$ by Corollary~\ref{cor:impcharac}.
\qedhere
\end{enumerate}
\end{proof}

Now we continue with some further distributivity rules. 
The list captures all possible distributions of pairs of different operations among $\Winf,\Wsup,\times,\star$ (except those that hold anyway due to Theorem~\ref{theo:lattice})  
and it captures some involving $\to$. 
We do not consider all possible combinations that involve the first argument of $\to$ since this operation is not even monotone in this argument.
Some further results in this direction can be derived from Proposition~\ref{prop:implication}.

\begin{proposition}[Further distributivity]
\label{prop:distributivity}
We obtain in general
\begin{enumerate}[labelindent=6pt]
\item $\mathbf{a} \times (\mathbf{b} \Wsup  \mathbf{c}) = (\mathbf{a} \times \mathbf{b}) \Wsup  (\mathbf{a} \times \mathbf{c})$
\item $\mathbf{a} \star (\mathbf{b} \Wsup  \mathbf{c}) = (\mathbf{a} \star \mathbf{b}) \Wsup  (\mathbf{a} \star \mathbf{c})$
\item $(\mathbf{b} \star \mathbf{a}) \Wsup  (\mathbf{c} \star \mathbf{a}) \leqW (\mathbf{b} \Wsup  \mathbf{c}) \star \mathbf{a}$ 
\item $(\mathbf{b} \Winf \mathbf{c}) \star \mathbf{a} \leqW (\mathbf{b} \star \mathbf{a}) \Winf (\mathbf{c} \star \mathbf{a})$
\item $\mathbf{a} \times (\mathbf{b} \Winf \mathbf{c}) \leqW (\mathbf{a} \times \mathbf{b}) \Winf (\mathbf{a} \times \mathbf{c})$
\item $\mathbf{a} \star (\mathbf{b} \Winf \mathbf{c}) = (\mathbf{a} \star \mathbf{b}) \Winf (\mathbf{a} \star \mathbf{c})$
\item $\mathbf{a} \Wsup  (\mathbf{b} \times \mathbf{c}) \leqW (\mathbf{a} \Wsup  \mathbf{b}) \times (\mathbf{a} \Wsup  \mathbf{c})$
\item $\mathbf{a} \Wsup  (\mathbf{b} \star \mathbf{c}) \leqW (\mathbf{a} \Wsup  \mathbf{b}) \star (\mathbf{a} \Wsup  \mathbf{c})$
\item $\mathbf{a} \Winf (\mathbf{b} \times \mathbf{c}) \leqW (\mathbf{a} \Winf \mathbf{b}) \times (\mathbf{a} \Winf \mathbf{c})$
\item $\mathbf{a} \rightarrow (\mathbf{b} \Wsup  \mathbf{c}) = (\mathbf{a} \rightarrow \mathbf{b}) \Wsup  (\mathbf{a} \rightarrow \mathbf{c})$
\item $(\mathbf{a} \rightarrow \mathbf{b}) \times (\mathbf{a} \rightarrow \mathbf{c}) \leqW \mathbf{a} \rightarrow (\mathbf{b} \times \mathbf{c})$
\item $(\mathbf{a} \times \mathbf{b}) \rightarrow \mathbf{c} \leqW (\mathbf{a} \rightarrow \mathbf{c}) \times (\mathbf{b} \rightarrow \mathbf{c})$
\item $\mathbf{a} \rightarrow (\mathbf{b} \Winf \mathbf{c}) \leqW (\mathbf{a} \rightarrow \mathbf{b}) \Winf (\mathbf{a} \rightarrow \mathbf{c})$
\item $(\mathbf{a} \star \mathbf{b}) \rightarrow \mathbf{c} \leqW (\mathbf{a} \rightarrow \mathbf{c}) \star (\mathbf{b} \rightarrow \mathbf{c})$
\item $\mathbf{a} \times (\mathbf{b} \star \mathbf{c}) \leqW (\mathbf{a} \times \mathbf{b}) \star (\mathbf{a} \times \mathbf{c})$
\end{enumerate}
In general, none of the reductions ``$\leqW$'' can be replaced by ``$=$''. We also obtain
\begin{enumerate}[resume*]
\item neither $\mathbf{a} \Winf (\mathbf{b} \star \mathbf{c}) \leqW (\mathbf{a} \Winf \mathbf{b}) \star (\mathbf{a} \Winf \mathbf{c})$ nor $(\mathbf{a} \Winf \mathbf{b}) \star (\mathbf{a} \Winf \mathbf{c}) \leqW \mathbf{a} \Winf (\mathbf{b} \star \mathbf{c})$
\item neither $\mathbf{a} \star (\mathbf{b} \times \mathbf{c}) \leqW (\mathbf{a} \star \mathbf{b}) \times (\mathbf{a} \star \mathbf{c})$ nor $(\mathbf{a} \star \mathbf{b}) \times (\mathbf{a} \star \mathbf{c}) \leqW \mathbf{a} \star (\mathbf{b} \times \mathbf{c})$
\item  neither  $(\mathbf{a} \to \mathbf{b}) \star (\mathbf{a} \to \mathbf{c}) \leqW \mathbf{a} \to (\mathbf{b} \star \mathbf{c})$  nor $\mathbf{a} \to (\mathbf{b} \star \mathbf{c}) \leqW (\mathbf{a} \to \mathbf{b}) \star (\mathbf{a} \to \mathbf{c})$
\item neither $(\mathbf{a} \times \mathbf{b}) \star \mathbf{c} \leqW (\mathbf{a} \star \mathbf{c}) \times (\mathbf{b} \star \mathbf{c})$ nor $(\mathbf{a} \star \mathbf{c}) \times (\mathbf{b} \star \mathbf{c}) \leqW (\mathbf{a} \times \mathbf{b}) \star \mathbf{c}$
\end{enumerate}
\end{proposition}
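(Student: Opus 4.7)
The plan is to organise the nineteen clauses into four groups according to the kind of argument they need: (A) the four equalities (1), (2), (6), (10); (B) the one-directional reductions (3)--(5), (7)--(9), (11)--(15); (C) the witnesses of strictness showing that none of the ``$\leqW$'' can be promoted to ``$=$''; and (D) the genuine incomparabilities (16)--(19). Groups (A) and (B) rest almost entirely on three tools already developed: monotonicity of $\times$ and $\star$ and (anti-)monotonicity of $\rightarrow$ from Proposition~\ref{prop:timesstar}, the max-characterization of Corollary~\ref{cor:starcharac}, and the adjunction of Corollary~\ref{cor:impcharac}. Groups (C) and (D) draw on the ``zoo'' of degrees collected in Subsection~\ref{subsec:examples}.

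For (A), the equalities (1), (2) and (10) share one pattern: any reduction whose argument is of type $\mathbf{b} \Wsup \mathbf{c}$ can be split by the tag of its input into one reduction from $\mathbf{b}$ and one from $\mathbf{c}$, and since $\times$, $\star$ and $\rightarrow$ are monotone in their passive component this branching commutes with the operation. Concretely for (10): the ``$\geqW$'' direction is monotonicity of $\rightarrow$ in the second argument together with Theorem~\ref{theo:lattice}, while the ``$\leqW$'' direction uses Corollary~\ref{cor:impcharac} to reduce the claim to $\mathbf{b} \Wsup \mathbf{c} \leqW \mathbf{a} \star \bigl((\mathbf{a}\to\mathbf{b}) \Wsup (\mathbf{a}\to\mathbf{c})\bigr)$, which again follows from case analysis on the coproduct tag. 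Item (6) is handled by applying Corollary~\ref{cor:starcharac} to both sides and observing that a representative of $\mathbf{b} \Winf \mathbf{c}$ appearing in the maximum defining the left-hand side is automatically a weakening of a representative of $\mathbf{b}$ and of $\mathbf{c}$.

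For (B), each inequality reduces to one or two lines. Items (3), (4), (5), (7), (8), (9) follow from monotonicity combined with $\mathbf{a} \Winf \mathbf{b} \leqW \mathbf{a}, \mathbf{b} \leqW \mathbf{a} \Wsup \mathbf{b}$. For the $\rightarrow$-involving inequalities (11), (13) I apply Corollary~\ref{cor:impcharac}: e.g.\ for (11) one needs $\mathbf{b} \times \mathbf{c} \leqW \mathbf{a} \star \bigl((\mathbf{a}\to\mathbf{b}) \times (\mathbf{a}\to\mathbf{c})\bigr)$, which is immediate from $\mathbf{b} \leqW \mathbf{a} \star (\mathbf{a}\to\mathbf{b})$, monotonicity of $\star$, and Proposition~\ref{prop:implication}(6). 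Items (12), (14) are dual, using $\mathbf{c} \leqW (\mathbf{a}\to\mathbf{c}) \star \mathbf{a}$ and Proposition~\ref{prop:implication}(6). Item (15) is Proposition~\ref{prop:implication}(5) combined with $\mathbf{a} \leqW \mathbf{a} \times \mathbf{a}$.

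For (C) and (D) I mine the zoo. Strictness in (2), (8), (15) is seen by $\lpo \times \lpo \lW \lpo \star \lpo$ applied with $\mathbf{a} = \mathbf{b} = \mathbf{c} = \lpo$; strictness in (3), (4), (9) follows from the choice hierarchy $\C_\mathbb{N} \pipeW \C_\Cantor$ and $\C_\Cantor \times \C_\mathbb{N} \not\leqW \C_\Cantor \Wsup \C_\mathbb{N}$ (the same example that drives Example~\ref{ex:distributivity}); strictness in (5), (7) uses a pair of Turing-incomparable $p, q$ together with the embeddings $c_p, d_p$; and (11)--(14) admit similar Turing-degree witnesses. The principal obstacle lies in (16)--(19), each of which demands a \emph{simultaneous} failure of both directions. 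For (17) and (19) the asymmetry $\lpo \times \lpo \lW \lpo \star \lpo$ gives one direction and a Turing-degree argument the other; for (16) one direction is supplied by $\lpo \Winf (\lpo \star \lpo) \lW \lpo \star \lpo$ and the other by choosing $\mathbf{a}$ strictly above $\mathbf{b}$ and $\mathbf{c}$. Item (18), which involves $\rightarrow$, is the hardest: one direction I will obtain from the $\lim,\, \C_\Cantor, \COH$ separation used in the proof of Proposition~\ref{prop:implication}(2) via Corollary~\ref{cor:impcharac}, the other from a Turing-degree configuration with $c_p, d_p$ chosen so that both $\mathbf{a}\to(\mathbf{b}\star\mathbf{c})$ trivialises and $(\mathbf{a}\to\mathbf{b}) \star (\mathbf{a}\to\mathbf{c})$ forces access to information unavailable from the former.
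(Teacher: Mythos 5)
Your positive reductions in groups (A) and (B) mostly go through, but three of them do not follow from the tools you cite. The clearest problem is item (11): Corollary~\ref{cor:impcharac} characterizes when an implication lies \emph{below} a given degree, whereas (11) asserts that something lies below the implication $\mathbf{a}\to(\mathbf{b}\times\mathbf{c})$, so the adjunction cannot apply in the direction you use it. Worse, your intermediate claim $\mathbf{b}\times\mathbf{c}\leqW\mathbf{a}\star\bigl((\mathbf{a}\to\mathbf{b})\times(\mathbf{a}\to\mathbf{c})\bigr)$ is false (take $\mathbf{a}=\mathbf{b}=\mathbf{c}=\C_{\{0,1\}}$: the right-hand side collapses to $\C_{\{0,1\}}\star 1$, which does not compute $\C_{\{0,1\}}\times\C_{\{0,1\}}$), and even if it held it would prove the \emph{converse} of (11), which the paper's counterexample shows fails. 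The actual proof of (11) is a direct construction on representatives: post-compose the $\mathcal{M}(\mathbf{Y}_1,\mathbf{Y}_2\times\mathbf{Y}_3)$-valued output of $\mathbf{a}\to(\mathbf{b}\times\mathbf{c})$ with the two projections. Similarly, item (9) cannot follow from monotonicity plus $\mathbf{a}\Winf\mathbf{b}\leqW\mathbf{a}$: the infimum sits on the small side of $\leqW$ and the product on the large side, so neither universal property applies; one must exhibit explicit reduction witnesses that inspect the tags of the two answers. And for item (6) your sketch only yields the easy direction $\mathbf{a}\star(\mathbf{b}\Winf\mathbf{c})\leqW(\mathbf{a}\star\mathbf{b})\Winf(\mathbf{a}\star\mathbf{c})$; the converse is the substantive half and is obtained via residuation, by showing $\bigl(\mathbf{a}\to(\mathbf{a}\star\mathbf{b})\Winf(\mathbf{a}\star\mathbf{c})\bigr)\leqW\mathbf{b}\Winf\mathbf{c}$ and applying Corollary~\ref{cor:impcharac}. (For item (2), your ``split by the tag'' intuition is right, but making it precise requires first normalizing $f\star(g\Wsup h)$ to $f\circ K\circ(g\Wsup h)$ via Lemma~\ref{lem:cylindrical-decomposition}.)

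The counterexample programme also has concrete failures. You list strictness ``in (2)'' although (2) is an equality, and your witness $\mathbf{a}=\mathbf{b}=\mathbf{c}=\lpo$ for (8) does not separate anything: both sides there are $\lpo\star\lpo$ (the working example is $\mathbf{a}\equivW\lpo$, $\mathbf{b}=\mathbf{c}=1$). More seriously, the hard halves of (16) and (18) are not addressed. For (16), ``choosing $\mathbf{a}$ strictly above $\mathbf{b}$ and $\mathbf{c}$'' cannot refute $\mathbf{a}\Winf(\mathbf{b}\star\mathbf{c})\leqW(\mathbf{a}\Winf\mathbf{b})\star(\mathbf{a}\Winf\mathbf{c})$, since then the right-hand side equals $\mathbf{b}\star\mathbf{c}$ and the reduction holds automatically; the genuine counterexample takes $\mathbf{a}\equivW c_q$, $\mathbf{b}$ the Turing jump restricted to $\{p\}$ and $\mathbf{c}\equivW c_p$ for Turing-incomparable $p,q$, and exploits that $(c_q\Winf\mathbf{b})\star(c_q\Winf c_p)$ has no computable point in its domain. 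A construction of the same flavour (with $e_4\langle p,q\rangle=\J\langle p,q\rangle$) is needed for (17) and (19), and (18) requires either the $d_p,e_p^q$ configuration or the choice-principle example of Example~\ref{ex:distributivity2}; ``a Turing-degree argument'' is not yet a proof. Until (6), (9), (11) and these separations are supplied, the proposal has genuine gaps.
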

\begin{proof}
\ 
\begin{enumerate}[leftmargin=0.7cm]
\item This is easy to see, see also \cite[Lemma 3.6]{HP13}. If one of the involved degrees is $\Wtop$, then both sides of the equation are $\infty$.

\item 
Consider representatives $f :\subseteq \mathbf{X}_3 \mto \mathbf{Y}_3$ of $\mathbf{a}$, $g :\subseteq \mathbf{X}_1 \mto \mathbf{Y}_1$ of $\mathbf{b}$ and $h : \subseteq\mathbf{X}_2 \mto \mathbf{Y}_2$ of $\mathbf{c}$. Without loss of generality we can assume that $f,g$ and $h$ are cylinders and hence $g\Wsup h$ is a cylinder too. 
By the Cylindrical Decomposition Lemma~\ref{lem:cylindrical-decomposition} there is a computable $K$ such that $f\star(g\Wsup h)\equivW f\circ K\circ(g\Wsup h)$.
Let $\iota_i:\mathbf{Y}_i\to\mathbf{Y}_1\Wsup\mathbf{Y}_2$ for $i\in\{1,2\}$ be the canonical injections and let $\pi_2:\mathbf{Y_3}\Wsup\mathbf{Y_3}\to\mathbf{Y_3}$ be the projection on the second component.
Then we obtain with the help of Corollary~\ref{cor:starcharac} and the monotonicity of $\Wsup$
\begin{eqnarray*}
f\star(g\Wsup h) &\equivW& f\circ K\circ (g\Wsup h)=\pi_2\circ((f\circ K\circ \iota_1\circ g)\Wsup (f\circ K\circ\iota_2\circ h))\\
                        &\leqW& (f\circ K\circ \iota_1\circ g)\Wsup (f\circ K\circ\iota_2\circ h)\leqW (f\star g)\sqcup (f\star h).
\end{eqnarray*} 
The inverse reduction $(f\star g)\sqcup (f\star h)\leqW f\star(g\Wsup h)$ follows from $\star$ being monotone by Proposition \ref{prop:timesstar} (2) and $\Wsup $ being the supremum  by Theorem \ref{theo:lattice}.
If one of the involved degrees is $\Wtop$, then both sides of the equation are $\infty$.

\item The $\leqW$ direction follows from $\star$ being monotone by Proposition \ref{prop:timesstar} (2) and $\Wsup $ being the supremum by Theorem \ref{theo:lattice}. 
A counterexample for the other direction is the following: Let $p ,q \in \Cantor$ be Turing-incomparable. Note that $c_p,c_q$ are constant functions and hence  we obtain $c_p \star \lpo \equivW c_p \times \lpo$ and $c_q \star \lpo \equivW c_q \times \lpo$. Consider the map $e_1 : \Cantor \to \Cantor$ defined via $e_1(0^\mathbb{N}) = p$ and $e_1(x) = q$ for $x \neq 0^\mathbb{N}$. 
Then $e_1 \leqW (c_p \Wsup  c_q) \star \lpo$, but with the help of (1)
\[e_1 \not\leqW (c_p \Wsup  c_q) \times \lpo\equivW(c_p \times \lpo) \Wsup  (c_q \times \lpo)\equivW  (c_p \star \lpo) \Wsup  (c_q \star \lpo).\]

\item The reduction is a consequence of $\star$ being monotone by Proposition \ref{prop:timesstar} (2) and $\Winf$ being the infimum by Theorem \ref{theo:lattice}. 
         Regarding a counterexample for the other direction we note that by (2) and the positive direction of (4) (see also Example~\ref{ex:distributivity}) we obtain
         \begin{eqnarray*}
         &&(\C_\IN\star(\C_\IN\Wsup\C_\Cantor))\Winf(\C_\Cantor\star(\C_\IN\Wsup\C_\Cantor)) \\
         &\leqW& (\C_\IN\times\C_\Cantor)\Winf(\C_\IN\times\C_\Cantor)\equivW\C_\IN\times\C_\Cantor\\
         &\not\leqW&\C_\IN\Wsup\C_\Cantor\equivW (\C_\IN\Winf\C_\Cantor)\star(\C_\IN\Wsup\C_\Cantor).
        \end{eqnarray*}

\item The reduction is given in \cite[Proposition~3.11 (3)]{HP13}. The reduction also holds if one the involved degrees is $\Wtop$.
         Regarding a counterexample for the other direction we note that by (1), (2) and (4) (see also Example~\ref{ex:distributivity}) we obtain
         \begin{eqnarray*}
         &&((\C_\IN\Wsup\C_\Cantor)\times\C_\IN)\Winf((\C_\IN\Wsup\C_\Cantor)\times\C_\Cantor) \\
         &\leqW& (\C_\IN\times\C_\Cantor)\Winf(\C_\IN\times\C_\Cantor)\equivW\C_\IN\times\C_\Cantor\\
         &\not\leqW&\C_\IN\Wsup\C_\Cantor\equivW (\C_\IN\Wsup\C_\Cantor)\times(\C_\IN\Winf\C_\Cantor).
        \end{eqnarray*}
               
\item Here $\mathbf{a} \star (\mathbf{b} \Winf \mathbf{c})\leqW (\mathbf{a} \star \mathbf{b}) \Winf (\mathbf{a} \star \mathbf{c})$ follows, since
        $\star$ is monotone by Proposition \ref{prop:timesstar} (2) and $\Winf$ is the infimum by Theorem \ref{theo:lattice}. 
       The inverse direction is a consequence of $\star$ having $\rightarrow$ as residual by Corollary \ref{cor:impcharac}.
       More precisely, since $\Winf$ is the infimum and $\to$ is monotone in the second component by Proposition~\ref{prop:timesstar} we have      
       \[(\mathbf{a}\to(\mathbf{a}\star \mathbf{b})\Winf(\mathbf{a}\star \mathbf{c}))\leqW(\mathbf{a}\to \mathbf{a}\star \mathbf{b})\Winf(\mathbf{a}\to \mathbf{a}\star \mathbf{c})\leqW \mathbf{b}\Winf\mathbf{c}\] 
       and hence $(\mathbf{a} \star \mathbf{b}) \Winf (\mathbf{a} \star \mathbf{c})\leqW\mathbf{a} \star (\mathbf{b} \Winf \mathbf{c})$ follows.

\item This follows from (1) and Proposition~\ref{prop:order} since 
\[\mathbf{a} \Wsup  (\mathbf{b} \times \mathbf{c})\leqW (\mathbf{a}\times\mathbf{a})\Wsup(\mathbf{b}\times\mathbf{a})\Wsup (\mathbf{a}\times\mathbf{c})\Wsup(\mathbf{b}\times\mathbf{c})= (\mathbf{a} \Wsup  \mathbf{b}) \times (\mathbf{a} \Wsup  \mathbf{c})\]
A counterexample is given by $\mathbf{a}\equivW \lpo$, $\mathbf{b} = \mathbf{c} = 1$ since $\LPO\lW\LPO\times\LPO$.

\item This follows from (2), (3) and Proposition~\ref{prop:order} since 
\[\mathbf{a} \Wsup  (\mathbf{b} \star \mathbf{c})\leqW (\mathbf{a}\star\mathbf{a})\Wsup(\mathbf{b}\star\mathbf{a})\Wsup (\mathbf{a}\star\mathbf{c})\Wsup(\mathbf{b}\star\mathbf{c})\leqW (\mathbf{a} \Wsup  \mathbf{b}) \star (\mathbf{a} \Wsup  \mathbf{c})\]
Again, a counterexample is given by $\mathbf{a}\equivW \lpo$, $\mathbf{b} = \mathbf{c} = 1$.

\item The reduction is witnessed by $H$, $K$ defined via $H\langle p, \langle q, r\rangle\rangle = \langle \langle p, q\rangle, \langle p, r\rangle\rangle$ and $K\langle x, \langle 0p, dq\rangle\rangle = 0p$,  $K\langle x, \langle 1p, 0q\rangle\rangle = 0q$ and  $K\langle x, \langle 1p, 1q\rangle\rangle = 1\langle p, q\rangle$. 
The reduction also holds if one of the involved degrees is $\Wtop$.
A counterexample for the other direction is found in $\mathbf{a} \equivW \lpo$, $\mathbf{b} = \mathbf{c} \equivW \lim$.

\item The $\leqW$ direction follows Corollary~\ref{cor:impcharac} together with distributivity of $\star$ over $\Wsup $ from the left $(2)$. The $\geqW$ direction is a consequence of monotonicity of $\rightarrow$ in the second component by Proposition \ref{prop:timesstar} and $\Wsup $ being the supremum by Theorem \ref{theo:lattice}.

\item Consider representatives $f : \In \mathbf{X}_1 \mto \mathbf{Y}_1$ of $\mathbf{a}$, $g :\In \mathbf{X}_2 \mto \mathbf{Y}_2$ of $\mathbf{b}$ and $h :\In \mathbf{X}_3 \mto \mathbf{Y}_3$ of $\mathbf{c}$. 
We assume that $\dom(f)\not=\emptyset$ or $\dom(g)=\emptyset$ or $\dom(h)=\emptyset$.
In this case $\dom(f\to(g\times h))=\dom((f\to g)\times(f\to h))=\dom(g)\times\dom(h)$. Let a computable 
\[\gamma : \mathcal{M}(\mathbf{Y}_1, \mathbf{Y}_2\times\mathbf{Y}_3)\times \mathbf{X}_1 \to (\mathcal{M}(\mathbf{Y}_1, \mathbf{Y}_2)\times\mathbf{X}_1 )\times (\mathcal{M}(\mathbf{Y}_1, \mathbf{Y}_3)\times  \mathbf{X}_1 )\] 
be defined via $\gamma(H,x) = ((\pi_1 \circ H,x), (\pi_2 \circ H,x))$. Then $(f \rightarrow g) \times (f \rightarrow h)$ is refined by $\gamma \circ \left (f \rightarrow (g \times h)\right )$.
If $\dom(f)=\emptyset$ and $\dom(g)\not=\emptyset$ and $\dom(h)\not=\emptyset$, then both sides of the equation are $\Wtop$. 
If one of the involved degrees is $\Wtop$, then the reduction holds.
As a counterexample for the other direction, consider $\mathbf{a} = \mathbf{b} = \mathbf{c} \equivW \C_{\{0, 1\}}$. We find that $\C_{\{0, 1\}} \rightarrow \C_{\{0, 1\}} \equivW 1 = 1 \times 1$, whereas $\C_{\{0, 1\}} \rightarrow (\C_{\{0, 1\}} \times \C_{\{0, 1\}})$ is not computable.

\item Consider representatives $f :\In \mathbf{X}_1 \mto \mathbf{Y}_1$ of $\mathbf{a}$, $g :\In \mathbf{X}_2 \mto \mathbf{Y}_2$ of $\mathbf{b}$ and $h :\In \mathbf{X}_3 \mto \mathbf{Y}_3$ of $\mathbf{c}$. 
We assume that $\dom(f\times g)\not=\emptyset$ or $\dom(h)=\emptyset$. 
In this case $\dom(f\to h)=\dom(g\to h)=\dom((f\times g)\to h)=\dom(h)$.
We introduce computable 
\[\gamma : \left (\mathcal{M}(\mathbf{Y}_1, \mathbf{Y}_3) \times \mathbf{X}_1\right) \times \left(\mathcal{M}(\mathbf{Y}_2, \mathbf{Y}_3)\times \mathbf{X}_2 \right) \to \mathcal{M}(\mathbf{Y}_1 \times \mathbf{Y}_2, \mathbf{Y}_3)\times (\mathbf{X}_1 \times \mathbf{X}_2) \] 
via $\gamma((H_1,x_1), (H_2,x_2)) = (H_1 \circ \pi_1,(x_1, x_2))$. Now $(f \times g) \rightarrow h$ is refined by $\gamma \circ \left ((f \rightarrow h) \times (g \rightarrow h)\right ) \circ \Delta_{\mathbf{X}_3}$.
If $\dom(f\times g)=\emptyset$ and $\dom(h)\not=\emptyset$, then both sides of the equation are $\Wtop$.
If one of the involved degrees is $\Wtop$, then the reduction holds.
For the counterexample, let $\mathbf{a} = \mathbf{b} \equivW \lpo$, $\mathbf{c} \equivW \lpo \times \lpo$.

\item The reduction follows from monotonicity of $\rightarrow$ in the second component by Proposition \ref{prop:timesstar} and $\Winf$ being the infimum by Theorem \ref{theo:lattice}.
For a counterexample for the other direction, let $p, q \in \Cantor$ be Turing-incomparable. Now set $\mathbf{a} \equivW c_p \Winf c_q$, $\mathbf{b} \equivW c_p$ and $\mathbf{c} \equivW c_q$.

\item By Corollary~\ref{cor:impcharac} it suffices to prove 
     \begin{eqnarray}
     \label{eq:starsimp}
     \mathbf{c} &\leqW& \mathbf{a} \star \mathbf{b} \star (\mathbf{a} \rightarrow \mathbf{c}) \star (\mathbf{b} \rightarrow \mathbf{c}).
     \end{eqnarray}
     If $\mathbf{c}=0$ or $\mathbf{b}=\Wtop$, then this is obviously satisfied. Hence we assume $\mathbf{c}\not=0$ and $\mathbf{b}\not=\Wtop$.
     In this case, if $\mathbf{c}=\Wtop$ or $\mathbf{b}=0$, then $(\mathbf{b}\to\mathbf{c})=\Wtop$ and hence the reduction is satisfied.
     Hence we can assume that $\mathbf{b}$ and $\mathbf{c}$ are both different from $0$ and $\Wtop$.
     Now again by Corollary~\ref{cor:impcharac} we have $\mathbf{c}\leqW\mathbf{a}\star(\mathbf{a}\to\mathbf{c})$.
     This implies the reduction in equation~\ref{eq:starsimp}, since the additional factors $\mathbf{b}$ and $(\mathbf{b}\to\mathbf{c})$
     do not disturb: for one, we have an input for $(\mathbf{b}\to\mathbf{c})$ available and secondly this degree generates an input for $\mathbf{b}$. 
    As counterexample for the other direction, choose $\mathbf{a} = \mathbf{b} \equivW \lpo$ and $\mathbf{c} \equivW \lpo \star \lpo$.
 
\item This follows from Proposition~\ref{prop:implication} (5).

\item Let $\J : \Cantor \to \Cantor$ denote the Turing jump and $p, q \in \Cantor$ be Turing-incomparable. Let $e_3 : \{p\} \to \{\J(p)\}$. Now $e_3 \star c_p \equivW c_{\J(p)}$, hence $c_q \Winf (e_3 \star c_p)$ is equivalent to the multi-valued function taking trivial input and then producing either $q$ or $\J(p)$. However, $(c_q \Winf e_3) \star (c_q \Winf c_p)$ has no computable elements in its domain - any input must contain a way to convert the potential output $q$ from  $(c_q \Winf c_p)$ into the input required for $(c_q \Winf e_3)$, which essentially is $p$. This rules out the possibility of a reduction.
A counterexample for the second direction is $\mathbf{a} \equivW \lpo$, $\mathbf{b} = \mathbf{c} \equivW \lim$.

\item Let $p, q \in \Cantor$ be Turing-incomparable. Define $e_4$ via $e_4\langle p, q\rangle = \J\langle p, q\rangle$ and $e_4(0^\mathbb{N}) = 0^\mathbb{N}$. Then $c_{\J\langle p, q\rangle} \leqW e_4 \star (c_p \times c_q)$, but $c_r \leqW e_4 \star c_p$ if and only if $r \leqT p$ and $c_r \leqW e_4 \star c_q$ if and only if $r \leqT q$. Thus, $c_r \leqW (e_4 \star c_p) \times (e_4 \star c_q)$ if and only if $r \leqT \langle p, q\rangle$, ruling out a reduction in that direction.
A counterexample for the other direction is $\mathbf{a} \equivW \lpo$, $\mathbf{b} = \mathbf{c} = 1$.

\item For the first direction, let $p, q\in \Cantor$ be Turing incomparable. Now use $\mathbf{a} \equivW d_p$, $\mathbf{b} \equivW e_p^q\equivW(d_q\to d_p)$ and $\mathbf{c} \equivW 1$. 
Note that $(d_p \rightarrow 1) \equivW c_p$, and that we have $(d_p \rightarrow e_p^q) \equivW e_p^q$. Finally, we find $c_q \not\leqW e_p^q$, but $c_	q \leqW e_p^q \star c_p$.
Another counterexample is given in Example~\ref{ex:distributivity2}.
As a counterexample for the other direction, consider $\mathbf{a} = \mathbf{b} = \mathbf{c} \equivW \C_{\{0, 1\}}$. We find that $(\C_{\{0, 1\}} \rightarrow \C_{\{0, 1\}}) \equivW 1 = 1 \star 1$, whereas $\C_{\{0, 1\}} \rightarrow (\C_{\{0, 1\}} \star \C_{\{0, 1\}})$ is not computable.

\item Let $p_0,p_1,p_2,q_0,q_1,q_2\in\Cantor$ be such that none of them is Turing computable from the supremum of the other five (which is possible, see \cite[Exercise~2.2  in Chapter VII]{Soa87}). 
We consider maps $e_4,e_5:\Cantor\to\Cantor$ with $e_4(p_0)=p_1$, $e_4(q_0)=q_1$ and $e_4(p)=p$ for all other $p$, $e_5(p_0)=p_2$, $e_5(q_0)=q_2$ and $e_5(p)=p$ for all other $p$. We also consider $c_M:\{0\}\mto\Cantor,0\mapsto M$ for every $M\In\Cantor$. 
Then $c_{\{\langle p_1,p_2\rangle,\langle q_1,q_2\rangle\}}\leqW (e_4\times e_5)\star c_{\{p_0,q_0\}}$ but $c_{\{\langle p_1,p_2\rangle,\langle q_1,q_2\rangle\}}\not\leqW (e_4\star c_{\{p_0,q_0\}})\times (e_5\star c_{\{p_0,q_0\}})$.
A counterexample for the other direction is found in $\mathbf{c}\equivW\LPO$ and $\mathbf{a}=\mathbf{b}=1$.
\qedhere
\end{enumerate}
\end{proof}

These algebraic distributivity rules can be often used to calculate degrees that would otherwise be hard to determine.

\begin{example}
\label{ex:distributivity}
We obtain
\begin{enumerate}
\item $(\C_{\Cantor}\Winf\C_\IN)\star(\C_\Cantor\Wsup\C_\IN)\equivW(\C_{\Cantor}\Winf\C_\IN)\times(\C_\Cantor\Wsup\C_\IN)\equivW\C_\Cantor\Wsup\C_\IN$ 
\item $(\C_\Cantor\Wsup\C_\IN)\star(\C_{\Cantor}\Winf\C_\IN)\equivW\C_{\Cantor}\star\C_\IN\equivW\C_\Cantor\times\C_\IN$ 
\end{enumerate}
Here (1) follows from Propositions~\ref{prop:order} and \ref{prop:distributivity} (2) and (4) and using the facts
that $\C_\Cantor\star\C_\Cantor\equivW\C_\Cantor$ and $\C_\IN\star\C_\IN\equivW\C_\IN$ by Theorem~\ref{thm:BBP12}.
For (2) we use the same facts together with Proposition~\ref{prop:distributivity}~(3), (6) and $\C_\IN\star\C_{\Cantor}\equivW\C_{\Cantor}\star\C_\IN\equivW\C_\Cantor\times\C_\IN$,
which holds by Theorem~\ref{thm:BBP12}, Proposition~\ref{prop:order} and \cite[Corollary~4.9]{BBP12}.
\end{example}

We now consider the distribution of unary operators over binary ones.

\begin{proposition}[Unary operators distributing over binary ones]
\label{prop:unarybinary}
We obtain in general
\begin{enumerate}
\item $(\mathbf{a} \Wsup  \mathbf{b})^* = \mathbf{a}^* \times \mathbf{b}^*$
\item $(\mathbf{a} \times \mathbf{b})^* \leqW \mathbf{a}^* \times \mathbf{b}^*$ 
\item $(\mathbf{a} \Winf \mathbf{b})^* = \mathbf{a}^* \Winf \mathbf{b}^*$
\item $(\mathbf{a} \star \mathbf{b})^* \leqW \mathbf{a}^* \star \mathbf{b}^*$ 
\item $\widehat{\mathbf{a}} \times \widehat{\mathbf{b}} \leqW \widehat{(\mathbf{a} \Wsup  \mathbf{b})}$ 
\item $\widehat{\mathbf{a}} \Wsup \widehat{\mathbf{b}} \leqW \widehat{(\mathbf{a} \Wsup  \mathbf{b})}$ 
\item $\widehat{(\mathbf{a} \times \mathbf{b})} = \widehat{\mathbf{a}} \times \widehat{\mathbf{b}}$
\item $\widehat{(\mathbf{a} \Winf \mathbf{b})} \leqW \widehat{\mathbf{a}} \Winf \widehat{\mathbf{b}}$ 
\item $\widehat{(\mathbf{a} \star \mathbf{b})} \leqW \widehat{\mathbf{a}} \star \widehat{\mathbf{b}}$ 
\end{enumerate}
In general, none of the reductions ``$\leqW$'' can be replaced by ``$=$''.
\end{proposition}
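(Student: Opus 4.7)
The plan is to establish each of the nine claims by constructing explicit reductions in the asserted direction and then exhibiting counterexamples showing that equivalence fails in general. The positive directions exploit canonical decompositions of the product constructions $\mathbf{X}^*$ and $\widehat{\mathbf{X}}$ together with monotonicity of the closure operators $^*$ and $\widehat{\phantom{f}}$; the $\star$-involving claims further rely on the maximum characterization of Corollary~\ref{cor:starcharac}.

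For (1), a list in $(\mathbf{X} \Wsup \mathbf{U})^n$ canonically decomposes into its $\mathbf{X}$-sublist, its $\mathbf{U}$-sublist, and the interleaving pattern, so $(f \Wsup g)^*$ is realized by applying $f^*$ and $g^*$ in parallel and reassembling; both directions then follow. Claim (2) comes from splitting a list of pairs into its two coordinate lists, and (7) from the canonical computable homeomorphism $\widehat{\mathbf{X} \times \mathbf{U}} \cong \widehat{\mathbf{X}} \times \widehat{\mathbf{U}}$. For (3), the direction $(f \Winf g)^* \leqW f^* \Winf g^*$ follows from monotonicity of $^*$ together with $f \Winf g \leqW f, g$ and the fact that $\Winf$ is the infimum; the converse is the more subtle part, obtained by exploiting the empty-list case together with a padding construction that uses entries of the nonempty input lists to feed $(f \Winf g)^*$. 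Claim (5) routes even- and odd-indexed calls of $\widehat{a \Wsup b}$ to realize $\widehat{a}$ and $\widehat{b}$ respectively; (6) follows from (5) since $\widehat{a}, \widehat{b} \leqW \widehat{a} \times \widehat{b}$ and $\Wsup$ is the supremum; for (8), one commits uniformly to one branch of $\widehat{a} \Winf \widehat{b}$ and answers every coordinate of $\widehat{a \Winf b}$ from that side. For the $\star$-claims (4) and (9), expanding $f \star g \equivW (\id \times f) \circ g_{\mathbf{U}}^\t$ shows that $(f \star g)^*$, respectively $\widehat{f \star g}$, factors as $F' \circ G'$, with $G'$ parallelizing the $g$-step (hence reducible to $g^*$, respectively $\widehat{g}$) and $F'$ parallelizing the evaluation of the strongly continuous maps via $\operatorname{ev}$ from Proposition~\ref{prop:mathcalm} and the subsequent $f$-step (hence reducible to $f^*$, respectively $\widehat{f}$); Corollary~\ref{cor:starcharac} then delivers the desired reduction.

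The counterexamples witnessing strict inequality can be drawn from standard degrees. For (5), taking $\mathbf{a} = 0$ and $\mathbf{b} = 1$ yields $\widehat{0} \times \widehat{1} \equivW 0 \lW 1 \equivW \widehat{0 \Wsup 1}$, exploiting $0 \times \mathbf{a} = 0$ and $0 \Wsup \mathbf{a} = \mathbf{a}$. For (6) and (8), the pair $\mathbf{a} \equivW \C_\mathbb{N}$, $\mathbf{b} \equivW \C_\Cantor$ works: in (6) the right-hand side absorbs $\widehat{\mathbf{a}} \times \widehat{\mathbf{b}}$ by (5) while the left does not, and in (8) mixed outputs of $\widehat{\C_\mathbb{N} \Winf \C_\Cantor}$ cannot be uniformly converted into one of the two pure outputs of $\widehat{\C_\mathbb{N}} \Winf \widehat{\C_\Cantor}$. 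For (9), choosing $\mathbf{a} = \mathbf{b} \equivW \lpo$ separates $\widehat{\lpo \star \lpo} \equivW \lim$ from $\widehat{\lpo} \star \widehat{\lpo} \equivW \lim \star \lim$ via the strict reduction $\lim \lW \lim \star \lim$. Analogous $\lpo$- and choice-based degrees handle (2) and (4). The main obstacle lies in verifying the converse direction of (3) cleanly and in locating counterexamples for (2) and (4), where one must find degrees for which the coordinate synchronization forced by $\times$ or $\star$ inside the closure operator is genuinely more restrictive than the free products on the right.
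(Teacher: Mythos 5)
Your positive reductions for (1), (2), (4), (5), (7), (8) and (9) essentially match the paper's (for (4) and (9) the paper proceeds a bit more directly, first invoking Corollary~\ref{cor:decomposition} to replace $f\star g$ by an honest composition $f\circ g$ and then using $(f\circ g)^*=f^*\circ g^*$ resp.\ $\widehat{f\circ g}=\widehat{f}\circ\widehat{g}$). However, your derivation of (6) is broken: $\widehat{\mathbf{a}}\leqW\widehat{\mathbf{a}}\times\widehat{\mathbf{b}}$ is \emph{not} valid in general --- it needs a computable point in $\dom(\mathbf{b})$ (take $\mathbf{a}=1$, $\mathbf{b}\equivW d_p$ for non-computable $p$; then $\widehat{\mathbf{a}}\times\widehat{\mathbf{b}}\equivW d_p$ is not pointed, so $1\not\leqW\widehat{\mathbf{a}}\times\widehat{\mathbf{b}}$). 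The correct route for (6) is just monotonicity of the closure operator ($\mathbf{a}\leqW\mathbf{a}\Wsup\mathbf{b}$ gives $\widehat{\mathbf{a}}\leqW\widehat{\mathbf{a}\Wsup\mathbf{b}}$, likewise for $\mathbf{b}$) together with $\Wsup$ being the supremum. You also leave the converse of (3) open; a naive padding of both lists to equal length and pairing entries fails because different coordinates of $(f\Winf g)^{k}$ may answer on different sides. The standard fix (after disposing of empty lists) feeds all $nm$ pairs $(x_i,u_j)$ to $(f\Winf g)^{nm}$: if some row consists entirely of $g$-answers, then every column contains a $g$-answer, so one always extracts either a complete $f^*$-answer or a complete $g^*$-answer.

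The more serious gaps are in the counterexamples. Those for (2) and (4) are missing (as you acknowledge); the paper takes $\mathbf{a}\equivW\lpo$ and $\mathbf{b}\equivW d_p$ for non-computable $p$: then $\lpo\equivW\lpo^1\times d_p^0\leqW\lpo^*\times d_p^*$, while $\lpo\leqW(\lpo\times d_p)^*$ would, by compactness of $\dom(\lpo)$, force $\lpo\leqW(\lpo\times d_p)^n$ for a fixed $n$, impossible since for $n\geq1$ the right-hand side has no computable point in its domain. Your counterexamples for (6) and (8) are both incorrect. For (6): $\widehat{\C_\IN}\equivW\lim$ and $\widehat{\C_\Cantor}\equivW\C_\Cantor\leqW\lim$, so $\widehat{\C_\IN}\Wsup\widehat{\C_\Cantor}\equivW\lim\equivW\widehat{\C_\IN\Wsup\C_\Cantor}$; both sides coincide. (The paper instead uses Turing-incomparable $p,q$ and the chain $\widehat{c_p}\Wsup\widehat{c_q}\equivW c_p\Wsup c_q\lW c_p\times c_q\equivW\widehat{c_p}\times\widehat{c_q}\leqW\widehat{c_p\Wsup c_q}$.) For (8): since $\C_{\{0,1\}}\leqW\C_\IN\Winf\C_\Cantor$ and $\widehat{\C_{\{0,1\}}}\equivW\C_\Cantor$, we get $\C_\Cantor\leqW\widehat{\C_\IN\Winf\C_\Cantor}$, while $\widehat{\C_\IN}\Winf\widehat{\C_\Cantor}\equivW\lim\Winf\C_\Cantor\equivW\C_\Cantor$; the reduction you need to refute therefore actually holds for this pair. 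A correct counterexample here is genuinely delicate: the paper takes a mutually Turing-independent sequence $(p_i)$, sets $f(i)=p_{2i}$ and $g(i)=p_{2i+1}$, and shows that $\widehat{f}\Winf\widehat{g}\leqW\widehat{f\Winf g}$ would make the infinite pigeonhole principle $\IPP$ continuous. Your counterexamples for (5) (the pair $0,1$) and for (9) (the pair $\lpo,\lpo$) are fine.
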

\begin{proof}
All statements hold, if one of the involved degrees is $\Wtop$. Hence we can assume the contrary.
\begin{enumerate}[leftmargin=0.7cm]
\item This is \cite[Lemma 3.8]{HP13}. 

\item For the reduction, consider the representatives $f : \In\mathbf{X} \mto \mathbf{Y}$ of $\mathbf{a}$ and $g :\In \mathbf{V} \mto \mathbf{W}$ of $\mathbf{b}$. Let $H : \bigsqcup_{n \in \mathbb{N}} \left (\mathbf{X} \times \mathbf{V} \right)^n \to \left (\bigsqcup_{n \in \mathbb{N}} \mathbf{X}^n\right ) \times \left (\bigsqcup_{n \in \mathbb{N}} \mathbf{V}^n\right )$ be the function mapping $(n,((x_0, v_0), \ldots, (x_n, v_n)))$ to $((n,(x_0, \ldots, x_n)), (n,(v_0, \ldots, v_n)))$. Let $K$ be the corresponding function with $\mathbf{Y}$ and $\mathbf{W}$ in place of $\mathbf{X}$ and $\mathbf{V}$, respectively.
Then $H$ is computable and $K$ has a partial computable inverse $K^{-1}$ and $H$ and $K^{-1}\pi_2$ witness the desired reduction.
A counterexample for the other direction is found in $\mathbf{a} \equivW \lpo$ and $\mathbf{b} \equivW d_p$ with a non-computable $p\in\Cantor$.
Specifically, assume $\lpo^1 \times d_p^0 \equivW \lpo \leqW (\lpo \times d_p)^*$. 
Since $\LPO$ has the compact domain $\{0,1\}^\IN$, 
there has to be some $n \in \mathbb{N}$ such that $\lpo \leqW (\lpo \times d_p)^n$. The case $n = 0$ can be ruled out, as $\lpo$ is discontinuous. However, for $n \geq 1$ we find that $(\lpo \times d_p)^n$ has no computable points in its domain, while $\lpo$ has.

\item This is \cite[Lemma 3.7]{HP13}.

\item  For the reduction, consider the representatives $f$ of $\mathbf{a}$ and $g$ of $\mathbf{b}$.
By Corollary~\ref{cor:decomposition} we can assume without loss of generality that $f\star g\equivW f\circ g$.  
With the help of Corollary~\ref{cor:starcharac} we obtain:
$(f \star g)^* = \bigsqcup_{n \in \mathbb{N}} (f \circ g)^n = \bigsqcup_{n \in \mathbb{N}} f^n \circ \bigsqcup_{n\in\mathbb{N}}g^n=f^*\circ g^*\leqW f^*\star g^*$.
The counterexample employed in (2) applies here, too.

\item
For the reduction, consider  representatives $f : \In\mathbf{X} \to \mathbf{Y}$ of $\mathbf{a}$ and $g :\In \mathbf{V} \to \mathbf{W}$ of $\mathbf{b}$. Let $H :  \mathbf{X}^\IN \times \mathbf{V}^\IN \to (\mathbf{X} \Wsup  \mathbf{Y})^\IN$ be defined via $H((x_0, x_1, \ldots), (v_0, v_1, \ldots)) = ((0, x_0), (1, v_0), (0, x_1),(1,v_1), \ldots)$. 
Let $K$ be the corresponding function $\mathbf{Y}$ and $\mathbf{W}$ in place of $\mathbf{X}$ and $\mathbf{V}$, respectively.
Then $H$ and the partial inverse $K^{-1}$ of $K$ are computable and $H$ and $K^{-1}\pi_2$ witness the reduction.
As a counterexample, consider $\mathbf{a} \equivW d_{p}$ with a non-computable $p\in\Cantor$ and $\mathbf{b} \equivW 1$.

\item The reduction is a consequence of  $\Wsup$ being the supremum and  $\widehat{\phantom{a}}$ being a closure operator.
As a counterexample we consider Turing incomparable $p,q\in\Cantor$. With the help of (5) we obtain
$\widehat{c_p}\Wsup\widehat{c_q}\equivW c_p\Wsup c_q\lW c_p\times c_q\equivW\widehat{c_p}\times\widehat{c_q}\leqW\widehat{c_p\Wsup c_q}$.

\item This is \cite[Proposition 4.5]{BG11}.
\item The reduction is a consequence of $\Winf$ being the infimum and $\widehat{\phantom{a}}$ being a closure operator and was already stated in \cite[Proposition~4.9]{BG11}.
To construct a counterexample, for the other direction, let $(p_i)_{i \in \mathbb{N}}$ be a sequence in $\Cantor$ such that no element $p_i$ in this sequence
is Turing reducible to the supremum of $\{p_j:j\not=i\}$. Such a sequence exists by \cite[Exercise~2.2 in Chapter VII]{Soa87}.
Consider $f, g : \mathbb{N} \to \Cantor$ defined via $f(i) = p_{2i}$ and $g(i) = p_{2i+1}$. Now assume $\widehat{f} \Winf \widehat{g} \leqW \widehat{f \Winf g}$, and specifically consider the input $((0, 1, 2, \ldots), (0, 1, 2, \ldots))$ on the left hand side. The necessity of the decision whether the query to $\widehat{f}$ or the query to $\widehat{g}$ is to be answered implies that the continuity of the outer reduction witness gives the continuity of the multi-valued map\footnote{See \cite{BGM12} for a detailed investigation of the degree of $\IPP\equivW\BWT_2$.} $\IPP : \Cantor \mto \{0,1\}$ where $i \in \IPP(p)$ if and only if $|\{j \in \mathbb{N} : p(j) = i\}| = \infty$. The latter is easily seen to be false.

\item For the reduction, consider the representatives $f : \In\mathbf{X} \mto \mathbf{Y}$ of $\mathbf{a}$ and $g :\In \mathbf{V} \mto \mathbf{W}$ of $\mathbf{b}$. 
By Corollary~\ref{cor:decomposition} we can assume $\mathbf{W}=\mathbf{X}$ and $f\star g\equivW f\circ g$.
Then we obtain with the help of Corollary~\ref{cor:starcharac} $\widehat{f \star g} \equivW \widehat{f\circ g}=\widehat{f}\circ\widehat{g}\leqW \widehat{f}\star\widehat{g}$.
As a counterexample for the other direction we consider $\mathbf{a} = \mathbf{b} \equivW \lpo$. As $\lpo \star \lpo \leqW \lim$, we find $\widehat{\lpo} \star \widehat{\lpo} \equivW \lim \star \lim \not\leqW \lim \equivW \widehat{\lpo \star \lpo}$.
\qedhere
\end{enumerate}
\end{proof}

\subsection{Further rules for pointed degrees}

Call $\mathbf{a}$ \emph{pointed}, if $1 \leqW \mathbf{a}$, i.e., if $\mathbf{a} \Wsup  1 = \mathbf{a}$. 
The notion of pointedness was introduced in \cite{BBP12}, and a useful characterization on the level of representatives is that $f$ is pointed if and only if $\dom(f)$ contains a computable point. 
In particular, Weihrauch degrees obtained from mathematical theorems will typically be pointed.
We note that $\Wtop$ is pointed by definition too.
In the following we formulate additional algebraic rules for $\Wei \Wsup  1$, replacing any variables $\mathbf{a}, \mathbf{b}$ by $(\mathbf{a} \Wsup  1), (\mathbf{b} \Wsup  1)$ translates them into equations for $\Wei$.
We start with considering the order of operations (see Proposition~\ref{prop:order}).

\begin{proposition}[Order of operations]
\label{prop:order-pointed}
$\mathbf{a}\Wsup \mathbf{b}\leqW \mathbf{a}\times \mathbf{b}$ for pointed $\mathbf{a},\mathbf{b}$.
\end{proposition}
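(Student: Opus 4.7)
The plan is as follows. Pointedness of $\mathbf{a}$ and $\mathbf{b}$ gives representatives $f:\subseteq\mathbf{X}\mto\mathbf{Y}$ of $\mathbf{a}$ and $g:\subseteq\mathbf{U}\mto\mathbf{V}$ of $\mathbf{b}$ together with computable points $x_0\in\dom(f)$ and $u_0\in\dom(g)$ (using the characterization of pointedness in terms of computable points in the domain, recalled just before the proposition). I will then exhibit direct computable reduction witnesses $H$ and $K$ for $f\Wsup g\leqW f\times g$.

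The idea is that an input to $f\Wsup g$ tells us which side we are on, so we can pad with the computable point on the other side. Concretely, define $H: \mathbf{X}\Wsup\mathbf{U}\to\mathbf{X}\times\mathbf{U}$ by $H(0,x):=(x,u_0)$ and $H(1,u):=(x_0,u)$; this is computable since $x_0$ and $u_0$ are. Then $(f\times g)(H(0,x))=f(x)\times g(u_0)$ and $(f\times g)(H(1,u))=f(x_0)\times g(u)$, both non-empty. For the outer reduction witness, define $K:(\mathbf{X}\Wsup\mathbf{U})\times(\mathbf{Y}\times\mathbf{V})\to\mathbf{Y}\Wsup\mathbf{V}$ by $K((0,x),(y,v)):=(0,y)$ and $K((1,u),(y,v)):=(1,v)$, which reads the tag of the original input and projects onto the appropriate coordinate of the product output. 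Then $K(\mathrm{id}\times(f\times g)\circ H)\Delta(0,x)=\{0\}\times f(x)=(f\Wsup g)(0,x)$ and similarly on the $1$-tagged branch, which yields $f\Wsup g\leqW f\times g$ via the characterization recalled just after Definition~\ref{def:weihrauch}.

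There is no real obstacle here; the only subtlety is that pointedness is needed precisely so the padding with $x_0$, respectively $u_0$, produces inputs actually in $\dom(f\times g)$. Without pointedness the construction fails, consistent with the example $d_p\Wsup d_q\gW d_p\times d_q$ noted in Proposition~\ref{prop:order}(3).
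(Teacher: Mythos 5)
Your proof is correct and is exactly the standard padding argument the paper has in mind when it writes ``we omit the obvious proof'': use the computable points guaranteed by pointedness to fill in the unused factor of the product, and let the outer witness read the tag from the original input to project onto the right coordinate. Nothing to add.
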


We omit the obvious proof~\cite{BBP12}. We note that for pointed degrees the algebraic operations $\Winf,\Wsup,\times,\star$ are ordered in the given way. 
We continue by considering those equations we already demonstrated to fail in $\Wei$ that become true in $\Wei\Wsup 1$ (see Proposition \ref{prop:distributivity}). 
The remaining cases all have counterexamples using only pointed degrees anyway (see also Example~\ref{ex:distributivity2}).

\begin{proposition}[Further distributivity]
$\mathbf{a} \Winf (\mathbf{b} \star \mathbf{c}) \leqW (\mathbf{a} \Winf \mathbf{b}) \star (\mathbf{a} \Winf \mathbf{c})$ for pointed $\mathbf{b},\mathbf{c}$.
\end{proposition}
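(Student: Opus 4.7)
The plan is to write an explicit compositional-product reduction, exploiting pointedness of $\mathbf{b}$ to pad the call to $(\mathbf{a}\Winf\mathbf{b})$ when it is no longer needed. I would fix representatives $F,G,H$ of $\mathbf{a},\mathbf{b},\mathbf{c}$; by Corollary~\ref{cor:decomposition} we may assume $G\star H\equivW G\circ H$, and by pointedness of $\mathbf{b}$ there is a computable $u_0\in\dom(G)$.

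Given an input $(x,w)\in\dom(F)\times\dom(G\circ H)$ to the left-hand side, the reduction proceeds in two oracle calls, the second of which branches on the tagged output of the first:
\begin{enumerate}
\item Call $(F\Winf H)$ on $(x,w)$; this is valid because $w\in\dom(G\circ H)\In\dom(H)$. The output is a tagged value $(i,y)$.
\item If $i=0$, then $y\in F(x)$ already solves the outer $\Winf$ on its $\mathbf{a}$-side; call $(F\Winf G)$ on the dummy $(x,u_0)$, discard its result, and return $(0,y)$. If $i=1$, then $y\in H(w)\In\dom(G)$ (since $w\in\dom(G\circ H)$), so call $(F\Winf G)$ on $(x,y)$ and return its output: either $(0,y')$ with $y'\in F(x)$, or $(1,v)$ with $v\in G(y)\In G(H(w))=(G\circ H)(w)$.
\end{enumerate}
In every branch the returned value lies in $(F\Winf (G\circ H))(x,w)$, so the outer $\Winf$ is solved correctly.

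The branching on the tag of a $\Winf$-output, together with all the pre- and post-processing, is computable, so this describes a Weihrauch reduction of $F\Winf(G\circ H)$ to a composition $A'\circ B'$ with $B'=F\Winf H$ and $A'$ obtained from $F\Winf G$ by wrapping with computable pre- and post-processors; hence $B'\leqW\mathbf{a}\Winf\mathbf{c}$ and $A'\leqW\mathbf{a}\Winf\mathbf{b}$. The claim then follows from Corollary~\ref{cor:starcharac}, which identifies $(\mathbf{a}\Winf\mathbf{b})\star(\mathbf{a}\Winf\mathbf{c})$ with the maximum over all such compositions. The only potentially delicate point is that the branching really is computable, and this works because $\Winf$ emits a discrete tag on its output and because pointedness of $\mathbf{b}$ supplies a computable input on the branch where no useful query to $(\mathbf{a}\Winf\mathbf{b})$ remains; pointedness of $\mathbf{c}$ does not enter the argument beyond what is needed for the symmetric variant.
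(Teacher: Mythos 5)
Your reduction is correct and is essentially the paper's own proof: query $(\mathbf{a}\Winf\mathbf{c})$ first on the pair of queries to $\mathbf{a}$ and $\mathbf{c}$, branch computably on the tag of the answer, and pad the second call to $(\mathbf{a}\Winf\mathbf{b})$ with a computable query to $\mathbf{b}$ whenever the $\mathbf{a}$-side has already been answered. The only point the paper adds is a one-sentence remark covering the conventional top element $\Wtop$ (where, in the case $\mathbf{b}=\Wtop$, pointedness of $\mathbf{c}$ is what is actually used), which your representative-based argument does not reach; this is a convention-level edge case rather than a gap in the main argument.
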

\begin{proof}
We note that the reduction holds if one of the involved degrees is $\Wtop$. 
However, in case that $\mathbf{b}=\Wtop$, we need to exploit that $\mathbf{c}$ is pointed.
Now we assume that all involved degrees are different from $\Wtop$.
We start with queries $a$ to $\mathbf{a}$, $c$ to $\mathbf{c}$ and a query $b_c$ to $\mathbf{b}$ depending on the answer given by $\mathbf{c}$. Then our first query on the right is for $(a, c)$. If $a$ is answered, we can solve the left hand side already, and simply $(a, b_0)$ to the second oracle on the right, where $b_0$ is some computable query to $\mathbf{b}$. If $c$ gets answered, we ask the original query $a$ together with the derived query $b_c$ to second oracle, either answer suffices to solve the left hand side.
\end{proof}

Now we turn to unary operators that distribute over binary ones (see Proposition~ \ref{prop:unarybinary}).

\begin{proposition}[Unary operators distributing over binary ones]
For pointed $\mathbf{a}$, $\mathbf{b}$ 
\begin{enumerate}
\item $(\mathbf{a} \times \mathbf{b})^* = \mathbf{a}^* \times \mathbf{b}^*$
\item $\widehat{(\mathbf{a} \Wsup  \mathbf{b})} = \widehat{\mathbf{a}} \times \widehat{\mathbf{b}}$
\end{enumerate}
\end{proposition}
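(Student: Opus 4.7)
The plan is to combine the nontrivial inequalities from Proposition~\ref{prop:unarybinary} with a padding argument enabled by pointedness of $\mathbf{a}$ and $\mathbf{b}$. For both parts, one direction is already available without any hypothesis: Proposition~\ref{prop:unarybinary}(2) yields $(\mathbf{a} \times \mathbf{b})^* \leqW \mathbf{a}^* \times \mathbf{b}^*$, and Proposition~\ref{prop:unarybinary}(5) yields $\widehat{\mathbf{a}} \times \widehat{\mathbf{b}} \leqW \widehat{(\mathbf{a} \Wsup \mathbf{b})}$. So it remains to supply the converse reductions.

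For part~(1), I would fix representatives $f, g$ of $\mathbf{a}, \mathbf{b}$ together with computable points $a_0 \in \dom(f)$ and $b_0 \in \dom(g)$, which exist by pointedness. Given input $((n, (x_1, \ldots, x_n)), (m, (y_1, \ldots, y_m)))$ to $\mathbf{a}^* \times \mathbf{b}^*$, set $k := \max(n, m)$, pad the shorter tuple with copies of $a_0$ or $b_0$ respectively, and pair entries coordinatewise to form an input $(k, ((x'_1, y'_1), \ldots, (x'_k, y'_k)))$ to $(\mathbf{a} \times \mathbf{b})^*$. From the returned tuple of pairs, extract the first $n$ first coordinates as the answer for $\mathbf{a}^*$ and the first $m$ second coordinates as the answer for $\mathbf{b}^*$. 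The edge case $n = m = 0$ requires no oracle invocation at all.

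For part~(2), given a name of an input $((i_n, x_n))_{n \in \IN}$ to $\widehat{\mathbf{a} \Wsup \mathbf{b}}$, the tag $i_n \in \{0,1\}$ is the first symbol of the $n$-th coordinate of the name and is therefore uniformly computable from the input. I would feed $\widehat{\mathbf{a}}$ with the sequence whose $n$-th term is $x_n$ when $i_n = 0$ and the computable padding $a_0$ when $i_n = 1$; feed $\widehat{\mathbf{b}}$ symmetrically, using $b_0$ when $i_n = 0$ and $x_n$ when $i_n = 1$. Because we match indices position-by-position rather than compressing to subsequences, both streams are produced computably from the input without any search or waiting. From the two returned streams $(\alpha_n)_{n \in \IN}$ and $(\beta_n)_{n \in \IN}$, output $(0, \alpha_n)$ when $i_n = 0$ and $(1, \beta_n)$ when $i_n = 1$ as the $n$-th coordinate of the answer for $\widehat{\mathbf{a} \Wsup \mathbf{b}}$.

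The conceptual point, and the reason pointedness is indispensable, is that both constructions rely on uniformly available computable names in $\dom(f)$ and $\dom(g)$ to fill surplus slots. The counterexample to the unconditional version of~(1) given in Proposition~\ref{prop:unarybinary}(2), with $\mathbf{b} \equivW d_p$ for a non-computable $p$, exhibits exactly the failure mode that pointedness rules out: no computable name can be produced for any position past the input length. Once pointedness is assumed, the remaining work is a routine verification that the two translations described above are indeed realised by computable functions on names, and this presents no obstacle.
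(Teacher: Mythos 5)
Your proof is correct and takes essentially the same route as the paper's: one direction of each equation is the unconditional reduction from Proposition~\ref{prop:unarybinary}, and the converse is obtained by padding with computable points supplied by pointedness (coordinatewise pairing up to the maximum length for $^*$, and position-by-position tag-directed splitting and recombination for $\widehat{\phantom{f}}$). The only cosmetic difference is that the paper first disposes of the case where one of the degrees is $\Wtop$, which your argument via representatives does not literally cover but which is immediate from the conventions.
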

\begin{proof}
If one of the involved degrees is $\Wtop$, then both sides of both equations are $\Wtop$. Hence we can assume the contrary.
\begin{enumerate}[leftmargin=0.7cm]
\item For the missing direction $\mathbf{a}^* \times \mathbf{b}^* \leqW (\mathbf{a} \times \mathbf{b})^*$, consider representatives $f : \In\mathbf{X} \mto \mathbf{Y}$ of $\mathbf{a}$ and $g : \In\mathbf{U} \mto \mathbf{V}$ of $\mathbf{b}$. 
Let $H : \left (\bigsqcup_{n \in \mathbb{N}} \mathbf{X}^n\right ) \times \left (\bigsqcup_{n \in \mathbb{N}} \mathbf{U}^n\right) \to \left (\bigsqcup_{n \in \mathbb{N}} (\mathbf{X} \times \mathbf{U})^n \right )$ map  
$((n, (x_1, ..., x_n)), (m, (u_1, ..., u_m)))$ to  $(\max \{n, m\}, ((x_1, u_1), ..., (x_{\max \{n, m\}}, u_{\max \{n, m\}})))$, 
where we let $x_i$ be some computable point in $\mathbf{X}$ for $i > n$, and $u_j$ some computable point in $\mathbf{U}$ for $j > m$. 
Furthermore, let $K :\subseteq \left (\bigsqcup_{n \in \mathbb{N}} \mathbf{X}^n\right ) \times \left (\bigsqcup_{n \in \mathbb{N}} \mathbf{U}^n\right) \times \left (\bigsqcup_{n \in \mathbb{N}} (\mathbf{Y} \times \mathbf{V})^n \right ) \to \left (\bigsqcup_{n \in \mathbb{N}} \mathbf{Y}^n\right ) \times \left (\bigsqcup_{n \in \mathbb{N}} \mathbf{V}^n\right)$ be defined via 
\begin{eqnarray*}
&& K((n, (x_1, ..., x_n)), (m, (u_1,..., u_m)), (k, ((y_1, v_1),..., (y_k, v_k))))\\
& =& ((n, (y_1,..., y_n)), (m, (v_1,..., v_m)))
\end{eqnarray*}
where being in the domain of $K$ requires $k \geq \max \{n, m\}$. Now $H$ and $K$ witness the claim.

\item Here the missing direction is $\widehat{(\mathbf{a} \Wsup  \mathbf{b})} \leqW \widehat{\mathbf{a}} \times \widehat{\mathbf{b}}$, and we consider representatives $f$ of $\mathbf{a}$ and $g$ of $\mathbf{b}$ as above. 
Let $x^c \in \mathbf{X}$ and $u^c \in \mathbf{U}$ be computable points. 
Now define $H : (\mathbf{X} \Wsup  \mathbf{U})^\IN \to \mathbf{X}^\IN \times \mathbf{U}^\IN$ via 
$H((i_0, x_0), (i_1, x_1), \ldots) = ((y_0, y_1, \ldots), (z_0, z_1, \ldots))$ where $y_l = x_l$ and $z_l = u^c$ if and only if $i_l = 0$; and $y_l = x^c$ and $z_l = x_l$ if and only if $i_l = 1$. 
Next, define $K : (\mathbf{X} \Wsup  \mathbf{U})^\IN\times \mathbf{Y}^\IN\times \mathbf{V}^\IN \to (\mathbf{Y} \Wsup  \mathbf{V})^\IN$ 
by requiring $K(((i_0, x_0), (i_1, x_1), \ldots), (y_0^{0}, y_1^{0}, \ldots), (y_0^1, y_1^1, \ldots)) = ((i_0,y_0^{i_0}), (i_1, y_1^{i_1}), \ldots)$. Now $H$ and $K$ witness the reduction.
\qedhere
\end{enumerate}
\end{proof}

We close this section with considering some special expressions. 

\begin{proposition}[Special expressions]
Let $\mathbf{a}$ be pointed. Then: \begin{enumerate}
\item $\mathbf{a} \rightarrow (\mathbf{a} \Wsup  \mathbf{b}) = \mathbf{a} \rightarrow (1 \Wsup  \mathbf{b})$
\item $(\mathbf{c} \Wsup  \mathbf{b}) \rightarrow (\mathbf{c} \Wsup  \mathbf{a}) \leqW (\mathbf{b} \rightarrow \mathbf{a})$ 
\end{enumerate}
In general, the reduction ``$\leqW$'' cannot be replaced by ``$=$''. 
\end{proposition}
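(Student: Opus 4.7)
The plan is to derive both identities from the adjunction $\mathbf{f}\leqW\mathbf{g}\star\mathbf{h}\iff(\mathbf{g}\to\mathbf{f})\leqW\mathbf{h}$ of Corollary~\ref{cor:impcharac}, combined with the fact that $\Wsup$ is the supremum (Theorem~\ref{theo:lattice}) and the monotonicity properties from Proposition~\ref{prop:timesstar}.

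For~(1), one inclusion is immediate: pointedness of $\mathbf{a}$ yields $1\Wsup\mathbf{b}\leqW\mathbf{a}\Wsup\mathbf{b}$, and monotonicity of $\to$ in the second argument gives $(\mathbf{a}\to(1\Wsup\mathbf{b}))\leqW(\mathbf{a}\to(\mathbf{a}\Wsup\mathbf{b}))$. For the reverse, Corollary~\ref{cor:impcharac} reduces the task to $\mathbf{a}\Wsup\mathbf{b}\leqW\mathbf{a}\star(\mathbf{a}\to(1\Wsup\mathbf{b}))$, which by Theorem~\ref{theo:lattice} splits into two bounds. The $\mathbf{b}$-bound is simply $\mathbf{b}\leqW 1\Wsup\mathbf{b}\leqW\mathbf{a}\star(\mathbf{a}\to(1\Wsup\mathbf{b}))$, the last step being Corollary~\ref{cor:impcharac} applied to the identity reduction. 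The $\mathbf{a}$-bound exploits that $(\mathbf{a}\to(1\Wsup\mathbf{b}))$ is pointed -- its domain is $\dom(1\Wsup\mathbf{b})$, which contains the computable point coming from the $1$-summand -- so Observation~\ref{prop:constants}(3) together with monotonicity of $\star$ gives $\mathbf{a}\equivW\mathbf{a}\star 1\leqW\mathbf{a}\star(\mathbf{a}\to(1\Wsup\mathbf{b}))$.

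For~(2), the same adjunction reduces the statement to $\mathbf{c}\Wsup\mathbf{a}\leqW(\mathbf{c}\Wsup\mathbf{b})\star(\mathbf{b}\to\mathbf{a})$, which again splits into two summands. For the $\mathbf{a}$-bound, Corollary~\ref{cor:impcharac} gives $\mathbf{a}\leqW\mathbf{b}\star(\mathbf{b}\to\mathbf{a})$, and monotonicity of $\star$ in the first factor lifts this to $\mathbf{a}\leqW(\mathbf{c}\Wsup\mathbf{b})\star(\mathbf{b}\to\mathbf{a})$. For the $\mathbf{c}$-bound, pointedness of $\mathbf{a}$ forces $(\mathbf{b}\to\mathbf{a})$ to be pointed -- in the main case its domain is $\dom(\mathbf{a})$, while in the corner cases the conventions of Subsection~\ref{subsec:specialdegrees} give $(\mathbf{b}\to\mathbf{a})=\Wtop$ -- so $\mathbf{c}\leqW\mathbf{c}\Wsup\mathbf{b}\equivW(\mathbf{c}\Wsup\mathbf{b})\star 1\leqW(\mathbf{c}\Wsup\mathbf{b})\star(\mathbf{b}\to\mathbf{a})$.

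For strictness I would take $\mathbf{a}=\mathbf{c}\equivW\lpo$ and $\mathbf{b}=1$. Pointedness of $\lpo$ gives $\lpo\Wsup 1\equivW\lpo\equivW\lpo\Wsup\lpo$, so the left-hand side collapses to $\lpo\to\lpo$, which by Observation~\ref{prop:constants}(9) is $\leqW 1$; since its domain $\Cantor$ is computably pointed, this forces $\lpo\to\lpo\equivW 1$. The right-hand side is $1\to\lpo\equivW\lpo$ by Observation~\ref{prop:constants}(3), and $1\lW\lpo$ witnesses strictness. The main technical care required throughout is tracking pointedness of the intermediate residuals and correctly invoking the conventions on $\star$ and $\to$ when $\mathbf{a},\mathbf{b}\in\{0,\Wtop\}$; away from those corner cases the proof reduces to residuation and monotonicity.
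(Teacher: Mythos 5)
Your proof is correct and follows essentially the same route as the paper's: both directions of (1) and the reduction in (2) are obtained from the adjunction of Corollary~\ref{cor:impcharac} together with $\Wsup$ being the supremum, monotonicity of $\star$ and $\to$, and pointedness supplying $1\leqW(\cdot\to\cdot)$ so that the neutral-element bound applies. The only cosmetic differences are that the paper routes the $\mathbf{c}$-bound in (2) through Proposition~\ref{prop:order-pointed} rather than through $(\mathbf{c}\Wsup\mathbf{b})\star 1$, and uses $\mathbf{a}=\mathbf{c}\equivW\lim$, $\mathbf{b}=1$ for strictness instead of your equally valid $\lpo$ example.
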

\begin{proof}
\
\begin{enumerate}[leftmargin=0.7cm]
\item To show $\mathbf{a} \rightarrow (\mathbf{a} \Wsup  \mathbf{b}) \leqW \mathbf{a} \rightarrow (1 \Wsup  \mathbf{b})$, by Corollary~\ref{cor:impcharac} we may show $\mathbf{a} \Wsup  \mathbf{b} \leqW \mathbf{a} \star (\mathbf{a} \rightarrow (1 \Wsup  \mathbf{b}))$ instead. For this, we first show $\mathbf{a} \leqW \mathbf{a} \star (\mathbf{a} \rightarrow (1 \Wsup  \mathbf{b}))$, which in turn follows from $1 \leqW \mathbf{a} \rightarrow (1 \Wsup \mathbf{b})$, monotonicity of $\star$ and $1$ being the neutral element for $\star$. Then we show $\mathbf{b} \leqW \mathbf{a} \star (\mathbf{a} \rightarrow (1 \Wsup \mathbf{b}))$ by noting that by Corollary~\ref{cor:impcharac} we have $1 \Wsup  \mathbf{b} \leqW \mathbf{a} \star (\mathbf{a} \rightarrow (1 \Wsup  \mathbf{b}))$, and that $\Wsup$ is the supremum. Noting again that $\Wsup$ is the supremum, this direction is complete.
For the other direction, just use monotonicity of $\rightarrow$ in the second component, together with pointedness of $\mathbf{a}$, i.e., $1 \leqW \mathbf{a}$.

\item If $\mathbf{a}$ is pointed, then also $(\mathbf{b}\to\mathbf{a})$. Hence we obtain with Proposition~\ref{prop:order-pointed}
\[\mathbf{c}\leqW(\mathbf{c}\Wsup\mathbf{b})\Wsup(\mathbf{b}\to\mathbf{a})\leqW(\mathbf{c}\Wsup\mathbf{b})\star(\mathbf{b}\to\mathbf{a}).\]
On the other hand, Corollary~\ref{cor:impcharac} implies $\mathbf{a}\leqW\mathbf{b}\star(\mathbf{b}\to\mathbf{a})$ and since $\Wsup$ is the supremum
and $\star$ is monotone we obtain $(\mathbf{c}\Wsup\mathbf{a})\leqW(\mathbf{c}\Wsup\mathbf{b})\star(\mathbf{b}\to\mathbf{a})$, which implies
the claim by Corollary~\ref{cor:impcharac}.
As a counterexample for the other direction, consider $\mathbf{a} = \mathbf{c} \equivW \lim$ and $\mathbf{b} = 1$.
\qedhere
\end{enumerate}
\end{proof}

\section{Embeddings of the Medvedev degrees and ideals}
\label{sec:starideals}

It was observed in \cite{BG11} that the Medvedev degrees $\mathfrak{M}$ admit an embedding as a meet-semilattice into the Weihrauch degrees; this embedding is obtained by mapping non-empty $A \subseteq \Baire$ to $c_A : \{1\} \mto \Baire$ with $c_A(1) = A$ and $A=\emptyset$ to $\Wtop$. Furthermore, Higuchi and P. \cite{HP13} investigated mapping $A \subseteq \Baire$ to $d_A : A \to \{1\}$, which induces a lattice embedding of $\mathfrak{M}^{\mathrm{op}}$ into $\Wei$. In particular, they noted that the image of $\mathfrak{M}^{\mathrm{op}}$ under $d_{(\cdot)}$ is exactly the lower cone $\{\mathbf{a} \in \Wei : \mathbf{a} \leqW 1\} = \{\mathbf{a} \Winf 1 : \mathbf{a} \in \Wei\}$.

As a side note, it shall be pointed out that via the lattice embedding, \cite[Lemma 6.1]{sorbi} we see that any countable distributive lattice can be embedded into the Weihrauch lattice $\Wei$. In particular, this means that Theorem \ref{theo:lattice} already contains the fullest possible extent of valid algebraic rules expressible in terms of $\Winf$ and $\Wsup $.

Now we are able to provide an internal characterization of the image of $\mathfrak{M}$ under $c_{(\cdot)}$ by observing that it coincides with $\{\mathbf{a} \rightarrow 1 : \mathbf{a} \in \Wei\}$. Moreover, $c_{(\cdot)}$ and $d_{(\cdot)}$ are related via $c_A \equivW d_A \rightarrow 1$. 
In general, for $f:\In \mathbf{X}\mto \mathbf{Y}$ and $1=\id:\{0\}\to\{0\}$ we have $\mathcal{M}(\mathbf{Y},\{0\})\cong\{0\}$ and hence
we can identify $(f\to 1)$ with the problem $(f\to 1):\{0\}\mto \mathbf{X},0\mapsto\dom(f)$.
In particular, $(f\to 1)$ is pointed.

If we combine the observation in \cite{BG11} that $\times$ as supremum in $\mathfrak{M}$ is mapped by $c_{(\cdot)}$ to $\times$ in $\Wei$ with Proposition \ref{prop:implication} (8) we see that $\{\mathbf{a} \rightarrow 1 : \mathbf{a} \in \Wei\}$ is closed under $\star$ and $\times$. However, it is not closed under $\Wsup $. Finally, note that the downwards closure of $\{\mathbf{a} \rightarrow 1 : 0\lW\mathbf{a} \in \Wei\}$ is the collection of all continuous multi-valued functions. 

\begin{proposition}[Continuity]
\label{prop:continuity}
$f$ is continuous $\iff(\exists g\gW 0)\;f\leqW (g\to 1)$.
\end{proposition}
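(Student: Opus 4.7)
The plan is to recognize, following the remark just before this proposition, that $(g\to 1)$ can be identified with the multi-valued problem $\{0\}\mto \mathbf{X}$, $0\mapsto\dom(g)$, which is the Medvedev embedding $c_{\dom(g)}$. Under this identification the right-hand side becomes the classical statement that $f$ is Weihrauch-reducible to $c_A$ for some non-empty $A\subseteq\Baire$, i.e.\ that $f$ is computable using a single point of Baire space as oracle; this is the standard characterization of continuity.

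Before tackling either direction I would verify that $g\gW 0$ is equivalent to $\dom(g)\not=\emptyset$: every partial function on Baire space realizes $0$, so any reduction $g\leqW 0$ forces $g$ to be nowhere defined. For the direction $(\Leftarrow)$, given such a $g:\subseteq\mathbf{X}\mto\mathbf{Y}$, I would fix any $x\in\dom(g)$, a name $q\in\delta_X^{-1}(\{x\})$, and a computable name $h_0$ of the constant map in $\mathcal{M}(\mathbf{Y},\{0\})$; then the constant function $p\mapsto\langle h_0,q\rangle$ is a continuous realizer of $(g\to 1)$, so $(g\to 1)$ is continuous. Since composition of continuous functions is continuous, $\leqW$ preserves continuity and $f$ inherits continuity.

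For the direction $(\Rightarrow)$, I would fix a continuous realizer $F$ of $f$, write $F=\Phi_p$ using the utm-theorem, and set $g:=d_p:\{p\}\to\{0\}$. Since $\dom(g)=\{p\}\not=\emptyset$ we have $g\gW 0$, and $(g\to 1)\equivW c_p$. The reduction $f\leqW c_p$ is witnessed by a trivial computable $H$ providing the (unique) input to $c_p$ together with the computable $K$ defined via the universal machine by $K\langle x,q\rangle:=\Phi_q(x)$: for any realizer $G$ of $c_p$, the output $GH$ must be the name $p$, so $K\langle\id,GH\rangle(x)=\Phi_p(x)$ realizes $f$.

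The whole argument is essentially a routine unpacking of definitions; the trivial case $\dom(f)=\emptyset$ (so $f\equivW 0$) is absorbed by choosing any $g\gW 0$, and the only delicacy is keeping track of the identification $\mathcal{M}(\mathbf{Y},\{0\})\cong\{0\}$ used to move between $(g\to 1)$ and the Medvedev form $c_{\dom(g)}$.
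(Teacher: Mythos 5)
Your proof is correct and follows essentially the same route as the paper: the forward direction reduces $f$ to $(d_{\{p\}}\to 1)\equivW c_p$ for an oracle/program $p$ of a continuous realizer, and the backward direction observes that $g\gW 0$ forces $\dom(g)\neq\emptyset$, so $(g\to 1)$ has a constant (hence continuous) realizer and continuity is inherited through $\leqW$. Your version merely spells out the identification $(g\to 1)\equivW c_{\dom(g)}$ and the witnessing reductions in more detail than the paper does.
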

\begin{proof}
If $f$ is continuous, then $f$ is computable in some oracle $p\in\IN^\IN$. Hence we obtain $f\leqW (d_{\{p\}}\to1)$ and $d_{\{p\}}\gW0$.
On the other hand, if $g\gW 0$, then $\dom(g)\not=\emptyset$ and hence $(g\to 1)$ has a constant and hence continuous realizer.
If $f\leqW (g\to1)$ then $f$ has to be continuous too.
\end{proof}

This shows that the notion of continuity is definable in the structure of the Weihrauch lattice.
The facts  above motivate the following definition:

\begin{definition}
Call $\mathfrak{A} \subseteq \Wei$ a {\em $\star$--ideal}, if $\mathfrak{A}$ is downwards closed, $1 \in \mathfrak{A}$ and $\mathbf{a},\mathbf{b} \in \mathfrak{A}$ implies $\mathbf{a} \star \mathbf{b}, \mathbf{a} \Wsup  \mathbf{b} \in \mathfrak{A}$. 
A $\star$--ideal $\mathfrak{A}$ is {\em prime}, if $\mathbf{a} \Winf \mathbf{b} \in \mathfrak{A}$ implies $\mathbf{a} \in \mathfrak{A}$ or $\mathbf{b} \in \mathfrak{A}$. 
A $\star$--ideal is {\em etheric}, if for any $\mathbf{a} \in \mathfrak{A}$, $\mathbf{b} \in \Wei$ there is a $\mathbf{a}'\in \mathfrak{A}$ such that $\mathbf{a} \star \mathbf{b} \leqW \mathbf{b} \star \mathbf{a}'$.
If the downwards closure of $\mathfrak{B} \subseteq \Wei$ is a $\star$--ideal, we call $\mathfrak{B}$ a {\em $\star$--preideal}. 
A $\star$--preideal is {\em prime} ({\em etheric}) if its downwards closure is.
\end{definition}

We obtain the following example of a prime etheric $\star$--preideal in the Weihrauch lattice~$\Wei$. 

\begin{proposition}
$\mathfrak{A} = \{\mathbf{a} \rightarrow 1 : 0\lW\mathbf{a} \in \Wei\}$ is a prime etheric $\star$--preideal.
\begin{proof}
To see that $\mathfrak{A}$ is a $\star$--preideal, it only remains to be shown that its downward closure is closed under $\Wsup $. For this, note that $1 \leqW \mathbf{a}$ for any $0\not=\mathbf{a} \in \mathfrak{A}$, and that $1 \leqW \mathbf{a}, \mathbf{b}$ implies $\mathbf{a} \Wsup  \mathbf{b} \leqW \mathbf{a} \times \mathbf{b}$ by Proposition~\ref{prop:order-pointed}. That $\mathfrak{A}$ is prime is the statement of \cite[Proposition 4.8]{HP13}. That $\mathfrak{A}$ is etheric is a consequence of the even stronger observation that for $\mathbf{a} \in \mathfrak{A}$, $\mathbf{b} \in \Wei$ we find $\mathbf{a} \star \mathbf{b} = \mathbf{a} \times \mathbf{b}$ by Proposition~\ref{prop:implication}~(8).
\end{proof}
\end{proposition}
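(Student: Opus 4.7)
My plan is to verify the three required properties in sequence, exploiting in each case the key collapse identity $(\mathbf{a}\to 1)\star \mathbf{b}\equivW (\mathbf{a}\to 1)\times \mathbf{b}$ from Proposition~\ref{prop:implication}(8), which turns $\star$-products with an implication-into-$1$ into ordinary $\times$-products.

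For the $\star$-preideal property I would first note that $1\equivW (1\to 1)\in\mathfrak{A}$ by Observation~\ref{prop:constants}(3), since $1\gW 0$. By monotonicity of $\star$ and $\Wsup$ it suffices to check closure of the downward closure on elements of $\mathfrak{A}$ itself. For $\star$-closure, Proposition~\ref{prop:implication}(8) gives
\[(\mathbf{a}\to 1)\star (\mathbf{b}\to 1)\;\equivW\;(\mathbf{a}\to 1)\times (\mathbf{b}\to 1),\]
and the discussion preceding the proposition identifies $\mathfrak{A}\cup\{\Wtop\}$ with the image of the Medvedev degrees under $c_{(\cdot)}$, which is closed under $\times$ (as Medvedev join is mapped to $\times$ in $\Wei$, cf.~\cite{BG11}); hence the right-hand side is again of the form $\mathbf{c}\to 1$ with $\mathbf{c}\gW 0$, or equals $\Wtop$, which lies in the downward closure. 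For $\Wsup$-closure, every element $\mathbf{a}\to 1$ of $\mathfrak{A}$ is pointed (as explicitly noted before the proposition), so Proposition~\ref{prop:order-pointed} yields $(\mathbf{a}\to 1)\Wsup (\mathbf{b}\to 1)\leqW (\mathbf{a}\to 1)\times (\mathbf{b}\to 1)$, and we are reduced to the previous case.

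For primality I would simply cite \cite[Proposition~4.8]{HP13}, exactly as the authors direct. For ethericity, given $\mathbf{a}\in\mathfrak{A}$ and arbitrary $\mathbf{b}\in\Wei$, I take $\mathbf{a}':=\mathbf{a}$ and chain
\[\mathbf{a}\star \mathbf{b}\;\equivW\;\mathbf{a}\times \mathbf{b}\;\equivW\;\mathbf{b}\times \mathbf{a}\;\leqW\;\mathbf{b}\star \mathbf{a}\;=\;\mathbf{b}\star \mathbf{a}',\]
using Proposition~\ref{prop:implication}(8), commutativity of $\times$ (Proposition~\ref{prop:timesstar}(4)), and Proposition~\ref{prop:order}, respectively.

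The main technical obstacle is the $\star$-closure step: every other requirement reduces mechanically to the collapse identity, but $\star$-closure additionally needs the observation that the $c_{(\cdot)}$-image is closed under $\times$, i.e.~that the $\times$-product of two implications-into-$1$ stays in $\mathfrak{A}$. This is the place where the structural link to the Medvedev degrees from the beginning of Section~\ref{sec:starideals} does genuine work; everything else is a straightforward consequence of Proposition~\ref{prop:implication}(8) combined with the fact that elements of $\mathfrak{A}$ are automatically pointed.
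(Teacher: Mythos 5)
Your proposal is correct and follows essentially the same route as the paper: the paper likewise reduces $\star$- and $\times$-closure to Proposition~\ref{prop:implication}(8) together with the Medvedev-embedding observation from the start of Section~\ref{sec:starideals}, handles $\Wsup$-closure via pointedness and Proposition~\ref{prop:order-pointed}, cites \cite[Proposition~4.8]{HP13} for primality, and obtains ethericity from $\mathbf{a}\star\mathbf{b}=\mathbf{a}\times\mathbf{b}$ with $\mathbf{a}'=\mathbf{a}$. You merely spell out the steps that the paper delegates to the preceding discussion.
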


The prerequisites of $\star$--preideals allow us to define quotients of the Weihrauch degrees preserving the structure, as we shall explore next.
We start with the definition of Weihrauch reducibility relative to a subset $\mathfrak{A}\In\Wei$.

\begin{definition}
Let $\mathfrak{A} \subseteq \Wei$. For $\mathbf{a}, \mathbf{b} \in \Wei$ let $\mathbf{a} \leqW^\mathfrak{A} \mathbf{b}$ abbreviate $(\exists \mathbf{c} \in \mathfrak{A})\; \mathbf{a} \leqW \mathbf{b} \star \mathbf{c}$.
\end{definition}

By Corollary~\ref{cor:impcharac} we obtain the following immediate characterization of Weihrauch reducibility relative to $\mathfrak{A}$.

\begin{corollary}
If $\mathfrak{A}$ is downwards closed, then
$\mathbf{a} \leqW^\mathfrak{A} \mathbf{b}\iff(\mathbf{b}\to\mathbf{a})\in\mathfrak{A}$.
\end{corollary}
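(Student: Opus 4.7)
The plan is to prove both implications directly from Corollary~\ref{cor:impcharac}, using downward closure of $\mathfrak{A}$ for one direction and the reduction $\mathbf{a} \leqW \mathbf{b} \star (\mathbf{b} \to \mathbf{a})$ (implicit in Theorem~\ref{theo:impcharac}) for the other.

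For the forward direction, I would assume $\mathbf{a} \leqW^{\mathfrak{A}} \mathbf{b}$ and unpack the definition: there exists $\mathbf{c} \in \mathfrak{A}$ with $\mathbf{a} \leqW \mathbf{b} \star \mathbf{c}$. Applying Corollary~\ref{cor:impcharac} to this reduction yields $(\mathbf{b} \to \mathbf{a}) \leqW \mathbf{c}$. Since $\mathbf{c} \in \mathfrak{A}$ and $\mathfrak{A}$ is downwards closed, we conclude $(\mathbf{b} \to \mathbf{a}) \in \mathfrak{A}$.

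For the backward direction, assume $(\mathbf{b} \to \mathbf{a}) \in \mathfrak{A}$ and simply choose $\mathbf{c} := (\mathbf{b} \to \mathbf{a})$ as the witness. Theorem~\ref{theo:impcharac} guarantees $\mathbf{a} \leqW \mathbf{b} \star (\mathbf{b} \to \mathbf{a})$ (this is the observation that the implication realizes the minimum), so $\mathbf{a} \leqW \mathbf{b} \star \mathbf{c}$ with $\mathbf{c} \in \mathfrak{A}$, which is exactly $\mathbf{a} \leqW^{\mathfrak{A}} \mathbf{b}$.

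There is really no obstacle here; the statement is essentially a rephrasing of Corollary~\ref{cor:impcharac} in the language of the relative reducibility $\leqW^{\mathfrak{A}}$. The only subtlety is to make sure the edge cases involving $\Wtop$ (which was explicitly incorporated into the definition of $(g \to f)$ above) are covered, but since the characterization of $\to$ as a minimum was extended to include $\Wtop$ precisely so that Corollary~\ref{cor:impcharac} remains valid without restriction, no separate case analysis is required.
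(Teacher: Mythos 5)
Your proof is correct and is exactly the argument the paper intends: the paper presents this corollary as an immediate consequence of Corollary~\ref{cor:impcharac}, and your two directions (applying the equivalence to the witness $\mathbf{c}$ plus downward closure, and taking $\mathbf{c}:=(\mathbf{b}\to\mathbf{a})$ itself as the witness) are the standard unpacking of that. Your remark about the $\Wtop$ conventions is also accurate; no separate case analysis is needed.
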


Now we can formulate the main result on quotient structures $\Wei/\mathfrak{A}$ of the Weihrauch lattice.

\begin{theorem}
If $\mathfrak{A}$ is a $\star$--preideal, then $\leqW^\mathfrak{A}$ is a preorder. Denote its degrees by $\Wei / \mathfrak{A}$. Now $\Winf$, $\Wsup $, and $\times$ all induce operations on $\Wei / \mathfrak{A}$. In particular, $(\Wei / \mathfrak{A}, \Winf, \Wsup )$ is a lattice. If $\mathfrak{A}$ is etheric, then also $\star$ and $\rightarrow$ induce operations on $\Wei / \mathfrak{A}$.
\end{theorem}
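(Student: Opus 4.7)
The plan is to let $\overline{\mathfrak{A}}$ denote the downward closure of $\mathfrak{A}$, which is by hypothesis a $\star$--ideal. Reflexivity of $\leqW^\mathfrak{A}$ uses any pointed $\mathbf{c}_0 \in \mathfrak{A}$, which exists because $1 \in \overline{\mathfrak{A}}$, via $\mathbf{a} \leqW \mathbf{a} \star \mathbf{c}_0$. Transitivity combines two reductions by associativity and monotonicity of $\star$ (Proposition~\ref{prop:timesstar}) and then absorbs the product of witnesses into a single element of $\mathfrak{A}$, using closure of $\overline{\mathfrak{A}}$ under $\star$. Compatibility of $\Winf$, $\Wsup$, and $\times$ with $\leqW^\mathfrak{A}$ (which is precisely what allows each to descend to $\Wei/\mathfrak{A}$) is then obtained uniformly: given $\mathbf{a} \leqW \mathbf{a}' \star \mathbf{c}_1$ and $\mathbf{b} \leqW \mathbf{b}' \star \mathbf{c}_2$ with $\mathbf{c}_i \in \mathfrak{A}$, pick a common dominator $\mathbf{c}' \in \mathfrak{A}$ of $\mathbf{c}_1 \Wsup \mathbf{c}_2$ (available because $\overline{\mathfrak{A}}$ is closed under $\Wsup$) and invoke Proposition~\ref{prop:implication} (6) for $\times$, Proposition~\ref{prop:distributivity} (3) for $\Wsup$, and Proposition~\ref{prop:implication} (7) for $\Winf$; any residual factor of the shape $\mathbf{c}' \times \mathbf{c}'$ is absorbed using $\mathbf{c}' \times \mathbf{c}' \leqW \mathbf{c}' \star \mathbf{c}' \in \overline{\mathfrak{A}}$ (Proposition~\ref{prop:order}).

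That $\Wsup$ and $\Winf$ remain the supremum and infimum in $\Wei/\mathfrak{A}$ is essentially the preceding calculation run in reverse. If $\mathbf{a} \leqW \mathbf{c} \star \mathbf{c}_1$ and $\mathbf{b} \leqW \mathbf{c} \star \mathbf{c}_2$, then Proposition~\ref{prop:distributivity} (2) yields $\mathbf{a} \Wsup \mathbf{b} \leqW (\mathbf{c} \star \mathbf{c}_1) \Wsup (\mathbf{c} \star \mathbf{c}_2) = \mathbf{c} \star (\mathbf{c}_1 \Wsup \mathbf{c}_2)$ with $\mathbf{c}_1 \Wsup \mathbf{c}_2 \in \overline{\mathfrak{A}}$. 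Dually, if $\mathbf{c} \leqW \mathbf{a} \star \mathbf{c}_1$ and $\mathbf{c} \leqW \mathbf{b} \star \mathbf{c}_2$, dominating both $\mathbf{c}_i$ by a common $\mathbf{c}' \in \mathfrak{A}$ and applying Proposition~\ref{prop:implication} (7) gives $\mathbf{c} \leqW (\mathbf{a} \star \mathbf{c}') \Winf (\mathbf{b} \star \mathbf{c}') \leqW (\mathbf{a} \Winf \mathbf{b}) \star (\mathbf{c}' \times \mathbf{c}')$, once again with an $\overline{\mathfrak{A}}$--witness after the same absorption.

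Finally, under the etheric hypothesis, $\star$ and $\to$ likewise descend. Second-argument monotonicity of $\star$ follows from associativity alone: $\mathbf{b} \leqW \mathbf{b}' \star \mathbf{c}$ gives $\mathbf{a} \star \mathbf{b} \leqW (\mathbf{a} \star \mathbf{b}') \star \mathbf{c}$. First-argument monotonicity requires the etheric property to commute $\mathbf{c}_1 \star \mathbf{b}'$ past $\mathbf{b}'$ into $\mathbf{b}' \star \mathbf{c}_1''$ with $\mathbf{c}_1'' \in \overline{\mathfrak{A}}$. For $\to$, monotonicity in the second argument uses Proposition~\ref{prop:implication} (2) to bound $(\mathbf{a} \to \mathbf{b}) \leqW \mathbf{a} \to (\mathbf{b}' \star \mathbf{c}) \leqW (\mathbf{a} \to \mathbf{b}') \star \mathbf{c}$; anti-monotonicity in the first argument uses Proposition~\ref{prop:implication} (1) to rewrite $(\mathbf{a} \star \mathbf{c}) \to \mathbf{b}$ as $\mathbf{c} \to (\mathbf{a} \to \mathbf{b})$, derives $(\mathbf{a} \to \mathbf{b}) \leqW \mathbf{c} \star (\mathbf{a}' \to \mathbf{b})$, and then invokes etheric to reorder this as $(\mathbf{a}' \to \mathbf{b}) \star \mathbf{c}''$ with $\mathbf{c}'' \in \overline{\mathfrak{A}}$. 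The principal obstacle throughout is the bookkeeping: each step must not only produce a reduction of the correct shape but also ensure its witness lies in $\overline{\mathfrak{A}}$, and it is precisely the $\star$--preideal and etheric hypotheses that make this possible.
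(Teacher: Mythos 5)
Your proposal is correct and follows essentially the same route as the paper's proof: reflexivity via a pointed witness, transitivity and invariance of $\Winf$, $\Wsup$, $\times$ via the same distributivity facts (Proposition~\ref{prop:implication} (6), (7) and Proposition~\ref{prop:distributivity} (3)) with the residual $\mathbf{c}'\times\mathbf{c}'$ absorbed into the preideal, and the etheric hypothesis used exactly where the paper uses it for $\star$ and for anti-monotonicity of $\rightarrow$. Your handling of $\rightarrow$ via Proposition~\ref{prop:implication} (1) and (2) is only a cosmetic repackaging of the paper's argument through Corollary~\ref{cor:impcharac}.
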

\begin{proof} We consider the different parts of the claim step by step:
\begin{description}[leftmargin=0.7cm]
\item[(preorder)] There has to be some $\mathbf{s}_1 \in \mathfrak{A}$ with $1 \leqW \mathbf{s}_1$, monotonicity of $\star$ then implies for any $\mathbf{a} \in \Wei$ that $\mathbf{a} \leqW \mathbf{a} \star \mathbf{s}_1$ (via Proposition \ref{prop:timesstar}, Observation \ref{prop:constants}), hence $\mathbf{a} \leqW^\mathfrak{A} \mathbf{a}$.
If $\mathbf{a} \leqW^\mathfrak{A} \mathbf{b}$ and $\mathbf{b} \leqW^\mathfrak{A} \mathbf{c}$, then by definition of $\leqW^\mathfrak{A}$ there are $\mathbf{d}_1$, $\mathbf{d}_2 \in \mathfrak{A}$ such that $\mathbf{a} \leqW \mathbf{b} \star \mathbf{d}_1$ and $\mathbf{b} \leqW \mathbf{c} \star \mathbf{d}_2$. Monotonicity of $\star$ (Proposition \ref{prop:timesstar} again) then implies $\mathbf{a} \leqW \mathbf{c} \star \mathbf{d}_2 \star \mathbf{d}_1$. $\mathfrak{M}$ being a $\star$--preideal means there is some $\mathbf{d} \in \mathfrak{M}$ with $\mathbf{d}_2 \star \mathbf{d}_1 \leqW \mathbf{d}$, monotonicity of $\star$ again implies $\mathbf{a} \leqW \mathbf{c} \star \mathbf{d}$, equivalently $\mathbf{a} \leqW^\mathfrak{A} \mathbf{c}$, thus establishing transitivity of $\leqW^\mathfrak{A}$.
\item[(operations are invariant)] It suffices to show that the operations are monotone with respect to $\leqW^\mathfrak{A}$. Assume $\mathbf{a} \leqW^\mathfrak{A} \mathbf{a}'$ (via $\mathbf{a} \leqW \mathbf{a}' \star \mathbf{c}$, $\mathbf{c} \in \mathfrak{A}$) and $\mathbf{b} \leqW^\mathfrak{A} \mathbf{b}'$ (via $\mathbf{b} \leqW \mathbf{b}' \star \mathbf{c}'$, $\mathbf{c}' \in \mathfrak{A}$). As the downwards closure of $\mathfrak{A}$ is closed under $\Wsup $, we may safely assume $\mathbf{c} = \mathbf{c}'$. As $\mathfrak{A}$ is a $\star$--preideal, there must be some $\mathbf{d}\in\mathfrak{A}$ with $\mathbf{c} \times \mathbf{c}\leqW\mathbf{d}$. Then
    \begin{description}
    \item[($\Winf$)] $\mathbf{a} \Winf \mathbf{b} \leqW (\mathbf{a}' \star \mathbf{c}) \Winf (\mathbf{b}' \star \mathbf{c}) \leqW (\mathbf{a}' \Winf \mathbf{b}') \star (\mathbf{c} \times \mathbf{c})\leqW (\mathbf{a}' \Winf \mathbf{b}') \star \mathbf{d}$ by Proposition~\ref{prop:implication}(7). Hence $\mathbf{a} \Winf \mathbf{b} \leqW^\mathfrak{A} \mathbf{a}' \Winf \mathbf{b}'$.
    \item[($\Wsup$)] $\mathbf{a} \Wsup  \mathbf{b} \leqW (\mathbf{a}' \star \mathbf{c}) \Wsup  (\mathbf{b}' \star \mathbf{c}) \leqW (\mathbf{a}' \Wsup  \mathbf{b}') \star \mathbf{c}$ by Proposition \ref{prop:distributivity}(3), which implies $\mathbf{a} \Wsup  \mathbf{b} \leqW^\mathfrak{A} \mathbf{a}' \Wsup  \mathbf{b}'$.
    \item[($\times$)] $\mathbf{a} \times \mathbf{b} \leqW (\mathbf{a}' \star \mathbf{c}) \times (\mathbf{b}' \star \mathbf{c}) \leqW (\mathbf{a}' \times \mathbf{b}') \star (\mathbf{c} \times \mathbf{c})\leqW (\mathbf{a}' \times \mathbf{b}') \star \mathbf{d}$ by Proposition \ref{prop:implication} (6). Hence $\mathbf{a} \times \mathbf{b} \leqW^\mathfrak{A} \mathbf{a}' \times \mathbf{b}'$.
\item[($\star$)] $\mathbf{a} \star \mathbf{b} \leqW \mathbf{a}' \star \mathbf{c} \star \mathbf{b}' \star \mathbf{c}$. If $\mathfrak{A}$ is etheric, there is some $\mathbf{e} \in \mathfrak{A}$ with $\mathbf{c} \star \mathbf{b}' \leqW \mathbf{b}' \star \mathbf{e}$. As the downwards closure of $\mathfrak{A}$ is closed under $\star$, there is some $\mathbf{f}\in\mathfrak{A}$ with $\mathbf{e}\star\mathbf{c}\leqW\mathbf {f}$. So $\mathbf{a} \star \mathbf{b} \leqW (\mathbf{a}' \star \mathbf{b}') \star (\mathbf{e} \star \mathbf{c})\leqW(\mathbf{a}' \star \mathbf{b}') \star\mathbf{f}$, i.e., $\mathbf{a} \star \mathbf{b} \leqW^\mathfrak{A} \mathbf{a}' \star \mathbf{b}'$.
\item[($\rightarrow$)] 
For $\to$ we need to show anti-monotonicity in the first argument instead. 
By assumption and Corollary~\ref{cor:impcharac} we have $(\mathbf{a}' \rightarrow \mathbf{a}) \leqW \mathbf{c}$.
Since $\mathfrak{A}$ is etheric, there is some $c'\in\mathfrak{A}$ such that $\mathbf{c}\star(\mathbf{a}\to\mathbf{b})\leqW (\mathbf{a}\to\mathbf{b})\star\mathbf{c}'$.
Now we obtain with Theorem~\ref{theo:impcharac} and Proposition \ref{prop:timesstar} (2) 
\[\mathbf{b}  \leqW \mathbf{a}' \star (\mathbf{a}' \rightarrow \mathbf{a}) \star (\mathbf{a} \rightarrow \mathbf{b}) 
 \leqW \mathbf{a}' \star \mathbf{c} \star (\mathbf{a} \rightarrow \mathbf{b})
 \leqW \mathbf{a}' \star (\mathbf{a} \rightarrow \mathbf{b}) \star \mathbf{c}'.\]
By Corollary~\ref{cor:impcharac} this implies $(\mathbf{a}' \rightarrow \mathbf{b}) \leqW (\mathbf{a} \rightarrow \mathbf{b}) \star \mathbf{c}'$, which is the desired reduction.
To see that $\rightarrow$ is monotone in the second component, we note that $\mathbf{b} \leqW \mathbf{b}' \star \mathbf{c} \leqW \mathbf{a} \star (\mathbf{a} \rightarrow \mathbf{b}') \star \mathbf{c}$. 
Now $(\mathbf{a} \rightarrow \mathbf{b}) \leqW (\mathbf{a} \rightarrow \mathbf{b}') \star \mathbf{c}$ follows by Corollary~\ref{cor:impcharac}, which implies our claim. 
\end{description}
\item[(lattice)] Since there is some $\mathbf{s}_1 \in \mathfrak{A}$ with $1\leqW \mathbf{s}_1$, we have $\mathbf{a} \leqW^\mathfrak{A} \mathbf{a} \Wsup  \mathbf{b}$, $\mathbf{b} \leqW^\mathfrak{A} \mathbf{a} \Wsup  \mathbf{b}$, $\mathbf{a} \Winf \mathbf{b} \leqW^\mathfrak{A} \mathbf{a}$ and $\mathbf{a} \Winf \mathbf{b} \leqW^\mathfrak{A} \mathbf{b}$. If $\mathbf{a} \leqW^\mathfrak{A} \mathbf{c}$ and $\mathbf{b} \leqW^\mathfrak{A} \mathbf{c}$, without loss of generality both via $\mathbf{d} \in \mathfrak{A}$, then we have $\mathbf{a} \Wsup  \mathbf{b} \leqW \mathbf{c} \star \mathbf{d}$, i.e., $\mathbf{a} \Wsup  \mathbf{b} \leqW^\mathfrak{A} \mathbf{c}$. If $\mathbf{c} \leqW^\mathfrak{A} \mathbf{a}$ and $\mathbf{c} \leqW^\mathfrak{A} \mathbf{b}$, again without loss of generality both via $\mathbf{d} \in \mathfrak{A}$, we find $\mathbf{c} \leqW (\mathbf{a} \star \mathbf{d}) \Winf (\mathbf{b} \star \mathbf{d}) \leqW (\mathbf{a} \Winf \mathbf{b}) \star (\mathbf{d} \times \mathbf{d})$ by Proposition~\ref{prop:implication} (7), thus also $\mathbf{c} \leqW^\mathfrak{A} \mathbf{a} \Winf \mathbf{b}$.
  \qedhere
\end{description}
\end{proof}

In the next proposition we show that proper $\star$--ideals still distinguish the constants.

\begin{proposition}
Let $\mathfrak{A} \subsetneqq \Wei$ be a $\star$--ideal. Then $0 \lW^\mathfrak{A} 1 \lW^\mathfrak{A} \Wtop$.
\begin{proof}
$\mathfrak{A} \subsetneqq \Wei$ is equivalent to $\Wtop \notin \mathfrak{A}$. Now note that for $\mathbf{a} \neq \Wtop$ we have $0 \star \mathbf{a} = 0$ and $1 \star \mathbf{a} = \mathbf{a}$.
\end{proof}
\end{proposition}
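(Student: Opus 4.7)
The plan is to first reduce properness of $\mathfrak{A}$ to the single fact that $\Wtop \notin \mathfrak{A}$. Since $\mathfrak{A}$ is downwards closed and $\Wtop$ is the top element of $\Wei$, the inclusion $\Wtop \in \mathfrak{A}$ would force $\mathfrak{A} = \Wei$; hence $\mathfrak{A} \subsetneqq \Wei$ gives $\Wtop \notin \mathfrak{A}$, so every witness $\mathbf{c} \in \mathfrak{A}$ satisfies $\mathbf{c} \neq \Wtop$. This is the only structural fact about $\mathfrak{A}$ we need beyond $1 \in \mathfrak{A}$.

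Next I would verify the two easy ``$\leqW^\mathfrak{A}$'' directions using $\mathbf{c} = 1 \in \mathfrak{A}$ as the witness. For $0 \leqW^\mathfrak{A} 1$, Observation~\ref{prop:constants} yields $1 \star 1 = 1$, and $0 \leqW 1$ holds since $0$ is the bottom element. For $1 \leqW^\mathfrak{A} \Wtop$, the convention $\Wtop \star 1 = \Wtop$ gives $1 \leqW \Wtop \star 1$, again using that $\Wtop$ is the top. Both witnesses are by definition in $\mathfrak{A}$.

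The remaining task is to rule out the reverse reductions. Suppose toward a contradiction that $1 \leqW^\mathfrak{A} 0$, so that $1 \leqW 0 \star \mathbf{c}$ for some $\mathbf{c} \in \mathfrak{A}$. By the first paragraph, $\mathbf{c} \neq \Wtop$, and Observation~\ref{prop:constants}(5) gives $0 \star \mathbf{c} = 0$, so $1 \leqW 0$, contradicting that $1$ is not the bottom. Similarly, if $\Wtop \leqW^\mathfrak{A} 1$, then $\Wtop \leqW 1 \star \mathbf{c}$ for some $\mathbf{c} \in \mathfrak{A}$ with $\mathbf{c} \neq \Wtop$; Observation~\ref{prop:constants}(3) yields $1 \star \mathbf{c} = \mathbf{c}$, and so $\Wtop \leqW \mathbf{c}$ forces $\mathbf{c} \equivW \Wtop$, contradicting $\Wtop \notin \mathfrak{A}$.

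There is no real obstacle here beyond bookkeeping; the proof is a direct unfolding of the definition of $\leqW^\mathfrak{A}$ combined with the constant calculus for $\star$ established in Observation~\ref{prop:constants}. The only subtle point is remembering the conventions $\mathbf{a} \star \Wtop = \Wtop \star \mathbf{a} = \Wtop$, which is why one must first isolate $\Wtop \notin \mathfrak{A}$ before exploiting the identities $0 \star \mathbf{c} = 0$ and $1 \star \mathbf{c} = \mathbf{c}$.
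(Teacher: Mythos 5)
Your proof is correct and follows exactly the paper's argument: isolate $\Wtop \notin \mathfrak{A}$ from properness and downward closure, then apply the constant identities $0 \star \mathbf{c} = 0$ and $1 \star \mathbf{c} = \mathbf{c}$ for $\mathbf{c} \neq \Wtop$ to rule out the reverse reductions. You have merely spelled out the details that the paper's two-line proof leaves implicit.
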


Further examples of $\star$--preideals are found in the various classes of functions central to the investigations in \cite{HK14,HK14a} (see \cite[Page 7 \& 10]{HK14} for an overview). Moreover, the reductions studied there turn out to be the duals of the restrictions of the corresponding reductions $\leqW^\mathfrak{A}$ to $\mathfrak{M}^{\mathrm{op}} = \{\mathbf{a} \in \Wei : \mathbf{a} \leqW \mathbf{1}\}$.
Besides the one-sided ideals discussed here, one can also consider two-sided ideals, as studied by Yoshimura (see \cite{BKM+16}).

\section{Applications of the implication}
\label{sec:applimp}
As the implication on Weihrauch degrees has not been studied previously, a few examples where it appears naturally should be illuminating. In fact, it turns out that some degrees expressible as implications between commonly studied degrees have already appeared in the literature, sometimes explicitly and sometimes implicitly.

\subsection{Examples}

Our first example shall be $\frac{1}{2}\C_{\uint}$, which was explicitly investigated in \cite{BLRMP16a} related to Orevkov's construction showing the non-constructivity of Brouwer's Fixed Point theorem. The multi-valued function $\frac{1}{2}\C_{\uint}$ takes as input some non-empty closed subset $A \in \mathcal{A}(\uint)$ of the unit interval, and produces a pair of points in the unit interval, at least one of which has to lie in $A$.

\begin{proposition}
\label{prop:uintfraction}
$\frac{1}{2}\C_{\uint} \equivW (\C_{\{0,1\}} \rightarrow \C_{\uint})$
\begin{proof}
Given $x_0, x_1 \in \uint$ and $A \in \mathcal{A}(\uint)$, we can compute $\{i : x_i \in A\} \in \mathcal{A}(\{0,1\})$. Hence, $\C_{\uint} \leqW \C_{\{0,1\}} \star \frac{1}{2}\C_{\uint}$. This in turn provides us with $(\C_{\{0,1\}} \rightarrow \C_{\uint}) \leqW \frac{1}{2}\C_{\uint}$ by Corollary~\ref{cor:impcharac}.

It remains to show $\frac{1}{2}\C_{\uint} \leqW (\C_{\{0,1\}} \rightarrow \C_{\uint})$. 
Let us assume that $\C_\uint\leqW\C_{\{0,1\}}\star f$. Then upon input of a non-empty $A \in \mathcal{A}(\uint)$ the function $f$ determines
a set $B\in\mathcal{A}(\{0,1\})$ so that from $A$ and $i\in B$ one can compute a point $x\in A$. 
In other words, there is a computable multi-valued function $h : \subseteq  \mathcal{A}(\uint) \times \{0,1\} \mto \uint$
such that $h(A,i) \in A$ if $i \in B$. We can assume that $h(A,i)$ is defined for all non-empty closed $A\In\uint$ and $i\in\{0,1\}$, 
which turns the pair $(h(A,0), h(A,1))$ into a valid output for $\frac{1}{2}\C_{\uint}(A)$.
\end{proof}
\end{proposition}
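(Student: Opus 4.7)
The plan is to establish the two inequalities separately, using Corollary~\ref{cor:impcharac} as the bridge between statements about $\star$ and statements about $\rightarrow$.

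For the easier direction $(\C_{\{0,1\}} \rightarrow \C_{\uint}) \leqW \frac{1}{2}\C_{\uint}$, I would first show $\C_{\uint} \leqW \C_{\{0,1\}} \star \frac{1}{2}\C_{\uint}$ and then invoke Corollary~\ref{cor:impcharac}. The idea is that given $A \in \mathcal{A}(\uint)$, we can feed $A$ into $\frac{1}{2}\C_{\uint}$ to obtain a pair $(x_0,x_1) \in \uint^2$ with $\{x_0,x_1\} \cap A \neq \emptyset$. Since $\mathcal{A}(\uint)$ is given in negative information, from $A$ and a point $x \in \uint$ we can compute the set $\{i \in \{0,1\} : x_i \in A\} \in \mathcal{A}(\{0,1\})$; this set is non-empty, so it is a valid input to $\C_{\{0,1\}}$. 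Applying $\C_{\{0,1\}}$ selects an index $i$ with $x_i \in A$, witnessing the reduction.

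For the harder direction $\frac{1}{2}\C_{\uint} \leqW (\C_{\{0,1\}} \rightarrow \C_{\uint})$, I would unpack what a reduction $\C_{\uint} \leqW \C_{\{0,1\}} \star f$ actually supplies for a representative $f$ of $(\C_{\{0,1\}} \rightarrow \C_{\uint})$. Using the characterization of the compositional product (Corollary~\ref{cor:starcharac}) together with the cylindrical decomposition Lemma~\ref{lem:cylindrical-decomposition}, such a reduction must be witnessed by a computable procedure that, given $A \in \mathcal{A}(\uint)$, first produces an input to $f$, uses the output of $f$ to compute some non-empty $B \in \mathcal{A}(\{0,1\})$, and then, given $A$ together with any $i \in B$, computes a point $x \in A$. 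The key point is that the final ``output stage'' is a computable multi-valued function $h : \subseteq \mathcal{A}(\uint) \times \{0,1\} \mto \uint$ with the guarantee that $h(A,i) \in A$ whenever $i \in B$.

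The main obstacle is that a priori $h(A,i)$ might be undefined for $i \notin B$, which would prevent us from extracting two points. I would handle this by tightening $h$ to a total multi-valued function on $\mathcal{A}(\uint) \times \{0,1\}$ whose outputs still lie in $\uint$; this is possible because $\uint$ is (effectively) compact, so we can always produce some point of $\uint$ as a fallback. Given such a total tightening, computing $(h(A,0), h(A,1))$ yields a valid output for $\frac{1}{2}\C_{\uint}(A)$ since at least one of $0,1$ lies in $B$. This establishes the reduction, and combined with the first direction gives the claimed equivalence.
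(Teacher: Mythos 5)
Your proof is correct and follows essentially the same route as the paper's: the first direction via $\C_{\uint} \leqW \C_{\{0,1\}} \star \frac{1}{2}\C_{\uint}$ and Corollary~\ref{cor:impcharac}, and the second by unpacking a reduction $\C_{\uint} \leqW \C_{\{0,1\}} \star f$ into a computable $h$ with $h(A,i)\in A$ for $i\in B$ and then totalizing $h$ so that $(h(A,0),h(A,1))$ is a valid answer for $\frac{1}{2}\C_{\uint}(A)$. Your explicit justification of the totalization step (any partial name-prefix can be completed to a name of some point of $\uint$) is a detail the paper only asserts, and it is sound.
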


The second example has implicitly been present in \cite{Kih12}, reductions from it were employed to prove that certain transformations of planar continua are not computable, e.g., in \cite[Theorem 3.1]{Kih12}. The problem is $\AEC : \mathcal{A}(\mathbb{N}) \mto \mathcal{O}(\mathbb{N})$ where $U \in \AEC(A)$ if and only if $(A \setminus U) \cup (U \setminus A)$ is finite. Consider for comparison $\EC = \id : \mathcal{A}(\mathbb{N}) \to \mathcal{O}(\mathbb{N})$, and note $\EC\equivW \lim$ (e.g., \cite{stein}).

\begin{proposition}
\label{obs:almostec}
$\AEC \equivW (\C_\mathbb{N} \rightarrow \lim)$.
\end{proposition}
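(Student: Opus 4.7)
The plan is to establish $\AEC \equivW (\C_\IN \to \lim)$ by proving the two reductions separately. For the direction $(\C_\IN \to \lim) \leqW \AEC$, Corollary~\ref{cor:impcharac} lets me instead prove the equivalent statement $\lim \leqW \C_\IN \star \AEC$. The reverse direction $\AEC \leqW (\C_\IN \to \lim)$ I shall establish by a direct reduction.

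For $\AEC \leqW (\C_\IN \to \lim)$: given $A$ in negative representation, feed it to $(\C_\IN \to \lim)$ to obtain a pair $(H,C)$ with $H\in\mathcal{M}(\IN,\Baire)$ and $C\In\IN$ closed non-empty such that $H(n)$ is a positive enumeration of $A$ for each $n\in C$. I produce the required $U\in\AEC(A)$ via a ``try-and-switch'' procedure: maintain a current guess $n$, initially $n=0$, dovetailing the computation of $H(n)$ (forwarding its outputs into the enumeration of $U$) with the enumeration of $\IN\setminus C$; each time $n$ appears in $\IN\setminus C$, increment $n$ to $n+1$ and restart the computation of $H(n)$. Because $C$ is non-empty, the minimum $m\in C$ is never refuted, so the procedure eventually fixes on $n=m$ and thereafter faithfully enumerates all of $A$ into $U$. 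The aborted stages at smaller $n$ contribute only finitely many spurious elements, whence $A\In U$ and $U\setminus A$ is finite, so $U\in\AEC(A)$.

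For $\lim \leqW \C_\IN \star \AEC$: the key idea is a padding trick. Given $A\in\mathcal{A}(\IN)$, form $A^\sharp:=\{\langle x,k\rangle:x\in A,\ k\in\IN\}$, whose negative description is computable from that of $A$ (since $\IN\setminus A^\sharp=(\IN\setminus A)\times\IN$ is r.e.). Apply $\AEC$ to $A^\sharp$ to obtain $U^\sharp$ with $A^\sharp\triangle U^\sharp$ finite. The set $U^\sharp\setminus A^\sharp$ is the intersection of two r.e.\ sets and is finite, so the set $C:=\{M\in\IN:|U^\sharp\setminus A^\sharp|\leq M\}$ has r.e.\ complement and is non-empty. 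Feed $C$ to $\C_\IN$ to obtain some $M\geq|U^\sharp\setminus A^\sharp|$. I then enumerate $A$ by the rule: output $x$ as soon as at least $M+1$ elements of the column $\{\langle x,k\rangle:k\in\IN\}$ have been observed in $U^\sharp$. For $x\notin A$, the column is empty in $A^\sharp$, so its image in $U^\sharp$ lies entirely inside the size-at-most-$M$ set $U^\sharp\setminus A^\sharp$ and cannot reach $M+1$; for $x\in A$, the column in $U^\sharp$ differs from $\{x\}\times\IN$ only by finitely many omissions (members of $A^\sharp\setminus U^\sharp$), so it eventually exceeds the threshold and $x$ is correctly enumerated.

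The main obstacle is the second direction: a naive reduction feeding $A$ directly into $\AEC$ fails because the missing half of the symmetric difference $A\setminus U$ is only co-r.e., and $\C_\IN$ cannot identify the code of a co-r.e.\ finite set from its negative representation. The padding trick converts membership in $A$ into a counting criterion on positive observations, so that only the r.e.\ half $U^\sharp\setminus A^\sharp$ of the error must be bounded, which is precisely what $\C_\IN$ can accomplish.
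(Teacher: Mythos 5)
Your proof is correct, and its overall strategy coincides with the paper's. Both directions should go through the identification $\EC\equivW\lim$ (state this explicitly: $(\C_\IN\to\lim)$ literally has domain $\dom(\lim)\In\Baire$, so one needs Corollary~\ref{cor:composition-implication} to replace $\lim$ by $\EC$ before ``feeding $A$'' into the implication). The hard direction $\EC\leqW\C_\IN\star\AEC$ is proved by exactly the paper's padding trick: apply $\AEC$ to $A\times\IN$ and use $\C_\IN$ to extract a finite bound on the r.e.\ half $U^\sharp\setminus(A\times\IN)$ of the error. The only difference there is cosmetic --- you bound its cardinality, the paper bounds the second coordinates of its elements --- and in both cases a threshold criterion on the columns of $U^\sharp$ recovers $A$. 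The one genuinely different step is the reduction $\AEC\leqW(\C_\IN\to\EC)$: the paper first replaces $\C_\IN$ by its equivalent restriction $\C_{\IN,\mathrm{us}}$ to upward-closed sets, so that the witness yields a sequence $(U_i)_i$ with $U_i=A$ for all $i\geq N$, and then takes the diagonal $U=\{n:n\in U_n\}$; you instead work with an arbitrary non-empty co-r.e.\ set $C$ and run a mind-change search for $\min C$, restarting the evaluation of $H$ whenever the current index is refuted and accepting the finitely many spurious elements produced along the way. Both arguments are sound; yours avoids the detour through $\C_{\IN,\mathrm{us}}$ at the price of explicit finite-injury bookkeeping, while the paper's diagonal is shorter once the normalization is in place.
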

\begin{proof}
Using $\EC\equivW\lim$ we prove $\AEC \equivW (\C_\mathbb{N} \rightarrow \EC)$.
By Corollary~\ref{cor:impcharac} it suffices to show $\EC \leqW \C_\mathbb{N} \star \AEC$ to obtain one direction of the equivalence. 
Given an input $A\in\mathcal{A}(\IN)$ to $\EC$ we can compute $\langle A\times\IN\rangle\in\mathcal{A}(\IN)$ and $\AEC(\langle A\times\IN\rangle)$ yields a set $U\In\IN$ such that
$(\langle A\times\IN\rangle\setminus U)\cup(U\setminus\langle A\times\IN\rangle)$ is finite. Now we compute a set $B\In\IN$ as an input to $\C_\IN$ with
\[N\not\in B\iff(\exists a\in\IN)(\exists n\geq N)\;\langle a,n\rangle\in U\setminus\langle A\times\IN\rangle.\]
Here $B$ is non-empty since $U\setminus\langle A\times\IN\rangle$ is finite. Now given some $N\in B$ we obtain
\[A=\{a\in\IN:(\exists n\geq N)\;\langle a,n\rangle\in U\},\]
where ``$\supseteq$'' follows due to the choice of $N$ and ``$\subseteq$'' follows since $\langle A\times\IN\rangle\setminus U$ is finite.
Given $U$ and $N$, we can compute $A\in\mathcal{O}(\IN)$ and hence solve $\EC(A)$.

For the other direction we show $\AEC \leqW (\C_{\IN,\mathrm{us}} \rightarrow \EC)$, where $\C_{\IN,\mathrm{us}}$ is the restriction
of $\C_\IN$ to sets of the form $\{i \in \mathbb{N} : i \geq N\}$ for some $N \in \mathbb{N}$.
A straightforward proof shows $\C_{\IN,\mathrm{us}}\equivW\C_\IN$ (see for instance \cite[Proposition~3.3]{BG11a}).
Let now $h$ be such that $\EC\leqW\C_{\IN,\mathrm{us}}\star h$.
Upon input $A \in \mathcal{A}(\mathbb{N})$ for $\EC$, the function $h$ produces a sequence $(U_i)_{i \in \mathbb{N}}$ in $\mathcal{O}(\IN)$
and some non-empty $B = \{i \in \mathbb{N} : i \geq N\} \in \mathcal{A}(\mathbb{N})$ such that if $i \in B$, then $U_i = A$. 
From this, we obtain $U \in \mathcal{O}(\mathbb{N})$ where $n \in U$ if and only if $n \in U_n$, and note that $U \in \AEC(A)$.
\end{proof}

In \cite{BLRMP16a} it was demonstrated that $\C_{\{0,1\}} \not\leqW (\C_{\{0,1\}} \rightarrow \C_{\Cantor})$. In a sense, this means that while $(\C_{\{0,1\}} \rightarrow \C_{\Cantor})$ 
is just as complicated as $\C_\Cantor$ in a non-uniform way, it is extremely weak uniformly. A similar observation can be made regarding $(\C_\mathbb{N} \rightarrow \lim)$.
We prove a slightly more general result. In \cite{BHK17a} a multi-valued function $g:\In \IN^\IN\mto \IN^\IN$ is called {\em densely realized}, if the set $g(p)$ is dense in $\IN^\IN$
for every $p\in\dom(g)$.

\begin{proposition}
\label{prop:densely-realized-computable}
Let $g:\In\IN^\IN\mto\IN^\IN$ be densely realized and let $f : \In\mathbf{X} \mto \mathbb{N}$ satisfy $f \leqW g$. Then $f$ is computable.
\end{proposition}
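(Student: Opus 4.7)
The plan is to exploit the fact that $\IN$ is represented in such a way that a name commits to its value after a fixed finite amount of information (for concreteness, take $\delta_\IN(p) = p(0)$), combined with dense realization to replace arbitrary prefixes by actual outputs of $g$. Unpacking the reduction $f \leqW g$ gives computable $K, H :\In\Baire\to\Baire$ with $K\langle \id, GH\rangle\vdash f$ for every realizer $G \vdash g$. The first step is to observe that, for any $p \in \dom(f\delta_{\mathbf X})$, one has $H(p)\in\dom(g)$, and for every $q \in g(H(p))$ one can build (using the Axiom of Choice) a realizer $G\vdash g$ with $G(H(p))=q$. Consequently $K\langle p,q\rangle$ is a $\delta_\IN$-name of an element of $f(\delta_{\mathbf X}(p))$, and in particular $K\langle p,q\rangle(0) = f(\delta_{\mathbf X}(p))$ for \emph{every} choice of $q \in g(H(p))$.

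I would then write the following computable procedure on input $p$: enumerate pairs $(m,w)\in\IN\times\IN^*$ and, for each, run $K$ on the partial input $\langle p{\upharpoonright}m, w\rangle$ long enough to check whether $K$ already outputs a first digit $n$ purely on the basis of that prefix — formally, whether every extension of $\langle p{\upharpoonright}m, w\rangle$ yields a name starting with $n$. As soon as such a pair is found, output $n$. For correctness, dense realization of $g$ ensures that $w$ extends to some $q \in g(H(p))$; applied to $\langle p,q\rangle$ this forces $n = K\langle p,q\rangle(0) = f(\delta_{\mathbf X}(p))$, so any $(m,w)$ produced by the search yields the correct value. For termination, pick any fixed $q_0 \in g(H(p))$ (dense, hence nonempty); then $K\langle p,q_0\rangle(0)$ is defined, so by continuity of $K$ there are prefixes $p{\upharpoonright}m$ of $p$ and $w$ of $q_0$ after reading which $K$ has committed to the first output digit. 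This pair is eventually enumerated, witnessing termination.

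The main obstacle is the realizer-surgery in the first step: one must argue carefully that for a prescribed $q\in g(H(p))$ some realizer $G\vdash g$ achieves $G(H(p))=q$, which is immediate classically since realizers are arbitrary (possibly discontinuous) functions, but relies on choosing, for each $r \in \dom(g)\setminus\{H(p)\}$, some element of $g(r)$. The rest of the argument is essentially a standard continuity-plus-discreteness unfolding, and the statement of the proposition transfers mechanically to the lifted $f: \In\mathbf X\mto\IN$ by composing the constructed Baire-space algorithm with $\delta_{\mathbf X}$.
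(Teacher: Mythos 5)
Your argument is correct and is essentially the paper's own proof: the paper likewise builds a computable realizer of $f$ by searching for the first finite prefix $\langle v,w\rangle$ on which $K$ (formalized there via a monotone computable word function $\kappa$) commits to its output, using density of $g(H(p))$ to extend $w$ to an actual element of $g(H(p))$, realizer surgery for correctness, and continuity of $K$ at a fixed $q_0\in g(H(p))$ for termination. The only cosmetic slip is writing $K\langle p,q\rangle(0)=f(\delta_{\mathbf{X}}(p))$ where $\in$ is meant, since $f$ is multi-valued.
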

\begin{proof}
Let $H:\In\IN^\IN\to\IN^\IN$, $K:\In\IN^\IN\to\IN$ be computable and such that $K\langle\id,GH\rangle$ is a realizer of $f$ for every realizer $G$ of $g$.
Since $K$ is computable, it is approximated by a monotone computable word function $\kappa:\IN^*\to\IN$, i.e., $K(p)=\sup_{w\sqsubseteq p}\kappa(w)$ for all $p\in\dom(K)$.
We define a computable function $F:\In\IN^\IN\to\IN$ in the following.
We use some standard numbering $w:\IN\to\IN^*$ of words and for each name $p$ of an input in the domain of $f$ we compute
the smallest $n\in\IN$ such that $\kappa\langle v,w_n\rangle\not=\varepsilon$ for $v\sqsubseteq p$ with $|v|=|w_n|$ and we define $F(p):=\kappa\langle v,w_n\rangle$ with this $n\in\IN$.
We claim that $F$ is a realizer of $f$. Since $g$ is densely realized, there is a realizer $G$ of $g$ for each $p,w_n$ as above with $w_n\sqsubseteq GK(p)$.
This implies $K\langle p,GH(p)\rangle=F(p)$ and hence the claim follows.
\end{proof}

It is easy to see that $(\C_\IN\to\lim)$ is equivalent to a densely realized multi-valued function on Baire space. 
Again, we prove a slightly more general result.

\begin{proposition}
\label{prop:CN-densely-realized}
For every $f:\In\mathbf{X}\mto\mathbf{Y}$ there is a densely realized $g:\In\IN^\IN\mto\IN^\IN$ with $(\C_\IN\to f)\equivW g$.
\end{proposition}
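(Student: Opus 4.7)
The plan is to take $g$ to be the canonical Baire-space representative of $(\C_\IN\to f)$ itself and verify that its image $g(p)$ is already dense for every $p\in\dom(g)$. The case $\dom(f)=\emptyset$ is vacuous since then $\dom(\C_\IN\to f)=\emptyset$ and the nowhere-defined function is densely realized vacuously, so assume $\dom(f)\neq\emptyset$ with $f:\In\mathbf{U}\mto\mathbf{V}$. Recall that outputs of $(\C_\IN\to f)(u)$ are pairs $(H,N)\in\mathcal{M}(\IN,\mathbf{V})\times\mathcal{A}(\IN)$ with $N\neq\emptyset$ and $H(n)\subseteq f(u)$ for all $n\in N$. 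A name of such a pair has the form $\langle r,s\rangle\in\IN^\IN$, where $r$ is a name of $H$ and $s$ is a standard negative-information name of $N$ (say $s(i)=n+1$ declares $n\notin N$ and $s(i)=0$ is padding). Define $g(p)$ to be the set of all names of outputs of $(\C_\IN\to f)(\delta_\mathbf{U}(p))$; then $g\equivW(\C_\IN\to f)$ by construction, so only density of each $g(p)$ remains to be proved.

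Fix $p\in\dom(g)$ and a prefix $w\in\IN^*$, and split $w$ into its induced prefixes $w_r,w_s$ of $r,s$ under the fixed pairing. The prefix $w_s$ declares only some finite set $E\subset\IN$ to lie outside of $N$, so we may pick any $n^*\in\IN\setminus E$, set $N^*:=\{n^*\}$, and extend $w_s$ to a name $s$ of $N^*$ by subsequently enumerating every element of $\IN\setminus(E\cup\{n^*\})$. The singleton $N^*$ is trivially a non-empty element of $\mathcal{A}(\IN)$, so we now only need a strongly continuous $H^*$ with $H^*(n^*)\subseteq f(\delta_\mathbf{U}(p))$ and a name for it extending $w_r$.

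For $H^*$, take the constant multi-valued function $n\mapsto\{y_0\}$ for some fixed $y_0\in f(\delta_\mathbf{U}(p))$ (which is non-empty by assumption). Any program that ignores its second input and outputs a fixed name $v\in\delta_\mathbf{V}^{-1}(y_0)$ witnesses $H^*\in\mathcal{M}(\IN,\mathbf{V})$, and clearly $(H^*,N^*)\in(\C_\IN\to f)(\delta_\mathbf{U}(p))$. By the standard padding property of the underlying universal machine, every element of $\mathcal{M}(\IN,\mathbf{V})$ has names extending any prescribed finite prefix of $\IN^\IN$, so in particular $w_r$ extends to some name $r$ of $H^*$. Then $\langle r,s\rangle\sqsupseteq w$ is a name of $(H^*,N^*)$ and hence lies in $g(p)$, proving density.

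The one delicate point is the appeal to the padding property in the last step: that the set of names of any fixed $H\in\mathcal{M}(\IN,\mathbf{V})$ meets every cylinder of $\IN^\IN$. This is the familiar consequence of the UTM admitting dummy prefix instructions, and if one prefers to avoid reliance on a specific implementation of the UTM, it can be engineered into the setup by replacing the representation of $\mathcal{M}(\IN,\mathbf{V})$ by its cylindrification, which leaves Weihrauch degrees unchanged and makes the odd positions of every $r$ freely available.
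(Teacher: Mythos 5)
There is a genuine gap, and it sits exactly at the point you flagged. The ``padding property'' you invoke --- that every $H\in\mathcal{M}(\IN,\mathbf{V})$ has a name extending any prescribed finite prefix $w_r$ --- is false for the parametrization $\Phi$. A name $r$ of an element of $\mathcal{M}(\IN,\mathbf{V})$ is an oracle for the fixed UTM, and once the machine has written an output symbol after reading only the finite prefix $w_r$ (and possibly nothing of the input), that symbol is committed for \emph{every} extension of $w_r$ and \emph{every} input. So $w_r$ can force, say, all values $\Phi_r\langle\cdot\rangle$ to begin with $7$; if $f(u)$ is a singleton whose unique $\delta_{\mathbf{V}}$-name starts with $0$ (take $\mathbf{V}=\Baire$, $f(u)=\{0^\IN\}$), then no extension of $w_r$ names any $H$ with $H(n)\In f(u)$ for any $n$, and your $g(p)$ misses the cylinder $w\Baire$ entirely. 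Note that choosing a different $n^*$ does not help, since the commitment can be unconditional in the input, and cylindrifying the representation of $\mathcal{M}(\IN,\mathbf{V})$ does not help either: it adds freely choosable coordinates but does not retract the output commitments already encoded in the $r$-part of the prefix. Padding arguments let you \emph{delay} commitments when you build a name from scratch; they do not let you \emph{escape} commitments already present in a given prefix.

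This is precisely why the paper does not take $g$ to be the realizer-level version of $(\C_\IN\to f)$ itself, but a deliberately different representative: $\dom(g)=\dom(f)$ and
\[g(p):=\{\langle\langle q_0,q_1,\ldots\rangle,r\rangle:\range(r)\not=\IN\mbox{ and }(\forall n\not\in\range(r))\;q_n\in f(p)\},\]
i.e.\ the $\mathcal{M}(\IN,\mathbf{V})$-component is replaced by an explicit table of candidate names indexed by $\IN$, with $r$ a negative-information description of the set of valid indices. Here any finite prefix touches only finitely many $q_n$ and finitely much of $r$, so one extends $r$ to enumerate all touched indices (discarding whatever was committed there), keeps one fresh index $n^*$ out of $\range(r)$, and writes a name of some point of $f(p)$ into $q_{n^*}$; this makes density immediate. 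The equivalence $g\equivW(\C_\IN\to f)$ is then not ``by construction'' but has to be checked against the characterization of the implication as a minimum: one shows $f\leqW\C_\IN\star g$ (giving $(\C_\IN\to f)\leqW g$) and that any $h$ with $f\leqW\C_\IN\star h$ satisfies $g\leqW h$ (giving $g\leqW(\C_\IN\to f)$). Your first and last paragraphs (the empty-domain case and the reduction to proving density) are fine; the construction in the middle needs to be replaced along these lines.
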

\begin{proof}
We can assume that $f$ is of type $f:\In\IN^\IN\mto\IN^\IN$.
It is easy to see that $g:\In\IN^\IN\mto\IN^\IN$ with $\dom(g)=\dom(f)$ and 
\[g(p):=\{\langle\langle q_0,q_1,q_2,...\rangle,r\rangle\in\IN^\IN:\range(r)\not=\IN\mbox{ and }(\forall n\not\in\range(r))\;q_n\in f(p)\}\]
satisfies $f\leqW \C_\IN\star g$. On the other hand, one sees that $f\leqW\C_\IN\star h$ implies $g\leqW h$.
Since $g$ is densely realized, this implies the claim.
\end{proof}

Hence we obtain the following Corollary.

\begin{corollary}
Let $f : \In\mathbf{X} \mto \mathbb{N}$ satisfy $f \leqW (\C_\mathbb{N} \rightarrow g)$ for some $g:\In\mathbf{Y}\mto\mathbf{Z}$. Then $f$ is computable.
\end{corollary}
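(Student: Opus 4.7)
The plan is to combine the two immediately preceding propositions in a straightforward manner, essentially reading off the corollary as an instance of their composition.

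First, I would apply Proposition~\ref{prop:CN-densely-realized} to the given $g$ to obtain a densely realized multi-valued function $h : \subseteq \IN^\IN \mto \IN^\IN$ with $(\C_\IN \rightarrow g) \equivW h$. Combining this with the hypothesis $f \leqW (\C_\IN \rightarrow g)$ and transitivity of $\leqW$ gives $f \leqW h$.

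Since $f$ has codomain $\IN$, the hypotheses of Proposition~\ref{prop:densely-realized-computable} are met: we have $f : \subseteq \mathbf{X} \mto \IN$ with $f \leqW h$ and $h$ densely realized. The conclusion of that proposition is precisely that $f$ is computable, so we are done.

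No step of this argument presents any real obstacle; the content of the corollary is entirely packed into the two preceding propositions, and the proof is essentially a two-line invocation of them in sequence.
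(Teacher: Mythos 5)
Your proof is correct and is exactly the argument the paper intends: the corollary is stated with "Hence we obtain" immediately after the two propositions, and the intended derivation is precisely the two-step composition you give (replace $(\C_\IN\to g)$ by an equivalent densely realized $h$ via Proposition~\ref{prop:CN-densely-realized}, then apply Proposition~\ref{prop:densely-realized-computable} to $f\leqW h$). Nothing to add.
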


The next example\footnote{This observation was inspired by a question of Jason Rute posed at the conference {\em Computability in Europe} (CiE 2013) in Milan, Italy.} connects closed choice for sets of positive measure as studied in \cite{BP10,BGH15a} 
with the existence of relatively Martin-L\"of random sequences (a standard reference for randomness notions is \cite{Nie09}). 
Let $\PC_\Cantor : \subseteq \mathcal{A}(\Cantor) \to \Cantor$ be the restriction of $\C_\Cantor$ to sets of positive uniform measure $\mu$, and let $\MLR : \Cantor \mto \Cantor$ be defined via 
$q \in \MLR(p)$ if and only if $q$ is Martin-L\"of random relative to $p$ (see \cite{Nie09,DH10} for details). We mention that $\WWKL:\In\mathrm{Tr}\mto\{0,1\}^\IN,T\mapsto[T]$ stands for Weak Weak K\H{o}nig's Lemma, 
i.e., the problem that maps every infinite binary tree $T$ with $\mu([T])>0$ to the set $[T]$ of its infinite paths.
We have $\WWKL\equivW\PC_\Cantor$ \cite{BGH15a}.

\begin{proposition}
\label{prop:MLR}
$\MLR \equivW ( \C_\mathbb{N} \rightarrow \PC_\Cantor)\equivW (\C_\IN\to\WWKL)$.
\begin{proof}
To see that $\MLR \leqW (\C_\mathbb{N} \rightarrow \PC_\Cantor)$, let $A_p$ be the complement of the first open set used in a universal Martin-L\"of test relative to $p$. 
As $\mu(A_p) > 2^{-1}$, this is a valid input for $\PC_\Cantor$, hence for $(\C_\mathbb{N} \rightarrow \PC_\Cantor)$. 
Now the output of the latter will be some non-empty $B \in \mathcal{A}(\mathbb{N})$, together with a sequence $(q_i)_{i \in \mathbb{N}}$
such that $q_n\in A_p$ whenever $n\in B$.
Starting with $n = 0 \in \mathbb{N}$, as long as $n \notin B$ has not been confirmed yet, we will attempt to compute the bits of $p_n$ and copy them to the output $q$. 
If $n \notin B$ is proven, we continue with $n + 1$. As $B$ is non-empty, eventually some $n \in B$ will be reached, and $p_n$ will be total and Martin-L\"of random relative to $p$. 
But then $q$ will share some infinite tail with $p_n$, hence also be Martin-L\"of random relative to $p$.

For the other direction we will use that if $A \in \mathcal{A}(\Cantor)$ has positive measure and $p$ is Martin-L\"of random relative to $A$, 
then some tail of $p$ is in $A$ according to the relativized version of Ku\v{c}era's Lemma~\cite{Kuc85}. 
We will prove that $\PC_\Cantor \leqW \C_\mathbb{N} \star \MLR$ and use Corollary~\ref{cor:impcharac}. 
Given such $A$ and $p$ as before, we can compute $\{i \in \mathbb{N} : p_{\geq i} \in A\} \in \mathcal{A}(\mathbb{N})$.
 $\C_\mathbb{N}$ then determines a suitable prefix length, such that the corresponding tail of $p$ actually falls into $A$.
\end{proof}
\end{proposition}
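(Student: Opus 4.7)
The plan is to prove the first equivalence $\MLR \equivW (\C_\IN \to \PC_\Cantor)$ directly, and then obtain the second equivalence $\MLR \equivW (\C_\IN \to \WWKL)$ for free from the known identification $\WWKL \equivW \PC_\Cantor$ together with Corollary~\ref{cor:composition-implication}. Both non-trivial reductions will be set up through Corollary~\ref{cor:impcharac}, which turns each implication reduction into a compositional product inequality.

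For the direction $\MLR \leqW (\C_\IN \to \PC_\Cantor)$, on input $p \in \Cantor$ I would first compute the complement $A_p \In \Cantor$ of the first level of a universal Martin-L\"of test relative to $p$; this is a $\Pi^0_1(p)$-class of uniform measure strictly greater than $1/2$, so a legitimate $\PC_\Cantor$-input, and the crucial point is that \emph{every} element of $A_p$ is Martin-L\"of random relative to $p$. The implication $(\C_\IN \to \PC_\Cantor)$ applied to $A_p$ returns a pair $(H,B)$ with $B \in \mathcal{A}(\IN)$ non-empty and $H \in \mathcal{M}(\IN,\Cantor)$ such that $H(n) \In A_p$ for every $n \in B$. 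I would then produce a single $q \in \MLR(p)$ by a waiting strategy: begin emitting bits of some $q_0 \in H(0)$; whenever the current index $n$ is enumerated out of $B$ (recall $B$ is given by negative information), switch to emitting bits of some $q_{n+1} \in H(n+1)$, continuing from the current output position. Since $B$ is non-empty, only finitely many switches ever occur, so from some position on the output coincides with some $q_n \in A_p$ for $n \in B$, and hence is Martin-L\"of random relative to $p$ because such randomness is preserved under finite modifications.

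For the converse, by Corollary~\ref{cor:impcharac} it suffices to prove $\PC_\Cantor \leqW \C_\IN \star \MLR$. Given $A \in \mathcal{A}(\Cantor)$ of positive measure, feed a name of $A$ to $\MLR$ as oracle to obtain some $q$ that is Martin-L\"of random relative to $A$. The relativized Ku\v{c}era lemma then guarantees $q_{\geq n} \in A$ for some $n \in \IN$. Because $A$ comes with negative information, the set $\{n \in \IN : q_{\geq n} \in A\}$ is a non-empty member of $\mathcal{A}(\IN)$ computable uniformly from $(A,q)$, hence a valid $\C_\IN$-instance; once $\C_\IN$ returns such an $n$, the tail $q_{\geq n}$ serves as the output of $\PC_\Cantor$ on $A$.

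The main obstacle will be making the switching construction in the first reduction fully rigorous: one must argue that the assembled $q$ lies in $\MLR(p)$ \emph{regardless} of which $n \in B$ is ultimately committed to, which relies both on invariance of Martin-L\"of randomness under finite perturbations and on the fact that the negative information we possess about $B$ forces only finitely many switches before the procedure locks onto a genuinely random $q_n$ with $n \in B$. The reverse direction, by contrast, is essentially bookkeeping once the relativized Ku\v{c}era lemma is invoked.
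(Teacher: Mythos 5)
Your proof is correct and follows essentially the same route as the paper's: the same set $A_p$ (complement of the first level of a universal test relative to $p$) with the same waiting/switching construction for $\MLR \leqW (\C_\IN \to \PC_\Cantor)$, and the relativized Ku\v{c}era lemma together with Corollary~\ref{cor:impcharac} applied to $\PC_\Cantor \leqW \C_\IN \star \MLR$ for the converse, with the second equivalence reduced to $\WWKL \equivW \PC_\Cantor$ exactly as the paper does. (One minor remark: only the converse actually goes through Corollary~\ref{cor:impcharac}; your first reduction is, as it must be, a direct construction, since that corollary characterizes upper bounds on the implication rather than lower bounds.)
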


The previous examples may have given the impression that implications will generally not compute the more familiar Weihrauch degrees. 
In order to counteract it, we shall provide another one, which is a consequence of \cite[Theorem~5.1]{BBP12} or, more precisely, \cite[Theorem~2.1]{LRP15a} (a precursor is present in \cite{Pau11}).

\begin{proposition}
\label{prop:functionfraction}
Let $\mathbf{Y}$ be a computably admissible space and $f : \In\mathbf{X} \to \mathbf{Y}$ a (single-valued) function. Then $(\C_\Cantor\to f)\equivW( \C_{\{0,1\}} \rightarrow f) \equivW f$.
\begin{proof}
As $1 \leqW \C_{\{0,1\}}$, we obtain $(\C_{\{0,1\}}\to f)\leqW f$. Since $\to$ is anti-monotone in the first argument by Proposition~\ref{prop:timesstar}
it only remains to prove $f\leqW(\C_\Cantor\to f)$, but this follows from \cite[Theorem~2.1]{LRP15a}.
\end{proof}
\end{proposition}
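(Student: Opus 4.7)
The plan is to establish the equivalence by sandwiching everything between the two obvious outer bounds and then collapsing the sandwich using the cited absorption theorem. Concretely, I would first observe the chain
\[
(\C_\Cantor \rightarrow f) \;\leqW\; (\C_{\{0,1\}} \rightarrow f) \;\leqW\; (1 \rightarrow f) \;\equivW\; f,
\]
which follows purely from soft facts already in the paper: the first two inequalities come from anti-monotonicity of $\rightarrow$ in its first argument (Proposition~\ref{prop:timesstar}(3)) applied to $1 \leqW \C_{\{0,1\}} \leqW \C_\Cantor$, and the final equivalence is Observation~\ref{prop:constants}(3).

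It then suffices to close the loop by showing $f \leqW (\C_\Cantor \rightarrow f)$. For this I would use Corollary~\ref{cor:impcharac}: taking $g := (\C_\Cantor \rightarrow f)$, the trivial reduction $(\C_\Cantor \rightarrow f) \leqW (\C_\Cantor \rightarrow f)$ rephrases as
\[
f \;\leqW\; \C_\Cantor \star (\C_\Cantor \rightarrow f),
\]
so $f$ sits below a $\C_\Cantor$-prefixed compositional product. The content of \cite[Theorem~2.1]{LRP15a} (the fractal absorption result, with roots in \cite[Theorem~5.1]{BBP12}) is precisely that for any single-valued $f :\subseteq \mathbf{X} \to \mathbf{Y}$ whose codomain $\mathbf{Y}$ is computably admissible, the degree $\C_\Cantor$ can be absorbed from a factor on the left of $\star$: any reduction $f \leqW \C_\Cantor \star h$ upgrades to $f \leqW h$. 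Applying this absorption with $h := (\C_\Cantor \rightarrow f)$ yields $f \leqW (\C_\Cantor \rightarrow f)$, which combined with the chain above gives both claimed equivalences.

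The main obstacle is, of course, justifying the absorption step; but since the statement is being imported wholesale from \cite{LRP15a}, the only work is to check that $f$ matches its hypotheses (single-valued, codomain computably admissible), both of which hold by assumption. Everything else is a bookkeeping exercise in anti-monotonicity and the adjunction $f \leqW g \star h \iff (g \rightarrow f) \leqW h$ of Corollary~\ref{cor:impcharac}. Note also that the computably admissible hypothesis cannot be dropped: if it could, then Proposition~\ref{prop:uintfraction} (where $\mathbf{Y} = \uint$ is admissible but $f = \C_\uint$ is multi-valued) and Proposition~\ref{obs:almostec} would collapse in ways known to be false.
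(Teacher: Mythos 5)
Your proposal is correct and follows essentially the same route as the paper: the outer chain via anti-monotonicity of $\rightarrow$ and $(1\rightarrow f)\equivW f$, and then closing the loop with $f\leqW\C_\Cantor\star(\C_\Cantor\rightarrow f)$ plus the choice-elimination theorem of \cite{LRP15a}. The only difference is that you make explicit the intermediate step via Corollary~\ref{cor:impcharac}, which the paper leaves implicit.
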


The next result shows that the situation is different if $f$ is not single-valued.

\begin{corollary}
$\C_\IN\equivW(\C_\Cantor\to\C_\IR)$.
\end{corollary}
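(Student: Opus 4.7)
The plan is to prove both reductions separately. For $(\C_\Cantor\to\C_\IR)\leqW\C_\IN$, recall from Example~\ref{ex:distributivity} that $\C_\Cantor\star\C_\IN\equivW\C_\Cantor\times\C_\IN\equivW\C_\IR$, so in particular $\C_\IR\leqW\C_\Cantor\star\C_\IN$. Corollary~\ref{cor:impcharac} then yields $(\C_\Cantor\to\C_\IR)\leqW\C_\IN$ immediately.

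For the harder direction $\C_\IN\leqW(\C_\Cantor\to\C_\IR)$, I would first replace $\C_\IR$ by $\C_\IN$ on the right-hand side by establishing the auxiliary equivalence $(\C_\Cantor\to\C_\IR)\equivW(\C_\Cantor\to\C_\IN)$. By Corollary~\ref{cor:impcharac} this amounts to showing that for every $h$ the conditions $\C_\IR\leqW\C_\Cantor\star h$ and $\C_\IN\leqW\C_\Cantor\star h$ are equivalent. The forward implication follows trivially from $\C_\IN\leqW\C_\IR$. For the backward implication, assume $\C_\IN\leqW\C_\Cantor\star h$; monotonicity of $\star$ (Proposition~\ref{prop:timesstar}) gives $\C_\Cantor\star\C_\IN\leqW\C_\Cantor\star\C_\Cantor\star h$, and combining this with $\C_\IR\equivW\C_\Cantor\star\C_\IN$ (Example~\ref{ex:distributivity}), associativity of $\star$, and the absorption $\C_\Cantor\star\C_\Cantor\equivW\C_\Cantor$ (Theorem~\ref{thm:BBP12}) yields $\C_\IR\leqW\C_\Cantor\star h$ as required. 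Thus it suffices to prove $\C_\IN\leqW(\C_\Cantor\to\C_\IN)$.

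For this last step, given a non-empty $A\in\mathcal{A}(\IN)$ the oracle returns a pair $(H,x)$ with $H\in\mathcal{M}(\Cantor,\IN)$ strongly continuous, $x\in\mathcal{A}(\Cantor)$ non-empty, and $H(x)\In A$. The plan is to exploit the effective compactness of $\Cantor$: because $\{n\}$ is clopen in $\IN$, each set $V_n=\{p\in\Cantor:n\in H(p)\}$ is effectively open in $\Cantor$, and totality of $H$ on $x$ gives $x\In\bigcup_n V_n$. Combining the enumeration of the $V_n$'s with the enumeration of basic opens disjoint from $x$ provided by the negative information on $x$, effective compactness lets us enumerate candidate pairs $(w,n)$ such that $\Phi_{p_H}$ commits to the value $n$ on every input extending $w$; crucially, any such pair with $[w]\cap x\not=\emptyset$ then forces $n\in A$, and since $x\not=\emptyset$ at least one such pair must exist.

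The main obstacle is that the $\IN$-representation of the output requires a one-shot commitment while both ``$[w]\cap x=\emptyset$'' and ``$n\notin A$'' are only semi-decidable. The proposal is to maintain, at each stage $s$, the first enumerated candidate that has not yet been ruled out by either semi-decidable relation; the genuinely good candidate arising from any fixed $p\in x$ is never ruled out, so this ``first surviving'' index is bounded and stabilizes at some finite (albeit a~priori unknown) stage, and the corresponding integer lies in $A$. The delicate point is translating this stabilization into an actual computable commitment in the $\IN$-representation, which can be handled by dovetailing the enumerations of $V_n$'s, $x^c$, and $\IN\setminus A$ and using the finiteness of $H(x)$ together with effective compactness to ensure that the stable integer is output.
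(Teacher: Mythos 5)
Your upper bound $(\C_\Cantor\to\C_\IR)\leqW\C_\IN$ and your reduction of the problem to $\C_\IN\leqW(\C_\Cantor\to\C_\IN)$ are both correct and cleanly argued (the passage from $\C_\IR$ to $\C_\IN$ via the equality of the sets $\{h:\C_\IR\leqW\C_\Cantor\star h\}$ and $\{h:\C_\IN\leqW\C_\Cantor\star h\}$ is a nice way to package what the paper does implicitly). The gap is in the last step, and it sits exactly at the point you call ``delicate''. Your procedure enumerates candidate pairs $(w,n)$ and tracks the first one not yet discredited by the two semi-decidable tests ``$[w]\cap x=\emptyset$'' and ``$n\notin A$''. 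This yields only a \emph{limit-computable} selection: the surviving index stabilizes, but detecting stabilization is not computable, and since $\C_\IN\equivW\lim_\IN$, extracting the stable value is itself an application of $\C_\IN$ — the argument becomes $\C_\IN\leqW\C_\IN\star(\C_\Cantor\to\C_\IN)$, which is vacuous. The proposed repair does not close this: $H(x)=\bigcup_{p\in x}H(p)$ need not be finite (the cylindrified names $\langle q,r\rangle$ range over a non-compact set even when $x$ is a singleton), and even after using effective compactness to isolate a \emph{finite} set of candidates $n_1,\dots,n_k$ of which at least one lies in $A$, selecting a member of $A$ from such a list given only negative information on $A$ is precisely finite choice $\C_{\{1,\dots,k\}}$, which is not computable. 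Nothing in your sketch breaks this obstruction, because nothing in it exploits any constraint forcing the surviving candidates to agree.

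The missing ingredient is single-valuedness. The paper's route is to replace $\C_\IN$ by its single-valued representative $\lim_\IN$ (legitimate by Corollary~\ref{cor:composition-implication}) and then invoke Proposition~\ref{prop:functionfraction}, i.e., \cite[Theorem~2.1]{LRP15a}: for a single-valued $f$ with computably admissible codomain, $f\leqW\C_\Cantor\star h$ implies $f\leqW h$, hence $f\leqW(\C_\Cantor\to f)$. There the compactness argument succeeds because \emph{every} point of the compact witness set yields a name of the \emph{same} output, so each piece of positive information about the output can be confirmed by covering the witness set with finitely many cylinders, with no irrevocable choice among disagreeing candidates ever required. If you want a self-contained proof, run your construction on $\lim_\IN$ rather than on the raw multi-valued $\C_\IN$; as written, the final step does not go through.
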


This follows from $\C_\IR\equivW\C_\Cantor\star\C_\IN$, $\C_\IN\equivW\lim_\IN$, the fact that the latter problem is single-valued and \cite[Theorem~2.1]{LRP15a}.
We note that we have $\C_{\{0,1\}}\not\leqW(\C_\IN\to\C_\IR)\lW\C_\Cantor$ by Propositions~\ref{prop:densely-realized-computable} and \ref{prop:CN-densely-realized}.
The next result characterizes the degree of $( \C_{\{0,1\}} \rightarrow \C_\mathbb{R})$.

\begin{proposition}
$\C_{\{0,1\}} \lW \left ( \C_{\{0,1\}} \rightarrow \C_\mathbb{R} \right) \equivW \frac{1}{2}\C_{\uint} \times \C_\mathbb{N} \lW \C_\mathbb{R}$
\begin{proof}
The first strict reducibility follows from the equivalence $\left ( \C_{\{0,1\}} \rightarrow \C_\mathbb{R} \right) \equivW \frac{1}{2}\C_{\uint} \times \C_\mathbb{N}$, which we shall prove first. 
By Proposition \ref{prop:distributivity} (11) and $\C_\mathbb{R} \equivW \C_\uint \times \C_{\mathbb{N}}$ we find: 
$$\left ( \C_{\{0,1\}} \rightarrow \C_\uint \right) \times \left ( \C_{\{0,1\}} \rightarrow \C_\mathbb{N} \right) \leqW \left ( \C_{\{0,1\}} \rightarrow \C_\mathbb{R} \right).$$
Using Propositions~\ref{prop:uintfraction} and \ref{prop:functionfraction} as well as $\C_\IN\equivW\lim_\IN$ on the left hand side, 
this evaluates to $ \frac{1}{2}\C_{\uint} \times \C_\mathbb{N} \leqW \left ( \C_{\{0,1\}} \rightarrow \C_\mathbb{R} \right)$, thus providing one direction. 
For the other direction we use Corollary~\ref{cor:impcharac} and show $\C_\mathbb{R} \leqW \C_{\{0,1\}} \star \left ( \frac{1}{2}\C_{\uint} \times \C_\mathbb{N} \right )$ instead. 
For this, we can just use $\C_{\{0,1\}}$ to pick a correct solution among those offered by $\frac{1}{2}\C_{\uint}$.

It remains to show that $\frac{1}{2}\C_{\uint} \times \C_\mathbb{N} \lW \C_\mathbb{R}$. Assume $\frac{1}{2}\C_{\uint} \times \C_\mathbb{N} \equivW \C_\mathbb{R}$. Then in particular, $\C_\uint \leqW \frac{1}{2}\C_{\uint} \times \C_\mathbb{N}$. By \cite[Theorem 2.4]{LRP15a}, this would imply $\C_\uint \leqW \frac{1}{2}\C_{\uint}$, as $\C_\uint$ is a closed fractal. But this in turn contradicts \cite[Proposition 6.7]{BLRMP16a}.
\end{proof}
\end{proposition}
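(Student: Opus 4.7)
The plan is to first establish the middle equivalence $(\C_{\{0,1\}} \to \C_\mathbb{R}) \equivW \frac{1}{2}\C_\uint \times \C_\mathbb{N}$, since the two outer strict reductions follow from it with only minor additional work. The key lemma for the $\geqW$-direction of the equivalence is Proposition~\ref{prop:distributivity}~(11), which gives $(\mathbf{a} \to \mathbf{b}) \times (\mathbf{a} \to \mathbf{c}) \leqW \mathbf{a} \to (\mathbf{b} \times \mathbf{c})$. I instantiate it at $\mathbf{a} = \C_{\{0,1\}}$, $\mathbf{b} = \C_\uint$, $\mathbf{c} = \C_\mathbb{N}$ and combine with $\C_\mathbb{R} \equivW \C_\uint \times \C_\mathbb{N}$ on the right. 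On the left I then simplify the two factors: $(\C_{\{0,1\}} \to \C_\uint) \equivW \frac{1}{2}\C_\uint$ by Proposition~\ref{prop:uintfraction}, and $(\C_{\{0,1\}} \to \C_\mathbb{N}) \equivW \C_\mathbb{N}$ by Proposition~\ref{prop:functionfraction} applied to a single-valued representative of $\C_\mathbb{N}$ (namely $\lim_\mathbb{N}$).

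For the reverse direction of the equivalence I invoke Corollary~\ref{cor:impcharac}, which reduces the task to exhibiting $\C_\mathbb{R} \leqW \C_{\{0,1\}} \star (\frac{1}{2}\C_\uint \times \C_\mathbb{N})$. Given $A \in \mathcal{A}(\mathbb{R})$, I decompose it into an $\uint$-component $A_\uint$ and an $\mathbb{N}$-component $A_\mathbb{N}$ using the product decomposition $\C_\mathbb{R} \equivW \C_\uint \times \C_\mathbb{N}$; the first invocation produces a pair of candidates $(x_0, x_1)$ together with a witness $n \in A_\mathbb{N}$. From $A_\uint$ I can compute the non-empty closed set $\{i \in \{0,1\} : x_i \in A_\uint\} \in \mathcal{A}(\{0,1\})$ and apply $\C_{\{0,1\}}$ to pick the correct index $i$; the final output is $(x_i, n) \in A$.

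For $\C_{\{0,1\}} \lW (\C_{\{0,1\}} \to \C_\mathbb{R})$, the reduction follows from the chain $\C_{\{0,1\}} \leqW \C_\mathbb{N} \leqW \frac{1}{2}\C_\uint \times \C_\mathbb{N}$ via the canonical embedding $\{0,1\} \hookrightarrow \mathbb{N}$, while strictness is witnessed by $\C_\mathbb{N} \leqW \frac{1}{2}\C_\uint \times \C_\mathbb{N}$ combined with the strict separation $\C_{\{0,1\}} \lW \C_\mathbb{N}$ from Subsection~\ref{subsec:examples}. For $\frac{1}{2}\C_\uint \times \C_\mathbb{N} \lW \C_\mathbb{R}$, the forward reduction is immediate from $\frac{1}{2}\C_\uint \leqW \C_\uint$ together with $\C_\mathbb{R} \equivW \C_\uint \times \C_\mathbb{N}$. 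The main obstacle, as expected, is the strictness at this last step, which cannot be handled by the algebraic machinery internal to the paper; I invoke instead the fractal absorption result \cite[Theorem 2.4]{LRP15a}: assuming equality yields $\C_\uint \leqW \frac{1}{2}\C_\uint \times \C_\mathbb{N}$, and since $\C_\uint$ is a closed fractal the absorption theorem collapses this to $\C_\uint \leqW \frac{1}{2}\C_\uint$, contradicting \cite[Proposition 6.7]{BLRMP16a}.
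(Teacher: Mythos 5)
Your proposal is correct and follows essentially the same route as the paper: distributivity rule (11) plus Propositions~\ref{prop:uintfraction} and \ref{prop:functionfraction} for one direction, Corollary~\ref{cor:impcharac} with the $\C_{\{0,1\}}$-selection argument for the other, and the fractal absorption theorem of \cite{LRP15a} against \cite[Proposition 6.7]{BLRMP16a} for the final strictness. The only difference is that you spell out the first strict reduction (via $\C_{\{0,1\}} \lW \C_\mathbb{N} \leqW \frac{1}{2}\C_{\uint} \times \C_\mathbb{N}$), which the paper leaves implicit.
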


We note that this result gives us another example that proves the first part of Proposition~\ref{prop:distributivity} (18) with pointed degrees.

\begin{example}
\label{ex:distributivity2}
We obtain 
$(\C_{\{0,1\}}\to(\C_\IR\star\C_\IR))\equivW(\C_{\{0,1\}}\to\C_\IR)\lW\C_\IR\leqW\C_{\{0,1\}}\star(\frac{1}{2}\C_{\uint} \times \C_\mathbb{N})\leqW(\C_{\{0,1\}}\to\C_\IR)\star(\C_{\{0,1\}}\to\C_\IR)$.
\end{example}

The implication has successfully been used to link problems from recursion theory to those more closely related to analysis. 
Let $\COH : (\Cantor)^\mathbb{N} \mto \Cantor$ be defined via $X \in \COH((R_i)_{i \in \mathbb{N}})$ if $X\cap R_i$ is finite or $X\cap R_i^\compl$ is finite for all $i\in\IN$. 
Finally, we consider the problem $\PA:\Cantor\mto\Cantor$ of {\em Peano arithmetic}, where $q\in\PA(p)$ if $q$ is of PA degree relative to $p$. 
For some represented space $\mathbf{X} = (X, \delta_\mathbf{X})$, we let $\mathbf{X}' := (X, \delta_\mathbf{X} \circ \lim)$. 
We lift $'$ to multi-valued functions by $\left (f : \In\mathbf{X} \mto \mathbf{Y} \right )' := f :\In \mathbf{X}' \mto \mathbf{Y}$. 
Informally, $f'$ is $f$ with the input not being given explicitly, but only as the limit of a converging sequence of names. See \cite{BGM12} for details. Now we can state:

\begin{theorem}[{\name{B.}, \name{Hendtlass}, \name{Kreuzer} \cite{BHK17a}}]
\label{thm:COH}
\
\begin{enumerate}
\item $\COH \equivW ( \lim \rightarrow \C_\Cantor' )$,
\item $\PA\equivW (\C_\IN'\to\C_\Cantor)$.
\end{enumerate}
\end{theorem}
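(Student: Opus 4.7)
The plan is to apply Corollary~\ref{cor:impcharac} to reformulate each equivalence as two reductions, using $(g \to f) \leqW h \iff f \leqW g \star h$. For part (1), the identity $\COH \equivW (\lim \to \C_\Cantor')$ unfolds into: (a) $\C_\Cantor' \leqW \lim \star \COH$, yielding $(\lim \to \C_\Cantor') \leqW \COH$; and (b) for every $h$ satisfying $\C_\Cantor' \leqW \lim \star h$, the reduction $\COH \leqW h$ holds, yielding $\COH \leqW (\lim \to \C_\Cantor')$. Part (2) is approached in the same way, with $\lim$, $\C_\Cantor'$ and $\COH$ replaced by $\C_\IN'$, $\C_\Cantor$ and $\PA$ respectively.

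For the witness direction in (1), starting from a sequence $(T_n)_n$ of binary trees converging (in the jumped representation of $\mathcal{A}(\Cantor)$) to a tree $T$ with $[T] \neq \emptyset$, I would feed $\COH$ a uniformly computable family of subsets $R_\sigma \subseteq \IN$ indexed by finite strings $\sigma \in \{0,1\}^*$ that records whether $\sigma$ remains extendable in $T_n$ at stage $n$, so that any cohesive set $X$ forces each $R_\sigma$-membership to stabilize along $X$; a single call to $\lim$ then converts these cofinitely-often-true facts into an actual path of $[T]$. For the minimality direction, I would uniformly encode an arbitrary instance $(R_i)$ of $\COH$ as a converging sequence of closed subsets of $\Cantor$ whose points correspond precisely to cohesive sets of $(R_i)$; applying the hypothesized reduction $\C_\Cantor' \leqW \lim \star h$ and absorbing the outer $\lim$ into the encoding then produces $\COH \leqW h$. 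For part (2) the schema is identical: the witness direction uses that $\PA$ applied to (a code of) an infinite binary tree $T$ yields a $p$ of PA degree relative to $T$, from which a single $\C_\IN'$-correction locating a suitable shift recovers a path of $[T]$; the minimality direction encodes $\C_\IN'$-type inputs through $\C_\Cantor$-instances so that any reduction $\C_\Cantor \leqW \C_\IN' \star h$ collapses to $\PA \leqW h$.

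The main obstacle in both cases is the minimality direction. The encodings must be tight enough that the $\lim$ (resp.\ $\C_\IN'$) factor cannot absorb any of the combinatorial content genuinely belonging to $\COH$ (resp.\ $\PA$), so that the extracted reduction really lands on the target degree rather than on something strictly weaker. These delicate uniform constructions form the technical heart of the arguments in~\cite{BHK17a}, to which we refer for the full details.
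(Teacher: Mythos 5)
This theorem is not proved in the paper at all: it is stated as an imported result, with the proof living entirely in the cited reference \cite{BHK17a}. So there is no in-paper argument to compare your proposal against. Your high-level strategy --- unfolding each equivalence via Corollary~\ref{cor:impcharac} into a ``witness'' reduction ($\C_\Cantor' \leqW \lim \star \COH$, resp.\ $\C_\Cantor \leqW \C_\IN' \star \PA$) and a ``minimality'' direction --- is the natural one and matches how such characterizations are established elsewhere in Section~\ref{sec:applimp} (e.g., Propositions~\ref{prop:uintfraction} and~\ref{prop:MLR}). To that extent the proposal is consistent with what the paper does, namely defer to \cite{BHK17a}.

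As a standalone proof, however, the proposal has genuine gaps. The minimality directions, which you yourself identify as the technical heart, are only gestured at; ``absorbing the outer $\lim$ into the encoding'' is precisely the step that needs a concrete uniform construction and is not obviously possible. Moreover, the specific encoding you propose for part (1) is problematic: the collection of cohesive sets for a sequence $(R_i)$ is not a closed subset of $\Cantor$ (cohesiveness is an almost-everywhere condition, roughly $\Sigma^0_3$ in the $R_i$), so one cannot simply present it as an instance of $\C_\Cantor'$ ``whose points correspond precisely to cohesive sets''; the actual argument in \cite{BHK17a} has to work with a different class (related to $\WKL'$ and jump control) and exploit that a single application of $\lim$ on the left cannot recover the cohesiveness. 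Since you explicitly refer the reader to \cite{BHK17a} for these details, the proposal amounts to a plausible roadmap plus a citation --- acceptable given that the paper itself only cites the result, but not a proof.
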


\subsection{Irreducibility}

Based partially on the preceding examples, we can give an overview on irreducibility with respect to the various operations. We call a degree $\mathbf{a}$ $\odot$--irreducible for $\odot \in \{\Winf, \Wsup , \times, \star\}$ if $\mathbf{a} = \mathbf{b} \odot \mathbf{c}$ implies $\mathbf{a} = \mathbf{b}$ or $\mathbf{a} = \mathbf{c}$. A degree that is not $\odot$--irreducible is called $\odot$--reducible.

As mentioned in Subsection \ref{subsec:fractals}, the notion of fractal was introduced originally to prove $\Wsup$--irreducibility of certain operations. On the other hand, as the Weihrauch lattice has plenty of incomparable degrees, we also readily see examples of reducible degrees:

\begin{observation}
The degrees $0$, $1$, $\Wtop$, $\C_{\Cantor}$, $\C_\mathbb{N}$, $\lim$ are all $\Wsup$--irreducible, $\C_\Cantor \Wsup  \C_\mathbb{N}$ is $\Wsup$--reducible.
\end{observation}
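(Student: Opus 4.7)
The plan is to dispatch the easy degrees first and then invoke the fractal machinery from Subsection~\ref{subsec:fractals} for the others. The $\Wsup$--reducibility of $\C_\Cantor \Wsup \C_\IN$ is immediate from the definition: $\C_\Cantor$ and $\C_\IN$ are Weihrauch-incomparable (Subsection~\ref{subsec:examples}), hence both are strict lower bounds of their join, and the expression $\C_\Cantor \Wsup \C_\IN$ itself witnesses reducibility.

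For the irreducibility of $0$: if $0 \equivW \mathbf{b} \Wsup \mathbf{c}$, then $\mathbf{b} \leqW \mathbf{b} \Wsup \mathbf{c} \equivW 0$ forces $\mathbf{b} = 0$, and similarly $\mathbf{c} = 0$. For $\Wtop$: by the analysis in Subsection~\ref{subsec:specialdegrees}, $\mathbf{d} = \Wtop$ holds precisely when $\mathbf{d}$ has no realizers; if $\mathbf{b}, \mathbf{c} \neq \Wtop$ admit realizers $F_\mathbf{b}, F_\mathbf{c}$, then branching on the first bit of the input yields a realizer of $\mathbf{b} \Wsup \mathbf{c}$, so $\mathbf{b} \Wsup \mathbf{c} \neq \Wtop$.

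For $1$, $\C_\Cantor$, $\C_\IN$ and $\lim$ the plan is to exhibit each as a (total) fractal and then run the standard fractal argument. Fractality of $\C_\Cantor$, $\C_\IN$ and $\lim$ is taken from \cite{BBP12}; for $1$ the representative $\id_\Baire$ is a fractal since for every non-empty clopen $A \In \Baire$ any computable homeomorphism between $A$ and $\Baire$ witnesses $\id_\Baire|_A \equivW \id_\Baire$, while $\id_\Baire|_\emptyset \equivW 0$. The fractal argument itself runs as follows: suppose $f \equivW \mathbf{b} \Wsup \mathbf{c}$, represent $f$ by some $g : \In \Baire \mto \mathbf{Z}$ satisfying Definition~\ref{def:fractal}, and let computable $H, K$ witness $g \leqW \mathbf{b} \Wsup \mathbf{c}$. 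Continuity of $H$ implies that the first coordinate of $H(p)$ is fixed by a finite prefix of $p$, partitioning $\Baire$ into countably many clopen cylinders on each of which this coordinate is constant. Fractality supplies some such cylinder $A$ with $g|_A \equivW f$, and on $A$ the reduction specialises to $f \equivW g|_A \leqW \mathbf{b}$ or $f \equivW g|_A \leqW \mathbf{c}$; combined with $\mathbf{b}, \mathbf{c} \leqW f$ this gives $f \equivW \mathbf{b}$ or $f \equivW \mathbf{c}$.

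The only mildly delicate point is the verification that $1$ is a fractal and the bookkeeping around the clopen partition; beyond that the argument is essentially assembled from previously cited lemmas.
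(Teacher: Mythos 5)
Your proposal is correct and matches the paper's (implicit) argument: the Observation is stated without proof, but the surrounding text points to exactly your route --- trivial arguments for $0$, $1$, $\Wtop$, fractality (from \cite{BBP12}) plus the standard prefix/clopen-restriction argument for $\C_\Cantor$, $\C_\IN$, $\lim$, and the incomparability $\C_\IN\pipeW\C_\Cantor$ for the reducible example. One small repair: a non-computable clopen $A\In\Baire$ need not be computably homeomorphic to $\Baire$, so that particular justification can be vacuous, but $\id_\Baire|_A\equivW 1$ holds for every nonempty clopen $A$ anyway, since the restriction is computable and every nonempty open subset of $\Baire$ contains a computable point; hence $1$ is indeed a total fractal and your argument goes through.
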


For $\Winf$ we see a very different picture (by adapting the proof idea of \cite[Theorem 4.9]{HP13}):

\begin{theorem}
Only $0$ and $\Wtop$ are $\Winf$--irreducible, any other degree is $\Winf$--reducible.
\end{theorem}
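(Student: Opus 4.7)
Since $\Winf$ is the lattice infimum (Theorem~\ref{theo:lattice}), an equality $\mathbf{a} = \mathbf{b} \Winf \mathbf{c}$ automatically forces $\mathbf{a} \leqW \mathbf{b}, \mathbf{c}$, so to show that any $\mathbf{a}$ with $0 \lW \mathbf{a} \lW \Wtop$ is $\Winf$-reducible it suffices to exhibit $\mathbf{b}, \mathbf{c}$ \emph{strictly} above $\mathbf{a}$ whose meet is exactly $\mathbf{a}$.  I plan to pass through a pair of auxiliary degrees $\mathbf{p}, \mathbf{q}$ via the distributive identity
\[(\mathbf{a} \Wsup \mathbf{p}) \Winf (\mathbf{a} \Wsup \mathbf{q}) = \mathbf{a} \Wsup (\mathbf{p} \Winf \mathbf{q})\]
guaranteed by Theorem~\ref{theo:lattice}: it would suffice to construct $\mathbf{p}, \mathbf{q}$ satisfying (i)~$\mathbf{p}, \mathbf{q} \not\leqW \mathbf{a}$ and (ii)~$\mathbf{p} \Winf \mathbf{q} \leqW \mathbf{a}$, for then $\mathbf{b} := \mathbf{a} \Wsup \mathbf{p}$ and $\mathbf{c} := \mathbf{a} \Wsup \mathbf{q}$ satisfy $\mathbf{a} \lW \mathbf{b}, \mathbf{c}$ by (i) and $\mathbf{b} \Winf \mathbf{c} = \mathbf{a}$ by (ii).

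To construct $\mathbf{p}, \mathbf{q}$, I would fix a representative $f : \subseteq \Baire \mto \Baire$ of $\mathbf{a}$ (which has nonempty domain and admits realizers because $\mathbf{a} \neq 0, \Wtop$), and, following the proof idea of \cite[Theorem~4.9]{HP13}, choose two Turing-incomparable reals $p_0, q_0 \in \Baire$ that are mutually Turing-generic relative to all realizers of $f$.  I would then build $\mathbf{p}, \mathbf{q}$ as ``$f$-anchored'' versions of the Medvedev constants $c_{p_0}, c_{q_0}$ from Section~\ref{sec:starideals}: problems that on a fixed computable input require an output encoding $p_0$ or $q_0$ respectively.  The genericity condition on $p_0, q_0$ yields (i) directly.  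The construction is arranged dually so that on any input pair to $\mathbf{p} \Winf \mathbf{q}$, a single call to $f$ produces enough information to resolve, by computable post-processing, a valid answer on whichever side the adversary chooses, thereby delivering (ii).

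The main obstacle is precisely establishing (ii): since the adversary in the infimum may commit to either side, $\mathbf{p}$ and $\mathbf{q}$ must be designed symmetrically enough that the same $f$-output can be post-processed in two mutually compatible ways --- one producing $p_0$, one producing $q_0$ --- from the same data.  This ``mirroring'' between $\mathbf{p}$ and $\mathbf{q}$ is the combinatorial heart of the HP13 construction, and its verification requires careful tracking of how genericity of $p_0, q_0$ prevents either post-processing from being spoofed by $\mathbf{a}$ alone.  Degenerate cases (such as non-pointed $\mathbf{a}$ where the $c_{p_0}$-style encoding collapses against $\mathbf{a}$) would be handled by dualising to the $d_{(\cdot)}$-embedding of Section~\ref{sec:starideals}, reducing the problem to the analogous well-behaved meet-splitting inside the embedded Medvedev lattice.
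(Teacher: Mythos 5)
Your first step --- passing to $\mathbf{b} = \mathbf{a} \Wsup \mathbf{p}$ and $\mathbf{c} = \mathbf{a} \Wsup \mathbf{q}$ and using distributivity to reduce everything to finding $\mathbf{p},\mathbf{q}$ with $\mathbf{p},\mathbf{q}\not\leqW\mathbf{a}$ and $\mathbf{p}\Winf\mathbf{q}\leqW\mathbf{a}$ --- is sound, and it is compatible with the paper's proof (whose two factors $f\times g$ and $f\times c_{\{q\}}$ can be written in exactly that form). The genuine gap is that the pair $\mathbf{p},\mathbf{q}$ is never constructed, and the shape you prescribe for it provably cannot work. If $\mathbf{p}$ and $\mathbf{q}$ are ``constant-style'' problems whose instances are fixed computable points and whose solutions must encode $p_0$ and $q_0$ respectively, then $\mathbf{p}\Winf\mathbf{q}$ also has a computable instance, and a reduction $\mathbf{p}\Winf\mathbf{q}\leqW\mathbf{a}$ would force every realizer of $f$, queried at some point computable from that instance, to return data computing $p_0$ or $q_0$. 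Since there are only countably many computable queries, one can select for each a single $f$-value and choose $p_0,q_0$ Turing-above none of them --- and this genericity is exactly what you need for condition (i). So the tension you flag as ``the main obstacle'' is not a verification detail to be tracked carefully; for input-independent $\mathbf{p},\mathbf{q}$ it is an outright impossibility, and the vague appeal to ``$f$-anchoring'' and to the $d_{(\cdot)}$-embedding for degenerate cases does not supply the missing mechanism.

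The paper escapes this by making the two halves of the meet asymmetric and by letting the input of one half pay for the answer of the other. It takes $f\times g$ and $f\times c_{\{q\}}$, where $c_{\{q\}}:\{0^\IN\}\to\{q\}$ and $g:\{0^\IN,q\}\to\{0^\IN,r\}$ with $g(0^\IN)=0^\IN$, $g(q)=r$, and $q,r$ chosen so that they cannot be (Medvedev-)computed from the relevant $f$-values; this gives $f\lW f\times g$ and $f\lW f\times c_{\{q\}}$. For the reduction of the meet back to $f$, the second coordinate $u\in\{0^\IN,q\}$ of the $f\times g$-instance can be case-distinguished computably (the two possibilities differ at a fixed finite position): if $u=0^\IN$ one answers the $f\times g$ side, whose auxiliary coordinate $0^\IN$ is free; if $u=q$, the hard output $q$ demanded by the $f\times c_{\{q\}}$ side is literally present in the input of the \emph{other} side, so one answers that side instead. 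This cross-feeding between the two components of the meet is the actual trick, and it is precisely what a symmetric design with computable, mutually independent instances cannot reproduce. (You also leave out the --- easy but stated --- half of the theorem that $0$ and $\Wtop$ are themselves $\Winf$-irreducible.)
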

\begin{proof}
As the top element in the lattice, $\Wtop$ has to be irreducible with respect to the infimum. If $f$ and $g$ have non-empty domains, then so has $f \Winf g$ -- this shows $\Winf$--irreducibility of $0$.

Now consider some $f$ with $0 \lW f \lW \Wtop$. Without loss of generality we can assume that $f$ is of type $f:\In\Baire\mto\Baire$.
Pick some $p \in \dom(f)$, some $q \in \Baire$ such that $\{q\} \not\leqM f(p')$ for any $p' \in \dom(f)$ with $p' \leqT p$, and some $r \in \Baire$ such that $\{r\} \not\leqM f(q')$ 
for any $q' \in \dom(f)$ with $q' \leqT q$. 
Such points exist due to cardinality reasons.

Next, we define $g : \{0^\mathbb{N}, q\} \to \{0^\mathbb{N},r\}$ by $g(0^\mathbb{N}) = 0^\mathbb{N}$ and $g(q) = r$; and consider the constant function $c_{\{q\}} : \{0^\mathbb{N}\} \to \{q\}$. We find that $f \lW f \times g$ and $f \lW f \times c_{\{q\}}$, but $f \equivW (f \times g) \Winf (f \times c_{\{q\}})$.

To show the non-trivial direction $(f \times g) \Winf (f \times c_{\{q\}}) \leqW f$, consider the potential inputs on the left hand side: If faced with $((x,0^\mathbb{N}),(y,0^\mathbb{N}))$, then $(0, (z,0^\mathbb{N}))$ with $z\in f(x)$ is a valid answer. If faced with $((x,q),(y,0^\mathbb{N}))$, then $(1, (z,q))$ with $z\in f(y)$ is a valid answer -- and we have access to $q$ from the input.
\end{proof}

From the examples above we can conclude, together with simple observations on the constants:

\begin{corollary}
$\C_\Cantor$, $\PC_\Cantor$, $\C_\mathbb{R}$ are $\star$--reducible, whereas $0$, $1$ and $\Wtop$ are $\star$--irreducible.
\end{corollary}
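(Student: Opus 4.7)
The plan is to dispatch the three irreducibility claims and the three reducibility claims separately, each by a short argument built from machinery already in place.

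For the three constants I read off the conventions and Observation \ref{prop:constants}. Suppose $\mathbf{b}\star\mathbf{c}\equivW\Wtop$: by Corollary \ref{cor:decomposition}, $\mathbf{b}\star\mathbf{c}$ is realised by compositions of representatives of $\mathbf{b}$ and $\mathbf{c}$, and under the Axiom of Choice only $\Wtop$ itself fails to admit such a representative, so one of the factors must equal $\Wtop$. If $\mathbf{b}\star\mathbf{c}\equivW 0$, Proposition \ref{prop:order} gives $\mathbf{b}\Winf\mathbf{c}\leqW 0$; since the meet of two degrees each having a representative with non-empty domain still has a representative with non-empty domain, one of $\mathbf{b},\mathbf{c}$ must be $0$ (the $\Wtop$ case being excluded by the conventions). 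If $\mathbf{b}\star\mathbf{c}\equivW 1$, then the product is pointed, which forces both $\mathbf{b}$ and $\mathbf{c}$ to be pointed, i.e.\ $1\leqW\mathbf{b},\mathbf{c}$; monotonicity of $\star$ (Proposition \ref{prop:timesstar}) combined with $1\leqW\mathbf{c}$ yields $\mathbf{b}\equivW\mathbf{b}\star 1\leqW\mathbf{b}\star\mathbf{c}\equivW 1$, and symmetrically for $\mathbf{c}$.

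For $\C_\IR$ I exhibit the decomposition $\C_\IR\equivW\C_\Cantor\star\C_\IN$: the direction $\leqW$ uses $\C_\IR\equivW\C_\Cantor\times\C_\IN\leqW\C_\Cantor\star\C_\IN$ via Subsection \ref{subsec:examples} and Proposition \ref{prop:order}, while $\geqW$ is Theorem \ref{thm:BBP12} together with $\C_{\Cantor\times\IN}\equivW\C_\IR$ from Subsection \ref{subsec:examples}; both factors are strictly below $\C_\IR$ by the hierarchy recorded there. For $\C_\Cantor$ I use $\C_\Cantor\equivW\C_{\{0,1\}}\star(\C_{\{0,1\}}\to\C_\Cantor)$: the $\geqW$ direction is immediate from Corollary \ref{cor:impcharac}, and for $\leqW$ pointedness of $\C_\Cantor$ yields $\C_\Cantor\leqW\C_{\{0,1\}}\star\C_\Cantor$, hence $(\C_{\{0,1\}}\to\C_\Cantor)\leqW\C_\Cantor$ by Corollary \ref{cor:impcharac}, whence monotonicity of $\star$, Theorem \ref{thm:BBP12}, and the computable homeomorphism $\{0,1\}\times\Cantor\cong\Cantor$ give $\C_{\{0,1\}}\star(\C_{\{0,1\}}\to\C_\Cantor)\leqW\C_{\{0,1\}\times\Cantor}\equivW\C_\Cantor$. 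The factor $\C_{\{0,1\}}$ is strictly below $\C_\Cantor$ by Subsection \ref{subsec:examples}, and the strictness $(\C_{\{0,1\}}\to\C_\Cantor)\lW\C_\Cantor$ follows from $\C_{\{0,1\}}\not\leqW(\C_{\{0,1\}}\to\C_\Cantor)$ (from \cite{BLRMP16a}, recalled in the text preceding this corollary) combined with $\C_{\{0,1\}}\leqW\C_\Cantor$. The case $\PC_\Cantor$ is entirely parallel via $\PC_\Cantor\equivW\C_{\{0,1\}}\star(\C_{\{0,1\}}\to\PC_\Cantor)$, using the $\PC$-analog $\PC_\mathbf{X}\star\PC_\mathbf{Y}\leqW\PC_{\mathbf{X}\times\mathbf{Y}}$ of Theorem \ref{thm:BBP12} (with the product probability measure) together with $\PC_{\Cantor\times\Cantor}\equivW\PC_\Cantor$.

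The main obstacle is to secure the strict inequality $(\C_{\{0,1\}}\to\PC_\Cantor)\lW\PC_\Cantor$: in the $\C_\Cantor$ case the required non-reduction $\C_{\{0,1\}}\not\leqW(\C_{\{0,1\}}\to\C_\Cantor)$ is supplied by \cite{BLRMP16a}, whereas for $\PC_\Cantor$ one needs an adapted fractal/non-uniformity argument in the positive-measure setting (which by the same schema then rules out $\C_{\{0,1\}}\leqW(\C_{\{0,1\}}\to\PC_\Cantor)$ and thereby yields strictness).
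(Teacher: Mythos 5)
Your decompositions for the reducibility claims are essentially the paper's: for $\C_\IR$ you use $\C_\Cantor\star\C_\IN$, and for $\C_\Cantor$ your $\C_{\{0,1\}}\star(\C_{\{0,1\}}\to\C_\Cantor)$ coincides with the paper's $\C_{\{0,1\}}\star\frac{1}{2}\C_{\uint}$ via Proposition~\ref{prop:uintfraction}; the arguments for $0$ and $\Wtop$ are also fine. However, there are two genuine gaps. First, in the irreducibility of $1$ you claim that pointedness of $\mathbf{b}\star\mathbf{c}$ forces \emph{both} factors to be pointed. That is false for $\star$: a computable point $(h,x)$ in $\dom(f\star g)$ only yields a computable point of $\dom(g)$ (the inner factor); the strongly computable $h$ need only carry the possibly non-computable outputs of $g$ into $\dom(f)$, so the outer factor need not be pointed -- e.g.\ $d_p\star c_p$ is pointed while $d_p$ is not. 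Since your subsequent monotonicity step uses $1\leqW\mathbf{b}$ as well as $1\leqW\mathbf{c}$, the argument only delivers $\mathbf{c}$ pointed and $\mathbf{b}\leqW 1$, not $\mathbf{b}\equivW 1$. The paper argues differently: the inner factor is pointed and computable, hence $\equivW 1$, and then neutrality of $1$ for $\star$ forces the outer factor to be $\equivW 1$ as well.

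Second, the strictness $(\C_{\{0,1\}}\to\PC_\Cantor)\lW\PC_\Cantor$, which you leave open pending ``an adapted fractal argument'', does not require any new work: since $\PC_\Cantor\leqW\C_\Cantor$ and $\to$ is monotone in its second argument (Proposition~\ref{prop:timesstar}), we get $(\C_{\{0,1\}}\to\PC_\Cantor)\leqW(\C_{\{0,1\}}\to\C_\Cantor)$, so the fact $\C_{\{0,1\}}\not\leqW(\C_{\{0,1\}}\to\C_\Cantor)$ from \cite{BLRMP16a} -- which you already invoke for the $\C_\Cantor$ case -- gives $\C_{\{0,1\}}\not\leqW(\C_{\{0,1\}}\to\PC_\Cantor)$, while $\C_{\{0,1\}}\leqW\PC_\Cantor$. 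Your appeal to a $\PC$--analogue of Theorem~\ref{thm:BBP12} for $\C_{\{0,1\}}\star\PC_\Cantor\leqW\PC_\Cantor$ is not in the paper but is correct (the fibered set $\{(i,x):x\in A,\ i\in B_x\}$ has measure at least $\frac{1}{2}\mu(A)$), and the paper itself asserts the equivalence $\PC_\Cantor\equivW\C_{\{0,1\}}\star(\C_{\{0,1\}}\to\PC_\Cantor)$ without further detail, so this part is acceptable.
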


The positive statements follow from $\C_\IR\equivW\C_\Cantor\star\C_\IN$, $\C_\Cantor\equivW\C_{\{0,1\}}\star\frac{1}{2}\C_{[0,1]}$ and $\PC_\Cantor\equivW\C_{\{0,1\}}\star(\C_{\{0,1\}}\to\PC_\Cantor)$.
The proof that $1$ is $\star$--irreducible is included in the next proof. 
We can also classify precisely which degrees of the form $c_A$ are $\star$--reducible:

\begin{proposition}
The following are equivalent for $A \subseteq \Baire$:
\begin{enumerate}
\item $A\equivM\emptyset$ or $A \equivM \{0^\mathbb{N}\}$ or $A \equivM \{p \in \Baire : 0^\mathbb{N} \lT p\} := \textrm{NC}$
\item $c_A$ is $\star$--irreducible.
\end{enumerate}
\end{proposition}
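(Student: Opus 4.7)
I plan to prove both implications, using throughout the Medvedev embedding $c_{(\cdot)}:\mathfrak{M}\to\Wei$ together with the observation that for any $A,B\subseteq\Baire$ the compositional product $c_A\star c_B$ coincides with $c_{A\times B}$, because $c_A$ and $c_B$ have trivial input and the continuation in $\star$ collapses to a pairing. For $(1)\Rightarrow(2)$ I handle the three cases in turn. If $A\equivM\emptyset$, then $c_A\equivW\Wtop$, and by Corollary~\ref{cor:decomposition} any composition $\mathbf{b}\star\mathbf{c}$ with $\mathbf{b},\mathbf{c}\neq\Wtop$ still admits a realizer, so $\mathbf{b}\star\mathbf{c}=\Wtop$ forces one factor to equal $\Wtop$. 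If $A\equivM\{0^{\mathbb{N}}\}$, then $c_A\equivW 1$: from $1\equivW\mathbf{b}\star\mathbf{c}$ and Corollary~\ref{cor:starcharac} I obtain $\mathbf{b},\mathbf{c}\leqW 1$; for $\mathbf{b}\star\mathbf{c}$ to be pointed (as $1$ is) both $\mathbf{b}$ and $\mathbf{c}$ must themselves be pointed, and a pointed degree below $1$ is $\equivW 1$.

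The substantive case of $(1)\Rightarrow(2)$ is $A\equivM\textrm{NC}$. Suppose $c_{\textrm{NC}}\equivW\mathbf{b}\star\mathbf{c}$; by Corollary~\ref{cor:decomposition} pick representatives $f\equivW\mathbf{b}$ and $g\equivW\mathbf{c}$ with $f\circ g\equivW\mathbf{b}\star\mathbf{c}$. A reduction $c_{\textrm{NC}}\leqW f\circ g$ yields computable $H$ and $K$ such that for every realizer $F$ of $f\circ g$ and every $p\in\Baire$, $K\langle p,F(H(p))\rangle$ is a name of a non-computable sequence. Taking $p$ computable makes $H(p)$ computable, and since a computable $K$ applied to a computable argument can only produce a computable output, no realizer $F$ can take a computable value at $H(p)$; equivalently $(f\circ g)(H(p))=\bigcup_{y\in g(H(p))}f(y)\subseteq\textrm{NC}$. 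I then split on whether $g(H(p))$ contains a computable point: if some $y_0\in g(H(p))$ is computable, then $f(y_0)\subseteq\textrm{NC}$ and the computable input $y_0$ to $f$ witnesses $c_{\textrm{NC}}\leqW\mathbf{b}$; otherwise $g(H(p))\subseteq\textrm{NC}$ and the computable input $H(p)$ to $g$ witnesses $c_{\textrm{NC}}\leqW\mathbf{c}$. Together with $\mathbf{b},\mathbf{c}\leqW c_{\textrm{NC}}$, one factor equals $c_{\textrm{NC}}$.

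For $(2)\Rightarrow(1)$ I argue the contrapositive. Suppose $A$ is not equivalent to any of the three sets: then $A\neq\emptyset$, $A\subseteq\textrm{NC}$, and $A\not\equivM\textrm{NC}$, so $c_{\textrm{NC}}\lW c_A$ strictly. I introduce a non-pointed auxiliary degree $\mathbf{b}_A:\textrm{NC}\mto\Baire$ defined by $\mathbf{b}_A(y):=A$ for every $y\in\textrm{NC}$. A single call to $c_A$ supplies some $a\in A$ which can serve as the output regardless of the input $y$, so $\mathbf{b}_A\leqW c_A$; but $\mathbf{b}_A$ has no computable point in its domain while $c_A$ is pointed, giving $c_A\not\leqW\mathbf{b}_A$ and hence $\mathbf{b}_A\lW c_A$ strictly. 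The decomposition $c_A\equivW\mathbf{b}_A\star c_{\textrm{NC}}$ then works: in one direction, $c_{\textrm{NC}}$ produces an arbitrary non-computable $r\in\textrm{NC}$ which is a valid input to $\mathbf{b}_A$ and yields some $a\in A$; in the other, any $a\in A$ obtained from $c_A$ furnishes a valid composite output $(a,a)$, since $a\in A\subseteq\textrm{NC}$ and $a\in A=\mathbf{b}_A(a)$. This exhibits $c_A$ as $\star$-reducible via the pair $(\mathbf{b}_A,c_{\textrm{NC}})$; the conceptual point, and the main thing to get right, is that the non-pointedness of $\mathbf{b}_A$ is precisely the slack that keeps $\mathbf{b}_A\lW c_A$ strict while still allowing the product with $c_{\textrm{NC}}$ to recover the full strength of $c_A$.
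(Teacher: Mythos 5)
Your reverse direction $\neg(1)\Rightarrow\neg(2)$ is essentially the paper's argument: your $\mathbf{b}_A$ is exactly the paper's $e_{\textrm{NC}}^A$, and the verification that $c_A\equivW \mathbf{b}_A\star c_{\textrm{NC}}$ with both factors strictly below $c_A$ is correct (modulo the cosmetic point that the composite output is $(r,a)$ with $r$ coming from the intermediate function's $\Baire$-component, not literally $(a,a)$).

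The forward direction, however, leans on a false general principle: that $\mathbf{b},\mathbf{c}\leqW \mathbf{b}\star\mathbf{c}$, and that both factors of a pointed compositional product are pointed. Monotonicity only gives $\mathbf{b}\equivW\mathbf{b}\star 1\leqW\mathbf{b}\star\mathbf{c}$ when the \emph{inner} factor $\mathbf{c}$ is pointed, and $\mathbf{c}\leqW\mathbf{b}\star\mathbf{c}$ when the \emph{outer} factor $\mathbf{b}$ is pointed; and only the inner factor of a pointed product need be pointed. Your own construction in the other direction is the counterexample: $c_{\textrm{NC}}\equivW\mathbf{b}_{\textrm{NC}}\star c_{\textrm{NC}}$ is pointed while $\mathbf{b}_{\textrm{NC}}$ is not, and $1\not\leqW d_{\{p\}}\star 1\equivW d_{\{p\}}$ (for non-computable $p$) shows that $\mathbf{c}\leqW\mathbf{b}\star\mathbf{c}$ can fail outright. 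This breaks two of your three cases. For $c_A\equivW 1$ you assert "both factors are pointed and $\leqW 1$" without proof; the crux of the paper's argument there is that the inner factor is pointed \emph{and computable}, hence $\equivW 1$, after which the outer factor equals outer~$\star\,1\equivW 1$ -- that computability step is missing from your account. For $A\equivM\textrm{NC}$, your sub-case (a) is fine once you note that $\mathbf{c}$ is pointed (a computable point of $\dom(f\star g)$ supplies one for $\dom(g)$), which yields $\mathbf{b}\leqW\mathbf{b}\star\mathbf{c}\equivW c_{\textrm{NC}}$; but in sub-case (b) you derive $c_{\textrm{NC}}\leqW\mathbf{c}$ and then need $\mathbf{c}\leqW c_{\textrm{NC}}$, i.e.\ $\mathbf{c}\leqW\mathbf{b}\star\mathbf{c}$, which would require $\mathbf{b}$ to be pointed -- exactly what is unavailable in that sub-case. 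The paper's case split (on whether $g$ maps every computable point to a computable point) is arranged so that in the sub-case where $f$ does the non-computable work, $f$ is additionally seen to be pointed; your split does not recover this, so the step ``together with $\mathbf{b},\mathbf{c}\leqW c_{\textrm{NC}}$'' is a genuine gap rather than an omitted routine verification.
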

\begin{proof}
``$(1) \Rightarrow (2)$'' 
If $A\equivM\emptyset$ then $A=\emptyset$ and $c_\emptyset\equivW\Wtop$.
Assume $\Wtop\equivW f\star g$. Then by definition $f\equivW\Wtop$ or $g\equivW\Wtop$.
If $A\equivM\{0^\IN\}$ then $c_A\equivW 1$.
Assume $1 \equivW f \star g$. Since $f \star g$ is pointed, also $g$ has to be. In addition, $g$ has to be computable -- but that implies $g \equivW 1$. As $1$ is the neutral element for $\star$, we obtain $f \equivW 1$.
Now let $A\equivM\mathrm{NC}$ and assume $c_\textrm{NC} \equivW f \star g$. As above, $g$ is pointed. 
If $g$ maps a computable point to a non-computable point, then $c_\textrm{NC} \leqW g$ follows, so by monotonicity of $\star$, we would have $g \equivW c_\textrm{NC}$. 
If $g$ maps every computable point to a computable point, then $f$ has to be pointed, and to map a computable point to a non-computable point -- and $f \equivW c_\textrm{NC}$ would follow.

``$\neg (1) \Rightarrow \neg (2)$'' Consider $e_\textrm{NC}^A : \textrm{NC} \mto\Baire,p\mapsto A$ for $A\not\equivM\emptyset$, $A\not\equivM\{0^\IN\}$ and $A\not\equivM\textrm{NC}$. 
Note that $c_A \equivW e_\textrm{NC}^A \star c_\textrm{NC}$, $e_\textrm{NC}^A \lW c_A$ and $c_\textrm{NC} \lW c_A$.
\end{proof}

While we have plenty of examples of $\star$--reducible degrees, we can only offer $0$, $1$, $c_\textrm{NC}$ and $\Wtop$ as $\times$--irreducible degrees for now. For many degrees such as $\C_{\{0,1\}}$, $\lpo$, $\C_\mathbb{N}$ or $\C_\Cantor$,  the question of their status remains open.\footnote{Kihara (personal communication) pointed out that $\lim$ is $\times$--reducible: one can split the Chaitin $\Omega$ operator $\Omega:\Cantor\to\IR$ \cite{Nie09} into its even and odd bits $\Omega_0$ and $\Omega_1$,
in order to obtain two parts that are relative random to each other and hence incomparable and whose product $\Omega_0\times\Omega_1$ computes $\lim$.}

\hide{\item[$1. \rightarrow 3.$] The first part is already covered above, as $d_{\{0^\mathbb{N}\}} \equivW c_{\{0^\mathbb{N}\}} \equivW 1$. So assume that $d_\textrm{NC} \equivW f \star g$. Both $f$ and $g$ have to be computable in this situation. As $f \leqW d_\textrm{NC}$, $g \leqW d_\textrm{NC}$, we find that $\dom(f), \dom(g) \in \{\{0^\mathbb{N}\}, \textrm{NC}\}/\equiv_M$. If $g$ is pointed, then $f$ cannot be pointed. Thus, we find the desired claim that $d_\textrm{NC} \equivW f$ or $d_\textrm{NC} \equivW g$.}

\section{Appendix: Effectively traceable spaces}
\label{sec:effectively-traceable}

The purpose of this section is to discuss the class of {\em effectively traceable} represented spaces, which turn out to yield particularly well behaved spaces $\mathcal{M}(\mathbf{X},\mathbf{Y})$. 

\begin{definition}
\label{def:efftraceable}
We call a representation $\delta_\mathbf{X} : \subseteq \Baire \to X$ \emph{effectively traceable}, if there is a computable function $T :\subseteq \Baire \times \Baire \to \Baire$ with 
$\{T(p,q) : q \in \Baire \} = \delta_\mathbf{X}^{-1}(\delta_\mathbf{X}(p))$ for all $p \in \dom(\delta_\mathbf{X})$ and $\dom(T)=\dom(\delta_\mathbf{X})\times\Baire$.
\end{definition}

Being effectively traceable is very closely related to being effectively open and effectively fiber-overt.\footnote{Both being effectively fiber-overt, as well as its dual notion, being effectively fiber-compact, were studied by \name{Kihara} and \me\  in \cite[Section 7]{KP14}. Spaces admitting effectively fiber-compact representations are precisely the subspaces of computable metric spaces, whereas any effective topological space has an effectively-fiber overt representation, and conversely, every space with an effectively fiber-overt representation is countably-based.}
We note that we assume that every represented space is endowed with the final topology induced by its representation.
By $\overline{A}$ we denote the topological {\em closure} of a set $A\subseteq X$ in a topological space $X$.
By $\mathcal{V}(\mathbf{X})$ we denote the space of closed subsets of a represented space $\mathbf{X}$ with respect to positive information and
by $\mathcal{O}(\mathbf{X})$ we denote the space of open subsets (i.e., the topology) of a represented space $\mathbf{X}$, represented itself with the usual representation (see \cite{Pau16} for more information on these concepts).

\begin{definition}
\label{def:fiber-overt-open}
Let $\delta_\mathbf{X} : \subseteq \Baire \to X$ be a representation.
\begin{enumerate}
\item $\delta_\mathbf{X}$ is called \emph{effectively fiber-overt}, if $\overline{\delta_\mathbf{X}^{-1}} : \mathbf{X} \to \mathcal{V}(\Baire),x\mapsto\overline{\delta_{\mathbf{X}}^{-1}\{x\}}$ is computable,
\item $\delta_\mathbf{X}$ is called {\em effectively open}, if $\mathcal{O}(\delta_{\mathbf X}):\mathcal{O}(\Baire)\to\mathcal{O}(\mathbf{X}),U\mapsto\delta_{\mathbf{X}}(U)$ is computable.
\end{enumerate}
\end{definition}

In fact, we can prove now that every effectively traceable representation is effectively open and effectively fiber-overt.

\begin{proposition}
\label{prop:eff-traceable-overt-open}
If $\delta_\mathbf{X}$ is effectively traceable, then it is effectively open and effectively fiber-overt.
\end{proposition}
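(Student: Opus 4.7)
The plan is to extract both properties directly from the tracing function $T$. The key observation is that, for any $p \in \dom(\delta_\mathbf{X})$, the section $T(p, \cdot) : \Baire \to \Baire$ is continuous (indeed computable uniformly in $p$) and has image precisely $\delta_\mathbf{X}^{-1}\{\delta_\mathbf{X}(p)\}$. The rest of the argument amounts to combining this with the standard fact that a continuous image of a dense set is dense in the image.

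For effective fiber-overtness I would fix once and for all a computable enumeration $(q_n)_{n \in \IN}$ of a dense subset of $\Baire$, for instance the eventually zero sequences. Given a name $p$ of $x$, the sequence $(T(p, q_n))_{n \in \IN}$ is then computable uniformly in $p$. By continuity of $T(p, \cdot)$ this sequence is dense in $T(p, \Baire) = \delta_\mathbf{X}^{-1}\{x\}$, and therefore dense in $\overline{\delta_\mathbf{X}^{-1}\{x\}}$. Such a computable dense enumeration is precisely a $\mathcal{V}(\Baire)$-name for the closed set, giving the computability of $\overline{\delta_\mathbf{X}^{-1}}$.

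For effective openness, given $U \in \mathcal{O}(\Baire)$ I would show that the condition $x \in \delta_\mathbf{X}(U)$ is semi-decidable uniformly in a $\delta_\mathbf{X}$-name of $x$ and the $\mathcal{O}(\Baire)$-name of $U$; this captures $\mathcal{O}(\delta_\mathbf{X})(U) = \delta_\mathbf{X}(U)$ as an element of $\mathcal{O}(\mathbf{X})$. For any $\delta_\mathbf{X}$-name $p$ of $x$, surjectivity of $T(p, \cdot)$ onto $\delta_\mathbf{X}^{-1}\{x\}$ yields
\[ x \in \delta_\mathbf{X}(U) \iff \delta_\mathbf{X}^{-1}\{x\} \cap U \neq \emptyset \iff (\exists q \in \Baire)\; T(p, q) \in U. \]
The right-hand side is witnessed by a $\Sigma^0_1$ search: dovetail over the same dense sequence $(q_n)_n$ and semi-decide $T(p, q_n) \in U$, which is feasible since $T$ is computable and $U$ is open. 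If any $q \in \Baire$ is a witness, then by continuity of $T(p, \cdot)$ some $q_n$ is also a witness, so the search halts exactly when $x \in \delta_\mathbf{X}(U)$.

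I do not expect a substantial obstacle: both properties amount to ``$\exists q \in \Baire$'' quantifications which, thanks to the continuity of $T(p, \cdot)$, collapse to quantifications over a fixed computable dense sequence. The only real care needed is to match the intended representations of $\mathcal{V}(\Baire)$ (a computable dense enumeration) and of $\mathcal{O}(\mathbf{X})$ (semi-decidable membership from names of the point), both of which are standard in the framework of the paper.
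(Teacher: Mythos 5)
Your proof is correct and follows essentially the same route as the paper's: the effective openness part uses the identical equivalence $x \in \delta_\mathbf{X}(U) \iff (\exists q)\, T(p,q) \in U$ with a semi-decidable search, and your fiber-overtness argument via the computable dense sequence $(T(p,q_n))_n$ is just an explicit unpacking of the fact the paper cites, namely that overtness of $\{p\}\times\Baire$ is preserved under the computable image $T$. The extra care you take in justifying that the existential quantifier over $\Baire$ collapses to one over the fixed dense sequence (via continuity of $T(p,\cdot)$ and openness of $U$) is exactly the content of the paper's ``exhaustive search'' remark.
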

\begin{proof}
Let $\delta_\mathbf{X}$ be effectively traceable via a computable function $T$.
That $\delta_\mathbf{X}$ is effectively fiber-overt follows since closed sets of the form $\{p\}\times\IN^\IN$ are overt and this property is preserved by the image of the computable function $T$
(e.g., via \cite[Proposition 7.4 (7)]{Pau16}).
We still need to show that $\delta_{\mathbf{X}}$ is effectively open. Since $\delta_\mathbf{X}$ is effectively traceable, we obtain for open $U\subseteq\Baire$ and $p\in\dom(\delta_\mathbf{X})$
\[\delta_\mathbf{X}(p)\in\delta_\mathbf{X}(U)\iff(\exists q \in U)\;\delta_\mathbf{X}(q) = \delta_\mathbf{X}(p)\iff (\exists q \in \Baire)\;T(p,q) \in U.\]
The latter property can be easily semi-decided using exhaustive search and hence the former property can be semi-decided. 
This implies that $\delta_\mathbf{X}$ is effectively open.
\end{proof}

We simply say that $\delta_\mathbf{X}$ is {\em open}, if the map $\mathcal{O}(\delta_\mathbf{X})$ given in Definition~\ref{def:fiber-overt-open} (2) is well-defined. 
We note that in this situation $U_n:=\delta_\mathbf{X}(w_n\Baire)$ defines a total numbering of a base $\beta$ of the topology $\mathcal{O}(\mathbf{X})$, 
where we use some standard enumeration $w:\IN\to\IN^*$ of the finite words of natural numbers. We say that this numbering $U:\IN\to\beta,n\mapsto U_n$ is the
numbering {\em induced} by $\delta_\mathbf{X}$.

\begin{corollary}
If $\delta_\mathbf{X}$ is open, then $\mathbf{X}$ is a countably based topological space.
\end{corollary}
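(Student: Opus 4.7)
The plan is to justify the base claim stated in the paragraph just preceding the corollary, since once that is in place, countable basedness follows at once from the countability of the standard enumeration $w:\IN\to\IN^*$.

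First I would recall that the cylinders $w_n\Baire$ form a countable base for $\Baire$. Because $\delta_\mathbf{X}$ is open, each $U_n := \delta_\mathbf{X}(w_n\Baire)$ is an open subset of $\mathbf{X}$, so $\{U_n : n\in\IN\}$ is a countable family in $\mathcal{O}(\mathbf{X})$. It remains to verify the base property.

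To that end, fix an arbitrary open $V\In X$ and a point $x\in V$. The topology on $\mathbf{X}$ is the final topology along $\delta_\mathbf{X}$, so $\delta_\mathbf{X}^{-1}(V)$ is open in $\dom(\delta_\mathbf{X})$ as a subspace of $\Baire$; that is, there exists an open $W\In\Baire$ with $\delta_\mathbf{X}^{-1}(V)=W\cap\dom(\delta_\mathbf{X})$. Pick any $p\in\delta_\mathbf{X}^{-1}\{x\}$. Then $p\in W$, hence there is some $n$ with $p\in w_n\Baire\In W$. Taking the image under $\delta_\mathbf{X}$ (and remembering that $\delta_\mathbf{X}(w_n\Baire)$ means $\delta_\mathbf{X}(w_n\Baire\cap\dom(\delta_\mathbf{X}))$ because of partiality) gives $x\in U_n\In\delta_\mathbf{X}(W\cap\dom(\delta_\mathbf{X}))\In V$. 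Thus $\{U_n\}_{n\in\IN}$ is a countable base of $\mathcal{O}(\mathbf{X})$, proving the corollary.

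The argument is entirely routine; the only subtlety is being careful about the partiality of $\delta_\mathbf{X}$ when passing from $w_n\Baire\In W$ to $U_n\In V$, which is handled by the observation above that images are taken with the domain implicitly intersected. I do not anticipate any real obstacle.
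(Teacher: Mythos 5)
Your proof is correct and follows exactly the route the paper intends: the corollary is stated as an immediate consequence of the preceding remark that $U_n:=\delta_\mathbf{X}(w_n\Baire)$ enumerates a base of $\mathcal{O}(\mathbf{X})$, and your argument via the final topology and the cylinder base of $\Baire$ is the standard justification of that remark, with the partiality of $\delta_\mathbf{X}$ handled properly.
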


We recall from \cite{Wei00} that an {\em effective topological space} is a topological $T_0$--space $(X, \tau)$ together with some partial enumeration $U : \subseteq \mathbb{N} \to B$ of a subbasis $B$ of the topology $\tau$. The associated {\em standard representation} is given by 
\[\delta_U(p) = x:\iff\{n \in \dom(U) : x \in U_n\} = \range(p).\] 
As observed by \name{B.} in \cite[Lemma~7.2]{Bra03}:

\begin{lemma}
There exists a computable function $T:\Baire\times\Baire\to\Baire$ with the property that $T(\{p\}\times\IN^\IN)=\{q\in\Baire:\range(q)=\range(p)\}$ for all $p\in\IN^\IN$.
\end{lemma}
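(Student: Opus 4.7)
The plan is to define $T$ explicitly by letting $q$ simultaneously carry a candidate output and an internal certificate that this candidate already enumerates all of $\range(p)$; when the certificate is detected to fail, we fall back on the canonical enumeration of $\range(p)$ given by $p$ itself. Using the standard pairing on $\Baire$, decode $q = \langle q_0, q_1 \rangle$ and let
\[
k^* := \min\bigl\{k \in \IN : p(q_0(q_1(k))) \neq p(k)\bigr\},
\]
with the convention $\min \emptyset = \infty$. Then set
\[
T(p,q)(n) := \begin{cases} p(q_0(n)) & \text{if } n < k^*, \\ p(n - k^*) & \text{if } n \geq k^*. \end{cases}
\]

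The verification I would carry out splits into three steps. First, at each $n$ both the predicate ``$k^* \leq n$'' and, if it holds, the exact value of $k^*$ are determined by finitely many values of $p$ and $q$; hence $T$ is computable and total. Second, every output value is of the form $p(m)$, so $\range(T(p,q)) \subseteq \range(p)$. For the reverse inclusion, if $k^* = \infty$ then for each $k$ the position $n := q_1(k)$ gives $T(p,q)(n) = p(q_0(q_1(k))) = p(k)$, while if $k^* < \infty$ then from position $k^*$ onwards the output is literally $p(0), p(1), p(2), \ldots$. In both cases $\range(p) \subseteq \range(T(p,q))$. Third, for surjectivity, given any $r$ with $\range(r) = \range(p)$, pick $q_0(n) \in p^{-1}(\{r(n)\})$ and $q_1(k) \in r^{-1}(\{p(k)\})$; both are non-empty by the hypothesis on $r$. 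Then $p(q_0(q_1(k))) = r(q_1(k)) = p(k)$ for every $k$, so $k^* = \infty$ and $T(p, \langle q_0, q_1 \rangle) = p \circ q_0 = r$.

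The main obstacle is that the two inclusions constituting the range-equation pull in opposite directions: guaranteeing $\range(p) \subseteq \range(T(p,q))$ for \emph{every} $q$ seems to require committing in advance to emit specific values at specific output positions, but any such fixed commitment destroys surjectivity by forcing $r(n)$ to take prescribed values at those positions. The device of letting $q_1$ supply a cheap-to-verify internal certificate, and activating the canonical fallback only upon explicit detection of its failure, is precisely what reconciles both requirements: admissible targets $r$ can always be realised by a $q$ for which the certificate is correct, while for all other $q$ the fallback phase enumerates $\range(p)$ directly.
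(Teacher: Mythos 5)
Your construction is correct: the case split on $k^*$ yields $\range(T(p,q))=\range(p)$ for every $q$ (via the certificate when $k^*=\infty$ and via the fallback tail $p(0),p(1),\ldots$ when $k^*<\infty$), the choice $q_0(n)=\min p^{-1}(\{r(n)\})$ and $q_1(k)=\min r^{-1}(\{p(k)\})$ witnesses surjectivity onto $\{r:\range(r)=\range(p)\}$, and each output digit depends on only finitely many digits of $p$ and $q$, so $T$ is total and computable. The paper itself offers no proof of this lemma (it is quoted from an external reference), and your certificate-with-computable-fallback device is precisely the standard technique for such traceability statements, so there is nothing to object to.
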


This immediately yields the following conclusion.

\begin{corollary}
\label{cor:eff-traceable-standard}
Any standard representation $\delta_U$ is effectively traceable.
\end{corollary}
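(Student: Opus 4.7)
The plan is to observe that the conclusion is essentially a direct packaging of the lemma immediately preceding it, combined with the defining property of $\delta_U$. First I would unfold the definition of $\delta_U$ to note that two names $p, p' \in \dom(\delta_U)$ satisfy $\delta_U(p) = \delta_U(p')$ if and only if $\range(p) = \range(p')$, since the set $\{n \in \dom(U) : x \in U_n\}$ is determined by $x = \delta_U(p)$. Consequently, for any $p \in \dom(\delta_U)$,
\[
\delta_U^{-1}(\delta_U(p)) = \{r \in \Baire : \range(r) = \range(p)\}.
\]

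Next I would simply take $T : \Baire \times \Baire \to \Baire$ to be the computable function supplied by the lemma, which satisfies $T(\{p\} \times \Baire) = \{q \in \Baire : \range(q) = \range(p)\}$ for every $p \in \Baire$. Combining this with the equality above yields $\{T(p,q) : q \in \Baire\} = \delta_U^{-1}(\delta_U(p))$ for every $p \in \dom(\delta_U)$, which is precisely the requirement in Definition~\ref{def:efftraceable}. The domain condition $\dom(T) = \dom(\delta_U) \times \Baire$ is automatic, since $T$ is total on $\Baire \times \Baire$.

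There is no genuine obstacle here; the real content sits in the preceding lemma, and the corollary is a one-line transfer once one recognizes that $\delta_U$ identifies names exactly up to coincidence of range. The only point worth flagging in the write-up is that the ``output side'' of $T(p, \cdot)$ must in fact consist of valid $\delta_U$-names of $\delta_U(p)$, but this is immediate because having the same range as some name in $\dom(\delta_U)$ already guarantees membership in $\dom(\delta_U)$.
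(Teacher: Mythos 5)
Your proof is correct and matches the paper exactly: the paper derives the corollary immediately from the quoted lemma, using precisely the observation that $\delta_U$ identifies names up to coincidence of range (so $\delta_U^{-1}(\delta_U(p))=\{q:\range(q)=\range(p)\}$), and the domain condition is handled by restricting the total $T$ to $\dom(\delta_U)\times\Baire$ if one reads Definition~\ref{def:efftraceable} literally.
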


We also need the concept of a {\em complete representation} taken from \cite{BH02}. For every representation $\delta:\subseteq\Baire\to\mathbf{X}$ of a $T_0$--space $\mathbf{X}$
we denote its {\em completion}
by $\delta^+:\subseteq\Baire\to \mathbf{X}$ and it is defined by
\[\delta^+(p):=\left\{\begin{array}{ll}
  \delta(p) & \mbox{if $p\in\dom(\delta)$}\\
  x & \mbox{if $p\not\in\dom(\delta)$ and $\{\delta(p|_i\Baire):i\in\IN\}$ is a neighborhood base of $x$}
\end{array}\right.\]
Here $p|_i=p(0)...p(i-1)$ denotes the prefix of $p$ of length $i$. 
We note that $\delta^+$ is well defined since $\mathbf{X}$ is a $T_0$--space.
Roughly speaking, $\delta^+$ is extended to all points in Baire space that look like names. 
We call a representation $\delta$ {\em complete}, if $\delta=\delta^+$ holds. 
We will use the concept of reducibility of representations and we recall
that $\delta_1\leq\delta_2$ for two representations $\delta_1,\delta_2$ of the same set means that there exists a computable function $F:\subseteq\Baire\to\Baire$ with
$\delta_1=\delta_2F$. By $\delta_1\equiv\delta_2$ we denote the corresponding equivalence.
Now we can formulate the following lemma.

\begin{lemma}
Let $\delta_\mathbf{X}$ be an effectively open representation of a $T_0$--space $\mathbf{X}$ with induced numbering $U$ of a base. 
Then we obtain:
\begin{enumerate}
\item $\delta_\mathbf{X}\leq\delta_U\iff\delta_\mathbf{X}$ is effectively fiber-overt,
\item $\delta_U\leq\delta_\mathbf{X}^+$ .
\end{enumerate}
\end{lemma}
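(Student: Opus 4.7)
My plan is to treat the two parts separately, using throughout that the standard positive-information representation of $\mathcal{V}(\Baire)$ encodes a closed set $A$ by enumerating $\{n : w_n\Baire \cap A \ne \emptyset\}$. The chain of equivalences
\[
w_n\Baire \cap \delta_\mathbf{X}^{-1}\{x\} \ne \emptyset
\iff x \in \delta_\mathbf{X}(w_n\Baire) = U_n
\iff n \in N_x,
\]
together with the fact that $w_n\Baire$ is open (so intersecting $\delta_\mathbf{X}^{-1}\{x\}$ is equivalent to intersecting its closure), identifies $\delta_U$-names of $x$ (i.e.\ enumerations of $N_x$) with $\mathcal{V}(\Baire)$-names of $\overline{\delta_\mathbf{X}^{-1}\{x\}}$. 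For part 1, both directions are then immediate unwindings of definitions: effective fiber-overtness asserts precisely the computability of $x \mapsto \overline{\delta_\mathbf{X}^{-1}\{x\}}$, which under the identification above coincides with the existence of a computable reduction $\delta_\mathbf{X} \leq \delta_U$.

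For part 2, I will construct $q = F(p)$ from $p = (n_0, n_1, \ldots)$ by a stagewise refinement of Baire cylinders. Effective openness yields, uniformly in $k$, the c.e.\ open sets $\delta_\mathbf{X}^{-1}(U_{n_k}) \subseteq \Baire$ as enumerations of basic cylinders. I build finite words $q_0 \sqsubseteq q_1 \sqsubseteq \cdots$ with $q := \bigcup_s q_s \in \Baire$, taking $q_0 = \varepsilon$ and at stage $s$ picking $q_{s+1}$ to be some enumerated basic cylinder extending $q_s$ inside $\delta_\mathbf{X}^{-1}(U_{n_s})$. Under the invariant $x \in \delta_\mathbf{X}(q_s\Baire)$, any $\delta_\mathbf{X}$-name of $x$ in $q_s\Baire$ also lies in $\delta_\mathbf{X}^{-1}(U_{n_s})$ (since $x \in U_{n_s}$), hence in one of the enumerated cylinders, so a valid $q_{s+1}$ exists. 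By construction $\delta_\mathbf{X}(q_{s+1}\Baire) \subseteq U_{n_s}$, and the invariant forces $x \in \delta_\mathbf{X}(q|_i\Baire)$ for every $i$; since $\{U_n : n \in N_x\}$ is a neighborhood base of $x$, so is $\{\delta_\mathbf{X}(q|_i\Baire)\}_i$, giving $\delta_\mathbf{X}^+(q) = x$.

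The main obstacle is computably preserving the invariant across stages: naively taking the first enumerated cylinder extending $q_s$ inside $\delta_\mathbf{X}^{-1}(U_{n_s})$ might select one disjoint from $\delta_\mathbf{X}^{-1}\{x\}$, destroying the invariant, and membership in $\delta_\mathbf{X}^{-1}\{x\}$ is not directly semi-decidable from a $\delta_U$-name. My plan for resolving this is to dovetail the enumeration of candidate cylinders with the incoming data from $p$: at stage $s$ I keep several candidate extensions alive, and use subsequent $n_{s'} \in N_x$ together with effective openness to semi-decidably discard any candidate $v\Baire$ for which $\delta_\mathbf{X}(v\Baire) \not\subseteq U_{n_{s'}}$ eventually fails --- committing to $q_{s+1}$ only once enough refinement has occurred. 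The $T_0$ property of $\mathbf{X}$ then ensures that a surviving candidate necessarily meets $\delta_\mathbf{X}^{-1}\{x\}$, so the invariant is maintained and $q$ is a $\delta_\mathbf{X}^+$-name of $x$ as required.
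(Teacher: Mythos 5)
Your part (1) is correct and is essentially the paper's own argument: the chain $x\in U_m\iff w_m\Baire\cap\delta_\mathbf{X}^{-1}\{x\}\neq\emptyset\iff w_m\Baire\cap\overline{\delta_\mathbf{X}^{-1}\{x\}}\neq\emptyset$ identifies enumerations of $N_x$ with positive information about the closed fiber, and both directions follow by unwinding definitions. For part (2), however, you have correctly located the obstacle but your proposed repair of the invariant does not work. First, the discarding criterion is not effective: to discard a candidate $v$ you would have to semi-decide $\delta_\mathbf{X}(v\Baire)\not\In U_{n'}$, i.e.\ exhibit a point of one open set outside another open set, and neither this nor the positive inclusion $\delta_\mathbf{X}(v\Baire)\In U_{n'}$ is semi-decidable from the data available (only effective openness is assumed here, and inclusion of opens in a countably based space is not c.e.\ anyway). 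Second, and more fundamentally, survival of a candidate does not imply that it meets $\delta_\mathbf{X}^{-1}\{x\}$: if $\delta_\mathbf{X}(v\Baire)\In U_n$ for all $n\in N_x$, then every $y\in\delta_\mathbf{X}(v\Baire)$ merely satisfies $N_x\In N_y$, i.e.\ $y$ may be strictly more specialized than $x$, and the $T_0$ axiom gives nothing. Concretely, let $\mathbf{X}$ be Sierpi\'nski space with $\delta(0^\IN)=\bot$ and $\delta(p)=\top$ otherwise (this is effectively open), and take $x=\bot$: any $v\notin 0^*$ has $\delta(v\Baire)=\{\top\}\In U_n$ for every nonempty $U_n$, so it is never discarded by your criterion, yet committing to such a $v$ produces a name of $\top$ rather than of $\bot$.

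The idea you are missing is that the relevant instance of ``does this cylinder meet the fiber of $x$'' \emph{is} semi-decidable from the $\delta_U$-name $p$, provided you restrict the candidates to basic cylinders of the induced numbering: $x\in\delta_\mathbf{X}(w_m\Baire)=U_m$ holds if and only if $m\in\range(p)$. Accordingly, the paper chooses each extension $v_0\cdots v_n$ to be literally some word $w_m$ with $m$ appearing in $p$ and with $w_m\Baire$ contained in the c.e.\ open subset of $\Baire$ obtained (via effective openness and effective continuity of $\delta_\mathbf{X}$) as the preimage of $\delta_U(p|_{k_{n-1}}\Baire)$. The appearance of $m$ in $\range(p)$ is the positive, finitely verifiable certificate that the invariant $x\in\delta_\mathbf{X}(v_0\cdots v_n\Baire)$ is preserved, and it yields the two-sided sandwich $\delta_U(p|_{k_n}\Baire)\In\delta_\mathbf{X}(v_0\cdots v_n\Baire)\In\delta_U(p|_{k_{n-1}}\Baire)$ from which the neighborhood-base property, and hence $\delta_\mathbf{X}^+(q)=x$, follows. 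With this modification your stagewise construction goes through; without it, no amount of dovetailing can succeed, since the Sierpi\'nski example shows that the information in $p$ cannot separate ``$\delta_\mathbf{X}(v\Baire)$ contains $x$'' from ``$\delta_\mathbf{X}(v\Baire)\In\bigcap_{n\in N_x}U_n$''.
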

\begin{proof}
(1) ``$\Longrightarrow$'' Let $\delta_\mathbf{X}=\delta_UF$ for some computable $F:\subseteq\Baire\to\Baire$ and let $V=\bigcup_{n\in A}w_n\Baire$ an open set with $A\subseteq\IN$. Then
we obtain for $p\in\dom(\delta_\mathbf{X})$
\[\overline{\delta_\mathbf{X}^{-1}\{\delta_\mathbf{X}(p)\}}\cap V\not=\emptyset\iff(\exists n\in A)\;\delta_\mathbf{X}(p)\in U_n\iff A\cap\range(F(p))\not=\emptyset.\]
Given $V$ via $A$ and given $p$, the right-hand side is c.e.\ and hence the left-hand side is c.e.\ too. This shows that $\delta_\mathbf{X}$ is fiber-overt.

``$\Longleftarrow$'' We have that 
\[\delta_\mathbf{X}(p)\in U_n\iff \delta_\mathbf{X}^{-1}\{\delta_\mathbf{X}(p)\}\cap w_n\Baire\not=\emptyset\iff\overline{\delta_\mathbf{X}^{-1}\{\delta_\mathbf{X}(p)\}}\cap w_n\Baire\not=\emptyset.\]
Given $p$ and $n$, the right-hand side condition is c.e.\ since $\delta_\mathbf{X}$ is fiber-overt. Hence the left-hand side condition is c.e. This implies the claim.

(2) We effectivize the proof idea of \cite[Theorem~12 (2)]{BH02}. 
Given a $p$ with $\delta_U(p)=x$ we want to compute a $q$ with $\delta_\mathbf{X}^+(q)=x$.
We construct $q=v_0v_1v_2...$ inductively by selecting a monotone increasing sequence $(k_n)_n$ of natural numbers
and a sequence $(v_n)_n$ of words $v_n\in\IN^*$ such that
\begin{eqnarray}
\label{eqn:open}
\delta_U(p|_{k_n}\Baire)\subseteq\delta_\mathbf{X}(v_0v_1...v_n\Baire)\subseteq\delta_U(p|_{k_{n-1}}\Baire).
\end{eqnarray}
We assume $k_{-1}:=0$ and we describe how to select $v_n$ and $k_n$, given $k_{n-1}$ and $v_0,...,v_{n-1}$ for all $n\in\IN$.
Since $\delta_\mathbf{X}$ is effectively open, we can compute
\[W:=\delta_U(p|_{k_{n-1}}\Baire)=\bigcap_{n\in\range(p|_{k_{n-1}})}U_n=\bigcap_{n\in\range(p|_{k_{n-1}})}\delta_\mathbf{X}(w_n\Baire)\in\mathcal{O}(\mathbf{X}),\]
given $p$ and $k_{n-1}$. Since $\delta_\mathbf{X}$ is automatically effectively continuous, we can also compute 
$V\in\mathcal{O}(\Baire)$ such that $\delta_\mathbf{X}^{-1}(W)=V\cap\dom(\delta_\mathbf{X})$.
Now we search some $v_n\in\IN^*$ and $k_n>k_{n-1}$ such that $v_0...v_n\Baire\subseteq V$ and such that there exists $m\in\range(p|_{k_n})$ with 
$w_m=v_0...v_n$. This selection guarantees that Equation~\ref{eqn:open} is satisfied. We claim that suitable $v_n,k_n$ always exist.
Firstly, Equation~\ref{eqn:open} for $n-1$ in place of $n$ guarantees that $x=\delta_U(p)\in\delta_\mathbf{X}(v_0...v_{n-1}\Baire)$
and since $\delta_U$ is open, $\delta_U(p|_{k_{n-1}}\Baire)$ is an open neighborhood of $x$. Due to continuity of $\delta_\mathbf{X}$
there must be some $v_n$ such that $x\in\delta_\mathbf{X}(v_0...v_n\Baire)\subseteq\delta_U(p|_{k_{n-1}}\Baire)$ and there is some
$m\in\IN$ with $w_m=v_0...v_n$, which implies $x\in U_m$. This implies $m\in\range(p)$
and hence there is $k_n>k_{n-1}$ with $m\in\range(p|_{k_n})$. This proves the claim. 
Now we still need to show that $\delta_\mathbf{X}^+(q)=x$ for $q=v_0v_1...$. 
This follows, since Equation~\ref{eqn:open} guarantees that $\{\delta_\mathbf{X}(v_0...v_n\Baire):n\in\IN\}$ is a neighborhood base for $\delta_U(p)=x$.
Altogether, this proves $\delta_U\leq\delta_\mathbf{X}^+$.
\end{proof}

We note that by \cite[Lemma~10]{BH02} the completion of an effectively open and effectively fiber-overt representation
shares these two properties. Hence, we obtain the following corollary.

\begin{corollary}
Let $\delta_\mathbf{X}$ be an effectively traceable representation of a $T_0$--space $\mathbf{X}$.
Then its completion $\delta_\mathbf{X}^+$ is equivalent to a standard representation and admissible. 
\end{corollary}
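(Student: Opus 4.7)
\emph{Plan.} The proof chains together the two preceding preparatory results. By Proposition~\ref{prop:eff-traceable-overt-open}, an effectively traceable $\delta_\mathbf{X}$ is automatically both effectively open and effectively fiber-overt, so the hypotheses of the immediately preceding lemma are met. Its two items then give at once $\delta_\mathbf{X}\leq\delta_U$ and $\delta_U\leq\delta_\mathbf{X}^+$, where $U$ is the numbering of the induced base of the topology on $\mathbf{X}$.

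I would then lift both reductions through the completion operator. Using that $(\cdot)^+$ is monotone on representations of $T_0$-spaces and idempotent up to equivalence (\cite[Lemma~10]{BH02}), the first reduction gives $\delta_\mathbf{X}^+\leq\delta_U^+$ and the second gives $\delta_U^+\leq(\delta_\mathbf{X}^+)^+\equiv\delta_\mathbf{X}^+$, so that $\delta_\mathbf{X}^+\equiv\delta_U^+$. Thus, up to completion, $\delta_\mathbf{X}$ has been reduced to a standard representation.

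To conclude both assertions I would invoke the Kreitz--Weihrauch-style uniqueness result in the form of \cite{BH02}: standard representations of effective $T_0$-topological spaces are admissible; the completion of an admissible representation is again admissible; and any two admissible representations of the same countably based $T_0$-space are computably equivalent. Since $\delta_U$ is a standard representation it is admissible, hence $\delta_U^+$ is admissible as well, and by uniqueness $\delta_U^+\equiv\delta_U$. Combining with the previous paragraph yields $\delta_\mathbf{X}^+\equiv\delta_U$, which is both equivalent to a standard representation and admissible.

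The main obstacle is essentially bibliographic: one has to pin down the precise formulation of the completion properties and of the admissibility-uniqueness used from \cite{BH02}. There is no new combinatorial or computational content to supply beyond what is already packaged in the preceding lemma and Proposition~\ref{prop:eff-traceable-overt-open}.
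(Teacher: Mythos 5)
Your overall strategy --- sandwiching $\delta_\mathbf{X}^+$ against the standard representation $\delta_U$ using the two parts of the preceding lemma --- is the right one, and your first step (Proposition~\ref{prop:eff-traceable-overt-open} plus the lemma, giving $\delta_\mathbf{X}\leq\delta_U\leq\delta_\mathbf{X}^+$) matches the paper. The gap lies in how you obtain the missing reduction $\delta_\mathbf{X}^+\leq\delta_U$. You derive it from two general principles, neither of which is available. First, monotonicity of completion ($\delta_1\leq\delta_2\Rightarrow\delta_1^+\leq\delta_2^+$) is not what \cite[Lemma~10]{BH02} is invoked for in the paper (it is cited for the fact that completion preserves effective openness and effective fiber-overtness), and it is not an automatic fact: a computable $F$ with $\delta_\mathbf{X}=\delta_U\circ F$ need not be defined, let alone behave correctly, on the additional names in $\dom(\delta_\mathbf{X}^+)\setminus\dom(\delta_\mathbf{X})$. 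Second, the claim that any two admissible representations of the same countably based $T_0$--space are \emph{computably} equivalent is false --- admissibility is a topological maximality property and yields only continuous equivalence (compose an admissible representation with a non-computable homeomorphism of $\Baire$) --- so $\delta_U^+\equiv\delta_U$ is not justified this way; indeed $\delta_U^+\leq\delta_U$ is delicate, since a name in $\dom(\delta_U^+)\setminus\dom(\delta_U)$ enumerates only a subfamily of the basic neighborhoods forming a base, and completing it to the full list would require recognizing inclusions between basic open sets.

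The intended argument avoids both issues. By \cite[Lemma~10]{BH02} the completion $\delta_\mathbf{X}^+$ is itself effectively open and effectively fiber-overt, and since $\delta_\mathbf{X}^+(w\Baire)=\delta_\mathbf{X}(w\Baire)$ for every word $w$ (any $p\in w\Baire$ for which $\{\delta_\mathbf{X}(p|_i\Baire):i\in\IN\}$ is a neighborhood base of $x$ satisfies $x\in\delta_\mathbf{X}(w\Baire)$), it induces the same numbering $U$ of the same base. Hence part~(1) of the preceding lemma, applied to $\delta_\mathbf{X}^+$ rather than to $\delta_\mathbf{X}$, yields $\delta_\mathbf{X}^+\leq\delta_U$ directly, while part~(2) gives $\delta_U\leq\delta_\mathbf{X}^+$; admissibility then follows because $\delta_U$ is a standard representation. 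So the repair is to apply the lemma to the completed representation itself instead of trying to push the reduction $\delta_\mathbf{X}\leq\delta_U$ through the completion operator.
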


In this sense one can say that effectively traceable representations are essentially (up to completion) equivalent to standard representations. 
Now we mention a number of extra properties that we can prove for effectively traceable represented spaces.

\begin{proposition}
Let $\mathbf{X}$, $\mathbf{Y}$, $\mathbf{Z}$ be represented spaces, and $\mathbf{Y}$ be effectively traceable. Then the following operations are computable:
\begin{enumerate}
\item $\circ :  \mathcal{M}(\mathbf{Y},\mathbf{Z})\times \mathcal{M}(\mathbf{X},\mathbf{Y}) \to \mathcal{M}(\mathbf{X},\mathbf{Z}),(f,g)\mapsto f\circ g$
\item $\operatorname{Curry} : \mathcal{M}(\mathbf{Y} \times \mathbf{X}, \mathbf{Z}) \to \mathcal{C}(\mathbf{Y},\mathcal{M}(\mathbf{X},\mathbf{Z})),\operatorname{Curry}(f)(y)(x):=f(y,x)$
\end{enumerate}
\end{proposition}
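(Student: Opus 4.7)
The plan is to construct, in each case, an explicit realizer that exploits effective traceability of $\delta_\mathbf{Y}$ to enumerate all names of an intermediate point from a single one. Fix a computable $T:\Baire\times\Baire\to\Baire$ witnessing effective traceability of $\delta_\mathbf{Y}$, so that $\{T(p,q):q\in\Baire\}=\delta_\mathbf{Y}^{-1}\{\delta_\mathbf{Y}(p)\}$ for every $p\in\dom(\delta_\mathbf{Y})$. The main obstacle in both parts is that the definition of $\Phi_r^{\mathbf{X},\mathbf{Z}}$ takes the union over \emph{all} names of the input via $\delta^{\mathrm{cyl}}$, whereas during composition or currying we have only a single name of the intermediate/first argument available; the function $T$ is precisely what lets us bridge this gap.

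For (1), consider $f=\Phi_p^{\mathbf{Y},\mathbf{Z}}$ and $g=\Phi_q^{\mathbf{X},\mathbf{Y}}$. Unfolding definitions,
\[(f\circ g)(x)=\bigl\{\delta_Z\Phi_p\langle y',s_2\rangle : \delta_Y(y')=\delta_Y\Phi_q\langle x',s_1\rangle,\;x'\in\delta_X^{-1}\{x\},\;s_1,s_2\in\Baire\bigr\}.\]
The condition $\delta_Y(y')=\delta_Y\Phi_q\langle x',s_1\rangle$ is equivalent to $y'=T(\Phi_q\langle x',s_1\rangle,r)$ for some $r\in\Baire$. Hence, by the utm/smn theorems, one obtains a computable $c:\Baire\times\Baire\to\Baire$ such that
\[\Phi_{c\langle p,q\rangle}\bigl\langle x',\langle s_1,r,s_2\rangle\bigr\rangle=\Phi_p\bigl\langle T(\Phi_q\langle x',s_1\rangle,r),s_2\bigr\rangle,\]
and a direct verification gives $\Phi_{c\langle p,q\rangle}^{\mathbf{X},\mathbf{Z}}=f\circ g$, so $c$ realizes $\circ$.

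For (2), let $f=\Phi_p^{\mathbf{Y}\times\mathbf{X},\mathbf{Z}}$. Analogously to (1), by smn we have a computable $S:\Baire\times\Baire\to\Baire$ with
\[\Phi_{S(p,y)}\bigl\langle x',\langle r,s\rangle\bigr\rangle=\Phi_p\bigl\langle\langle T(y,r),x'\rangle,s\bigr\rangle.\]
Set $\gamma(p,y):=S(p,y)$; this is computable in both arguments, so the induced map $y\mapsto\Phi_{\gamma(p,y)}^{\mathbf{X},\mathbf{Z}}$ has a computable realizer (in the representation of $\mathcal{M}(\mathbf{X},\mathbf{Z})$) uniformly in $p$. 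The critical check is that this map is genuinely \emph{single-valued} as a function of $y\in\mathbf{Y}$, i.e.\ that $\Phi_{\gamma(p,y)}^{\mathbf{X},\mathbf{Z}}$ depends only on $\delta_Y(y)$ and not on the specific name $y$. This is where effective traceability pays off: since $\{T(y,r):r\in\Baire\}=\delta_Y^{-1}\{\delta_Y(y)\}$, one computes
\[\Phi_{\gamma(p,y)}^{\mathbf{X},\mathbf{Z}}(x)=\bigl\{\delta_Z\Phi_p\langle\langle y'',x'\rangle,s\rangle:y''\in\delta_Y^{-1}\{\delta_Y(y)\},\;x'\in\delta_X^{-1}\{x\},\;s\in\Baire\bigr\}=f(\delta_Y(y),x),\]
which indeed depends only on $\delta_Y(y)$. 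Thus $\operatorname{Curry}(f)(\delta_Y(y)):=\Phi_{\gamma(p,y)}^{\mathbf{X},\mathbf{Z}}$ is well-defined, equals $f(\delta_Y(y),\cdot)$ as a multi-valued function, and has $p\mapsto(y\mapsto\gamma(p,y))$ as a computable realizer of the required map $\mathcal{M}(\mathbf{Y}\times\mathbf{X},\mathbf{Z})\to\mathcal{C}(\mathbf{Y},\mathcal{M}(\mathbf{X},\mathbf{Z}))$.
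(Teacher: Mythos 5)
Your proof is correct and takes essentially the same approach as the paper: in both parts you build the realizer by an smn-style construction that uses the traceability function $T$ to recover \emph{all} $\delta_\mathbf{Y}$-names of the intermediate (resp.\ first) argument from the single available one, which is exactly the paper's $c$ and $e$. Your explicit check in part (2) that $\Phi_{\gamma(p,y)}^{\mathbf{X},\mathbf{Z}}$ depends only on $\delta_Y(y)$ is a worthwhile detail that the paper leaves implicit.
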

\begin{proof}
We assume that $T:\subseteq\Baire\times\Baire\to\Baire$ is a computable function that witnesses that $\mathbf{Y}$ is effectively traceable. 
\begin{enumerate}
\item There exists a computable function $c:\Baire\times\Baire\to\Baire$ with $\Phi_{c(p,q)}\langle x,\langle r_1,r_2,r_3\rangle\rangle=\Phi_p\langle T(\Phi_q\langle x,r_1\rangle,r_2),r_3\rangle$.
This $c$ is a realizer of the composition. 
\item There exists a computable function $e:\Baire\to\Baire$ such that $\Phi_{\Phi_{e(p)}(y)}\langle x,\langle r_1,r_2\rangle\rangle=\Phi_p\langle\langle x,T(y,r_1)\rangle,r_2\rangle$.
This $e$ is a realizer of $\operatorname{Curry}$.
\qedhere
\end{enumerate}
\end{proof}

From part (1) of the previous proposition we can conclude that the space $\mathcal{M}(-,-)$ is invariant with respect to effectively traceable spaces on the input side.

\begin{corollary}
Let $\mathbf{X} \cong \mathbf{X}'$ both be effectively traceable. Then $\mathcal{M}(\mathbf{X}, \mathbf{Y}) \cong \mathcal{M}(\mathbf{X}',\mathbf{Y})$.
\end{corollary}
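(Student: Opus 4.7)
The plan is to use the preceding proposition (part 1) in both directions. Let $h : \mathbf{X} \to \mathbf{X}'$ be a computable homeomorphism with computable inverse $h^{-1}$. Via the computable injection $\operatorname{in} : \mathcal{C}(\mathbf{X},\mathbf{X}') \to \mathcal{M}(\mathbf{X},\mathbf{X}')$ from Proposition \ref{prop:mathcalm}(1), the map $h$ gives a computable point of $\mathcal{M}(\mathbf{X},\mathbf{X}')$, and analogously $h^{-1}$ gives a computable point of $\mathcal{M}(\mathbf{X}',\mathbf{X})$.

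Next, I would define
\[
\Phi : \mathcal{M}(\mathbf{X},\mathbf{Y}) \to \mathcal{M}(\mathbf{X}',\mathbf{Y}),\quad f \mapsto f \circ h^{-1},
\]
and
\[
\Psi : \mathcal{M}(\mathbf{X}',\mathbf{Y}) \to \mathcal{M}(\mathbf{X},\mathbf{Y}),\quad g \mapsto g \circ h.
\]
By part (1) of the preceding proposition applied with intermediate space $\mathbf{X}$ (which is effectively traceable by hypothesis), the composition
$\circ : \mathcal{M}(\mathbf{X},\mathbf{Y}) \times \mathcal{M}(\mathbf{X}',\mathbf{X}) \to \mathcal{M}(\mathbf{X}',\mathbf{Y})$
is computable; plugging in the computable point $h^{-1}$ in the second argument yields computability of $\Phi$. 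Symmetrically, effective traceability of $\mathbf{X}'$ gives computability of $\Psi$.

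Finally, I would verify that $\Phi$ and $\Psi$ are mutually inverse. Since $h$ and $h^{-1}$ are total and single-valued, the composition of multi-valued functions behaves associatively with them and reduces to ordinary composition, so $\Psi(\Phi(f)) = (f \circ h^{-1}) \circ h = f \circ (h^{-1} \circ h) = f \circ \id_{\mathbf{X}} = f$ as multi-valued maps, and analogously $\Phi(\Psi(g)) = g$. Hence $\Phi$ is a computable homeomorphism with computable inverse $\Psi$, giving $\mathcal{M}(\mathbf{X},\mathbf{Y}) \cong \mathcal{M}(\mathbf{X}',\mathbf{Y})$.

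The only point requiring mild care is the last step: one must check that equality of the underlying multi-valued functions implies equality, and not merely mutual tightening, in $\mathcal{M}$. Because $h$ and $h^{-1}$ are total single-valued continuous maps, the effect of composing with them on the level of realizers (via the computable witnesses from Proposition \ref{prop:mathcalm}(4) and the preceding proposition) is a genuine equality of the induced multi-valued functions, so no obstruction arises. I do not foresee a genuinely hard step; the work is entirely in extracting the right specialization of the composition operator.
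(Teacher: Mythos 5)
Your proof is correct and is exactly the argument the paper intends: the paper derives this corollary directly from part (1) of the preceding proposition ("invariance with respect to effectively traceable spaces on the input side"), i.e., by pre-composing with $h$ and $h^{-1}$ viewed via $\operatorname{in}$ as computable points of the relevant $\mathcal{M}$-spaces, with the effective traceability of $\mathbf{X}$ and $\mathbf{X}'$ respectively supplying the hypothesis on the intermediate space. Your additional check that the composition is genuine equality (not just mutual tightening) because $h,h^{-1}$ are total single-valued is a worthwhile observation that the paper leaves implicit.
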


As every computable multi-valued function $f : \subseteq \mathbf{X} \mto \mathbf{Y}$ is a weakening of some strongly computable multi-valued function $f' \in \mathcal{M}(\mathbf{X}, \mathbf{Y})$, we can restrict the witnesses for Weihrauch reducibility to strongly computable multi-valued functions without altering the resulting reducibility.

\section{Comparison to other structures}
\label{sec:specification}

As mentioned in the introduction, there is a strong resemblance between our algebraic operations and the logical connectives in intuitionistic linear logic \cite{Gir87}. In this, $0$ is vacuous truth, $1$ is non-vacuous truth and $\Wtop$ is falsity. 
In $\Winf$ we find disjunction, in $\Wsup$ the additive conjunction and $\times$ is the the multiplicative conjunction. In a deviation from the usual setting, we have a second multiplicative conjunction in $\star$, which is not commutative. Likewise, we have two exponentials $^*$ and $\widehat{\phantom{f}}$ rather than just one. As the commutative multiplicative conjunction $\times$ does not have an associated implication, it has to be $\star$ that appears in the \emph{modus ponens} deduction rule. To the extent that non-commutative conjunctions have been studied in substructural logics \cite{galatos}, the typical requirement is for both the left- and the right-implication to exists -- here we only have the right-implication. Whether there are sensible proof systems corresponding to Weihrauch reducibility regardless is an open question.

\subsection{The specification view}
Another way of reading elements of $\Wei$ is as specifications linking preconditions to postconditions:
 \begin{itemize}
 \item In $\mathbf{a} \leqW \mathbf{b}$, some valid precondition for $\mathbf{b}$ must be realized whenever the initial condition is a valid precondition for $\mathbf{a}$, and assuming that the $b$-subroutine works correctly, the complete procedure needs to satisfy $\mathbf{a}$.

\item The operation $\Winf$ is non-deterministic combination: $\mathbf{a}\Winf \mathbf{b}$ takes queries to $\mathbf{a}$ and $\mathbf{b}$ and solves either of them. In terms of specifications: The preconditions for $\mathbf{a}\Winf \mathbf{b}$ are conjunctions of preconditions for $\mathbf{a}$, $\mathbf{b}$, the postconditions are disjunctions.

\item $\Wsup$ is choice: A query to $\mathbf{a} \Wsup  \mathbf{b}$ is either a query to $\mathbf{a}$ or to $\mathbf{b}$, and the corresponding answer has to be given. $\times$ is parallel application: $\mathbf{a} \times \mathbf{b}$ takes queries to $\mathbf{a}$ and $\mathbf{b}$ and solves them both. $\star$ is sequential application: $\mathbf{a} \star \mathbf{b}$ is the hardest problem which can be solved by first using $\mathbf{b}$ exactly once, and then $\mathbf{a}$ exactly once.

\item $0$ is the problem without queries (hence trivially solvable), $1$ is the degree of computable problems with computable queries. $\Wtop$ is the problem without solutions, hence impossible to solve.

\item $\mathbf{a}^*$ takes a finite (possibly 0) number of queries to $\mathbf{a}$ and answers them all, $\widehat{\mathbf{a}}$ takes $\omega$-many queries to $\mathbf{a}$ and answers them all.
\end{itemize}

As such, it seems natural to compare the algebraic structure of the Weihrauch degrees to algebras arising in the study of (program) specification. In particular, there are some similarities to \emph{concurrent Kleene algebras} \cite{hoare,hoare2}.

\begin{defiC}[\cite{hoare}]
\label{def:cka}
A {\em concurrent Kleene algebra} is a structure $(S, \preceq, \Wsup, 0, \star, ;,1)$ such that
\begin{enumerate}
\item $(S, \preceq)$ is a complete lattice with join $\Wsup$ and bottom element $0$
\item $(S, \star, 1)$ is a monoid, and $\star$ distributes over arbitrary suprema in $(S, \preceq)$ in both arguments
\item $(S, ;, 1)$ is a monoid, and $;$ distributes over arbitrary suprema in $(S, \preceq)$ in both arguments
\item $(\mathbf{a} \star \mathbf{b}) ; (\mathbf{c} \star \mathbf{d}) \preceq (\mathbf{b} ; \mathbf{c}) \star (\mathbf{a} ; \mathbf{d})$ (Exchange law)
\end{enumerate}
\end{defiC}

Note that these axioms imply that $\star$ is commutative, whereas this is not required of $;$. In particular, $\star$ in a commutative Kleene algebra corresponds to $\times$ for Weihrauch degrees, whereas $\star$ for Weihrauch degrees corresponds to the operation $;$ in a commutative Kleene algebra.

An obvious discrepancy between $\Wei$ and the axioms above is the failure of completeness in $(\Wei,\leqW)$ noted in Proposition \ref{prop:completeness}. This requirement, however, should be considered as more of a technical nature in \cite{hoare}.

More relevant is that matching $\leqW$ to $\preceq$ produces just the reverse direction of the exchange law inequality (Proposition \ref{prop:implication} (6)). 
Moreover, while $\times$ distributes over $\Wsup$ (Proposition \ref{prop:distributivity} (1)), the operator $\star$ does so only from the left (Proposition \ref{prop:distributivity} (2)), but not from the right (Proposition \ref{prop:distributivity} (3)).

Alternatively, we could match $\geqW$ to $\preceq$, and would obtain the exchange law as in Definition \ref{def:cka} (4) (cf.~Proposition \ref{prop:implication} (6)). 
However, this means matching $\Winf$ (in $\Wei)$ to $\Wsup$ in the concurrent Kleene algebra. 
By Proposition \ref{prop:distributivity} (5), $\times$ does not distribute over $\Winf$, and as for $\Wsup$, we see that $\star$ distributes over $\Winf$ only from the left (Proposition \ref{prop:distributivity} (6)), but not from the right (Proposition \ref{prop:distributivity} (4)).

Further research into weaker variants of concurrent Kleene algebras (e.g., \cite{struth}) may reveal whether or not incompleteness and some amount of failure of distributivity are decisive shortcomings of a structure for this approach or not.

\section*{Acknowledgments}
The work presented here has benefited from discussions with Tony Hoare, Takayuki Kihara and Paulo Oliva. 
We would like to thank Peter Hertling and Eike Neumann for pointing out mistakes in an earlier version of this article.
Careful remarks by the anonymous reviewers have helped us to improve the final version of this article.

\bibliographystyle{plain}
\bibliography{algebraicstructure,C:/Users/vbrattka/Dropbox/Bibliography/lit}

\begin{thebibliography}{10}

\bibitem{Bra99b}
Vasco Brattka.
\newblock Recursive and computable operations over topological structures.
\newblock Informatik Berichte 255, FernUniversit\"at Hagen, Fachbereich
  Informatik, Hagen, July 1999.
\newblock Dissertation.

\bibitem{Bra03}
Vasco Brattka.
\newblock Computability over topological structures.
\newblock In S.~Barry Cooper and Sergey~S. Goncharov, editors, {\em
  Computability and Models}, pages 93--136. Kluwer Academic Publishers, New
  York, 2003.

\bibitem{BBP12}
Vasco Brattka, Matthew de~Brecht, and Arno Pauly.
\newblock Closed choice and a uniform low basis theorem.
\newblock {\em Annals of Pure and Applied Logic}, 163:986--1008, 2012.

\bibitem{BG11a}
Vasco Brattka and Guido Gherardi.
\newblock Effective choice and boundedness principles in computable analysis.
\newblock {\em The Bulletin of Symbolic Logic}, 17(1):73--117, 2011.

\bibitem{BG11}
Vasco Brattka and Guido Gherardi.
\newblock Weihrauch degrees, omniscience principles and weak computability.
\newblock {\em The Journal of Symbolic Logic}, 76(1):143--176, 2011.

\bibitem{BGH15a}
Vasco Brattka, Guido Gherardi, and Rupert H{\"o}lzl.
\newblock Probabilistic computability and choice.
\newblock {\em Information and Computation}, 242:249--286, 2015.

\bibitem{BGM12}
Vasco Brattka, Guido Gherardi, and Alberto Marcone.
\newblock The {B}olzano-{W}eierstrass theorem is the jump of weak
  {K}{\H{o}}nig's lemma.
\newblock {\em Annals of Pure and Applied Logic}, 163:623--655, 2012.

\bibitem{BGP17}
Vasco Brattka, Guido Gherardi, and Arno Pauly.
\newblock Weihrauch complexity in computable analysis.
\newblock arXiv 1707.03202, 2017.

\bibitem{BHK17a}
Vasco Brattka, Matthew Hendtlass, and Alexander~P. Kreuzer.
\newblock On the uniform computational content of computability theory.
\newblock {\em Theory of Computing Systems}, 61(4):1376--1426, 2017.

\bibitem{BH02}
Vasco Brattka and Peter Hertling.
\newblock Topological properties of real number representations.
\newblock {\em Theoretical Computer Science}, 284(2):241--257, 2002.

\bibitem{BKM+16}
Vasco Brattka, Akitoshi Kawamura, Alberto Marcone, and Arno Pauly, editors.
\newblock {\em {Measuring the Complexity of Computational Content (Dagstuhl
  Seminar 15392)}}, volume~5 of {\em Dagstuhl Reports}, Dagstuhl, Germany,
  2016. Schloss Dagstuhl--Leibniz-Zentrum f{\"u}r Informatik.

\bibitem{BLRMP16}
Vasco Brattka, St{\'e}phane Le~Roux, Joseph~S. Miller, and Arno Pauly.
\newblock The {B}rouwer fixed point theorem revisited.
\newblock In Arnold Beckmann, Laurent Bienvenu, and Nata\v{s}a Jonoska,
  editors, {\em Pursuit of the Universal}, volume 9709 of {\em Lecture Notes in
  Computer Science}, pages 58--67, Switzerland, 2016. Springer.
\newblock 12th Conference on Computability in Europe, CiE 2016, Paris, France,
  June 27 - July 1, 2016.

\bibitem{BLRMP16a}
Vasco Brattka, St{\'e}phane Le~Roux, Joseph~S. Miller, and Arno Pauly.
\newblock Connected choice and the {B}rouwer fixed point theorem.
\newblock Journal of Mathematical Logic (accepted for publication) 2018.
\newblock arXiv 1206.4809.

\bibitem{BP10}
Vasco Brattka and Arno Pauly.
\newblock Computation with advice.
\newblock In Xizhong Zheng and Ning Zhong, editors, {\em {CCA} 2010,
  Proceedings of the Seventh International Conference on Computability and
  Complexity in Analysis}, Electronic Proceedings in Theoretical Computer
  Science, pages 41--55, 2010.

\bibitem{chagrov}
Alexander Chagrov and Michael Zakharyaschev.
\newblock {\em Modal Logic}.
\newblock Number~35 in Oxford Logic Guides. Oxford Science Publications, 1997.

\bibitem{DDH+16}
Fran\c{c}ois~G. Dorais, Damir~D. Dzhafarov, Jeffry~L. Hirst, Joseph~R. Mileti,
  and Paul Shafer.
\newblock On uniform relationships between combinatorial problems.
\newblock {\em Transactions of the American Mathematical Society},
  368(2):1321--1359, 2016.

\bibitem{DH10}
Rodney~G. Downey and Denis~R. Hirschfeldt.
\newblock {\em Algorithmic randomness and complexity}.
\newblock Theory and Applications of Computability. Springer, New York, 2010.

\bibitem{galatos}
Nikolaos Galatos, Peter Jipsen, Tomasz Kowalski, and Hiroakira Ono.
\newblock {\em Residuated Lattices. An Algebraic Glimpse at Substructural
  Logics}, volume 151 of {\em Studies in Logic and the Foundations of
  Mathematics}.
\newblock Elsevier, 2007.

\bibitem{GM09}
Guido Gherardi and Alberto Marcone.
\newblock How incomputable is the separable {H}ahn-{B}anach theorem?
\newblock {\em Notre Dame Journal of Formal Logic}, 50(4):393--425, 2009.

\bibitem{Gir87}
Jean-Yves Girard.
\newblock Linear logic.
\newblock {\em Theoret. Comput. Sci.}, 50(1):101, 1987.

\bibitem{HK14}
Kojiro Higuchi and Takayuki Kihara.
\newblock Inside the {M}uchnik degrees {I}: Discontinuity, learnability and
  constructivism.
\newblock {\em Annals of Pure and Applied Logic}, 165(5):1058--1114, 2014.

\bibitem{HK14a}
Kojiro Higuchi and Takayuki Kihara.
\newblock Inside the {M}uchnik degrees {II}: The degree structures induced by
  the arithmetical hierarchy of countably continuous functions.
\newblock {\em Annals of Pure and Applied Logic}, 165(6):1201--1241, 2014.

\bibitem{HP13}
Kojiro Higuchi and Arno Pauly.
\newblock The degree structure of {W}eihrauch reducibility.
\newblock {\em Log. Methods Comput. Sci.}, 9(2):2:02, 17, 2013.

\bibitem{hoare}
Tony Hoare, Bernhard M{\"u}ller, Georg Struth, and Ian Wehrman.
\newblock Concurrent {K}leene algebra.
\newblock {\em Lecture Notes in Computer Science}, 5710:399--414, 2009.

\bibitem{hoare2}
Tony Hoare, Bernhard M{\"u}ller, Georg Struth, and Ian Wehrman.
\newblock Concurrent {K}leene algebra and its foundations.
\newblock {\em The Journal of Logic and Algebraic Programming}, 80(6):266--296,
  2011.
\newblock Relations and Kleene Algebras in Computer Science.

\bibitem{HRW12}
Mathieu Hoyrup, Crist{\'o}bal Rojas, and Klaus Weihrauch.
\newblock Computability of the {R}adon-{N}ikodym derivative.
\newblock {\em Computability}, 1(1):3--13, 2012.

\bibitem{Kih12}
Takayuki Kihara.
\newblock Incomputability of simply connected planar continua.
\newblock {\em Computability}, 1(2):131--152, 2012.

\bibitem{KP14}
Takayuki Kihara and Arno Pauly.
\newblock Point degree spectra of represented spaces.
\newblock arXiv 1405.6866, May 2014.

\bibitem{Kuc85}
Anton{\'{\i}}n Ku{\v{c}}era.
\newblock Measure, {$\Pi^0_1$}-classes and complete extensions of {${\rm PA}$}.
\newblock In {\em Recursion theory week ({O}berwolfach, 1984)}, volume 1141 of
  {\em Lecture Notes in Math.}, pages 245--259, Berlin, 1985. Springer.

\bibitem{Kuy17}
Rutger Kuyper.
\newblock On {W}eihrauch reducibility and intuitionistic reverse mathematics.
\newblock {\em Journal of Symbolic Logic}, 82(4):1438--1458, 2017.

\bibitem{LRP15a}
St{\'{e}}phane Le~Roux and Arno Pauly.
\newblock Finite choice, convex choice and finding roots.
\newblock {\em Logical Methods in Computer Science}, 11(4):4:6, 31, 2015.

\bibitem{struth}
Annabelle McIver, Tahiry Rabehaja, and Georg Struth.
\newblock Weak concurrent {K}leene algebra with application to algebraic
  verification.
\newblock arXiv:1301.7153, 2013.

\bibitem{mylatz}
Uwe Mylatz.
\newblock Vergleich unstetiger {F}unktionen in der {A}nalysis.
\newblock Diplomarbeit, Fachbereich Informatik, FernUniversit{\"a}t Hagen,
  1992.

\bibitem{Myl06}
Uwe Mylatz.
\newblock {\em Vergleich unstetiger {F}unktionen: ``{P}rinciple of
  {O}mniscience'' und {V}ollst{\"a}ndigkeit in der {$C$}--Hierarchie}.
\newblock PhD thesis, Faculty for Mathematics and Computer Science, University
  Hagen, Hagen, Germany, 2006.
\newblock {PhD} thesis.

\bibitem{Nie09}
Andr{\'e} Nies.
\newblock {\em Computability and Randomness}, volume~51 of {\em Oxford Logic
  Guides}.
\newblock Oxford University Press, New York, 2009.

\bibitem{paulymaster}
Arno Pauly.
\newblock Methoden zum {V}ergleich der {U}nstetigkeit von {F}unktionen.
\newblock Masters thesis, FernUniversit\"{a}t Hagen, 2007.

\bibitem{Pau10}
Arno Pauly.
\newblock How incomputable is finding {N}ash equilibria?
\newblock {\em Journal of Universal Computer Science}, 16(18):2686--2710, 2010.

\bibitem{Pau10a}
Arno Pauly.
\newblock On the (semi)lattices induced by continuous reducibilities.
\newblock {\em Mathematical Logic Quarterly}, 56(5):488--502, 2010.

\bibitem{Pau11}
Arno Pauly.
\newblock {\em Computable Metamathematics and its Application to Game Theory}.
\newblock PhD thesis, University of Cambridge, Computer Laboratory, Clare
  College, Cambridge, 2011.
\newblock {PhD} thesis.

\bibitem{Pau16}
Arno Pauly.
\newblock On the topological aspects of the theory of represented spaces.
\newblock {\em Computability}, 5(2):159--180, 2016.

\bibitem{Pau17}
Arno Pauly.
\newblock Many-one reductions and the category of multivalued functions.
\newblock {\em Mathematical Structures in Computer Science}, 27(3):376--404,
  2017.

\bibitem{Soa87}
Robert~I. Soare.
\newblock {\em Recursively Enumerable Sets and Degrees}.
\newblock Perspectives in Mathematical Logic. Springer, Berlin, 1987.

\bibitem{sorbi}
Andrea Sorbi.
\newblock Some remarks on the algebraic structure of the {M}edvedev lattice.
\newblock {\em Journal of Symbolic Logic}, 55(2):831--853, 1990.

\bibitem{stein}
Thorsten~von Stein.
\newblock Vergleich nicht konstruktiv l{\"o}sbarer {P}robleme in der
  {A}nalysis.
\newblock Diplomarbeit, Fachbereich Informatik, FernUniversit\"at Hagen, 1989.

\bibitem{Wei87}
Klaus Weihrauch.
\newblock {\em Computability}, volume~9 of {\em EATCS Monographs on Theoretical
  Computer Science}.
\newblock Springer, Berlin, 1987.

\bibitem{Wei92a}
Klaus Weihrauch.
\newblock The degrees of discontinuity of some translators between
  representations of the real numbers.
\newblock Technical Report TR-92-050, International Computer Science Institute,
  Berkeley, July 1992.

\bibitem{Wei92c}
Klaus Weihrauch.
\newblock The {TTE}-interpretation of three hierarchies of omniscience
  principles.
\newblock Informatik Berichte 130, FernUniversit\"at Hagen, Hagen, September
  1992.

\bibitem{Wei00}
Klaus Weihrauch.
\newblock {\em Computable Analysis}.
\newblock Springer, Berlin, 2000.

\end{thebibliography}
\end{document}